\definecolor{darkblue}{rgb}{0.1,0.1,.7}
\renewcommand*{\backref}[1]{}
\renewcommand*{\backrefalt}[4]{%
	\ifcase #1 (Not cited.)%
	\or        (Cited on p.~#2.)%
	\else      (Cited on pp.~#2.)%
	\fi}
\apptocmd{\thebibliography}{\setlength{\itemsep}{0em}}{}{}
\newcommand{\cmark}{\ding{51}}
\newcommand{\xmark}{\ding{55}}
\newcommand{\bbR}[1]{\mathbb{R}^{#1}}
\newcommand{\bbZ}{\mathbb{Z}}
\newcommand{\fr}[1]{\left(#1\right)}
\newcommand{\abs}[1]{\left\lvert #1 \right\rvert}
\newcommand{\sumlim}[1]{\sum\limits_{#1}}
\newcommand{\limlim}[1]{\lim\limits_{#1}}
\theoremstyle{definition}
\newtheorem{theorem}{Theorem}[section]
\newtheorem{lemma}[theorem]{Lemma} 
\newtheorem*{remark}{\textbf{Remark}}
\def\pra{\ref@jnl{Phys.~Rev.~A}}        
\def\prb{\ref@jnl{Phys.~Rev.~B}}        
\def\prc{\ref@jnl{Phys.~Rev.~C}}        
\def\prd{\ref@jnl{Phys.~Rev.~D}}        
\def\pre{\ref@jnl{Phys.~Rev.~E}}        
\def\prl{\ref@jnl{Phys.~Rev.~Lett.}}    
\def\nphysa{\ref@jnl{Nucl.~Phys.~A}}   
\def\physrep{\ref@jnl{Phys.~Rep.}}   
\def\physscr{\ref@jnl{Phys.~Scr}}   
\begin{document}
\vspace*{-.6in} \thispagestyle{empty}
\begin{flushright}
\end{flushright}
\vspace{1cm} 
{\Large
	\begin{center}
		{\bf Classification of Convergent OPE Channels for Lorentzian CFT Four-Point Functions}
	\end{center}
}
\vspace{1cm}
\begin{center}
	{\bf Jiaxin Qiao$^{a,b}$}\\[2cm] 
	{
		$^a$  
		Laboratoire de Physique de l'Ecole normale sup\'erieure, ENS,\\ 
		{\small Universit\'e PSL, CNRS, Sorbonne Universit\'e, Universit\'e de Paris,} F-75005 Paris, France
		\\
		$^b$  Institut des Hautes \'Etudes Scientifiques, 91440 Bures-sur-Yvette, France\\
	}
	\vspace{1cm}
\end{center}

\vspace{4mm}
\begin{abstract}
	We analyze the convergence properties of operator product expansions (OPE) for Lorentzian CFT four-point functions of scalar operators. We give a complete classification of Lorentzian four-point configurations. All configurations in each class have the same OPE convergence properties in s-, t- and u-channels. We give tables including the information of OPE convergence for all classes. Our work justifies that in a subset of the configuration space, Lorentzian CFT four-point functions are genuine analytic functions. Our results are valid for unitary CFTs in $d\geq2$. Our work also provides some Lorentzian regions where one can do bootstrap analysis in the sense of functions. 
\end{abstract}

\vspace{.2in}
\vspace{.3in}
\hspace{0.2cm} June 2021

\newpage

{
	\tableofcontents
}

\section{Introduction}\label{section:intro}
In this paper we study the convergence properties of operator product expansion (OPE) for Lorentzian four-point functions in conformal field theories (CFT). 

Historically, analyticity of correlation functions is an important bridge connecting Lorentzian quantum field theories (QFT) and Euclidean QFTs. Starting from a Lorentzian correlator, we can get a Euclidean correlator by analytically continuing the time variables onto the imaginary axis \cite{jost1965general}. Under certain conditions we can also do the reverse \cite{osterwalder1973,glaser1974equivalence,osterwalder1975}. This procedure of analytic continuation, called Wick rotation, allows us to explore the Lorentzian nature of QFTs which may originate from statistical models in the Euclidean signature.

The Lorentzian correlators are not always genuine functions, instead they belong to a class of tempered distributions which are called Wightman distributions \cite{streater1964pct}. It is interesting to know at which Lorentzian configurations $(x_1,\ldots,x_n)$ the correlators $G_n(x_1,\ldots,x_n)$ are indeed functions. The Wightman distributions are known to be analytic functions in some regions $\mathcal{J}_n$ which are the sets of ``Jost points" \cite{jost1957bemerkung} ($G_n$ are called Wightman functions in their domains of analyticity).\footnote{It does not mean that the Lorentzian correlators cannot be functions in other regions. For example, the correlators of generalized free fields are functions aside from light-cone singularities. Here we are talking about the minimal domain of Lorentzian correlators which can be derived from general principles of QFT.} $\mathcal{J}_n$ corresponds to some (not all) Lorentzian configurations with totally space-like separations. By using the microscopic causality constraints, one can extend $G_n$ to a larger domain, including all configurations with totally space-like separations \cite{ruelle1959}. In Minkowski space $\bbR{d-1,1}$,\footnote{Often one uses Minkowski space to denote $\bbR{3,1}$ only. While in this paper, we use this terminology for $\bbR{d-1,1}$ and general $d$.} two points can also have time-like or light-like separation. The Lorentzian correlators usually diverge at configurations with light-like separations, and these configurations are called light-cone singularities \cite{dietz1973lightcone}. Except for some exactly solvable models, the Lorentzian correlators at configurations which contain time-like separations are not fully studied. 	 

There are more constraints in CFTs. In general QFTs, the domains of Wightman functions are Poincar\'e invariant, while in CFTs this Poincar\'e invariant domain can be further extended by using conformal symmetry. Furthermore, in CFTs we have better control on correlators with the help of OPE \cite{pappadopulo2012operator}. A successful example is the four-point functions in 2d local unitary CFTs, where the conformal algebra is infinite dimensional \cite{belavin1984infinite}. In this case, by using Al. Zamolodchikov's uniformizing variables $q,\bar{q}$ \cite{zamolodchikov1987conformal}, one can show that the four-point function is regular analytic at all possible Lorentzian configurations aside from light-cone singularities \cite{maldacena2017looking}. We are going to study a similar problem in $d\geq3$, for which the conformal group is finite dimensional and the radial coordinates $\rho,\bar{\rho}$ \cite{hogervorst2013radial} are used in our analysis.\footnote{The set of four-point configurations $(x_1,x_2,x_3,x_4)$ with $\abs{\rho},\abs{\bar{\rho}}<1$ is a subset of $(x_1,x_2,x_3,x_4)$ with $\abs{q},\abs{\bar{q}}<1$. Since the $q$-variable argument is based on the Virasoro symmetry which is only true in 2d \cite{francesco1997conformal}, we cannot apply it to the case of $d\geq3$.} In addition, in 2d there exists non-local unitary CFTs, which have only the global conformal symmetry. The analysis in this work also applies to 2d non-local unitary CFTs.

Recently the conformal bootstrap approach has become a powerful tool in the study of strongly coupled systems \cite{poland2019conformal}. On the numerical side, it gives precise predictions of experimentally measurable quantities, such as the critical exponents of the 3d Ising model \cite{el2012solving,el2014solving,kos2014bootstrapping,simmons2015semidefinite,kos2016precision}, $O(N)$ model \cite{kos2016precision,kos2014bootstrappingON,kos2015bootstrapping,chester2019carving} and other critical systems. The functional methods, which are used in the numerical approach, can be realized analytically in low dimensions , and lead to insights into low dimensional CFTs and S-matrices \cite{mazavc2017analytic,mazavc2019analytic1,mazavc2019analytic2,paulos2019analytic}. While the basic CFT assumptions are made in the Euclidean signature, many attempts have been made to study the bootstrap equations in the Lorentzian signature \cite{komargodski2013convexity,fitzpatrick2013analytic,hartman2016causality,hartman2016new,hartman2017averaged,caron2017analyticity,simmons2018spacetime}. In the conformal bootstrap approach, for crossing equations to be valid in the sense of functions, there should be at least two convergent OPE channels. To play the bootstrap game for four-point functions in the Lorentzian signature, it is important to know the convergent domains of various OPE channels. This provides an additional motivation for our work.

The main goal of this work is to give complete tables of Lorentzian four-point configurations with the information about convergence in the sense of analytic functions in various OPE channels. In this paper we will mostly focus on four-point functions of identical scalar operators. Our techniques can be immediately generalized to the case of non-identical scalar operators (see section \ref{section:nonidscalar}). The four-point funcitons of spinning operators require extra work because of tensor structures. In this paper, we will only make some comments on the case of spinning operators. One may also be interested in the convergence of OPE in the sense of distributions \cite{mack1977convergence}. We leave the discussions of distributional properties to the series of papers \cite{kravchuk2020distributions,Kravchuk:2021kwe,paper3}.

The outline of this paper is as follows. In section \ref{section:main} we introduce the main problem and provide a quick summary of the main results in this paper. In section \ref{section:analyticcontinuation} we justify the analytic continuation of the CFT four-point function to the domain $\mathcal{D}$ (which will be defined in section \ref{section:problem}), and the Lorentzian configurations live on the boundary of $\overline{\mathcal{D}}$. In section \ref{section:lorentz4pt} we give criteria of OPE convergence in s-, u- and t-channels. In section \ref{section:classifylorentzconfig}, we make a classification of the Lorentzian four-point configurations. All configurations in the same class have the same convergent OPE channels. All information on the OPE convergence properties can be looked up in appendix \ref{appendix:tableopeconvergence}. In section \ref{section:Wightman} we review some classical results from Wightman QFT, and compare them with CFT four-point functions. In section \ref{section:nonidscalar} we generalize our results to the case of non-identical scalar operators and make some comments on the case of spinning operators. In section \ref{section:conclusion} we make conclusions and point out some open questions related to this work.

\section{Main problem and summary of results}\label{section:main}
\subsection{Main problem}\label{section:problem}
We start from CFT in the Euclidean signature. Let $x_k=(\tau_k,\mathbf{x}_k)$ denote the $k$-th point in the Euclidean space ($k=1,2,3,4$), where $\tau_k=x_k^0$ is the temporal variable and $\mathbf{x}_k=(x_k^1,x_k^2,\ldots,x_k^{d-1})\in\bbR{d-1}$ represents the vector of spatial variables. Lorentzian points are given by Wick rotating the temporal variables: $\tau=it$ where $t$ is a real number. To get Lorentzian four-point functions we need to analytically continue the Euclidean four-point functions with respect to temporal variables. We define the Wick rotation of the four-point function as follows: 
\newline\textbf{Step 1.}

We construct a function $G_4(x_1,x_2,x_3,x_4)$ such that:
\begin{itemize}
	\item
	$G_4$ has domain $\mathcal{D}$ of complex $\tau_k$ and real $\mathbf{x}_k$. The temporal variables $(\tau_1,\tau_2,\tau_3,\tau_4)$ are in the set 
	\begin{equation}\label{domain:complextau}
	\mathbb{C}^4_>\coloneqq\left\{(\tau_1,\tau_2,\tau_3,\tau_4)\in\mathbb{C}^4\Big{|}\mathrm{Re}(\tau_1)>\mathrm{Re}(\tau_2)>\mathrm{Re}(\tau_3)>\mathrm{Re}(\tau_4)\right\}.
	\end{equation}
	In other words, $\mathcal{D}=\mathbb{C}^4_>\times\bbR{4(d-1)}$.\footnote{The analytic continuation can be done in a larger domain, e.g. the forward tube $\mathcal{T}_4$ defined by
		\begin{equation*}
			\begin{split}
				\mathcal{T}_4:=\left\{(x_1,x_2,x_3,x_4)\in\mathbb{C}^{4d}\,\Big{|}\,\mathrm{Re}(x_k^0-x_{k+1}^0)>\abs{\mathrm{Im}(\mathbf{x}_k-\mathbf{x}_{k+1})},\quad k=1,2,3\right\}.
			\end{split}
	    \end{equation*} See \cite{Kravchuk:2021kwe} for the discussion about the analytic continuation of $G_4$ to $\mathcal{T}_4$. In this paper, we are only interested in the analyticity of $G_4$ in the Lorentzian regime. Since the whole Lorentzian regime is already contained in the closure of $\mathcal{D}$ (see step 2), it suffices to consider the analytic continuation to $\mathcal{D}$.}
	\item
	$G_4$ is analytic in the temporal variables $\tau_k$ and continuous in the spatial variables $\mathbf{x}_k$.
	\item
	$G_4$ agrees with the Euclidean four-point function $G^E_4$ when all the temporal variables are real.
\end{itemize}
\textbf{Step 2.}

The Wick rotation to Lorentzian CFT four-point function is defined by
\begin{equation}\label{def:Lorentz4ptfct}
\begin{split}
G^L_4(t_k,\mathbf{x}_k)\coloneqq\lim\limits_{\substack{\epsilon_k\rightarrow0 \\\epsilon_1>\epsilon_2>\epsilon_3>\epsilon_4 \\}}G_4(\epsilon_k+it_k,\mathbf{x}_k)
\end{split}
\end{equation}
when such a limit exists.  

The reason we define Wick rotation in the above way is that we expect the Lorentzian CFT four-point function to be a Wightman four-point distribution, which is the boundary value of the Wightman four-point function from its domain of complex coordinates \cite{jost1965general}. The domain of the four-point Wightman function includes $\mathcal{D}$, so the limit (\ref{def:Lorentz4ptfct}) gives the Wightman four-point distribution when such a limit exists. We will discuss this in section \ref{section:Wightman}.

The main problem we want to discuss in this paper is: 
\begin{itemize}
	\item In which Lorentzian regions does the Lorentzian CFT four-point function, defined by (\ref{def:Lorentz4ptfct}), have a convergent operator product expansion in the sense of functions?
\end{itemize}
The goal of this paper is to give tables which contain OPE convergence properties of four-point functions at all possible Lorentzian configurations.

\subsection{Summary of results}
In this subsection, we provide a quick summary of the main results for readers who wish to know the general ideas of this paper before going into the technical details. 
\begin{itemize}
	\item We prove that the Euclidean CFT four-point function $G_4^E$ has analytic continuation to the domain $\mathcal{D}$ (i.e. $\mathrm{Re}(\tau_1)>\mathrm{Re}(\tau_2)>\mathrm{Re}(\tau_3)>\mathrm{Re}(\tau_4)$) (section \ref{section:analyticcontinuation}). The analytic continuation will be performed by using the s-channel OPE, which means taking the OPE $\phi(x_1)\times\phi(x_2)$ in the four-point function
	\begin{equation}
	\begin{split}
	G_4^E(x_1,x_2,x_3,x_4)=\braket{\phi(x_1)\phi(x_2)\phi(x_3)\phi(x_4)}.
	\end{split}
	\end{equation}
	The key observation is that for any four-point configuration in $\mathcal{D}$, the radial variables $\rho$ and $\bar{\rho}$ belong to the open unit disk, i.e. $|\rho|,|\bar{\rho}|<1$. This observation, together with the series expansion of $G_4^E$ in $\rho$ and $\bar{\rho}$, implies that the s-channel OPE is convergent for any configuration in $\mathcal{D}$.
	
	One technical subtlety in $d\geqslant3$ is that the radial variables $\rho$ and $\bar{\rho}$ are not individually globally well-defined analytic functions (appendix \ref{appendix:obstruction}). We treat this subtlety carefully when performing the analytic continuation, and show that $G_4$ (constructed in terms of $\rho$ and $\bar{\rho}$) is single valued and analytic everywhere in $\mathcal{D}$ (section \ref{section:cased>=3}).
	\item We derive the criteria of OPE convergence of the Lorentzian CFT four-point function $G_4^L$ in s-, t- and u-channels (section \ref{section:lorentz4pt}). $G_4^L$ is defined to be the boundary value of analytically continued Euclidean four-point function (see eq. (\ref{def:Lorentz4ptfct})). One can imagine that the OPE convergence properties of $G_4^L$ rely on the behavior of cross-ratio variables along the analytic continuation path. In the end we will see that for any fixed Lorentzian four-point configuration, one can check the criteria using any analytic continuation path in $\mathcal{D}$ (starting from a Euclidean four-point configuration), and the conclusion does not depend on the choice of the path.
	\item We classify all the Lorentzian four-point configurations aside from the light-cone singularities (section \ref{section:classifylorentzconfig}). The Lorentzian configurations are classified into a finite number of groups according to the range of cross-ratio variables $(z,\bar{z})$ (section \ref{section:classifyz}) and the causal orderings (section \ref{section:causal}). We conclude that in each group, all configurations have the same OPE convergence properties (section \ref{section:classifyconfig}). 
	
	Then the problem is reduced to checking convergence properties in a finite number of cases. The conclusion of OPE convergence properties is lengthy because in our classification there are many groups (although finite) to check, so we leave this part to appendix \ref{appendix:tableopeconvergence}. We share the Mathematica code for readers who wish to check our results (see the ancillary file ``/anc/OPE_check.nb" on arXiv).
	\item We review the classic results on the domain of analyticity of correlation functions in the framework of Wightman QFT (section \ref{section:Wightman}). It was showed in the classic literature that the Wightman function is regular analytic in the totally space-like kinematic region. The domain of the CFT four-point function contains this region, and furthermore it contains much more regions including time-like separation of points.
	\item We generalize our results to non-identical scalar four-point functions (section \ref{section:nonidscalar}). We show that the OPE convergence properties of non-identical scalar four-point functions are the same as the identical scalar case, using Cauchy-Schwarz argument. We also make a comment on the main technical difficulty that arises in the case of spinning operators.
\end{itemize}
\color{black}

\section{Euclidean CFT four-point function and its analytic continuation}\label{section:analyticcontinuation}
In this section justify the first step of Wick rotation: analytically continuing the Euclidean CFT four-point function $G_4^E$ to the domain $\mathcal{D}$.
\subsection{Euclidean CFT four-point function}\label{section:euclcft}
Consider a Euclidean CFT four-point function of identical scalar primary operators $\phi$ with scaling dimension $\Delta$. By conformal symmetry we write it as
\begin{equation}\label{ope:schannel}
\begin{split}
G^E_4(x_1,x_2,x_3,x_4)\coloneqq\braket{\phi(x_1)\phi(x_2)\phi(x_3)\phi(x_4)}=\dfrac{g(u,v)}{\fr{x_{12}^2}^\Delta\fr{x_{34}^2}^\Delta}
\end{split}
\end{equation}
where $x_{ij}^2$ are defined by
\begin{equation}\label{def:xij2}
\begin{split}
x_{ij}^2\coloneqq\sumlim{\mu=0}^{d-1}(x_i^\mu-x_j^\mu)^2=(\tau_i-\tau_j)^2+\fr{\mathbf{x}_i-\mathbf{x}_j}^2,
\end{split}
\end{equation}
and $g(u,v)$ is a function of cross-ratios
\begin{equation}\label{xtouv}
\begin{split}
u=\dfrac{x_{12}^2x_{34}^2}{x_{13}^2x_{24}^2},\quad v=\dfrac{x_{14}^2x_{23}^2}{x_{13}^2x_{24}^2}.
\end{split}
\end{equation}
$G^E_4$ is originally defined in the Euclidean signature. To extend $G^E_4$ to the domain of complex temporal variables,\footnote{We want to remark here that in principle one can also consider analytically continuing the above functions to the domain of complex spatial variables. But since in this work we only want to discuss about the four-point functions in the Lorentzian signature, where $\mathbf{x}_k$ are always real, we will not consider the analytic continuation problem with respect to spatial variables. By analytically continuing some function $f(\tau_k,\mathbf{x}_k)$ to $\mathcal{D}$, we mean extending the domain of $f$ to $\mathcal{D}$, on which $f$ is analytic in $\tau_k$ and continuous in $\mathbf{x}_k$.} we first analytically continue $x_{ij}^2$ by eq. (\ref{def:xij2}). 
Since the functions $x_{ij}^2$ do not vanish when the configurations are in $\mathcal{D}$,\footnote{Let $\tau_k=\epsilon_k+it_k$ ($k=1,2,3,4$). For $i\neq j$, since $x_{ij}^2=(\epsilon_i-\epsilon_j+it_i-it_j)^2+\fr{\mathbf{x}_i-\mathbf{x}_j}^2$ and $\epsilon_i\neq\epsilon_j$, $x_{ij}^2$ is real only when $t_i=t_j$, but then $x_{ij}^2>0$.} the cross-ratios do not diverge. As a result, cross-ratios have analytic continuation to $\mathcal{D}$. Since $x_{ij}^2\neq0$ and $\mathcal{D}$ is simply connected, the prefactor $(x_{12}^2x_{34}^2)^{-\Delta}$ in (\ref{ope:schannel}) also has analytic continuation to $\mathcal{D}$.

Therefore, to show that $G_4^E(\tau_k,\mathbf{x}_k)$ has analytic continuation to the domain $\mathcal{D}$, it remains to show that $g(\tau_k,\mathbf{x}_k)$ has analytic continuation to the domain $\mathcal{D}$.\footnote{In this paper we will abuse the notation by writing $g(u,v)$, $g(z,\bar{z})$, $g(\rho,\bar{\rho})$ and $g(\tau_k,\mathbf{x}_k)$ for four-point function in different coordinates. For example, $g(\tau_k,\mathbf{x}_k)\coloneqq g\fr{u(\tau_k,\mathbf{x}_k),v(\tau_k,\mathbf{x}_k)}$.} 

In general, we do not know the exact expression of $g(\tau_k,\mathbf{x}_k)$. We want to extend its domain according to the basic properties of Euclidean unitary CFT \cite{rychkov2017epfl,simmons2016tasi}:
\begin{itemize}
	\item Conformal invariance (which has already been used in (\ref{ope:schannel})).
	\item Reflection positivity (which is the Euclidean version of unitarity \cite{osterwalder1973}).
	\item Convergence of operator product expansions in the Euclidean space.
\end{itemize}
Our main idea is to use the above properties via radial variables $\rho,\bar{\rho}$ \cite{hogervorst2013radial}, which we will introduce below. Roughly speaking, we want to first analytically continue the function $g(\rho,\bar{\rho})$ by a good series expansion. Then we want to stick the analytic functions $\rho(\tau_k,\mathbf{x}_k)$ and $\bar{\rho}(\tau_k,\mathbf{x}_k)$ into $g(\rho,\bar{\rho})$ to get analytic continuation of $g(\tau_k,\mathbf{x}_k)$.

\subsection{Expansion in \texorpdfstring{$z,\bar{z}$}{z,zbar}}
\subsubsection{Definition of  the coordinates}\label{section:zzbardef}
We pass from $u,v$ to $z,\bar{z}$ by
\begin{equation}\label{ztouv}
\begin{split}
u=z\bar{z},\quad v=(1-z)(1-\bar{z}).
\end{split}
\end{equation}
The above definition has an ambiguity of interchanging $z$ and $\bar{z}$. We  choose $z,\bar{z}$ to be one particular solution of (\ref{ztouv}):
\begin{equation}\label{uvtoz}
\begin{split}
z(u,v)=&\dfrac{1}{2}\fr{1+u-v+i\sqrt{4u-\fr{1+u-v}^2}\ }, \\
\quad\bar{z}(u,v)=&\dfrac{1}{2}\fr{1+u-v-i\sqrt{4u-\fr{1+u-v}^2}\ }. \\
\end{split}
\end{equation}
Since we are interested in the four-point configurations in $\mathcal{D}$, where the temporal variables $\tau_k$ are complex numbers, the cross-ratios $u(\tau_k,\mathbf{x}_k)$ and $v(\tau_k,\mathbf{x}_k)$ are also complex numbers. So we consider $z(u,v),\bar{z}(u,v)$ for $(u,v)\in\mathbb{C}^2$. For complex $(u,v)$, the expressions of $z(u,v),\bar{z}(u,v)$ have the same set of square-root branch points
\begin{equation}\label{defGamma}
\begin{split}
\Gamma_{uv}\coloneqq\left\{(u,v)\in\mathbb{C}^2\Big{|}4u-(1+u-v)^2=0\right\},
\end{split}
\end{equation}
and (\ref{uvtoz}) is not single-valued when $(u,v)\in\mathbb{C}^2\backslash\Gamma_{uv}$. We define the variables $w,y$ by
\begin{equation}\label{uvtowy}
\begin{split}
w=1+u-v,\quad y^2=4u-(1+u-v)^2,
\end{split}
\end{equation}
and write (\ref{uvtoz}) as
\begin{equation}\label{wytoz}
\begin{split}
z=\dfrac{w}{2}+\dfrac{iy}{2},\quad\bar{z}=\dfrac{w}{2}-\dfrac{iy}{2}.
\end{split}
\end{equation}
As we discuss in appendix \ref{appendix:obstruction}, $y(\tau_k,\mathbf{x}_k)$, and hence $z(\tau_k,\mathbf{x}_k),\bar{z}(\tau_k,\mathbf{x}_k)$, are single-valued in $\mathcal{D}$ for $d=2$ but not for $d\geq3$. This leads to some complication in constructing the analytic continuation of $g(\tau_k,\mathbf{x}_k)$, which will be overcome below (see section \ref{section:cased>=3}).

\subsubsection{Conformal frame}\label{section:conformalframe}
We would like to introduce a proper conformal frame configuration to understand the geometrical meaning of $z,\bar{z}$. Let $(\mu\nu)$-plane denote the 2d subspace of $\bbR{d}$ which only has non-vanishing coordinates $x^\mu,x^\nu$ (recall that $x=(x^0,x^1,x^2,\ldots,x^{d-1})$). In Euclidean space, there exists a conformal transformation which maps the configuration $C=(x_1,x_2,x_3,x_4)$ onto the (01)-plane:
\begin{equation}\label{config:conformalframe}
\begin{split}
x_1^\prime=&0, \\
x_2^\prime=&(a,b,0,\ldots,0), \\
x_3^\prime=&(1,0,\ldots,0), \\
x_4^\prime=&\infty. \\ 
\end{split}
\end{equation}
Then $z=a+ib$, $\bar{z}=a-ib$. We call (\ref{config:conformalframe}) a conformal frame configuration of $C$. Noticing that the reflection $(x^0,x^1,x^2,\ldots,x^{d-1})\mapsto(x^0,-x^1,x^2,\ldots,x^{d-1})$ is a conformal transformation which preserves $(01)$-plane and keeps $x_1^\prime,x_3^\prime,x_4^\prime$ in (\ref{config:conformalframe}) fixed, the conformal frame configuration is not unique: it is allowed to replace $b$ with $-b$ in (\ref{config:conformalframe}). We see that changing $b$ to $-b$ is the same as interchanging $z$ and $\bar{z}$.

Let $\mathcal{D}_E$ be the Euclidean subset of $\mathcal{D}$:
\begin{equation}
\begin{split}
\mathcal{D}_E\coloneqq\left\{(x_1,x_2,x_3,x_4)\in\mathcal{D}\Big{|}\tau_k\in\bbR{},\quad k=1,2,3,4\right\},\quad \fr{x_k=(\tau_k,\mathbf{x}_k)}.
\end{split}
\end{equation}
By using the conformal frame, one can show that
\begin{itemize}
	\item All configurations in $\mathcal{D}_E$ have the following property:
	\begin{equation}\label{zprop:eucl}
	\begin{split}
	z,\bar{z}\in\mathbb{C}\backslash[1,+\infty).
	\end{split}
	\end{equation}
\end{itemize}
This follows from the fact that conformal transformations map circles to circles or lines, preserving cyclic order.\footnote{Given four Euclidean points $x_1,x_2,x_3,x_4$, we say that they have the cyclic order $[i_1i_2i_3i_4]$ if these four points lie on a circle or a line in the order $x_{i_1}x_{i_2}x_{i_3}x_{i_4}$.} Suppose we have a configuration $C$ with $z=\bar{z}\in[1,+\infty)$, then the conformal frame configuration (\ref{config:conformalframe}) of $C$ satisfies $x_2^\prime=(a,0,\ldots,0)$ with $a>1$, which means $(x_1^\prime,x_2^\prime,x_3^\prime,x_4^\prime)$ have cyclic order $[1324]$. However, the cyclic order $[1324]$ does not exist in $\mathcal{D}_E$ because of the constraint $\tau_1>\tau_2>\tau_3>\tau_4$. 

\subsubsection{Series expansion}
By (\ref{ztouv}), we can think of $g\fr{u(z,\bar{z}),v(z,\bar{z})}$ as a function of $z,\bar{z}$. Since $u,v$ are symmetric polynomials of $z,\bar{z}$, we have
\begin{equation}\label{g:zproperties}
\begin{split}
g(z,\bar{z})=g(\bar{z},z),\quad (z^*=\bar{z}).
\end{split}
\end{equation}
The constraint $z^*=\bar{z}$ in (\ref{g:zproperties}) is because our assumptions of conformal invariance are made in the Euclidean signature, where $z,\bar{z}$ are complex conjugate to each other. If $g(z,\bar{z})$ has analytic continuation to independent complex $z,\bar{z}$, then it is easy to remove this constraint by the Cauchy-Riemann equation, so that (\ref{g:zproperties}) will hold for any $z,\bar{z}$.

In the unitary CFT, the function $g(z,\bar{z})$ is known to have a series expansion in $z,\bar{z}$:
\begin{equation}\label{g:zseries}
\begin{split}
g(z,\bar{z})=\sumlim{h,\bar{h}\geq0}a_{h,\bar{h}}z^h\bar{z}^{\bar{h}},
\end{split}
\end{equation}
where $a_{h,\bar{h}}$ are real non-negative coefficients. This expansion can be understood as follows. In the radial quantization picture, we insert a complete basis $\left\{\ket{h,\bar{h}}\right\}$ into the four-point function in the conformal frame (\ref{config:conformalframe}):
\begin{equation}
\begin{split}
\sumlim{h,\bar{h}}\braket{\phi(x_1^\prime)\phi(x_2^\prime)\ket{h,\bar{h}}\bra{h,\bar{h}}\phi(x_3^\prime)\phi(x_4^\prime)}=\dfrac{1}{(x_{12}^{\prime2}x_{34}^{\prime2})^\Delta}\sumlim{h,\bar{h}\geq0}a_{h,\bar{h}}z^h\bar{z}^{\bar{h}},
\end{split}
\end{equation}
where $h,\bar{h}$ are the eigenvalues of Virasoro generators $L_0,\bar{L}_0$ \cite{francesco1997conformal} ($L_0,\bar{L}_0$ also belong to $so(1,d+1)$, the Lie algebra of the global conformal group). Here $h,\bar{h}\geq0$ are the consequences of unitarity \cite{francesco1997conformal}. 

The expansion (\ref{g:zseries}) is absolutely convergent when $|z|,|\bar{z}|<1$, so $g(z,\bar{z})$ has analytic continuation from its Euclidean domain $\left\{z^*=\bar{z}\right\}$ to the universal covering of $\left\{0<|z|,|\bar{z}|<1\right\}$. Our purpose is to extend the domain of $g(\tau_k,\mathbf{x}_k)$ from $\mathcal{D}_E$ to $\mathcal{D}$ by composing $g(z,\bar{z})$ with $z(\tau_k,\mathbf{x}_k)$ and $\bar{z}(\tau_k,\mathbf{x}_k)$. However, the set of $(x_1,x_2,x_3,x_4)$ with $\abs{z(\tau_k,\mathbf{x}_k)},\abs{\bar{z}(\tau_k,\mathbf{x}_k)}<1$ does not cover $\mathcal{D}$ (it does not even cover $\mathcal{D}_E$). To solve this issue, we will introduce a pair of radial coordinates $\rho,\bar{\rho}$ and expand the function $g$ in $\rho,\bar{\rho}$ (see the next subsection). 

\subsection{Expansion in \texorpdfstring{$\rho,\bar{\rho}$}{rho,rhobar}}
\subsubsection{Definition of the coordinates}
We define the radial coordinates $\rho,\bar{\rho}$ by
\begin{equation}\label{ztorho}
\begin{split}
\rho(z)=\dfrac{(1-\sqrt{1-z})^2}{z},\quad\bar{\rho}(\bar{z})=\dfrac{(1-\sqrt{1-\bar{z}})^2}{\bar{z}},
\end{split}
\end{equation}
Eq. (\ref{ztorho}) defines a one-to-one holomorphic map $\rho(z)$ from $z\in\mathbb{C}\backslash[1,+\infty)$ to the open unit disc $\abs{\rho}<1$. The same is true for $\bar{z}$ and $\bar{\rho}$. By composing (\ref{ztorho}) with $z(\tau_k,\mathbf{x}_k)$ and $\bar{z}(\tau_k,\mathbf{x}_k)$, we get functions $\rho(\tau_k,\mathbf{x}_k)$ and $\bar{\rho}(\tau_k,\mathbf{x}_k)$. For configurations in $\mathcal{D}_E$, by the property (\ref{zprop:eucl}) of $z,\bar{z}$ , we have:
\begin{itemize}
	\item In the Euclidean region $\mathcal{D}_E$, we have $\abs{\rho(\tau_k,\mathbf{x}_k)},\abs{\bar{\rho}(\tau_k,\mathbf{x}_k)}<1$.
\end{itemize}

Analogously to the conformal frame for $z,\bar{z}$, for $\rho,\bar{\rho}$ there exists a conformal transformation which maps $C$ onto the (01)-plane:
\begin{equation}\label{conftransf:rho}
\begin{split}
x_1^{\prime\prime}=&(\alpha,\beta,0,\ldots,0), \\
x_2^{\prime\prime}=&(-\alpha,-\beta,0,\ldots,0), \\
x_3^{\prime\prime}=&(-1,0,\ldots,0), \\
x_4^{\prime\prime}=&(1,0,\ldots,0). \\
\end{split}
\end{equation}
with $\alpha^2+\beta^2<1$. Then $\rho=\alpha+i\beta$, $\bar{\rho}=\alpha-i\beta$.

\subsubsection{Series expansion}\label{section:rhoexpansion}
By (\ref{ztorho}), the maps from $\rho,\bar{\rho}$ to $z,\bar{z}$ are given by
\begin{equation}\label{rhotoz}
\begin{split}
z=\dfrac{4\rho}{(1+\rho)^2},\quad\bar{z}=\dfrac{4\bar{\rho}}{(1+\bar{\rho})^2}.
\end{split}
\end{equation}
We get $g(\rho,\bar{\rho})$ by composing (\ref{rhotoz}) with $g(z,\bar{z})$. By (\ref{g:zproperties}) and (\ref{rhotoz}), we have
\begin{equation}\label{g:rhoproperties}
\begin{split}
g(\rho,\bar{\rho})=g(\bar{\rho},\rho),\quad(\rho^*=\bar{\rho}).
\end{split}
\end{equation}
In the Euclidean unitary CFT, the function $g(\tau_k,\mathbf{x}_k)$ is known to have an expansion in radial coordinates \cite{pappadopulo2012operator}:
\begin{equation}\label{g:rhoseries}
\begin{split}
g\fr{\rho,\bar{\rho}}=\sumlim{h,\bar{h}\geq0}b_{h,\bar{h}}\rho^h\bar{\rho}^{\bar{h}}
\end{split}
\end{equation}
where $b_{h,\bar{h}}$ are positive coefficients. This expansion can be obtained by inserting a complete basis $\left\{\ket{h,\bar{h}}\right\}$ into the four-point function at the configuration (\ref{conftransf:rho}):
\begin{equation}
\begin{split}
\sumlim{h,\bar{h}}\braket{\phi(x_1^{\prime\prime})\phi(x_2^{\prime\prime})\ket{h,\bar{h}}\bra{h,\bar{h}}\phi(x_3^{\prime\prime})\phi(x_4^{\prime\prime})}=\dfrac{1}{(16\rho\bar{\rho})^\Delta}\sumlim{h,\bar{h}\geq0}b_{h,\bar{h}}\rho^h\bar{\rho}^{\bar{h}}.
\end{split}
\end{equation}
Furthermore, the $\rho$-expansion (\ref{g:rhoseries}) can be rearranged in the following way:
\begin{equation}\label{g:rhorearrange}
\begin{split}
    g(\rho,\bar{\rho})=&\sumlim{\Delta\geq0}\sumlim{l\in\mathbb{N}}c_{\Delta,l}\fr{\rho\bar{\rho}}^{\Delta/2} P^l(\rho,\bar{\rho}), \\
    P^l(\rho,\bar{\rho})=&\sumlim{k=0}^lp^l_k\fr{\dfrac{\rho}{\bar{\rho}}}^{l/2-k},\quad p^l_k=p^l_{l-k}, \\
\end{split}
\end{equation}
where the function $P^l(\rho,\bar{\rho})$ in (\ref{g:rhorearrange}) is another form of the Gegenbauer polynomial $C_l^{d/2-1}$ \cite{hogervorst2013radial}:
\begin{equation}\label{gegenbauer}
\begin{split}
    P^l(\rho,\bar{\rho})=C_l^{d/2-1}(\cos\theta),\quad\fr{\rho=re^{i\theta},\bar{\rho}=re^{-i\theta}}.
\end{split}
\end{equation}
The function $g(\rho,\bar{\rho})$ is originally defined in the Euclidean region $\rho^*=\bar{\rho}$. Considering now $\rho,\bar{\rho}$ to be independent complex variables, since the expansion (\ref{g:rhoseries}) is absolutely convergent when $|\rho|,|\bar{\rho}|<1$, it defines an analytic function on the universal covering of the domain 
\begin{equation}\label{def:rhodomain}
\begin{split}
R\coloneqq\left\{(\rho,\bar{\rho})\in\mathbb{C}^2\Big{|}0<\abs{\rho},\abs{\bar{\rho}}<1\right\}.
\end{split}
\end{equation}
We write radial coordinates as
\begin{equation}\label{chitorho}
\begin{split}
\rho=e^{i\chi},\bar{\rho}=e^{i\bar{\chi}}.
\end{split}
\end{equation}
The universal covering of $R$ is characterized by a product of upper half planes
\begin{equation}\label{def:chidomain}
\begin{split}
X\coloneqq\left\{(\chi,\bar{\chi})\in\mathbb{C}^2\Big{|}\mathrm{Im}(\chi),\mathrm{Im}(\bar{\chi})>0\right\}
\end{split}
\end{equation}
and the covering map is (\ref{chitorho}). The Euclidean region of $X$ corresponds to $\chi^*=-\bar{\chi}$. By (\ref{g:rhorearrange}), the function $g(\chi,\bar{\chi})$ on $X$ has the following properties:
\begin{equation}\label{chi:properties}
\begin{split}
g(\chi,\bar{\chi})=&g(\bar{\chi},\chi), \\
g(\chi,\bar{\chi})=&g(\chi+2\pi,\bar{\chi}-2\pi). \\
\end{split}
\end{equation}

\subsection{Analytic continuation: case \texorpdfstring{$d\geq3$}{d>=3}}\label{section:cased>=3}
\subsubsection{Main idea}\label{section:d>=3mainidea}
In this section we will study the analytic continuation of $g(\tau_k,\mathbf{x}_k)$ in $d\geq3$. Our goal is to analytically continue $g(\tau_k,\mathbf{x}_k)$ to $\mathcal{D}$. Naively, one may want to construct this analytic continuation by the following compositions:
\begin{equation}\label{strategy}
\begin{split}
    (\tau_k,\mathbf{x}_k)\mapsto(u,v)\mapsto(z,\bar{z})\mapsto(\rho,\bar{\rho})\mapsto(\chi,\bar{\chi})\mapsto g(\chi,\bar{\chi}).
\end{split}
\end{equation}
This construction requires two conditions:
\begin{enumerate}
	\item The step $(\tau_k,\mathbf{x}_k)\mapsto(z,\bar{z})$ in (\ref{strategy}) should be well defined. 
	\item All configurations in $\mathcal{D}$ satisfy $\abs{\rho},\abs{\bar{\rho}}<1$, or equivalently, $z,\bar{z}\notin[1,+\infty)$.
\end{enumerate}

The first condition holds if we have well-defined analytic functions $z(\tau_k,\mathbf{x}_k)$ and $\bar{z}(\tau_k,\mathbf{x}_k)$ on $\mathcal{D}$. However, as already mentioned in section \ref{section:zzbardef}, such analytic functions do not exist in $d\geq3$ (they exist in 2d). We will discuss the 2d case in section \ref{section:case2d}. For the $d\geq3$ case, we discuss the non-existence of $z(\tau_k,\mathbf{x}_k),\bar{z}(\tau_k,\mathbf{x}_k)$ in appendix \ref{appendix:obstruction}. 

Concerning the second condition, we have a crucial observation:
\begin{theorem}\label{theorem:rho<1}
	\cite{Kravchuk:2021kwe} The above condition 2 holds in any $d\geq2$.
\end{theorem}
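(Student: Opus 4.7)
The plan is to reduce the claim to a geometric statement about the cross-ratios $z, \bar z$ and then establish it by continuous deformation into the Euclidean subdomain $\mathcal{D}_E$, where the analogous fact is already known.

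First, I would use (\ref{ztorho}): the map $z \mapsto \rho$ is a biholomorphism from $\mathbb{C}\backslash[1,+\infty)$ onto the open unit disc $\{|\rho|<1\}$, and similarly for $(\bar z,\bar\rho)$. Hence Theorem \ref{theorem:rho<1} is equivalent to showing that for every configuration in $\mathcal{D}$, neither $z$ nor $\bar z$ (along any branch of the multi-valued maps discussed in section \ref{section:zzbardef}) lies on $[1,+\infty)$.

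Next, I would set up a continuous deformation. For a target configuration $(\sigma_k + i t_k, \mathbf{x}_k)\in \mathcal{D}$ with $\sigma_k=\mathrm{Re}(\tau_k)$, consider the straight path $\gamma(s) = (\sigma_k + i s t_k, \mathbf{x}_k)$ for $s\in[0,1]$. This path stays in $\mathcal{D}$ (since the ordering $\sigma_1>\sigma_2>\sigma_3>\sigma_4$ is preserved), starts at a Euclidean configuration $\gamma(0)\in \mathcal{D}_E$ (where (\ref{zprop:eucl}) gives $z,\bar z\notin[1,+\infty)$), and ends at the target. Because $x_{ij}^2\neq 0$ throughout $\mathcal{D}$, the cross-ratios $u,v$ are continuous along $\gamma$, and hence so is the unordered pair $\{z,\bar z\}$ — the naturally single-valued object in $d\geq 3$, since only the symmetric combinations $z+\bar z=1+u-v$ and $z\bar z = u$ are globally well defined.

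The crux is then to rule out that either element of $\{z,\bar z\}$ enters $[1,+\infty)$ along $\gamma$. Suppose for contradiction this first happens at some $s_*\in(0,1]$, yielding a real root $z_*\geq 1$ of $X^2-(1+u-v)X+u=0$ with $(u,v)$ evaluated at $\gamma(s_*)$. Writing $\tau_i-\tau_j=(\sigma_i-\sigma_j)+i(t_i-t_j)$ with $\sigma_i-\sigma_j>0$ for $i<j$ gives $\mathrm{Im}(x_{ij}^2)=2(\sigma_i-\sigma_j)(t_i-t_j)$, so the imaginary parts of $u$ and $v$ carry controlled signs. The remaining task is to show that these sign constraints, together with $x_{ij}^2\neq 0$, are incompatible with the existence of such a real root. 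The cleanest route I can see is to work in the conformal frame (\ref{conftransf:rho}), where $|\rho|^2=\alpha^2+\beta^2$ is manifestly geometric and the real-part ordering $\mathrm{Re}(\tau_1)>\cdots>\mathrm{Re}(\tau_4)$ translates into an explicit inequality on the complexified parameters $(\alpha,\beta)$ via a complexified special conformal transformation. The main obstacle is executing this translation while consistently tracking the branches of $\sqrt{1-z},\sqrt{1-\bar z}$; the result is established by this route in \cite{Kravchuk:2021kwe}, which the authors invoke here.
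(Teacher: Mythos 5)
Your proposal is a plan rather than a proof: the one step that actually carries the content of Theorem \ref{theorem:rho<1} --- ruling out that $z$ or $\bar z$ reaches $[1,+\infty)$ along the deformation --- is left open and explicitly deferred to \cite{Kravchuk:2021kwe}. For $d\geq 3$ that puts you on the same footing as the paper, which likewise only cites \cite{Kravchuk:2021kwe} for the general case; but the one concrete mechanism you propose for closing the gap does not work. From $\tau_i-\tau_j=(\epsilon_i-\epsilon_j)+i(t_i-t_j)$ one gets $\mathrm{Im}(x_{ij}^2)=2(\epsilon_i-\epsilon_j)(t_i-t_j)$, and while $\epsilon_i-\epsilon_j>0$ for $i<j$ in $\mathcal{D}$, the sign of $t_i-t_j$ is completely unconstrained, so the quantities $\mathrm{Im}(x_{ij}^2)$ do \emph{not} have controlled signs; a fortiori neither do $\mathrm{Im}(u)$ and $\mathrm{Im}(v)$, which are ratios of products of the $x_{ij}^2$. (The only thing this computation buys is the footnote-level fact that $x_{ij}^2\neq0$ in $\mathcal{D}$.) So the contradiction you want at $s_*$ cannot be extracted from sign bookkeeping of this kind, and the concluding appeal to a ``complexified conformal frame'' is a gesture, not an argument.

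The part of the theorem the paper actually proves itself is the $d=2$ case (section \ref{section:case2d}), and it does so by a route quite different from, and much more elementary than, your deformation-plus-contradiction scheme: in 2d the cross-ratio $z$ of (\ref{crossratio:2d}) depends only on the holomorphic coordinates $z_k=\epsilon_k+i(t_k+x_k^1)$, and these coincide with the coordinates of the auxiliary \emph{Euclidean} configuration $(x_k')^0=\epsilon_k$, $(x_k')^1=t_k+x_k^1$, which lies in $\mathcal{D}_E$ because the ordering of the $\epsilon_k$ is untouched. Hence $z\notin[1,+\infty)$ (and likewise for $\bar z$ via the antiholomorphic coordinates) follows verbatim from the Euclidean statement (\ref{zprop:eucl}), with no path, no branch tracking, and no contradiction argument. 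Your write-up does not recover this. If you want a self-contained contribution here, reproducing that 2d argument is the minimum; for $d\geq3$ you should state plainly that you are quoting \cite{Kravchuk:2021kwe} rather than suggesting a sign analysis that does not go through.
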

This theorem is one of the results in \cite{Kravchuk:2021kwe}, where the proof is given (see \cite{Kravchuk:2021kwe}, lemma 6.1 and the proof in its section 6.4). In this paper we will accept it as a fact. For readers who wish to get some intuition on why $|\rho|,|\bar{\rho}|<1$ in $\mathcal{D}$, we will give a proof of the 2d case in section \ref{section:case2d}.

Since it is impossible to construct analytic functions $z(\tau_k,\mathbf{x}_k), \bar{z}(\tau_k,\mathbf{x}_k)$ on $\mathcal{D}$ in $d\geq3$, we cannot naively do the composition like (\ref{strategy}). We will construct analytic continuation of $g(\tau_k,\mathbf{x}_k)$ as follows.

Let $\Gamma$ be the preimage of $\Gamma_{uv}$ in $\mathcal{D}$ (see eq. (\ref{defGamma})).\footnote{We claim that $\mathcal{D}\backslash\Gamma$ is connected, so such a path $\gamma$ exists. The reason is as follows. First we notice that the defining property of $\Gamma$ (see eq.\,(\ref{defGamma})) is equivalent to $z=\bar{z}$. According to section \ref{section:conformalframe}, any configuration $C=(x_1,x_2,x_3,x_4)$ can be written as $C=g\cdot C'$, where $g$ is a conformal transformation and $C'$ is the conformal-frame configuration in eq.\,(\ref{config:conformalframe}). If $C$ belongs to $\Gamma$, we have $b=0$ in eq.\,(\ref{config:conformalframe}) (which is equivalent to $z=\bar{z}$). Then we keep the conformal transformation $g$ fixed and vary $C'$. One can show that at such a configuration, $\frac{\partial(z-\bar{z})}{\partial {x_2'}^{\mu}}=\pm2i\neq0$ for $\mu=1,2,\ldots,d-1$. Since the conformal transformation $g$ is invertible, there exists at least one nonzero (complex) derivative of $z-\bar{z}$ in $x_i^\mu$'s. If $\frac{\partial(z-z_i)}{\partial\tau_k}\neq0$ for some $k$, then $\Gamma$ is locally complex-codimension-1 (i.e.\,real-codimension-2) in $\mathcal{D}$ near this configuration. This is the generic case for configurations in $\Gamma$. The ``bad configurations", where $\frac{\partial(z-\bar{z})}{\partial\tau_k}=0$ for all $k$, form a lower dimensional subspace than the subspace of generic configurations in $\Gamma$. We know that removing a subspace with real codimensions no less than 2 does not affect connectedness, therefore $\mathcal{D}\backslash\Gamma$ is connected.} Given a four-point configuration $C=(x_1,x_2,x_3,x_4)\in\mathcal{D}$, we choose a path
\begin{equation}\label{defpath:D}
\begin{split}
    \gamma&:\ [0,1]\ \longrightarrow\ \mathcal{D}, \\
    \gamma&(0)\in\mathcal{D}_E\backslash\Gamma, \\
    \gamma&(1)=C, \\
    \gamma&(s)\in\mathcal{D}\backslash\Gamma,\quad s<1. \\
\end{split}
\end{equation}
The values of $u(s),v(s)$ along $\gamma$ are uniquely computable via (\ref{xtouv}). We would like to also define $z,\bar{z},\rho,\bar{\rho},\chi,\bar{\chi}$ along the path $\gamma$. For this we make some conventions: at the start point of $\gamma$ we choose
\begin{equation}\label{convention:eucl3d}
\begin{split}
    \mathrm{Im}(z)\geq0,&\quad\mathrm{Im}(\bar{z})\leq0, \\
    0\leq\mathrm{Re}(\chi)\leq\pi,&\quad-\pi\leq\mathrm{Re}(\bar{\chi})\leq0. \\
\end{split}
\end{equation}
This uniquely determines $z,\bar{z},\rho,\bar{\rho},\chi,\bar{\chi}$ at $s=0$. Since we chose $\gamma$ such that $\gamma(s)\notin\Gamma$ before the final point, the subsequent paths of $z,\bar{z},\rho,\bar{\rho},\chi,\bar{\chi}$ at $s>0$ are then uniquely determined by continuity. We define path-dependent variables
\begin{equation}\label{variables:pathdep}
\begin{split}
     z(C,\gamma),\bar{z}(C,\gamma),\rho(C,\gamma),\bar{\rho}(C,\gamma),\chi(C,\gamma),\bar{\chi}(C,\gamma)
\end{split}
\end{equation}
to be the variables at the final point $\gamma(1)=C$. Our goal is to show that
\begin{itemize}
	\item The function $g\fr{\chi(\tau_k,\mathbf{x}_k,\gamma),\bar{\chi}(\tau_k,\mathbf{x}_k,\gamma)}$ is actually independent of the path $\gamma$, so we write it as $g(\tau_k,\mathbf{x}_k)$.
	\item The function $g(\tau_k,\mathbf{x}_k)$ is analytic in $\tau_k$.
\end{itemize}
This, then, is how we will analytically continue $g(\tau_k,\mathbf{x}_k)$ to the whole $\mathcal{D}$.

\subsubsection{Path independent quantities}\label{section:pathindep}
In the previous subsection we defined path dependent variables (\ref{variables:pathdep}). Changing the path may change the values of these variables. In this subsection we are going to show that the following quantities are analytic functions on $\mathcal{D}$:
\begin{enumerate}
	\item $\fr{\rho\bar{\rho}}^\Delta$ for any $\Delta$.
	\item $\fr{\dfrac{\rho}{\bar{\rho}}}^{k/2}+\fr{\dfrac{\bar{\rho}}{\rho}}^{k/2}$, where $k\in\bbZ$.
\end{enumerate}
We first need to show that the above quantities are path independent, then we need to show the analyticity.

For $\fr{\rho\bar{\rho}}^\Delta$, by (\ref{ztouv}) and (\ref{ztorho}), we write it as
\begin{equation}\label{rho:rewrite}
\begin{split}
    \fr{\rho\bar{\rho}}^\Delta=\dfrac{u^\Delta}{I^{2\Delta}},\quad I=(1+\sqrt{1-z})(1+\sqrt{1-\bar{z}}). \\
\end{split}
\end{equation}
$u^\Delta$ is an analytic function on $\mathcal{D}$ because (a) $x_{ij}^2$'s are non-zero analytic functions on $\mathcal{D}$ and (b) $\mathcal{D}$ is simply connected.  Condition (a) guarantees that the path-dependent analytic continuation of $u^\Delta=\left(\frac{x_{12}^2x_{34}^2}{x_{13}^2x_{24}^2}\right)^\Delta$ can be performed, and condition (b) implies that the analytic continuation of $u^\Delta$ is path-independent (i.e.\,$u^\Delta$, as a function of $x_i^\mu$'s, has no monodromy issue).
\begin{lemma}\label{lemma:Ianalytic}
	$I(\tau_k,\mathbf{x}_k)$ is an analytic function on $\mathcal{D}$.
\end{lemma}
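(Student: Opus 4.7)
The plan is to rewrite $I$ as a combination of analytic functions on $\mathcal{D}$. Expanding the product gives $I = 1 + \sqrt{(1-z)(1-\bar{z})} + \sqrt{1-z} + \sqrt{1-\bar{z}} = 1 + \sqrt{v} + J$, where $J \coloneqq \sqrt{1-z} + \sqrt{1-\bar{z}}$. Squaring, and using $(1-z)(1-\bar{z}) = v$ together with $z+\bar{z} = w = 1+u-v$, one gets the algebraic identity $J^2 = 2 - w + 2\sqrt{v} = (1+\sqrt{v})^2 - u$. The strategy is to show that both $\sqrt{v}$ and $\sqrt{(1+\sqrt{v})^2 - u}$ extend to single-valued analytic functions on $\mathcal{D}$; then $I$ equals their appropriate combination and is itself analytic.

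First, I would handle $\sqrt{v}$. Since $v = x_{14}^2 x_{23}^2 / (x_{13}^2 x_{24}^2)$ is a single-valued analytic function on $\mathcal{D}$, nonvanishing because $x_{ij}^2\neq 0$ on $\mathcal{D}$, and since $\mathcal{D} = \mathbb{C}^4_> \times \bbR{4(d-1)}$ is simply connected (the inequalities defining $\mathbb{C}^4_>$ are linear, hence $\mathbb{C}^4_>$ is convex), $v$ admits a single-valued analytic square root $\sqrt{v}$ on $\mathcal{D}$. I fix the branch so that $\sqrt{v} = \sqrt{1-z}\,\sqrt{1-\bar{z}}$ at a chosen Euclidean starting point using the convention (\ref{convention:eucl3d}); by continuity and analyticity on $\mathcal{D}\backslash\Gamma$, where the right-hand side is locally well-defined and squares to $v$, this identification persists globally.

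Next, I would show that $Q \coloneqq (1+\sqrt{v})^2 - u$ is analytic and nonvanishing on $\mathcal{D}$. Analyticity is immediate. For nonvanishing: locally on $\mathcal{D}\backslash\Gamma$ with $z,\bar{z}$ the branches matched to $\sqrt{v}$, one checks $Q = (\sqrt{1-z} + \sqrt{1-\bar{z}})^2$, which vanishes only if $\sqrt{1-z} = -\sqrt{1-\bar{z}}$; squaring forces $z = \bar{z}$, and the equation then demands $\sqrt{1-z} = 0$, i.e.\ $z=1$. But $z=1$ implies $v=(1-z)(1-\bar{z}) = 0$, contradicting $v\neq 0$ on $\mathcal{D}$. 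On $\Gamma$ itself, $z=\bar{z}\neq 1$, and with the chosen branch $\sqrt{v} = 1-z$ one computes $Q = (2-z)^2 - z^2 = 4(1-z) \neq 0$. Hence $Q$ is nowhere-zero analytic on the simply connected $\mathcal{D}$, so it admits a single-valued analytic square root $\sqrt{Q}$, whose sign I fix to match $J$ at the Euclidean starting point, giving $J = \sqrt{Q}$ globally.

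Combining, $I = 1 + \sqrt{v} + \sqrt{Q}$ is analytic on $\mathcal{D}$. The main obstacle, in my view, is justifying that the combination $\sqrt{1-z}+\sqrt{1-\bar{z}}$ is unambiguous even at points where $z$ and $\bar{z}$ are not individually globally defined analytic functions (which is precisely the $d\geq 3$ complication discussed in appendix \ref{appendix:obstruction}); this is resolved by the algebraic identity $J^2 = Q$ together with the observation that $Q$ is an analytic function of $u$ and $\sqrt{v}$ that never vanishes on $\mathcal{D}$, so the symmetric combination is controlled purely through single-valued quantities.
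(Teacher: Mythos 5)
Your proof is correct in substance but takes a genuinely different route from the paper. The paper argues in three local steps: it first uses Theorem~\ref{theorem:rho<1} plus the symmetry of $I$ under $z\leftrightarrow\bar z$ to establish path-independence, then gets analyticity on $\mathcal{D}\backslash\Gamma$ from the local analyticity of $(u,v)\mapsto(z,\bar z)$, and finally handles the branch locus $\Gamma$ by a Taylor expansion of the symmetric function $I(z,\bar z)$ about the diagonal, observing that only $(z+\bar z)$ and $(z-\bar z)^2$ appear, both of which are polynomials in $u,v$. You instead reduce everything to a global algebraic identity: $I=1+\sqrt{v}+J$ with $J^2=(1+\sqrt{v})^2-u$, and then pull single-valued square roots of nonvanishing analytic functions off the simply connected domain $\mathcal{D}$. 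This buys you an explicit closed form for $I$ in terms of $u$, $\sqrt{v}$ and $\sqrt{Q}$, and it dispenses with the Taylor-expansion step near $\Gamma$ entirely (your separate check that $Q=4(1-z_*)\neq0$ on $\Gamma$ is all that is needed there). The trade-off is that the paper's symmetric-expansion argument generalizes immediately to other symmetric functions of $z,\bar z$ (it is reused verbatim for $\rho^k+\bar\rho^k$ in section~\ref{section:pathindep}), whereas your identity is specific to $I$.

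One step should be made explicit: your argument that $Q$ is nonvanishing on $\mathcal{D}\backslash\Gamma$ silently assumes that $\sqrt{1-z}$ and $\sqrt{1-\bar z}$ remain on the \emph{same} (principal) branch everywhere in $\mathcal{D}$. If the two continued branches could end up opposite, then $\sqrt{1-z}=-\sqrt{1-\bar z}$ with $z=\bar z$ is possible without forcing $\sqrt{1-z}=0$, and $Q$ would vanish there; likewise the persistence of the identification $\sqrt{v}=\sqrt{1-z}\,\sqrt{1-\bar z}$ along paths requires that $1-z,1-\bar z$ never cross the cut $(-\infty,0]$. Both points are exactly the content of Theorem~\ref{theorem:rho<1} ($z,\bar z\notin[1,+\infty)$ throughout $\mathcal{D}$), which the paper's proof invokes at the corresponding place and which you should cite; with it, $\mathrm{Re}\,\sqrt{1-z},\mathrm{Re}\,\sqrt{1-\bar z}>0$ and the nonvanishing of $J$, hence of $Q$, is immediate. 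With that citation added, your proof is complete.
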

\begin{proof}
By theorem \ref{theorem:rho<1}, the variables $z,\bar{z}$ are always in the same branch of the square-root functions $\sqrt{1-z},\sqrt{1-\bar{z}}$. Together with the fact that changing the path at most interchanges $z$ and $\bar{z}$ (this follows from (\ref{ztouv}) and from $u,v$ being functions of a point and not of a path), we conclude that $I$ is a path independent quantity.

To show that $I(\tau_k,\mathbf{x}_k)$ is an analytic function on $\mathcal{D}$, we first show that $I(\tau_k,\mathbf{x}_k)$ is an analytic function on $\mathcal{D}\backslash\Gamma$. For a given configuration $C\in\mathcal{D}\backslash\Gamma$, we choose a path $\gamma$ under conditions (\ref{defpath:D}). Then the variables $z,\bar{z}$ at $C$ are determined by $\gamma$. By (\ref{uvtoz}), we can find a neighbourhood $U\subset\mathcal{D}\backslash\Gamma$ of $C$ such that the map $(u,v)\mapsto(z,\bar{z})$ is locally analytic. Thus $I(\tau_k,\mathbf{x}_k)$ is an analytic function on $\mathcal{D}\backslash\Gamma$.

It remains to show that $I(\tau_k,\mathbf{x}_k)$ is analytic near $C\in\Gamma$. For $C\in\Gamma$ we have $z=\bar{z}=z_*$. Because $I(z,\bar{z})$ is symmetric in $z$ and $\bar{z}$, and because $I(z,\bar{z})$ is analytic in the variables $z,\bar{z}$ in the domain $z,\bar{z}\notin[1,+\infty)$, $I(z,\bar{z})$ has the following Taylor expansion near $(z_*,z_*)$:
\begin{equation}\label{I:taylorexp}
\begin{split}
    I(z,\bar{z})=\sumlim{m,n\in\mathbb{N}}a_{m,n}(z+\bar{z}-2z_*)^m(z-\bar{z})^{2n}.
\end{split}
\end{equation}
By (\ref{uvtoz}) we have
\begin{equation}\label{uvtoz:square}
\begin{split}
z+\bar{z}=&1+u-v \\
(z-\bar{z})^2=&(1+u-v)^2-4u \\
\end{split}
\end{equation}
Although the map $(u,v)\mapsto(z,\bar{z})$ is not analytic near $(u,v)\in\Gamma_{uv}$, (\ref{I:taylorexp}) and (\ref{uvtoz:square}) imply that the function $I(u,v)$ is still locally analytic in the variables $u,v$. Thus $I(\tau_k,\mathbf{x}_k)$, which is the composition of analytic functions $I(u,v)$ and $\fr{u(\tau_k,\mathbf{x}_k),v(\tau_k,\mathbf{x}_k)}$, is analytic near $C\in\Gamma$. 
\end{proof}
To show $(\rho\bar{\rho})^\Delta$ is an analytic function on $\mathcal{D}$, it remains to show that $\left[I(\tau_k,\mathbf{x}_k)\right]^\Delta$ is an analytic function on $\mathcal{D}$. Since all configurations in $\mathcal{D}$ have $z,\bar{z}\notin[1,+\infty)$, we have
\begin{equation}
\begin{split}
-\dfrac{\pi}{2}<\mathrm{Arg}(\sqrt{1-z}),\mathrm{Arg}(\sqrt{1-\bar{z}})<\dfrac{\pi}{2},
\end{split}.
\end{equation}
which implies
\begin{equation}
\begin{split}
    I(\mathcal{D})\subset\mathbb{C}\backslash(-\infty,0].
\end{split}
\end{equation}
Thus $I(\mathcal{D})$ does not contain any curve which goes around zero. Together with lemma \ref{lemma:Ianalytic}, we conclude that $\left[I(\tau_k,\mathbf{x}_k)\right]^\Delta$ is an analytic function on $\mathcal{D}$. 

Recall definition (\ref{chitorho}) of $\chi,\bar{\chi}$, the fact that $(\rho\bar{\rho})^\Delta$ is analytic in $\mathcal{D}$ is equivalent to the following lemma:
\begin{lemma}\label{lemma:chi+chibar}
	$\chi+\bar{\chi}$ is an analytic function on $\mathcal{D}$.
\end{lemma}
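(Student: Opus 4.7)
The plan is to extract $\chi+\bar\chi$ from a single-valued logarithm of $\rho\bar\rho$, re-using the factorisation $\rho\bar\rho=u/I^{2}$ from (\ref{rho:rewrite}) that already underlies the analyticity of $(\rho\bar\rho)^{\Delta}$. Since $\rho\bar\rho=e^{i(\chi+\bar\chi)}$, the content of the lemma is equivalent to exhibiting an analytic branch of
\[
i(\chi+\bar\chi)=\log u-2\log I
\]
on $\mathcal{D}$; the two logarithms on the right can then be handled separately.

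First I would dispose of $\log I$: the proof of Lemma~\ref{lemma:Ianalytic} (and the paragraph following it) already shows that $I$ is analytic on $\mathcal{D}$ and that $I(\mathcal{D})\subset\mathbb{C}\setminus(-\infty,0]$, so the principal branch yields an analytic $\log I$ on all of $\mathcal{D}$. For $\log u$, I would use that each $x_{ij}^{2}$ is analytic and nowhere-vanishing on $\mathcal{D}$, together with the fact that $\mathcal{D}=\mathbb{C}^{4}_{>}\times\mathbb{R}^{4(d-1)}$ is convex (hence simply connected). By the monodromy theorem each $x_{ij}^{2}$ admits a single-valued analytic logarithm on $\mathcal{D}$, and so does the quotient $u=x_{12}^{2}x_{34}^{2}/(x_{13}^{2}x_{24}^{2})$.

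The only remaining step is to match the function $-i(\log u-2\log I)$, defined intrinsically on $\mathcal{D}$, with the path-dependent $\chi+\bar\chi$ built in Section~\ref{section:d>=3mainidea}. For this I would fix a Euclidean basepoint in $\mathcal{D}_{E}\setminus\Gamma$, use the convention (\ref{convention:eucl3d}) to pin down $\chi+\bar\chi$ there, and then absorb the $2\pi\bbZ$-ambiguity of $\log u$ by requiring the identity to hold at that basepoint. Both sides vary continuously along any admissible path $\gamma\subset\mathcal{D}\setminus\Gamma$, so agreement at the start propagates to agreement at the endpoint; this simultaneously proves that the path-dependent $\chi+\bar\chi$ is in fact path-independent and coincides with the globally analytic function just constructed. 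The only substantive issue — that $\rho\bar\rho$ must remain in $\mathbb{C}\setminus\{0\}$ for the logarithm to be sensible — is exactly the content of Theorem~\ref{theorem:rho<1}, which I would invoke; everything else is routine complex-analytic bookkeeping.
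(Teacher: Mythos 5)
Your proof is correct and follows essentially the same route as the paper: the paper's own proof simply observes that the preceding discussion of $(\rho\bar\rho)^\Delta$ — namely the factorisation $\rho\bar\rho=u/I^{2}$, the analyticity of $I$ from Lemma~\ref{lemma:Ianalytic}, the inclusion $I(\mathcal{D})\subset\mathbb{C}\setminus(-\infty,0]$, and the non-vanishing of the $x_{ij}^2$ on the simply connected domain $\mathcal{D}$ — already establishes that $\log(\rho\bar\rho)$ is analytic, and then writes $\chi+\bar\chi=-i\log(\rho\bar\rho)$. Your additional step of pinning down the branch at a Euclidean basepoint and propagating along paths is a harmless elaboration of what the paper leaves implicit.
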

\begin{proof}
We choose an arbitrary self-avoiding path to perform the analytic continuation of $\chi$ and $\bar{\chi}$.\footnote{The self-avoiding condition is to make sure that there exists a simply connected neighbourhood of the path, so that the analytic continuation is always well-defined (no monodromy issue).} By (\ref{chitorho}), along the analytic continuation path we have
\begin{equation}\label{eq:log}
\begin{split}
    \chi+\bar{\chi}=-i\log\fr{\rho\bar{\rho}}.
\end{split}
\end{equation}
The above discussion about analyticity of $(\rho\bar{\rho})^\Delta$ was actually proving that $\log\fr{\rho\bar{\rho}}$ is analytic in $\mathcal{D}$. In particular, the RHS of (\ref{eq:log}) does not depend on the choice of the analytic continuation path (although the analytic continuations of $\chi$ and $\bar{\chi}$ do). Thus $\chi+\bar{\chi}$ is also an analytic function on $\mathcal{D}$.
\end{proof}

For $\fr{\rho/\bar{\rho}}^{k/2}+\fr{\bar{\rho}/\rho}^{k/2}$, we write it as
\begin{equation}
\begin{split}
    \fr{\dfrac{\rho}{\bar{\rho}}}^{k/2}+\fr{\dfrac{\bar{\rho}}{\rho}}^{k/2}=(\rho\bar{\rho})^{-k/2}\fr{\rho^k+\bar{\rho}^k}.
\end{split}
\end{equation}
The analyticity of $\fr{\rho\bar{\rho}}^{-k/2}$ has been proved. Changing the path at most interchanges $\rho$ and $\bar{\rho}$, so $\rho^k+\bar{\rho}^k$ is path independent. The analyticity of $\fr{\rho^k+\bar{\rho}^k}$ in $\mathcal{D}$ follows from a similar argument as in the proof of lemma \ref{lemma:Ianalytic}. Therefore, $\fr{\rho/\bar{\rho}}^{k/2}+\fr{\bar{\rho}/\rho}^{k/2}$ is an analytic function on $\mathcal{D}$.

\subsubsection{The end of the proof that \texorpdfstring{$g(\tau_k,\mathbf{x}_k)$}{g(tau,x)} is analytic in \texorpdfstring{$\mathcal{D}$}{D}}
In section \ref{section:d>=3mainidea} and \ref{section:pathindep}, we introduced path dependent variables $z,\bar{z},\rho,\bar{\rho}$ in $\mathcal{D}$ and showed that $(\rho\bar{\rho})^\Delta$ and $\fr{\rho/\bar{\rho}}^{k/2}+\fr{\bar{\rho}/\rho}^{k/2}$ are analytic functions on $\mathcal{D}$. Now sticking them into the expansion (\ref{g:rhorearrange}), we conclude that
\begin{itemize}
	\item $g(\tau_k,\mathbf{x}_k)=\sumlim{\Delta\geq0}\sumlim{l\in\mathbb{N}}c_{\Delta,l}\fr{\rho\bar{\rho}}^\Delta P^l(\rho,\bar{\rho})$ is a series of analytic functions on $\mathcal{D}$.
\end{itemize}
For any $C\in\mathcal{D}$, we can find a neighbourhood $U\subset\mathcal{D}$ of $C$ such that the expansion (\ref{g:rhorearrange}) converges uniformly in $U$.\footnote{We can find a neighbourhood $U$ of $C$ such that $\abs{\rho},\abs{\bar{\rho}}<R<1$ in $U$. Then the uniform convergence follows from the fact that (\ref{g:rhorearrange}) is a rearrangement of the series expansion (\ref{g:rhoseries}) which is absolutely convergent.} So we conclude that
\begin{itemize}
	\item $g(\tau_k,\mathbf{x}_k)$ is an analytic function on $\mathcal{D}$.
\end{itemize}

\subsection{Analytic continuation: case \texorpdfstring{$d=2$}{d=2}}\label{section:case2d}
In this section we would like to discuss separately the analytic continuation of $g(\tau_k,\mathbf{x}_k)$ in $2d$. Although this case is covered by section \ref{section:cased>=3}, in the 2d case a much simpler construction can be given. This is because the analytic functions $z(\tau_k,\mathbf{x}_k),\bar{z}(\tau_k,\mathbf{x}_k)$ exist. In 2d we use the complex coordinates \cite{francesco1997conformal}:
\begin{equation}\label{def:2dcomplex}
\begin{split}
z_k=x_k^0+ix_k^1,\ \bar{z}_k=x_k^0-ix_k^1,\quad k=1,2,3,4.
\end{split}
\end{equation}
We choose $z,\bar{z}$ to be
\begin{equation}\label{crossratio:2d}
\begin{split}
z=\dfrac{(z_1-z_2)(z_3-z_4)}{(z_1-z_3)(z_2-z_4)},\quad\bar{z}=\dfrac{(\bar{z}_1-\bar{z}_2)(\bar{z}_3-\bar{z}_4)}{(\bar{z}_1-\bar{z}_3)(\bar{z}_2-\bar{z}_4)}.
\end{split}
\end{equation}
One can check that (\ref{crossratio:2d}) is consistent with (\ref{ztouv}). Furthermore, $z,\bar{z}$ in (\ref{crossratio:2d}) are analytic in all variables $x_k^\mu$ except for 
\begin{equation}
\begin{split}
(z_1-z_3)(\bar{z}_1-\bar{z}_3)(z_2-z_4)(\bar{z}_2-\bar{z}_4)=0.
\end{split}
\end{equation}
In particular, $z(\tau_k,\mathbf{x}_k),\bar{z}(\tau_k,\mathbf{x}_k)$ are analytic functions of $x_k^\mu$ on $\mathcal{D}$.

Let $x_k^0=\epsilon_k+it_k$. We write down the complex coordinates defined in (\ref{def:2dcomplex}):
\begin{equation}\label{complexcoordinates:2d}
\begin{split}
z_k=\epsilon_k+it_k+ix_k^1,\quad\bar{z}_k=\epsilon_k+it_k-ix_k^1,\quad k=1,2,3,4.
\end{split}
\end{equation}
Then $z,\bar{z}$ are computed via (\ref{crossratio:2d}). Notice that the Euclidean configuration $C^\prime=(x_1^\prime,x_2^\prime,x_3^\prime,x_4^\prime)$,
\begin{equation}
\begin{split}
    \fr{x^\prime_k}^0=\epsilon_k,\quad\fr{x^\prime_k}^1=t_k+x_k^1
\end{split}
\end{equation}
gives the same $z_k$, and hence the same $z$. Applying eq. (\ref{zprop:eucl}) to $C^\prime$, we conclude that $z\neq[1,+\infty)$ ($\bar{z}\neq[1,+\infty)$ follows from the same argument). This proves the 2d case of theorem \ref{theorem:rho<1}.

Furthermore, $z,\bar{z}\neq0$ in $\mathcal{D}$ because $z_i-z_j,\bar{z}_i-z_j\neq0$ for $i\neq j$. So we conclude that
\begin{equation}\label{zprop:2d}
\begin{split}
    z(\mathcal{D}),\bar{z}(\mathcal{D})\subset\mathbb{C}\backslash\fr{\left\{0\right\}\cup[1,+\infty)}
\end{split}
\end{equation}
Based on (\ref{zprop:2d}), we safely map $z,\bar{z}$ to $\rho,\bar{\rho}$ via (\ref{ztorho}), and we have $\rho,\bar{\rho}\neq0$. Then consider the function $g(\chi,\bar{\chi})$ which is analytic in the domain $\mathrm{Im}(\chi),\mathrm{Im}(\bar{\chi})>0$ (see section \ref{section:rhoexpansion}). Figure \ref{figure:xtoztorho} shows the procedure of analytic continuation.
\begin{figure}[t]
	\centering
	\begin{tikzpicture}
	\draw (0,0) node{$\mathcal{D}$} ;
	\draw[-stealth] (0,-0.5) -- (0,-1.5) node[left,pos=.5]{(\ref{crossratio:2d})};
	\draw (0,-2) node{$\left\{z,\bar{z}\notin\left\{0\right\}\cup[1,+\infty)\right\}$};
	\draw[-stealth] (2,-2) -- (3,-2) node[above,pos=.5]{(\ref{xtouv})};
	\draw (3.5,-2) node{$R$};
	\draw (4.3,-2) node{$\ni(\rho,\bar{\rho})$};
	\draw[-stealth] (3.5,-0.5) -- (3.5,-1.5) node[right,pos=.5]{(\ref{chitorho})};
	\draw (3.5,0) node{$X$};
	\draw (4.3,0) node{$\ni(\chi,\bar{\chi})$};
	\draw[-stealth,dashed] (0.5,0) -- (3,0) node[above,pos=.5]{\cmark};
	\draw (-1,0) node{$(\tau_k,\mathbf{x}_k)\in$};
	\draw[-stealth] (5,0) -- (6,0) node[above,pos=.5]{$g(\chi,\bar{\chi})$};
	\draw (6.5,0) node{$\mathbb{C}$};
	\end{tikzpicture}
	\caption{\label{figure:xtoztorho}The diagram of maps in 2d.}
\end{figure}

Since $\mathcal{D}$ is simply connected, by the lifting properties of the covering map \cite{munkres1974topology}, there exists a continuous map (dashed arrow in figure \ref{figure:xtoztorho}) from $\mathcal{D}$ to $X$, which lifts the map from $\mathcal{D}$ to $R$. Such a map is unique if we fix $\chi(\tau_k,\mathbf{x}_k)$ and $\bar{\chi}(\tau_k,\mathbf{x}_k)$ at one configuration in $\mathcal{D}_E$.\footnote{In $\mathcal{D}_E$, we choose $\chi(\tau_k,\mathbf{x}_k)$, $\bar{\chi}(\tau_k,\mathbf{x}_k)$ with the constraint $\chi^*=-\bar{\chi}$ because $\rho^*(\tau_k,\mathbf{x}_k)=\bar{\rho}(\tau_k,\mathbf{x}_k)$ for configurations in the Euclidean region.} Because (\ref{crossratio:2d}), (\ref{ztorho}), (\ref{chitorho}) are analytic functions, the map $\mathcal{D}\rightarrow X$ defines an analytic function $(\chi(\tau_k,\mathbf{x}_k),\bar{\chi}(\tau_k,\mathbf{x}_k))$. By composing $\chi(\tau_k,\mathbf{x}_k),\bar{\chi}(\tau_k,\mathbf{x}_k)$ with the function $g(\chi,\bar{\chi})$, we get a function $g(\tau_k,\mathbf{x}_k)$ which is analytic in the variables $\tau_k$ and $\mathbf{x}_k=x_k^1$.

\section{Lorentzian CFT four-point function}\label{section:lorentz4pt}
\subsection{Some preparations}
In this section we are going to study the OPE convergence of Lorentzian CFT four-point functions. Since we have analytically continued the Euclidean CFT four-point function to the domain $\mathcal{D}$, the next step is to take the limit (\ref{def:Lorentz4ptfct}) to the Lorentzian configurations, which are at the boundary of $\mathcal{D}$ (denoted by $\partial\mathcal{D}$).

The main differences between $\mathcal{D}$ and $\partial\mathcal{D}$ are as follows:
\begin{itemize}
	\item (\textit{Position space})
	For any $x_i,x_j$ pair in $\mathcal{D}$, $x_{ij}^2$ is always non-zero, while for $x_i,x_j$ pairs in $\partial\mathcal{D}$, $x_{ij}^2=0$ when $x_i$ and $x_j$ are light-like separated:
	\begin{equation*}
	\begin{split}
	\epsilon_i=\epsilon_j,\quad(t_i-t_j)^2=\fr{\mathbf{x}_i-\mathbf{x}_j}^2.
	\end{split}
	\end{equation*}
	\item (\textit{Cross-ratio space})
	For all configurations in $\mathcal{D}$, the variables $z,\bar{z}$ never belong to $\left\{0\right\}\cup[1,+\infty)$, while the configurations in $\partial\mathcal{D}$ may have $z$ or $\bar{z}\in\left\{0\right\}\cup[1,+\infty)$.
\end{itemize}
We call (\ref{ope:schannel}), (\ref{g:rhoseries}) the s-channel expansion of the CFT four-point function. A priori we can also use the t- and u-channel expansions to construct the analytic continuation of the four-point function, starting from the t- and u-channel versions of (\ref{ope:schannel}):
\begin{equation}\label{ope:tuchannel}
\begin{split}
\braket{\phi(x_1)\phi(x_2)\phi(x_3)\phi(x_4)}=\dfrac{g(u_t,v_t)}{\fr{x_{14}^2}^\Delta\fr{x_{23}^2}^\Delta}=\dfrac{g(u_u,v_u)}{\fr{x_{13}^2}^\Delta\fr{x_{24}^2}^\Delta},
\end{split}
\end{equation}
where
\begin{equation}\label{crossratios:tu}
\begin{split}
u_t=v,\quad v_t=u,\quad u_u=\dfrac{1}{u},\quad v_u=\dfrac{v}{u}.
\end{split}
\end{equation}
We want to remark that only the s-channel expansion could be used to extend $g(\tau_k,\mathbf{x}_k)$ to the whole $\mathcal{D}$, since theorem \ref{theorem:rho<1} holds only for the s-channel. We can use t- and u-channel expansion to analytically continue the four-point function to part of $\mathcal{D}$, but not to the whole $\mathcal{D}$. We will also consider t- and u-channel expansions because there are Lorentzian configurations where the s-channel expansion does not converge, but the t- or u-channel expansion converges (see section \ref{section:tuconvergence}).

\subsection{Excluding light-cone singularities}
When $x_{ij}^2=0$ for some $x_i,x_j$ pair, since at least one of the scaling factors in (\ref{ope:schannel}) and (\ref{ope:tuchannel}) is infinity, we expect the four-point function to be infinity. The configurations which contain at least one light-like $x_i,x_j$ pair are called light-cone singularities. 

One example, for which the correlation functions are divergent at light-cone singularities, is the generalized free field (GFF). Since we are interested in the Lorentzian configurations where the four-point functions are genuine functions for all unitary CFTs, we only consider the configurations which are not light-cone singularities. In other words, in this paper we will only consider the following set of Lorentzian configurations:
\begin{equation}\label{def:setlorconfig}
\begin{split}
\mathcal{D}_L\coloneqq\left\{(x_1,x_2,x_3,x_4)\in\partial\mathcal{D}\Big{|}x_k=(it_k,\mathbf{x}_k),\ \forall k;\quad x_{ij}^2\neq0,\ \forall i\neq j\right\}.
\end{split}
\end{equation}

\subsection{Criteria of OPE convergence}\label{section:criteria}
Now that $x_{ij}^2\neq0$ for all configurations in $\mathcal{D}_L$, all the cross-ratios defined in (\ref{xtouv}) and (\ref{crossratios:tu}) are finite and non-zero, which implies
\begin{equation}\label{z:DL}
\begin{split}
z,\bar{z}\neq0,1,\infty.
\end{split}
\end{equation}
So the real axis in the $z,\bar{z}$-space is divided into three parts:
\begin{equation}\label{zaxis:split}
\begin{split}
(-\infty,0)\cup(0,1)\cup(1,+\infty).
\end{split}
\end{equation}
In this section, we are going to establish criteria of OPE convergence in s-, t- and u-channels. The three intervals in (\ref{zaxis:split}) will play important roles because each of them is the place where one OPE channel stops being convergent.

\subsubsection{s-channel}\label{section:sconvergence}
We have analytically continued the four-point function $G_4$ to $\mathcal{D}$. So far, as already mentioned, we only used the s-channel expansion because of theorem \ref{theorem:rho<1}. Actually by using the s-channel expansion, we are able to extend $G_4$ to a larger domain $\mathcal{D}^s\supset\mathcal{D}$ according the constraint $0<\abs{\rho},\abs{\bar{\rho}}<1$ (or equivalently, $z,\bar{z}\neq\left\{0\right\}\cup[1,+\infty)$). $\mathcal{D}^s$ contains some but not all Lorentzian configurations. In other words, the Lorentzian four-point function has convergent s-channel OPE on the set $\mathcal{D}^s\cap\mathcal{D}_L$.

By (\ref{ztorho}) and (\ref{z:DL}), $\rho,\bar{\rho}\neq0,\pm1$ for all configurations in $\mathcal{D}_L$. Because of theorem \ref{theorem:rho<1} and the continuity, all configurations in $\mathcal{D}_L$ have $\abs{\rho},\abs{\bar{\rho}}\leq1$. To check the convergence of s-channel OPE, it suffices to check whether $\abs{\rho},\abs{\bar{\rho}}\neq1$ or not. Equivalently, it suffices to check whether $z,\bar{z}\notin(1,+\infty)$ or not. 

Therefore, given a Lorentzian configuration $C_L\in\mathcal{D}_L$, we have the following criterion of s-channel OPE convergence:
\begin{theorem}\label{theorem:schannel}
	(\emph{s-channel OPE convergence}) If neither $z$ nor $\bar{z}$ computed from $C_L$ belong to $(1,+\infty)$, then the Lorentzian four-point function $G_4$ is analytic at $C_L$ and is given by the formula
	\begin{equation}
	\begin{split}
	    G_4(C_L)=\dfrac{g(\chi,\bar{\chi})}{[x_{12}^2x_{34}^2]^\Delta}.
	\end{split}
	\end{equation}
	Here $g(\chi,\bar{\chi})$ is the same function as described in section \ref{section:rhoexpansion}, and the variables $\chi,\bar{\chi}$ are defined by the algorithm in section \ref{section:d>=3mainidea}. The function $g(\chi,\bar{\chi})$ can be computed by the convergent series expansion (\ref{g:rhoseries}).
\end{theorem}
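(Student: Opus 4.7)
The plan is to promote the construction of Section~\ref{section:cased>=3}, which yields analyticity of $g$ on the interior $\mathcal{D}$, to an open neighborhood of the Lorentzian point $C_L$. Everything reduces to verifying that $|\rho|,|\bar{\rho}|$ stay strictly below $1$ in such a neighborhood; once this is established, the radial expansion converges uniformly there and supplies the desired analytic extension, and the Wick-rotation limit \eqref{def:Lorentz4ptfct} automatically agrees with the extended function's value at $C_L$.

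First I would check the bound $|\rho|,|\bar{\rho}|<1$ at $C_L$ itself. Since $C_L\in\mathcal{D}_L$, all $x_{ij}^2$ are nonzero, so $u$ and $v$ are finite and nonzero, giving $z,\bar{z}\neq 0,\infty$. Together with $z,\bar{z}\neq 1$ on $\mathcal{D}_L$ and the hypothesis that neither lies in $(1,+\infty)$, this places both in $\mathbb{C}\setminus[1,+\infty)$. Because \eqref{ztorho} realizes a biholomorphism from $\mathbb{C}\setminus[1,+\infty)$ onto the open unit disc, the bound follows. The fact that $z,\bar{z}$ are defined only up to interchange causes no issue here, as both the hypothesis and the conclusion are symmetric under $z\leftrightarrow\bar{z}$.

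Next I would enlarge to a complex open neighborhood $U$ of $C_L$ on which all $x_{ij}^2$ stay nonzero and $|\rho|,|\bar{\rho}|\leq R<1$ uniformly, and then rerun the argument of Section~\ref{section:pathindep} on $\mathcal{D}\cup U$. Paths of type \eqref{defpath:D} can be deformed to terminate in $U$ rather than at a point on $\partial\mathcal{D}$; the path-independent analytic quantities $(\rho\bar{\rho})^\Delta$ and $(\rho/\bar{\rho})^{k/2}+(\bar{\rho}/\rho)^{k/2}$ continue to be well defined and analytic on $\mathcal{D}\cup U$ by the same Taylor-expansion trick as in Lemma~\ref{lemma:Ianalytic}, since the only inputs needed are $z,\bar{z}\notin[1,+\infty)$ and local analyticity near $\Gamma_{uv}$. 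The uniform bound $|\rho|,|\bar{\rho}|\leq R<1$ then forces the rearranged series \eqref{g:rhorearrange} to converge absolutely and uniformly on $U$, giving an analytic extension of $g(\tau_k,\mathbf{x}_k)$ to $\mathcal{D}\cup U$. Combined with the manifest analyticity of the prefactor $[x_{12}^2 x_{34}^2]^{-\Delta}$ on $U$, this delivers the stated formula for $G_4(C_L)$.

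The main obstacle is the compatibility between the path-dependent definition of $(\rho,\bar{\rho})$ at the boundary point $C_L$ used to establish the bound, and their continuous extension from nearby interior points of $\mathcal{D}$ used to define the analytic continuation: one has to rule out a spurious branch jump en route that could invalidate the uniform bound on $U$. This is handled by noting that the only monodromy available from \eqref{uvtoz} is the swap $z\leftrightarrow\bar{z}$ across the discriminant $\Gamma_{uv}$, and both the multiset $\{|\rho|,|\bar{\rho}|\}$ and the symmetric function $g$ are invariant under this swap; hence the bound, and the value of $g$, are unambiguously determined and persist throughout $U$.
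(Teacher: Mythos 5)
Your proposal is correct and follows essentially the same route as the paper: the hypothesis $z,\bar z\notin(1,+\infty)$ (together with $z,\bar z\neq 0,1,\infty$ on $\mathcal{D}_L$) places $z,\bar z$ in $\mathbb{C}\setminus[1,+\infty)$, hence $|\rho|,|\bar\rho|<1$, and the s-channel radial expansion then extends $G_4$ analytically to the enlarged domain $\mathcal{D}^s\supset\mathcal{D}$ containing a neighborhood of $C_L$, with the boundary limit (\ref{def:Lorentz4ptfct}) agreeing by continuity. Your explicit treatment of the uniform bound $|\rho|,|\bar\rho|\leq R<1$ on a neighborhood and of the $z\leftrightarrow\bar z$ monodromy only spells out details the paper leaves implicit.
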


\subsubsection{t-channel and u-channel}\label{section:tuconvergence}
We define the variables $z_t,\bar{z}_t$ and $z_u,\bar{z}_u$ by replacing $u,v$ with $u_t,v_t$ and $u_u,v_u$ in (\ref{ztouv}). By (\ref{ztouv}) and (\ref{crossratios:tu}), we choose proper solutions to the t- and u-channel versions of (\ref{ztouv}), and get the following relations\footnote{The other solutions of $z_t,\bar{z}_t,z_u,\bar{z}_u$ differ from (\ref{z:relation}) by interchanging $z_t,\bar{z}_t$ or $z_u,\bar{z}_u$, which will give the same conclusions of convergence properties in t- and u-channel expansions.}
\begin{equation}\label{z:relation}
\begin{split}
z_t=1-z,\quad\bar{z}_t=1-\bar{z},\quad z_u=1/z,\quad\bar{z}_u=1/\bar{z}.
\end{split}
\end{equation}
Then we define the t- and u-channel versions of radial coordinates $\rho_t,\bar{\rho}_t,\rho_u,\bar{\rho}_u$ by replacing $z,\bar{z}$ with $z_t,\bar{z}_t$ and $z_u,\bar{z}_u$ in (\ref{ztorho}). By (\ref{z:relation}) and the fact that $z,\bar{z}$ are not real for configurations in $\mathcal{D}_E\backslash\Gamma$, $z_t,\bar{z}_t,z_u,\bar{z}_u$ are also not real for configurations in $\mathcal{D}_E\backslash\Gamma$. In particular, $z_t,\bar{z}_t,z_u,\bar{z}_u\notin[1,+\infty)$ for all configurations in $\mathcal{D}_E\backslash\Gamma$, which allows us to choose $\abs{\rho_t}=\abs{\bar{\rho}_t}<1$ and $\abs{\rho_u}=\abs{\bar{\rho}_u}<1$ to start with convergent t- and u-channel expansions. Analogously to the s-channel expansion, the t- and u-channel expansions are defined by replacing $\rho,\bar{\rho}$ with $\rho_t,\bar{\rho}_t$ and $\rho_u,\bar{\rho}_u$ in the series expansion (\ref{g:rhoseries}).

For all configurations in $\mathcal{D}_E\backslash\Gamma$, the s-, t- and u-channel expansions converge to the same Euclidean CFT four-point function. This consistency condition is called the crossing symmetry \cite{ferrara1973tensor,polyakov1974nonhamiltonian}. Now let us analytically continue the four-point function via the t-channel expansion. Suppose we have a path $\gamma$ in $\mathcal{D}\backslash\Gamma$ such that $\gamma(0)\in\mathcal{D}_E\backslash\Gamma$, we can find a neighbourhood $U_\gamma\subset\mathcal{D}\backslash\Gamma$ of the set $\left\{\gamma(s)\ |\ 0\leq s\leq1\right\}$ and perform the analytic continuation of $z,\bar{z}$ in $U_\gamma$ via (\ref{xtouv}) and (\ref{uvtoz}).\footnote{As long as $\gamma(s)\notin\mathcal{D}\backslash\Gamma$ along the path $\gamma$, such a neighbourhood $U_\gamma$ always exists.} Then we get the analytic continuation of $z_t,\bar{z}_t$ in $U_\gamma$ by the relation in (\ref{z:relation}). If $z_t,\bar{z}_t\notin(1,+\infty)$ in $U_\gamma$, or equivalently, $\abs{\rho_t},\abs{\bar{\rho}_t}<1$ in $U_\gamma$, then the t-channel expansion of $G_4$ is convergent in $U_\gamma$, and gives the analytic continuation to $U_\gamma$. Since the start point $\gamma(0)$ is a Euclidean configuration, $U_\gamma\cap\mathcal{D}_E$ is an open subset of $\mathcal{D}_E$, where the temporal variables $\tau_k$ are independent real numbers. According the crossing symmetry, the s- and t-channel expansions agree in $U_\gamma\cap\mathcal{D}_E$, so they also agree in $U_\gamma$, where $\tau_k$ are independent complex numbers. Furthermore, by taking the limit from $\mathcal{D}\backslash\Gamma$ to $\Gamma$, we can also use the t-channel expansion to compute the four-point function for configurations in $\Gamma$ with the constraint $\abs{\rho_t},\abs{\bar{\rho}_t}<1$, and the result also agrees with the s-channel expansion by continuity. So we conclude that
\begin{itemize}
	\item Given a configuration $C$ in $\mathcal{D}$, the t-channel expansion gives the same analytic continuation of $G_4$ as the s-channel expansion if there exists a path $\gamma$ in $\mathcal{D}$ such that $\gamma(0)\in\mathcal{D}_E\backslash\Gamma$, $\gamma(1)=C$ and $z_t,\bar{z}_t\notin(1,+\infty)$ along $\gamma$.
\end{itemize}
Analogously, by replacing $z_t,\bar{z}_t$ with $z_u,\bar{z}_u$, we have the same conclusion for the u-channel expansion. 

While theorem \ref{theorem:rho<1} holds for $z,\bar{z}$, it does not hold for $z_t,\bar{z}_t$ or $z_u,\bar{z}_u$, which means that the t- and u-channel expansions may diverge in $\mathcal{D}$. Unlike the s-channel, the convergence properties of t- and u-channel expansions require not only the values of $z_t,\bar{z}_t,z_u,\bar{z}_u$ of a configuration, but also the values of these variables along a path. For convenience we use the relation (\ref{z:relation}) to translate $z_t,\bar{z}_t,z_u,\bar{z}_u\notin(1,+\infty)$ to equivalent conditions in $z,\bar{z}$:
\begin{equation}
\begin{split}
    z_t,\bar{z}_t\notin(1,+\infty)\quad&\Rightarrow\quad z,\bar{z}\notin(-\infty,0) \\
    z_u,\bar{z}_u\notin(1,+\infty)\quad&\Rightarrow\quad z,\bar{z}\notin(0,1) \\
\end{split}
\end{equation}
Then it suffices to compute and watch $z,\bar{z}$-curves along the path.

To give criteria of convergence properties in t- and u-channel expansions, we define some quantities which count how $z,\bar{z}$-curves cross the intervals $(-\infty,0)$ and $(0,1)$. Given a path $\gamma$ defined as follows
\begin{equation}\label{defpath:DL}
\begin{split}
\gamma&:\ [0,1]\ \longrightarrow\ \overline{\mathcal{D}}, \\
\gamma&(0)\in\mathcal{D}_E\backslash\Gamma, \\
\gamma&(s)\in\mathcal{D}\backslash\Gamma,\quad s<1, \\
\end{split}
\end{equation}
if the variables $z,\bar{z}$ at the final point $\gamma(1)$ satisfy $z,\bar{z}\notin(-\infty,0)$, we define
\begin{equation}\label{def:nt}
\begin{split}
n_{t}\fr{\gamma}\coloneqq&\ \mathrm{number\ of\ times}\ z\ \mathrm{crosses}\ (-\infty,0)\ \mathrm{from\ above} \\
&-\mathrm{number\ of\ times}\ z\ \mathrm{crosses}\ (-\infty,0)\ \mathrm{from\ below}, \\
\bar{n}_{t}\fr{\gamma}\coloneqq&\ \mathrm{number\ of\ times}\ \bar{z}\ \mathrm{crosses}\ (-\infty,0)\ \mathrm{from\ above} \\
&-\mathrm{number\ of\ times}\ \bar{z}\ \mathrm{crosses}\ (-\infty,0)\ \mathrm{from\ below}, \\
\end{split}
\end{equation}
and
\begin{equation}
\begin{split}
N_t(\gamma)\coloneqq n_t(\gamma)+\bar{n}_t(\gamma). \\
\end{split}
\end{equation}
Analogously, if the variables $z,\bar{z}$ at the final point $\gamma(1)$ satisfy $z,\bar{z}\notin(0,1)$, we define
\begin{equation}
\begin{split}
n_{u}\fr{\gamma}\coloneqq&\ \mathrm{number\ of\ times}\ z\ \mathrm{crosses}\ (0,1)\ \mathrm{from\ above} \\
&-\mathrm{number\ of\ times}\ z\ \mathrm{crosses}\ (0,1)\ \mathrm{from\ below}, \\
\bar{n}_{u}\fr{\gamma}\coloneqq&\ \mathrm{number\ of\ times}\ \bar{z}\ \mathrm{crosses}\ (0,1)\ \mathrm{from\ above} \\
&-\mathrm{number\ of\ times}\ \bar{z}\ \mathrm{crosses}\ (0,1)\ \mathrm{from\ below}, \\
\end{split}
\end{equation}
and
\begin{equation}
\begin{split}
N_u(\gamma)\coloneqq n_u(\gamma)+\bar{n}_u(\gamma). \\
\end{split}
\end{equation}
Let us consider the t-channel expansion. We claim that $N_t$ is a path independent quantity:
\begin{lemma}
	Given a configuration $C\in\overline{\mathcal{D}}$ with $z,\bar{z}\notin(-\infty,0]$, $N_t$ is independent of the choice of the path. Therefore, we can write $N_t$ as $N_t(C)$. 
\end{lemma}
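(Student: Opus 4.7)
The plan is to express $N_t$ as a winding-number quantity and exploit the fact that $u = x_{12}^2 x_{34}^2 / (x_{13}^2 x_{24}^2)$ is a single-valued, non-vanishing analytic function on the convex (hence simply connected) domain $\mathcal{D}$.

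First I would reduce to the case $\gamma_1(0) = \gamma_2(0)$. Since $\mathcal{D}_E$ is convex and $\Gamma$ has real codimension two in it, $\mathcal{D}_E \setminus \Gamma$ is path-connected; and along any path inside $\mathcal{D}_E \setminus \Gamma$ one has $\bar{z} = z^*$ with $z \neq \bar{z}$, so neither $z$ nor $\bar{z}$ is ever real, and $N_t$ is unchanged by prepending such a Euclidean segment. Setting $\Lambda := \gamma_1 \cdot \gamma_2^{-1}$, and noting that reversing a path negates every signed crossing, one has $N_t(\gamma_1) - N_t(\gamma_2) = N_t(\Lambda)$, so the lemma reduces to showing $N_t(\Lambda) = 0$ for this closed loop in $\mathcal{D}$.

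The next step uses the identity $z\bar{z} = u$. I would choose continuous branches of $\log z$, $\log \bar{z}$, $\log u$ along $\Lambda$ so that $\log z + \log \bar{z} = \log u$ at the initial point (possible because $z_0 \bar{z}_0 = u_0$); by continuity this equality persists along the whole loop. Let $W(f)$ denote the total change of the continuous argument of $f$ along $\Lambda$, divided by $2\pi$. Taking imaginary parts gives $W(z) + W(\bar{z}) = W(u)$. Since $u$ is single-valued and non-vanishing on the simply connected $\mathcal{D}$, its continuous logarithm is globally well-defined, so $W(u) = 0$, and hence $W(z) + W(\bar{z}) = 0$.

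Finally, for any curve in $\mathbb{C}^*$ whose endpoints lie off $(-\infty, 0]$ one has $n_t = W - \Delta \arg_{\mathrm{princ}}/(2\pi)$, since the principal argument jumps by $\mp 2\pi$ at each from-above/from-below crossing while the continuous argument does not. Applying this formula to $z \circ \Lambda$ and $\bar{z} \circ \Lambda$ and summing gives $N_t(\Lambda) = W(z) + W(\bar{z}) - \tfrac{1}{2\pi}[\Delta \arg_{\mathrm{princ}}(z) + \Delta \arg_{\mathrm{princ}}(\bar{z})]$. If the monodromy of $\Lambda$ around $\Gamma$ does not exchange $z$ and $\bar{z}$, both correction terms vanish; if it does exchange them, the two corrections are $\arg_{\mathrm{princ}}(\bar{z}_0) - \arg_{\mathrm{princ}}(z_0)$ and $\arg_{\mathrm{princ}}(z_0) - \arg_{\mathrm{princ}}(\bar{z}_0)$, which cancel. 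Either way $N_t(\Lambda) = 0$.

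The main obstacle is exactly this second case, in which $\Lambda$ encircles $\Gamma$ an odd number of times so that $z \circ \Lambda$ and $\bar{z} \circ \Lambda$ are not closed curves but open paths with swapped endpoints. The branch-consistent bookkeeping required to verify that $W(z) + W(\bar{z}) = W(u)$ still holds, and that the endpoint corrections cancel in the sum $n_t + \bar{n}_t$, takes some care; but both statements are manifestly symmetric under $z \leftrightarrow \bar{z}$, so the argument runs uniformly in both cases.
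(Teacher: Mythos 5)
Your argument is, at its core, the same as the paper's, just repackaged. The identity $W(z)+W(\bar z)=W(u)=0$ around a closed loop is exactly the paper's Lemma \ref{lemma:chi+chibar} (that $\chi+\bar\chi=-i\log(\rho\bar\rho)$ is single-valued analytic on $\mathcal{D}$, which reduces to $\log u$ being single-valued since $\rho\bar\rho=u/I^2$ with $I$ single-valued); and your bookkeeping $n_t=W-\Delta\arg_{\mathrm{princ}}/(2\pi)$ plays the same role as the paper's inequalities $(2n_t-1)\pi<\mathrm{Re}(\chi(1))<(2n_t+1)\pi$. Your handling of the swap monodromy (endpoint corrections cancelling) matches the paper's case analysis of the two ways $(\chi(1),\bar\chi(1))$ can relate to $(\chi'(1),\bar\chi'(1))$. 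So the substance checks out.

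There is, however, one genuinely flawed step: the reduction to a common base point. You assert that $\Gamma$ has real codimension two in $\mathcal{D}_E$, hence $\mathcal{D}_E\setminus\Gamma$ is path-connected. This is wrong: on the Euclidean slice the cross-ratios $u,v$ are \emph{real}, so $\Gamma\cap\mathcal{D}_E$ is the zero set of the single real function $4u-(1+u-v)^2=4(\mathrm{Im}\,z)^2\geq 0$, i.e.\ real codimension one. In $d=2$, where $z$ is globally defined by (\ref{crossratio:2d}), the sign of $\mathrm{Im}\,z$ genuinely disconnects $\mathcal{D}_E\setminus\Gamma$ into two components, so the connecting path you want need not exist. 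The gap is easily repaired — along \emph{any} Euclidean segment one has $\bar z=z^*$, so every crossing of $(-\infty,0)$ by $z$ is accompanied by an opposite crossing by $\bar z$, and $N_t=n_t+\bar n_t$ is insensitive to the prepended segment whether or not it meets $\Gamma$ — but as written the justification does not stand, and you should also be careful that at a point of $\Gamma\cap\mathcal{D}_E$ with $z=\bar z\in(-\infty,0)$ the two curves touch the cut simultaneously, so the crossing count there needs the symmetrized quantity $N_t$ rather than $n_t$ and $\bar n_t$ separately. A second, smaller caveat: your loop $\Lambda$ passes through $C\in\overline{\mathcal{D}}$, so you should say explicitly that $u$ extends continuously and non-vanishingly to the relevant boundary points (true away from light-cone singularities, which is the setting in which the lemma is actually used) so that $W(u)=0$ still applies.
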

\begin{proof}
Suppose we have a path $\gamma$ under condition (\ref{defpath:DL}) and $\gamma(1)=C$. Under convention (\ref{convention:eucl3d}), the path $\gamma$ uniquely determines the paths of $z,\bar{z},\rho,\bar{\rho},\chi,\bar{\chi}$. By (\ref{ztorho}) and (\ref{chitorho}), we have
\begin{equation}
\begin{split}
    z\in(-\infty,0)\quad&\Longleftrightarrow\quad\rho\in(-1,0) \\
    \quad&\Longleftrightarrow\quad\mathrm{Re}(\chi)=(2k+1)\pi\ \mathrm{for\ some}\ k\in\bbZ, \\
\end{split}
\end{equation}
which implies that the final point of $\chi(s)$ contains the information about $n_t$:
\begin{equation}\label{chi:nt}
\begin{split}
    (2n_t-1)\pi<\mathrm{Re}\fr{\chi(1)}<(2n_t+1)\pi.
\end{split}
\end{equation}
Analogously we have
\begin{equation}\label{chibar:ntbar}
\begin{split}
    (2\bar{n}_t-1)\pi<\mathrm{Re}\fr{\bar{\chi}(1)}<(2\bar{n}_t+1)\pi.
\end{split}
\end{equation}
Now we pick another path $\gamma^\prime$ under condition (\ref{defpath:DL}) and $\gamma^\prime(1)=C$. We let $\chi^\prime,\bar{\chi}^\prime,n_t^\prime,\bar{n}_t^\prime,N^\prime_t$ denote the corresponding variables of the path $\gamma^\prime$. By lemma \ref{lemma:chi+chibar}, we have
\begin{equation}\label{chi:compare}
\begin{split}
    \chi(1)+\bar{\chi}(1)=\chi^\prime(1)+\bar{\chi}^\prime(1).
\end{split}
\end{equation}
Since $\rho,\bar{\rho}$ at $C$ at most interchange with each other, the relation (\ref{chi:compare}) implies that there only two possibilities:
\begin{enumerate}
	\item $\chi(1)=\chi^\prime(1)+2k\pi$, $\bar{\chi}(1)=\bar{\chi}^\prime(1)-2k\pi$ for some $k\in\bbZ$.
	\item $\chi(1)=\bar{\chi}^\prime(1)+2k\pi$, $\bar{\chi}(1)=\chi^\prime(1)-2k\pi$ for some $k\in\bbZ$.
\end{enumerate}
which, by (\ref{chi:nt}) and (\ref{chibar:ntbar}), are equivalent to
\begin{enumerate}
	\item $n_t=n^\prime_t+k$, $\bar{n}_t=\bar{n}^\prime_t-k$ for some $k\in\bbZ$.
	\item $n_t=\bar{n}^\prime_t+k$, $\bar{n}_t=n^\prime_t-k$ for some $k\in\bbZ$.
\end{enumerate}
Thus we have $N_t\fr{\gamma}=N_t\fr{\gamma^\prime}$.
\end{proof}
Suppose $C$ is a configuration in $\mathcal{D}\cup\mathcal{D}_L$ with $z,\bar{z}\notin(-\infty,0)$ and $N_t=0$. By choosing an arbitrary path $\gamma$ with conditions (\ref{defpath:DL}) and $\gamma(1)=C$, we get the paths $\chi(s),\bar{\chi}(s)$ along $\gamma$. We define a pair of new variables $\tilde{\chi},\tilde{\bar{\chi}}$ by
\begin{equation}\label{def:chitilde}
\begin{split}
    \tilde{\chi}=\chi(1)-2n_t\pi,\quad\tilde{\bar{\chi}}=\bar{\chi}(1)+2n_t\pi.
\end{split}
\end{equation}
Since $N_t=0$ (which implies $n_t=-\bar{n}_t$), by (\ref{chi:nt}) and (\ref{chibar:ntbar}), the construction (\ref{def:chitilde}) gives
\begin{equation}
\begin{split}
    -\pi<\mathrm{Re}\fr{\tilde{\chi}}<\pi,\quad-\pi<\mathrm{Re}\fr{\tilde{\bar{\chi}}}<\pi.
\end{split}
\end{equation}
We have the following lemma.
\begin{lemma}\label{lemma:coordinates}
	The following maps
	\begin{equation}\label{transform:coordinates}
	\begin{split}
	\chi\mapsto\rho=e^{i\chi}\mapsto z=\dfrac{4\rho}{(1+\rho)^2}=\dfrac{1}{\cos^2\dfrac{\chi}{2}}
	\end{split}
	\end{equation}
	are biholomorphic maps from $\left\{\chi\in\mathbb{C}|-\pi<\mathrm{Re}\fr{\chi}<\pi,\ \mathrm{Im}\chi>0\right\}$ to $\left\{\rho\in\mathbb{C}|\abs{\rho}<1,\ \rho\neq(-1,0]\right\}$, then to the double-cut plane $\left\{z\in\mathbb{C}|z\notin(-\infty,0]\cup[1,+\infty)\right\}$.
\end{lemma}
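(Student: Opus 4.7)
The plan is to prove the lemma by analyzing the two maps $\chi\mapsto\rho=e^{i\chi}$ and $\rho\mapsto z=4\rho/(1+\rho)^2$ separately, showing each is a biholomorphism between the stated sets; the composition of holomorphic bijections with holomorphic inverses then gives the full statement.

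For the first map, I would write $\chi=a+ib$ with $a\in(-\pi,\pi)$ and $b>0$, so that $\rho=e^{-b}e^{ia}$. This gives $|\rho|=e^{-b}\in(0,1)$ and principal argument $\arg\rho=a\in(-\pi,\pi)$. The image therefore consists of all $\rho$ with $0<|\rho|<1$ whose principal argument lies strictly in $(-\pi,\pi)$, which is exactly $\{|\rho|<1,\ \rho\notin(-1,0]\}$. The map is holomorphic (entire exponential) and injective on the strip (two points differing by $2\pi$ would both have $\mathrm{Re}\chi\in(-\pi,\pi)$, impossible). Its inverse is $\rho\mapsto -i\,\mathrm{Log}\,\rho$ with the principal branch of the logarithm, which is holomorphic on $\mathbb{C}\setminus(-\infty,0]$ and hence on the slit disc.

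For the second map, I would first establish that $\rho\mapsto z=4\rho/(1+\rho)^2$ is a biholomorphism from $\{|\rho|<1\}$ onto $\mathbb{C}\setminus[1,+\infty)$, then identify the correspondence of the excluded segments. Injectivity on the open disc follows from observing that the equation $z\rho^2+(2z-4)\rho+z=0$ is quadratic in $\rho$ with product of roots equal to $1$, so at most one root can satisfy $|\rho|<1$. The explicit inverse from (\ref{ztorho}), namely $\rho=(1-\sqrt{1-z})^2/z$ with the principal branch of the square root, is holomorphic on $\mathbb{C}\setminus[1,+\infty)$ and takes values in the open unit disc (as used throughout section \ref{section:rhoexpansion}). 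For the image, boundary points $\rho=e^{i\theta}$ with $\theta\in(-\pi,\pi)\setminus\{0\}$ give $z=1/\cos^2(\theta/2)\in(1,+\infty)$, so the boundary $\partial\mathbb{D}\setminus\{-1\}$ is mapped onto the cut $(1,+\infty)$, confirming that the interior of the disc is mapped onto $\mathbb{C}\setminus[1,+\infty)$.

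Finally, I would match the two excluded segments. Parametrizing $z\in(-\infty,0)$ by $s=\sqrt{1-z}\in(1,+\infty)$, the inverse formula gives $\rho=(1-s)^2/(1-s^2)=(1-s)/(1+s)$, which is a bijection from $(1,+\infty)$ onto $(-1,0)$; the point $z=0$ corresponds to $\rho=0$ by the Taylor expansion $(1-\sqrt{1-z})^2/z\to z/4\to 0$. Hence removing $(-1,0]$ from the $\rho$-domain is equivalent to removing $(-\infty,0]$ from the $z$-image, preserving the biholomorphism. The identity $z=4\rho/(1+\rho)^2=1/\cos^2(\chi/2)$ under $\rho=e^{i\chi}$ is a direct trigonometric check. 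The only real obstacle is bookkeeping: one must be careful that the chosen square-root branch in the inverse is consistent with the principal-argument convention coming from the first map, but this is automatic because both conventions place the cut on the negative real axis.
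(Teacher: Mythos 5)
Your proof is correct and follows essentially the same route as the paper: treat the two maps separately, using the principal logarithm on the slit disc for $\chi\mapsto\rho$ and the known biholomorphism of the open disc onto $\mathbb{C}\setminus[1,+\infty)$ for $\rho\mapsto z$, then check by direct computation that $(-1,0]$ in the $\rho$-disc corresponds to $(-\infty,0]$ in the $z$-plane. The only difference is that you supply details the paper delegates to a citation (the product-of-roots injectivity argument and the explicit parametrization $\rho=(1-s)/(1+s)$ of the segment correspondence), which is fine.
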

\begin{proof}
	For the map $\chi\mapsto\rho$, since $-\pi<\mathrm{Re}\fr{\chi}<\pi$, its image in the $\rho$-space does not contain $(-\infty,0)$, hence does not contain curves which go around 0. So the inverse $\chi=-i\ln\rho$ exists. The constraint $\mathrm{Im}\chi>0$ is equivalent to $0<\abs{\rho}<1$.
	
	The map $\rho\mapsto z$ is known to be a biholomorphic map from the open unit disc to $\mathbb{C}\backslash[1,+\infty)$ \cite{hogervorst2013radial}. One can show by direct computation that $\rho\in(-1,0]$ is equivalent to $z\in(-\infty,0]$.
\end{proof}
Since the double-cut plane is preserved under the map $z\rightarrow1-z$, by lemma \ref{lemma:coordinates} we define a pair of t-channel variables $\tilde{\chi}_t,\tilde{\bar{\chi}}_t$ by
\begin{equation}\label{transform:chitochit}
\begin{split}
    \tilde{\chi}\mapsto\tilde{z}\mapsto\tilde{z}_t=1-\tilde{z}\mapsto\tilde{\chi}_t, \\
    \tilde{\bar{\chi}}\mapsto\tilde{\bar{z}}\mapsto\tilde{\bar{z}}_t=1-\tilde{\bar{z}}\mapsto\tilde{\bar{\chi}}_t, \\
\end{split}
\end{equation}
where the maps $\tilde{\chi}\mapsto\tilde{z},\tilde{\bar{\chi}}\mapsto\tilde{\bar{z}}$ are the same as (\ref{transform:coordinates}), and the maps $\tilde{z}_t\mapsto\tilde{\chi}_t,\tilde{\bar{z}}_t\mapsto\tilde{\bar{\chi}}_t$ are the inverse of (\ref{transform:coordinates}). 

Since $\mathrm{Im}\tilde{\chi},\mathrm{Im}\tilde{\bar{\chi}}>0$ for all configurations in $\mathcal{D}$, above we defined the (path dependent) variables $\tilde{\chi}_t,\tilde{\bar{\chi}}_t$ for configurations in $\mathcal{D}$ with the constraints $z,\bar{z}\notin(-\infty,0)$ and $N_t=0$. In fact such definition can be extended to Lorentzian configurations in $\mathcal{D}_L$ with the same constraints. This is because any configuration $C_L$ with $z,\bar{z}\notin(-\infty,0)$ and $N_t=0$ can be approached by configurations in $\mathcal{D}$ with the same constraints, and then $\tilde{\chi}_t,\tilde{\bar{\chi}}_t$ at $C_L$ are defined by continuity.

Note that if $z$ nor $\bar{z}$ do not cross $(-\infty,0)$ at all, then $|\rho_t|, |\bar{\rho}_t|<1$ along the whole path, and the t-channel OPE is guaranteed to converge. The criterion we give is more general in that it allows some crossings. Let us prove that this more general criterion is indeed sufficient. 
\begin{theorem}\label{theorem:tchannel}
	(\emph{t-channel OPE convergence}) Given a Lorentzian configuration $C_L\in\mathcal{D}_L$. If the variables $z,\bar{z}$ of $C_L$ do not belong to $(-\infty,0)$, and furthermore if $N_t(C_L)=0$, then the Lorentzian four-point function $G_4$ is analytic at $C_L$ and is given by the formula
	\begin{equation}\label{G4:tchannel}
	\begin{split}
	    G_4(C_L)=\dfrac{g\fr{\tilde{\chi}_t,\tilde{\bar{\chi}}_t}}{\left[x_{23}^2x_{14}^2\right]^\Delta}.
	\end{split}
	\end{equation}
	Here $g$ is the same function as described in section \ref{section:rhoexpansion}, and the variables $\tilde{\chi}_t,\tilde{\bar{\chi}}_t$ are defined by the algorithm in (\ref{def:chitilde}) and (\ref{transform:chitochit}). The function $g(\tilde{\chi}_t,\tilde{\bar{\chi}}_t)$ can be computed by the convergent series expansion (\ref{g:rhoseries}).
\end{theorem}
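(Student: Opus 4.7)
The plan is to reduce the convergence of the t-channel expansion at $C_L$ to two subclaims: (i) a geometric statement that the t-channel radial coordinates at $C_L$ lie strictly inside the unit disc, and (ii) an identification with the already-constructed s-channel analytic continuation $G_4$, propagated from $\mathcal{D}_E\backslash\Gamma$ to a neighborhood of $C_L$ by crossing symmetry and analytic continuation.

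First I would pick any path $\gamma$ satisfying (\ref{defpath:DL}) with $\gamma(1)=C_L$ and lift it to paths $\chi(s),\bar{\chi}(s)$ on $X$. The hypothesis $N_t(C_L)=0$ combined with $z,\bar{z}\notin(-\infty,0)$ at $C_L$ (so that the strict inequalities in (\ref{chi:nt}) and (\ref{chibar:ntbar}) apply) implies that $\tilde{\chi},\tilde{\bar{\chi}}$ defined by (\ref{def:chitilde}) lie in the strip $\{|\mathrm{Re}|<\pi,\ \mathrm{Im}\geq0\}$. Lemma \ref{lemma:coordinates} then places $\tilde{z}=1/\cos^2(\tilde{\chi}/2)$ and $\tilde{\bar{z}}=1/\cos^2(\tilde{\bar{\chi}}/2)$ in the closed double-cut plane; since $C_L\in\mathcal{D}_L$ forbids $z,\bar{z}\in\{0,1,\infty\}$, we in fact land in the \emph{open} double-cut plane. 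The involution $\tilde{z}\mapsto 1-\tilde{z}$ preserves this open set, so applying lemma \ref{lemma:coordinates} in reverse gives $\tilde{\chi}_t,\tilde{\bar{\chi}}_t$ in the open strip, and hence $|\tilde{\rho}_t|,|\tilde{\bar{\rho}}_t|<1$ at $C_L$. By continuity, the same strict bound persists in a neighborhood $U$ of $C_L$ in $\overline{\mathcal{D}}$, so the t-channel version of the series (\ref{g:rhoseries}) converges uniformly on compacts of $U$.

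Next I would identify this convergent sum with $G_4$ on $U\cap\mathcal{D}$. By the periodicity (\ref{chi:properties}) and the relation $n_t=-\bar{n}_t$, one has $g(\chi(1),\bar{\chi}(1))=g(\tilde{\chi},\tilde{\bar{\chi}})$, so the s-channel analytic continuation on $\mathcal{D}$ equals $(x_{12}^2 x_{34}^2)^{-\Delta}g(\tilde{\chi},\tilde{\bar{\chi}})$. In $\mathcal{D}_E\backslash\Gamma$, crossing symmetry (\ref{ope:tuchannel}) together with convergent s- and t-channel radial expansions gives
\[\frac{g(\tilde{\chi},\tilde{\bar{\chi}})}{(x_{12}^2 x_{34}^2)^\Delta}=\frac{g(\tilde{\chi}_t,\tilde{\bar{\chi}}_t)}{(x_{23}^2 x_{14}^2)^\Delta}.\]
Because the composition $\tilde{\chi}\mapsto\tilde{z}\mapsto\tilde{z}_t=1-\tilde{z}\mapsto\tilde{\chi}_t$ is biholomorphic on the relevant open sets (lemma \ref{lemma:coordinates}), and because both sides are analytic functions of $(\tau_k,\mathbf{x}_k)$ on the connected region $U\cap\mathcal{D}$, the crossing identity extends by analytic continuation throughout $U\cap\mathcal{D}$. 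Passing to the Wick-rotated limit (\ref{def:Lorentz4ptfct}) via $\gamma$, continuity of the prefactor $(x_{23}^2 x_{14}^2)^{-\Delta}$ (which is nonzero since $C_L$ has no light-cone singularity) and continuity of the uniformly convergent t-channel series at $C_L$ yield formula (\ref{G4:tchannel}).

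I expect the main obstacle to be the monodromy book-keeping: verifying that the two successive moves --- shifting by $2n_t\pi$ to land in the fundamental strip for $(\tilde{\chi},\tilde{\bar{\chi}})$, then following $z\mapsto 1-z$ --- preserve the open double-cut plane at every stage, so that the resulting $\tilde{\chi}_t,\tilde{\bar{\chi}}_t$ are single-valued and the t-channel series \emph{both} converges and coincides with the s-channel analytic continuation. The condition $N_t(C_L)=0$ is precisely what allows this coherent choice, while $C_L\in\mathcal{D}_L$ together with $z,\bar{z}\notin(-\infty,0)$ excludes the boundary-of-strip degenerations $\tilde{z},\tilde{z}_t\in\{0,1\}\cup(-\infty,0]\cup[1,+\infty)$ that would spoil either step.
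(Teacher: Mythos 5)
Your overall strategy is the same as the paper's: show that the t-channel radial coordinates built from $\tilde{\chi}_t,\tilde{\bar{\chi}}_t$ lie strictly inside the unit disc so the series (\ref{g:rhoseries}) converges near $C_L$, identify the resulting sum with the s-channel continuation $G_4$ via periodicity and crossing, and pass to the boundary along $\gamma$. However, two steps do not hold as written.

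First, your geometric claim is false in precisely the case the theorem is designed for. For $C_L$ in class T one has $z$ or $\bar{z}\in(1,+\infty)$, so $\tilde{z}$ lies \emph{on} the cut $[1,+\infty)$ and does not land in the open double-cut plane; correspondingly $\tilde{\rho}_t\in(-1,0)$, so $\tilde{\chi}_t$ sits on the boundary $\mathrm{Re}\,\tilde{\chi}_t=\pm\pi$ of the strip of lemma \ref{lemma:coordinates}, not in its interior. If your intermediate claim were true it would also force $|\rho|,|\bar{\rho}|<1$ and hence s-channel convergence at $C_L$, contradicting table \ref{table:class} for class T. What you actually need, and what does follow from the hypotheses $z,\bar{z}\notin(-\infty,0)$ together with $z,\bar{z}\neq 0,1$, is only that $\tilde{z}_t=1-\tilde{z}\notin\{0\}\cup[1,+\infty)$, which by (\ref{ztorho}) gives $|\tilde{\rho}_t|,|\tilde{\bar{\rho}}_t|<1$. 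So the conclusion of this step survives, but the route through the double-cut plane must be corrected.

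Second, the identification of the t-channel sum with $G_4$ is asserted rather than proved. The inference ``the identity holds on $\mathcal{D}_E\backslash\Gamma$, both sides are analytic on the connected set $U\cap\mathcal{D}$, hence the identity extends to $U\cap\mathcal{D}$'' is a non sequitur: $U\cap\mathcal{D}$ is a neighborhood of a Lorentzian boundary point and need not meet $\mathcal{D}_E\backslash\Gamma$, and joining the two regions inside a common domain of analyticity of the \emph{t-channel} side is exactly the monodromy problem you defer to your last paragraph. The paper circumvents configuration-space continuation entirely (lemma \ref{lemma:tchannel}): at each interior $C$ satisfying the hypotheses it (i) uses the periodicity (\ref{chi:properties}) with $n_t=-\bar{n}_t$ to replace $(\chi,\bar{\chi})$ by $(\tilde{\chi},\tilde{\bar{\chi}})$, (ii) applies the crossing identity (\ref{crossing:st}) as a function identity in cross-ratio space, valid because $\tilde{z},\tilde{\bar{z}}$ lie in the double-cut plane on the Euclidean branch for interior configurations (theorem \ref{theorem:rho<1}), and (iii) matches prefactors via $\tilde{z}\tilde{\bar{z}}=z\bar{z}=u$ and $(1-\tilde{z})(1-\tilde{\bar{z}})=v$ from (\ref{phase:st}); only then does it take the limit $C_n\to C_L$. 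Steps (ii) and (iii) are the content your proposal is missing, so as it stands the formula (\ref{G4:tchannel}) is not derived.
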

Before the proof of theorem \ref{theorem:tchannel}, we introduce the following lemma:
\begin{lemma}\label{lemma:tchannel}
Given a configuration $C\in\mathcal{D}$. If the variables $z,\bar{z}$ of $C$ do not belong to $(-\infty,0)$, and furthermore if $N_t(C)=0$, then we have
\begin{equation}\label{st:D}
\begin{split}
    \dfrac{g\fr{\chi,\bar{\chi}}}{\left[x_{12}^2x_{34}^2\right]^\Delta}=\dfrac{g\fr{\tilde{\chi}_t,\tilde{\bar{\chi}}_t}}{\left[x_{23}^2x_{14}^2\right]^\Delta}
\end{split}
\end{equation}
\end{lemma}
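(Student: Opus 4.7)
The strategy is to reduce (\ref{st:D}) to Euclidean crossing symmetry and then propagate it from the Euclidean slice $\mathcal{D}_E\backslash\Gamma$ to the open subset $S\subset\mathcal{D}$ determined by the hypotheses $z,\bar{z}\notin(-\infty,0)$ and $N_t=0$, using the identity theorem for holomorphic functions. The starting reduction uses the periodicity (\ref{chi:properties}): since $N_t=0$ means $n_t=-\bar{n}_t$, iterating $g(\chi,\bar{\chi})=g(\chi+2\pi,\bar{\chi}-2\pi)$ exactly $n_t$ times gives $g(\chi,\bar{\chi})=g(\tilde{\chi},\tilde{\bar{\chi}})$ with $\tilde{\chi},\tilde{\bar{\chi}}$ defined in (\ref{def:chitilde}). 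Hence the claim is equivalent to the direct crossing identity
\begin{equation*}
\frac{g(\tilde{\chi},\tilde{\bar{\chi}})}{[x_{12}^2 x_{34}^2]^{\Delta}}=\frac{g(\tilde{\chi}_t,\tilde{\bar{\chi}}_t)}{[x_{23}^2 x_{14}^2]^{\Delta}},
\end{equation*}
in which the two pairs of variables are related directly by $\tilde{z}_t=1-\tilde{z}$, $\tilde{\bar{z}}_t=1-\tilde{\bar{z}}$.

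Both sides of this reduced identity are holomorphic on $S$. The LHS is holomorphic on all of $\mathcal{D}$ by the construction of section \ref{section:cased>=3}. For the RHS, the hypotheses force $-\pi<\mathrm{Re}(\tilde{\chi}),\mathrm{Re}(\tilde{\bar{\chi}})<\pi$ with positive imaginary parts, and by lemma \ref{lemma:coordinates} the maps (\ref{transform:chitochit}) keep $\tilde{\chi}_t,\tilde{\bar{\chi}}_t$ in this same strip; hence $|e^{i\tilde{\chi}_t}|,|e^{i\tilde{\bar{\chi}}_t}|<1$, and the $\rho$-expansion (\ref{g:rhoseries}) converges absolutely and locally uniformly on $S$, producing a holomorphic function of $(\tau_k,\mathbf{x}_k)$.

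Next I would verify the identity on the Euclidean slice $\mathcal{D}_E\backslash\Gamma\subset S$. There, the convention (\ref{convention:eucl3d}) forces $n_t=\bar{n}_t=0$, so $\tilde{\chi}=\chi$, $\tilde{\bar{\chi}}=\bar{\chi}$, and the construction (\ref{transform:chitochit}) produces precisely the t-channel radial variables associated to the Euclidean configuration. The desired identity then reduces to the classical Euclidean crossing relation $g(u,v)/(x_{12}^2 x_{34}^2)^{\Delta}=g(v,u)/(x_{23}^2 x_{14}^2)^{\Delta}$, which holds because both sides are absolutely convergent OPE expansions of the same Euclidean four-point function in the regime where $\abs{\rho},\abs{\bar{\rho}},\abs{\rho_t},\abs{\bar{\rho}_t}<1$.

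The final step is to extend the identity from $\mathcal{D}_E\backslash\Gamma$ to all of $S$. Since $\mathcal{D}_E\backslash\Gamma$ is a maximal totally real submanifold of $\mathcal{D}$, the identity theorem for holomorphic functions of several complex variables propagates the equality to the connected component of $S$ containing it. The main obstacle is to establish that every $C\in S$ lies in this connected component: starting from any path $\gamma$ in $\mathcal{D}\backslash\Gamma$ from a Euclidean point to $C$ as in (\ref{defpath:D}), one must be able to deform $\gamma$ within the simply connected domain $\mathcal{D}$ so that the deformed path avoids the real codimension-one loci where $z$ or $\bar{z}$ enters $(-\infty,0)$. Path-independence of $N_t$ (established earlier) guarantees that the deformed path still terminates at $C$ with $N_t=0$, and continuity of $z,\bar{z}$ along the deformation ensures the path remains in $S$, completing the proof.
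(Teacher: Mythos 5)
Your first step coincides with the paper's: since $N_t=0$ forces $n_t=-\bar n_t$, iterating the periodicity (\ref{chi:properties}) gives $g(\chi,\bar\chi)=g(\tilde\chi,\tilde{\bar\chi})$, reducing the claim to a crossing identity between $\tilde\chi,\tilde{\bar\chi}$ and $\tilde\chi_t,\tilde{\bar\chi}_t$. After that the two arguments diverge, and yours has a gap at its load-bearing step.

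The paper finishes by working entirely in cross-ratio space: the crossing relation (\ref{crossing:st}), $g(z,\bar z)=[z\bar z/((1-z)(1-\bar z))]^\Delta g(1-z,1-\bar z)$, is an identity of analytic functions on the double-cut plane (equivalently the strip $-\pi<\mathrm{Re}\,\chi<\pi$, $\mathrm{Im}\,\chi>0$), obtained by continuing the Euclidean crossing equation through a domain that is \emph{manifestly} connected and contains the Euclidean locus. Since $\tilde\chi,\tilde{\bar\chi}$ land in that strip by construction, the identity applies pointwise at $C$; the phase bookkeeping (\ref{phase:st}) then shows $\tilde z\tilde{\bar z}/((1-\tilde z)(1-\tilde{\bar z}))=u/v$, and the prefactors convert. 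No statement about the topology of any subset of $\mathcal{D}$ is needed.

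You instead try to propagate the full identity in configuration space by the identity theorem, which requires that every $C$ in the set $S=\{z,\bar z\notin(-\infty,0),\ N_t=0\}$ be joined to $\mathcal{D}_E\backslash\Gamma$ by a path staying inside $S$ — equivalently, by a path along which $z,\bar z$ never touch $(-\infty,0)$ at all. You acknowledge this as "the main obstacle" and then assert, without proof, that the path can be deformed to avoid the crossing loci. This is exactly the nontrivial content the lemma is designed to supply: if such a crossing-free path always existed, the t-channel expansion would converge along the entire path and the lemma would follow from the trivial remark preceding Theorem \ref{theorem:tchannel}. The path-independence lemma only constrains the possible pairs $(n_t,\bar n_t)$ to shift as $(n_t+k,\bar n_t-k)$; it does not assert that the value $(0,0)$ is realized by an actual path, so your deformation claim does not follow from anything established in the paper. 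Until that connectivity statement is proved (or the argument is rerouted through cross-ratio space as in the paper), the proof is incomplete.
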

We would like to postpone the proof of lemma \ref{lemma:tchannel}. Let us first see how lemma \ref{lemma:tchannel} implies theorem \ref{theorem:tchannel}. 

Suppose we have a Lorentzian configuration $C_L$ which satisfies the conditions of theorem \ref{theorem:tchannel}. Since $C_L\in\overline{\mathcal{D}}$, we can approach $C_L$ by a sequence of configurations $\left\{C_n\right\}$ in $\mathcal{D}$ such that $C_n$ satisfies $z,\bar{z}\notin(-\infty,0)$ and $N_t\fr{C_n}=0$. By construction $\mathrm{Im}\tilde{\chi}_t,\mathrm{Im}\tilde{\bar{\chi}}_t>0$ for $C_L$ and all $C_n$ in the sequence, so $g\fr{\tilde{\chi}_t,\tilde{\bar{\chi}}_t}$ can be computed by the convergent series expansion (\ref{g:rhoseries}). We choose $\tilde{\chi}_t(C_n),\tilde{\bar{\chi}}_t(C_n)$ such that they form two sequences which approach $\tilde{\chi}_t(C_L),\tilde{\bar{\chi}}_t(C_L)$, then by continuity, the limit of $g\fr{\tilde{\chi}_t(C_n),\tilde{\bar{\chi}}_t(C_n)}$ is exactly $g\fr{\tilde{\chi}_t(C_L),\tilde{\bar{\chi}}_t(C_L)}$.\footnote{For this claim we choose a path $\gamma(s)$ with $\gamma(1)=C_L$ and let the sequence $\left\{C_n\right\}$ be along the path $\gamma$. Let the sequences of $\chi(C_n),\bar{\chi}(C_n)$ also be along the path, then it is natural to see that the sequences of $\tilde{\chi}_t(C_n),\tilde{\bar{\chi}}_t(C_n)$ have the limits $\tilde{\chi}_t(C_L),\tilde{\bar{\chi}}_t(C_L)$.} By lemma \ref{lemma:tchannel}, we have eq. (\ref{st:D}) for $C_n$. Note that the LHS of (\ref{st:D}) is the four-point function in the s-channel expansion, thus we have
\begin{equation}
\begin{split}
G_4\fr{C_L}=\limlim{n\rightarrow\infty}G_4(C_n)=\limlim{n\rightarrow\infty}\left.\dfrac{g\fr{\tilde{\chi}_t,\tilde{\bar{\chi}}_t}}{\left[x_{23}^2x_{14}^2\right]^\Delta}\right\vert_{C_n}=\left.\dfrac{g\fr{\tilde{\chi}_t,\tilde{\bar{\chi}}_t}}{\left[x_{23}^2x_{14}^2\right]^\Delta}\right\vert_{C_L}.
\end{split}
\end{equation}
So we finish the proof of theorem \ref{theorem:tchannel}. We would like to make two comments. First, theorem \ref{theorem:tchannel} covers the case when s-channel expansion is not convergent. For this case we have $z$ or $\bar{z}\in(1,+\infty)$. To compute $G_4$ it is important to know whether Arg$(1-\tilde{z})$ or Arg$(1-\tilde{\bar{z}})$ is equal to $\pi$ or $-\pi$. A crucial point is that the information about these phases are contained in $\tilde{\chi}_t$ and $\tilde{\bar{\chi}}_t$.\footnote{Since $\tilde{\rho}_t=e^{i\tilde{\chi}_t}$, we have Arg$\fr{\tilde{\rho}_t}=$Re$\tilde{\chi}_t$. If $z=\tilde{z}\in(1,+\infty)$, then we have Arg$\fr{1-\tilde{z}}=$Re$\tilde{\chi}_t$. The argument is similar for $\bar{z}$.} Second, eq. (\ref{G4:tchannel}) indeed corresponds to the t-channel expansion because each term in the series expansion (\ref{g:rhoseries}) of $g\fr{\tilde{\chi}_t,\tilde{\bar{\chi}}_t}$ corresponds to a state which appears in the $\phi(x_2)\phi(x_3)$ OPE.

It remains to prove lemma \ref{lemma:tchannel}.
\begin{proof}
We have
\begin{equation}\label{derivation:tchannel}
\begin{split}
    \dfrac{g\fr{\chi,\bar{\chi}}}{\left[x_{12}^2x_{34}^2\right]^\Delta}=&\dfrac{g\fr{\tilde{\chi},\tilde{\bar{\chi}}}}{\left[x_{12}^2x_{34}^2\right]^\Delta} \\
    =&\dfrac{1}{\left[x_{12}^2x_{34}^2\right]^\Delta}\times\left[\dfrac{\tilde{z}\tilde{\bar{z}}}{\fr{1-\tilde{z}}\fr{1-\tilde{\bar{z}}}}\right]^\Delta g\fr{\tilde{\chi}_t,\tilde{\bar{\chi}}_t} \\
    =&\dfrac{1}{\left[x_{12}^2x_{34}^2\right]^\Delta}\times\left[\dfrac{z\bar{z}}{\fr{1-z}\fr{1-\bar{z}}}\right]^\Delta g\fr{\tilde{\chi}_t,\tilde{\bar{\chi}}_t} \\
    =&\dfrac{1}{\left[x_{12}^2x_{34}^2\right]^\Delta}\times\left[\dfrac{u}{v}\right]^\Delta g\fr{\tilde{\chi}_t,\tilde{\bar{\chi}}_t} \\
    =&\dfrac{g\fr{\tilde{\chi}_t,\tilde{\bar{\chi}}_t}}{\left[x_{23}^2x_{14}^2\right]^\Delta} \\
\end{split}
\end{equation}
The first equality is a consequence of eq. (\ref{chi:properties}). The second equality follows from the crossing symmetry
\begin{equation}\label{crossing:st}
\begin{split}
    g(z,\bar{z})=\left[\dfrac{z\bar{z}}{(1-z)(1-\bar{z})}\right]^\Delta g(1-z,1-\bar{z}),\quad z,\bar{z}\in\mathbb{C}\backslash(-\infty,0]\cup[0,+\infty).
\end{split}
\end{equation}
Here we also use the fact that both s- and t-channel expansions are convergent if $z,\bar{z}$ are in the double-cut plane, and in the same branch as the Euclidean case.

For the third equality, we recall our definition of $\tilde{\chi},\tilde{\bar{\chi}}$ in (\ref{def:chitilde}): the variables $z,\bar{z}$ acquire extra phases and $1-z,1-\bar{z}$ do not go around 0. In other words, we have
\begin{equation}\label{phase:st}
\begin{split}
\tilde{z}=e^{-2n_t\pi i}z,\quad \tilde{\bar{z}}=e^{2n_t\pi i}\bar{z}, \\
1-\tilde{z}=1-z,\quad1-\tilde{\bar{z}}=1-\bar{z}. \\
\end{split}
\end{equation}
The remaining steps in (\ref{derivation:tchannel}) are trivial.
\end{proof}
The criterion of u-channel convergence is similar to t-channel. Given a configuration $C_L$ with $z,\bar{z}\notin(0,1)$ and $N_u=0$. We choose a path $\gamma$ to get $\chi(1),\bar{\chi}(1)$, then the u-channel versions of (\ref{chi:nt}) and (\ref{chibar:ntbar}) are given by
\begin{equation}\label{chi:nunubar}
\begin{split}
    -2n_u\pi<\mathrm{Re}\fr{\chi(1)}&<-2n_u\pi+2\pi, \\
    -2\bar{n}_u\pi-2\pi<\mathrm{Re}\fr{\bar{\chi}(1)}&<-2\bar{n}_u\pi. \\
\end{split}
\end{equation}
The u-channel variables $\tilde{\chi}_u,\tilde{\bar{\chi}}_u$ are defined by following algorithm, which is analogous to (\ref{transform:chitochit}):
\begin{equation}\label{chitochiu}
\begin{split}
    \tilde{\chi}=\chi(1)+2n_u\pi\mapsto\tilde{z}\mapsto\tilde{z}_u=\dfrac{1}{\tilde{z}}\mapsto\tilde{\chi}_u, \\
    \tilde{\bar{\chi}}=\bar{\chi}(1)-2n_u\pi\mapsto\tilde{\bar{z}}\mapsto\tilde{\bar{z}}_u=\dfrac{1}{\tilde{\bar{z}}}\mapsto\tilde{\bar{\chi}}_u. \\
\end{split}
\end{equation}
We give the u-channel criterion without proof. 
\begin{theorem}\label{theorem:uchannel}
	(\emph{u-channel OPE convergence}) Given a Lorentzian configuration $C_L\in\mathcal{D}_L$. If the variables $z,\bar{z}$ of $C_L$ do not belong to $(0,1)$, and furthermore if $N_u(C_L)=0$, then the Lorentzian four-point function $G_4$ is analytic at $C_L$ and is given by the formula
	\begin{equation}\label{G4:uchannel}
	\begin{split}
	G_4(C_L)=\dfrac{g\fr{\tilde{\chi}_u,\tilde{\bar{\chi}}_u}}{\left[x_{13}^2x_{24}^2\right]^\Delta}.
	\end{split}
	\end{equation}
	Here $g$ is the same function as described in section \ref{section:rhoexpansion}, and the variables $\tilde{\chi}_u,\tilde{\bar{\chi}}_u$ are defined by the algorithm (\ref{chitochiu}). The function $g(\tilde{\chi}_u,\tilde{\bar{\chi}}_u)$ can be computed by the convergent series expansion (\ref{g:rhoseries}).
\end{theorem}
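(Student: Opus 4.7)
The plan is to mirror the proof of Theorem \ref{theorem:tchannel} step by step, replacing the s$\leftrightarrow$t crossing by the s$\leftrightarrow$u crossing and the shift $z \mapsto 1-z$ by the inversion $z \mapsto 1/z$. First I would establish a u-channel analogue of Lemma \ref{lemma:tchannel}: for any configuration $C \in \mathcal{D}$ with $z,\bar z \notin (0,1)$ and $N_u(C)=0$, the identity
\begin{equation}
\frac{g(\chi,\bar{\chi})}{[x_{12}^2 x_{34}^2]^\Delta} = \frac{g(\tilde{\chi}_u,\tilde{\bar{\chi}}_u)}{[x_{13}^2 x_{24}^2]^\Delta}
\end{equation}
holds on $\mathcal{D}$. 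Once this is in hand, Theorem \ref{theorem:uchannel} follows by exactly the continuity argument used after Lemma \ref{lemma:tchannel}: approach $C_L \in \mathcal{D}_L$ by a sequence $\{C_n\} \subset \mathcal{D}$ sharing the conditions $z,\bar z \notin (0,1)$ and $N_u=0$, note that by construction $\mathrm{Im}\,\tilde{\chi}_u,\mathrm{Im}\,\tilde{\bar{\chi}}_u > 0$ along this sequence and at the limit (so the $\rho$-expansion \eqref{g:rhoseries} converges), and pass to the limit using that the LHS is the already-justified s-channel analytic continuation $G_4(C_n)$.

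For the u-channel analogue of Lemma \ref{lemma:tchannel}, the chain of equalities I would write is
\begin{equation}
\begin{split}
\frac{g(\chi,\bar{\chi})}{[x_{12}^2 x_{34}^2]^\Delta}
&= \frac{g(\tilde{\chi},\tilde{\bar{\chi}})}{[x_{12}^2 x_{34}^2]^\Delta}
= \frac{(\tilde{z}\tilde{\bar{z}})^{\Delta}\, g(\tilde{\chi}_u,\tilde{\bar{\chi}}_u)}{[x_{12}^2 x_{34}^2]^\Delta}
= \frac{(z\bar{z})^{\Delta}\, g(\tilde{\chi}_u,\tilde{\bar{\chi}}_u)}{[x_{12}^2 x_{34}^2]^\Delta}
= \frac{u^{\Delta}\, g(\tilde{\chi}_u,\tilde{\bar{\chi}}_u)}{[x_{12}^2 x_{34}^2]^\Delta}
= \frac{g(\tilde{\chi}_u,\tilde{\bar{\chi}}_u)}{[x_{13}^2 x_{24}^2]^\Delta}.
\end{split}
\end{equation}
The first equality uses the periodicity property \eqref{chi:properties}, namely $g(\chi,\bar{\chi}) = g(\chi+2\pi,\bar{\chi}-2\pi)$, applied $n_u$ times with the balancing $\bar{n}_u = -n_u$ (which is exactly the condition $N_u=0$). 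The second equality is the u-channel crossing symmetry $g(z,\bar{z})=(z\bar{z})^\Delta g(1/z,1/\bar{z})$, which follows by comparing the s- and u-channel OPE expressions \eqref{ope:schannel}, \eqref{ope:tuchannel} together with \eqref{crossratios:tu}, valid in the branch characterized by $\mathrm{Im}\,\tilde{\chi}_u,\mathrm{Im}\,\tilde{\bar{\chi}}_u > 0$. The third equality is the u-channel analogue of \eqref{phase:st}: by the definitions \eqref{chi:nunubar}--\eqref{chitochiu} one has $\tilde z = e^{2n_u\pi i} z$ and $\tilde{\bar z} = e^{-2n_u\pi i}\bar z$, so $\tilde{z}\tilde{\bar{z}} = z\bar{z}$. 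The last equality uses $u = z\bar z = x_{12}^2 x_{34}^2 / (x_{13}^2 x_{24}^2)$.

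The main technical point I expect to have to verify carefully is that the u-channel variables $(\tilde{\chi}_u,\tilde{\bar{\chi}}_u)$ produced by the algorithm \eqref{chitochiu} land in the upper half-plane strip where Lemma \ref{lemma:coordinates} applies, so that \eqref{g:rhoseries} converges at the endpoint. This requires checking that, under the hypothesis $z,\bar z \notin (0,1)$ and $N_u=0$, the shifted variables $\tilde{\chi} = \chi(1)+2n_u\pi$ and $\tilde{\bar{\chi}} = \bar{\chi}(1)-2n_u\pi$ satisfy $0 < \mathrm{Re}\,\tilde{\chi} < 2\pi$ and $-2\pi < \mathrm{Re}\,\tilde{\bar{\chi}} < 0$ (the u-channel counterpart of the strip used in the t-channel case), that $\tilde{z}$ and $\tilde{\bar{z}}$ stay away from $[1,+\infty)$ and $(-\infty,0]$ when $1/\tilde{z},1/\tilde{\bar{z}}$ are computed, and that the path independence of $N_u$ holds by the same argument as in the proof for $N_t$: lemma \ref{lemma:chi+chibar} gives $\chi+\bar\chi$ path-independent, and the only residual ambiguity $(\chi,\bar\chi)\leftrightarrow(\bar\chi,\chi)$ leaves $N_u = n_u+\bar n_u$ invariant. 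Once these checks go through, the continuity extension from $\mathcal{D}$ to $\mathcal{D}_L$ is identical to the t-channel argument and yields \eqref{G4:uchannel}; in particular, when $z$ or $\bar z$ lies on $(1,+\infty)$ (where s-channel fails), the necessary phases $\mathrm{Arg}(1/\tilde z)$ and $\mathrm{Arg}(1/\tilde{\bar z})$ are recorded precisely by $\mathrm{Re}\,\tilde{\chi}_u$ and $\mathrm{Re}\,\tilde{\bar{\chi}}_u$.
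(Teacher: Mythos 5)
Your proof is correct and is precisely the argument the paper intends: the paper states Theorem \ref{theorem:uchannel} without proof, remarking only that the u-channel criterion is "similar to t-channel," and your write-up carries out that analogy faithfully — the u-channel crossing relation $g(z,\bar z)=(z\bar z)^\Delta g(1/z,1/\bar z)$ with prefactor $u^\Delta=x_{12}^2x_{34}^2/(x_{13}^2x_{24}^2)$, the periodicity step using $N_u=0$, the phase cancellation $\tilde z\tilde{\bar z}=z\bar z$, and the strip $0<\mathrm{Re}\,\tilde\chi<2\pi$, $-2\pi<\mathrm{Re}\,\tilde{\bar\chi}<0$ consistent with (\ref{chi:nunubar}). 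The technical checks you flag (landing in the correct strip so that $\tilde z\notin(0,1]$ and hence $|\tilde\rho_u|<1$, and path independence of $N_u$ via Lemma \ref{lemma:chi+chibar}) are exactly the right ones and all go through.
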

Unlike the s-channel case, even if we only want to check the convergence properties of t- and u-channel expansions, we have to choose a path to compute $N_t$ and $N_u$. 

Before finishing this subsection, we want to remark that actually the condition (\ref{defpath:DL}) of the path $\gamma$ can be relaxed in the way that $\gamma$ is allowed to touch $\Gamma$:
\begin{equation}\label{defpath:DLrelax}
\begin{split}
\gamma&:\ [0,1]\ \longrightarrow\ \overline{\mathcal{D}}, \\
\gamma&(0)\in\mathcal{D}_E\backslash\Gamma, \\
\gamma&(1)\in\mathcal{D}_L. \\
\end{split}
\end{equation}
Suppose we have a path $\gamma$ which intersects with $\Gamma$. Let $\gamma(s_*)\in\Gamma$ be the first intersection point. At $s_*$ we have $z(s_*)=\bar{z}(s_*)$, then $z(s),\bar{z}(s)$ become indistinguishable for $s>s_*$, so the quantities $n_t,\bar{n}_t,n_u,\bar{n}_u,\chi,\bar{\chi}$ are not well defined for $\gamma$. However, by manually choosing $z,\bar{z}$ after each intersection, we still get two curves $z(s),\bar{z}(s)$: they may not be smooth at intersection points, but they are still continuous. By this trick we get $n_t,\bar{n}_t,n_u,\bar{n}_u,\chi,\bar{\chi}$, so that we are able to compute $N_t,N_u$ and the four-point function. On the other hand, we can always deform $\gamma$ to a path $\gamma^\prime$, such that $\gamma^\prime$ has the same start and final points as $\gamma$ but $\gamma^\prime$ does not intersect with $\Gamma$. By doing proper deformation, we can make $\gamma^\prime$ have the same $n_t,\bar{n}_t,n_u,\bar{n}_u,\chi,\bar{\chi}$ as selected on $\gamma$. Therefore, our manual selection will give the correct OPE convergence properties and the correct value of the four-point function.

\subsection{What happens if there is no convergent OPE channel?}
We want to make a comment that theorem \ref{theorem:schannel}, \ref{theorem:tchannel} and \ref{theorem:uchannel} give sufficient conditions for OPE convergence. For a Lorentzian configuration $C_L$ which is not a light-cone singularity and which does not satisfy the conditions in these theorems, it does not mean that $G_4$ cannot be a function at $C_L$. It just means that for general CFT, we are not able to use the radial coordinates $\rho,\bar{\rho}$ ($\rho_t,\bar{\rho}_t$, $\rho_u,\bar{\rho}_u$) and the expansion (\ref{g:rhoseries}) to prove the analyticity of $G_4$ at $C_L$. The four-point function still has a chance to be analytic at $C_L$. For example, the four-point function of generalized free fields has analytic continuation to the whole Lorentzian region except for the light-cone singularities.

An interesting related open question is: can we relax the conditions in theorem \ref{theorem:schannel}, \ref{theorem:tchannel} and \ref{theorem:uchannel}? 

\subsubsection{s-channel condition}
In theorem \ref{theorem:schannel}, we only assume the condition $z,\bar{z}\notin(1,+\infty)$ (equivalently, $|\rho|,\abs{\bar{\rho}}<1$). The Lorentzian configurations which violate this condition has $|\rho|=1$ or $\abs{\bar{\rho}}=1$, then the proof of theorem \ref{theorem:schannel} fails because in the proof we used the fact that the series expansion (\ref{g:rhoseries}) is absolutely convergent when $\abs{\rho},\abs{\bar{\rho}}<1$.

We are interested in the Lorentzian configurations where the s-channel expansion is convergent for all unitary CFTs. For configurations with $|\rho|=1$ or $\abs{\bar{\rho}}=1$, we may exhibit an explicit CFT four-point function, for which the s-channel expansion is divergent (then such configurations are ruled out). The generalized free field (GFF) theory is such an example. The GFF four-point function of identical scalar operators (with scaling dimension $\Delta$) is defined by
\begin{equation}
\begin{split}
    \fr{G_4}_{GFF}(x_1,x_2,x_3,x_4)=\dfrac{1}{\left[x_{12}^2x_{34}^2\right]^\Delta}+\dfrac{1}{\left[x_{23}^2x_{14}^2\right]^\Delta}+\dfrac{1}{\left[x_{13}^2x_{24}^2\right]^\Delta}.
\end{split}
\end{equation}
By (\ref{ope:schannel}), the conformal invariant part of $\fr{G_4}_{GFF}$ is given by
\begin{equation}
\begin{split}
    g_{_{GFF}}(\rho,\bar{\rho})=1+\left(\dfrac{16\rho\bar{\rho}}{(1+\rho)^2(1+\bar{\rho})^2}\right)^\Delta+\left(\dfrac{16\rho\bar{\rho}}{(1-\rho)^2(1-\bar{\rho})^2}\right)^\Delta.
\end{split}
\end{equation}
It has the series expansion
\begin{equation}
\begin{split}
    g_{_{GFF}}(\rho,\bar{\rho})=1+\fr{16\rho\bar{\rho}}^\Delta\sumlim{m,n=0}^{\infty}\dfrac{\fr{1+(-1)^{m+n}}\Gamma(\Delta+m)\Gamma(\Delta+n)}{m!n!\Gamma(\Delta)^2}\rho^m\bar{\rho}^n
\end{split}
\end{equation}
which diverges when $\abs{\rho}=1$ or $\abs{\bar{\rho}}=1$. It follows that theorem \ref{theorem:schannel} cannot be extended to configurations with $\abs{\rho}=1$ or $\abs{\bar{\rho}}=1$ without extra assumptions on the theory. One such extra assumption will be mentioned in section \ref{section:comments2d} (locality of 2d CFT).

\subsubsection{t- and u-channel conditions}
In theorem \ref{theorem:tchannel}, we assumed two conditions: $z,\bar{z}\notin(-\infty,0)$ and $N_t=0$. For Lorentzian configurations which violate the first condition, (analogously to the s-channel case) we can use GFF to conclude that these configurations do not have convergent t-channel expansion for some unitary CFTs.

Let us explain more about our motivation for assuming $N_t=0$. By the s-channel series expansion (\ref{g:rhoseries}) and crossing symmetry (\ref{crossing:st}), the function $g(\chi_t,\bar{\chi}_t)$ has analytic continuation to the universal covering of the domain
\begin{equation}\label{domain:chit}
\begin{split}
    -\pi<\mathrm{Re}\fr{\chi_t}<\pi,\quad\chi_t\neq0 \\
    -\pi<\mathrm{Re}\fr{\bar{\chi}_t}<\pi,\quad\bar{\chi}_t\neq0 \\
\end{split}
\end{equation}
The series expansion of $g(\chi_t,\bar{\chi}_t)$ is absolutely convergent in the region where
\begin{equation}\label{chit:regionconverge}
\begin{split}
    -\pi<\mathrm{Re}\fr{\chi_t}<\pi,\quad 0<\mathrm{Arg}\fr{\chi_t}<\pi \\
    -\pi<\mathrm{Re}\fr{\bar{\chi}_t}<\pi,\quad 0<\mathrm{Arg}\fr{\bar{\chi}_t}<\pi \\
\end{split}
\end{equation}
Suppose we have a configuration $C_L$ with $N_t\neq0$. By choosing a path $\gamma$, we compute the paths $\chi_t(s),\bar{\chi}_t(s)$, and then determine the final points $\chi_t(1),\bar{\chi}_t(1)$. When the path $z(s)$ crosses $(-\infty,0)$ from above, $\chi_t(s)$ either crosses $(-\pi,0)$ from above, or crosses $(0,\pi)$ from below. Since the start points $\chi_t(0),\bar{\chi}_t(0)$ are in the region (\ref{chit:regionconverge}), we have
\begin{equation}\label{chit:nt}
\begin{split}
    n_t\pi<\mathrm{Arg}\fr{\chi_t}<(n_t+1)\pi,\quad \bar{n}_t\pi<\mathrm{Arg}\fr{\bar{\chi}_t}<\fr{\bar{n}_t+1}\pi,
\end{split}
\end{equation}
which are the t-channel versions of (\ref{chi:nt}) and (\ref{chibar:ntbar}). We see that $\chi_t(1)e^{-n_t\pi i},\chi_te^{-\bar{n}_t\pi i}$ are in the region (\ref{chit:regionconverge}).
The property (\ref{chi:properties}) says that we have\footnote{Recall the map $\chi\mapsto\chi_t$ in (\ref{transform:chitochit}), the transformation $\chi\rightarrow\chi+2\pi$ corresponds to $\chi_t\rightarrow\chi_te^{i\pi}$.}
\begin{equation}\label{chit:property}
\begin{split}
    g\fr{\chi_t(1),\bar{\chi}_t(1)}=g\fr{\chi_t(1)e^{\pi i},\bar{\chi}_t(1)e^{-\pi i}}.
\end{split}
\end{equation}
However, if $N_t\neq0$, then $n_t\neq-\bar{n}_t$ for any path $\gamma$. So in this case we cannot use (\ref{chit:property}) to move $\chi_t(1),\bar{\chi}_t(1)$ to the region (\ref{chit:regionconverge}). This is where the proof of theorem \ref{theorem:tchannel} fails for $N_t\neq0$.

The arguments for theorem \ref{theorem:uchannel} are similar.

\section{Classifying the Lorentzian configurations}\label{section:classifylorentzconfig}
In the previous section we gave the criteria of convergence properties of OPE in various channels for Lorentzian CFT four-point functions. These criteria say that given a Lorentzian configuration $C_L$, one can just start with an arbitrary Euclidean configuration in $\mathcal{D}_E\backslash\Gamma$ and choose an arbitrary path towards $C_L$, then decide if the conditions in theorem \ref{theorem:schannel}, \ref{theorem:tchannel} and \ref{theorem:uchannel} hold or not by watching the $z,\bar{z}$-curves (in theorem \ref{theorem:schannel} one does not even have to choose a path).

However, it would be frustrating if we have to check the analytic continuation curves for all Lorentzian configurations in $\mathcal{D}_L$ (recall definition (\ref{def:setlorconfig})). We expect that these Lorentzian configurations can be classified such that for each class it suffices to choose one representative configuration to see if various OPE channels converge or not. There are two natural classification methods, one according to the range of $z$ and $\bar{z}$, the other according to the causal orderings. We will show that combining these two methods leads to a complete classification for the convergence properties of Lorentzian CFT four-point functions.

\subsection{\texorpdfstring{$z,\bar{z}$}{z,zbar} of Lorentzian configurations}\label{section:classifyz}
For all Lorentzian configurations, since $x_{ij}^2$ are real, the cross-ratios $u,v$ are also real. By (\ref{uvtoz}), there are only two possibilities for $z,\bar{z}$:
\begin{enumerate}
	\item $z,\bar{z}$ are independent real variables.
	\item $z,\bar{z}$ are complex conjugate to each other.
\end{enumerate}
In addition, we have already excluded light-cone singularities in $\mathcal{D}_L$ (recall definition (\ref{def:setlorconfig})), so the configurations in $\mathcal{D}_L$ have $z,\bar{z}\neq0,1,\infty$. According to the range of the $z,\bar{z}$ variables, we divide $\mathcal{D}_L$ into four classes:
\begin{equation}\label{division:class}
\begin{split}
    \mathcal{D}_L=\mathrm{S}\sqcup \mathrm{T}\sqcup \mathrm{U}\sqcup \mathrm{E},
\end{split}
\end{equation}
where the classes are defined as follows.
\begin{itemize}
	\item Class S: configurations with $0<z<1,\ \bar{z}<0\ \mathrm{or}\ z<0,\ 0<\bar{z}<1$.
	\item Class T: configurations with $z>1,\ 0<\bar{z}<1\ \mathrm{or}\ 0<z<1,\ \bar{z}>1$.
	\item Class U: configurations with $z>1,\ \bar{z}<0\ \mathrm{or}\ z<0,\ \bar{z}>1$.
	\item Class E: configurations with $z,\bar{z}<0\ \mathrm{or}\ 0<z,\bar{z}<1\ \mathrm{or}\ z,\bar{z}>1\ \mathrm{or}\ z^*=\bar{z}$.
\end{itemize}
We use the name ``S" (resp. ``T", ``U") because it corresponds to the configurations where only the s-channel (resp. t-channel, u-channel) expansion has a chance to converge. The name ``E" means ``Euclidean", since the variables $z,\bar{z}$ in class E can be realized by the configurations with totally space-like separation. In addition, we divide the class E into four subclasses:
\begin{equation}\label{division:subclass}
\begin{split}
    \mathrm{E}=\mathrm{E_{su}}\sqcup\mathrm{E_{st}}\sqcup\mathrm{E_{tu}}\sqcup\mathrm{E_{stu}},
\end{split}
\end{equation}
where the subclasses are defined as follows.
\begin{itemize}
	\item Subclass $\mathrm{E_{su}}$: configurations with $z,\bar{z}<0$.
	\item Subclass $\mathrm{E_{st}}$: configurations with $0<z,\bar{z}<1$.
	\item Subclass $\mathrm{E_{tu}}$: configurations with $z,\bar{z}>1$.
	\item Subclass $\mathrm{E_{stu}}$: configurations with $z^*=\bar{z}$ not real.
\end{itemize} 
The subscripts in above names indicate the possible convergent channels. Figure \ref{fig:zzbarclass} shows the range of $(z,\bar{z})$ pair corresponding to each class/subclass. Let $P\fr{\mathcal{C}}$ denote the subset of $(z,\bar{z})$ pairs corresponding to class/subclass $\mathcal{C}$. Under identification $(z,\bar{z})\sim(\bar{z},z)$, $P(\mathcal{C})$ are connected subsets of $\mathbb{C}/\bbZ_2$. $P(\mathrm{S}),P(\mathrm{T}),P(\mathrm{U}),P(\mathrm{E})$ are disconnected from each other, but $P\fr{\mathrm{E_{su}}}$, $P\fr{\mathrm{E_{st}}}$ and $P\fr{\mathrm{E_{tu}}}$ are connected to $P\fr{\mathrm{E_{stu}}}$ (also note that $P\fr{\mathrm{E_{su}}}$, $P\fr{\mathrm{E_{st}}}$ and $P\fr{\mathrm{E_{tu}}}$ are disconnected from each other).
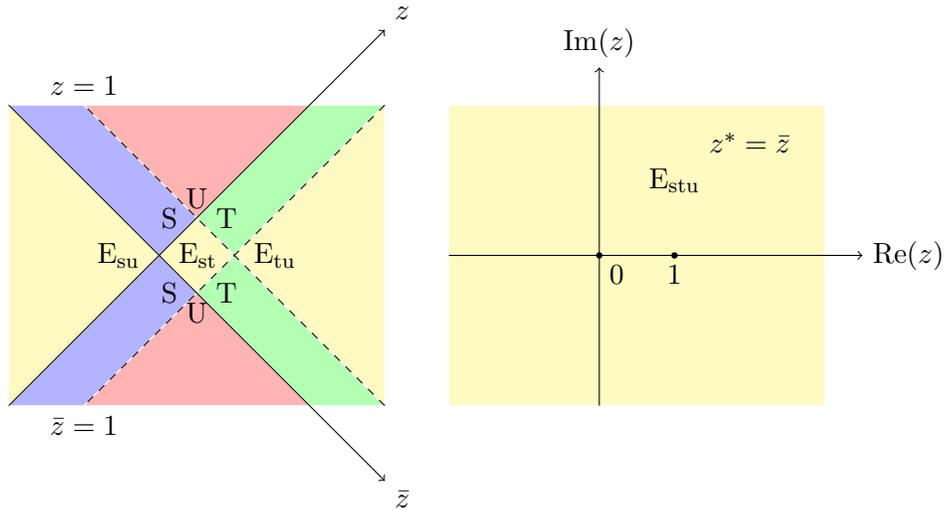
\begin{figure}[H]
	\centering
	\begin{tikzpicture}[baseline={(0,0)},circuit logic US]
	\draw[ultra thin, white, fill=blue!30] (0.5,0.5) -- (-1,2) -- (-2,2) -- (0,0) -- (0.5,0.5) node[anchor=east,black]{S$\ $};
	\draw[ultra thin, white, fill=blue!30] (0.5,-0.5) -- (-1,-2) -- (-2,-2) -- (0,0) -- (0.5,-0.5) node[anchor=east,black]{S$\ $};
	\draw[ultra thin, white, fill=green!30] (0.5,0.5) -- (1,0) -- (3,2) -- (2,2) -- (0.5,0.5) node[anchor=west,black]{$\ $T};
	\draw[ultra thin, white, fill=green!30] (0.5,-0.5) -- (1,0) -- (3,-2) -- (2,-2) -- (0.5,-0.5) node[anchor=west,black]{$\ $T};
	\draw[ultra thin, white, fill=red!30] (0.5,0.5) -- (2,2) -- (-1,2) -- (0.5,0.5) node[anchor=south,black]{U};
	\draw[ultra thin, white, fill=red!30] (0.5,-0.5) -- (2,-2) -- (-1,-2) -- (0.5,-0.5) node[anchor=north,black]{U};
	\draw[ultra thin, white, fill=yellow!30] (0,0) -- (-2,2) -- (-2,-2) -- (0,0) node[anchor=east,black]{$\mathrm{E_{su}}\ $};
	\draw[ultra thin, white, fill=yellow!30] (0,0) -- (0.5,0.5) -- (1,0) -- (0.5,-0.5) -- (0,0) node[anchor=west,black]{$\ \mathrm{E_{st}}$};
	\draw[ultra thin, white, fill=yellow!30] (1,0) -- (3,2) -- (3,-2) -- (1,0) node[anchor=west,black]{$\ \mathrm{E_{tu}}$};
	\draw[->] (-2,-2)--(3,3) node[anchor=south west]{$z$};
	\draw[->] (-2,2)--(3,-3) node[anchor=north west]{$\bar{z}$};
	\draw[dashed] (3,2)--(-1,-2) node[anchor=north]{$\bar{z}=1$};
	\draw[dashed] (3,-2)--(-1,2) node[anchor=south]{$z=1$};
	\end{tikzpicture}\quad
	\begin{tikzpicture}[baseline={(0,0)},circuit logic US]	
	\draw[ultra thin, white, fill=yellow!30] (-2,2) -- (3,2) -- (3,-2) -- (-2,-2) -- (-2,2);
	\draw[->] (-2,0)--(3.5,0) node[right]{Re$(z)$};
	\draw[->] (0,-2)--(0,2.5) node[above]{Im$(z)$};
	\filldraw[black] (0,0) circle (1pt) node[anchor=north west] {0};
	\filldraw[black] (1,0) circle (1pt) node[anchor=north] {1};	
	\draw (1,1) node {$\mathrm{E_{stu}}$};
	\draw (2,1.5) node {$z^*=\bar{z}$};
	\end{tikzpicture}  
	\caption{\label{fig:zzbarclass}The corresponding range of $(z,\bar{z})$ pair of each class/subclass.} 
\end{figure}
For each class/subclass, we immediately get some information about OPE convergence properties by theorem \ref{theorem:schannel}, \ref{theorem:tchannel} and \ref{theorem:uchannel} (see table \ref{table:class}).
\begin{table}[H]
	\setlength{\tabcolsep}{3mm} 
	\def\arraystretch{1.5} 
	\centering
	\begin{tabular}{|c|c|c|c|}
		\hline
		class/subclass  &  s-channel    &   t-channel   &   u-channel
		\\ \hline
		S  & \cmark  &  \xmark  &  \xmark
		\\ \hline
		T  & \xmark  &    &   \xmark  
		\\ \hline
		U  & \xmark  &  \xmark   &  
		\\ \hline
		$\mathrm{E_{st}}$  & \cmark  &    &  \xmark 
		\\ \hline
		$\mathrm{E_{su}}$  & \cmark  &  \xmark  &  
		\\ \hline
		$\mathrm{E_{tu}}$  & \xmark  &    &  
		\\ \hline
		$\mathrm{E_{stu}}$  & \cmark  &    &  
		\\ \hline
	\end{tabular}
    \caption{OPE convergence properties of classes/subclasses}
    \label{table:class}
\end{table}
In table \ref{table:class}, the check mark means that the sufficient conditions in theorem \ref{theorem:schannel} or \ref{theorem:tchannel} or \ref{theorem:uchannel} holds, hence the corresponding channel is convergent. The cross mark means that the sufficient conditions do not hold, we cannot conclude that the corresponding channel is convergent or not (basically because one or both $\rho,\bar{\rho}$ variables are on the unit circles). The blank means that there is room for convergence but we need to check $N_t,N_u$ conditions.

\subsection{Causal orderings}\label{section:causal}
In Minkowski space $\bbR{d-1,1}$, causal ordering is a binary relation between two arbitrary points. Let $x_1=(it_1,\mathbf{x}_1)$ and $x_2=(it_2,\mathbf{x}_2)$ be two points in $\bbR{d-1,1}$,\footnote{Since in this work our discussions start from the Euclidean signature, we use the Euclidean coordinates $x=(\epsilon+it,\mathbf{x})$. The Euclidean points correspond to $t=0$ and the Lorentzian points correspond to $\epsilon=0$.} we say $x_1\rightarrow x_2$ if $x_2$ is in the open forward light-cone of $x_1$, or equivalently, $t_2-t_1>\abs{\mathbf{x}_1-\mathbf{x}_2}$. 

By the triangle inequality, the causal ordering is transitive: if $x_1\rightarrow x_2$ and $x_2\rightarrow x_3$, then $x_1\rightarrow x_3$. 

Causal orderings are preserved by translations, Lorentz transformations and dilatations. But special conformal transformations may violate causal orderings.\footnote{We will come back to this point in section \ref{section:confequivcausalordering}.} Given a pair of time-like separated points $x_i,x_j$ in $\bbR{d-1,1}$, there exists a special transformation such that the images $x_i^\prime,x_j^\prime$ are space-like separated \cite{go1974properties}.

By ``the causal ordering of a configuration $C=(x_1,x_2,x_3,x_4)$", we will mean the directed graph $(V,E)$, where $V=\left\{1,2,3,4\right\}$ is the set of indices and $E=\left\{(ij)\right\}$ is the set of arrows $i\rightarrow j$ encoding the causal orderings $x_i\rightarrow x_j$. For example, the causal ordering of the configuration
\begin{equation}
\begin{split}
x_1=&(0,0,\ldots,0), \\
x_2=&(i,0,\ldots,0), \\
x_3=&(2i,0,\ldots,0), \\
x_4=&(3i,0,\ldots,0), \\
\end{split}
\end{equation}
is given by
\begin{equation}\label{graph:example}
\begin{split}
\begin{tikzpicture}
\draw (0,0) node{$1$};
\draw (2,0) node{$2$};
\draw (2,-2) node{$3$};
\draw (0,-2) node{$4$};
\draw[-stealth] (0.5,0) -- (1.5,0);
\draw[-stealth] (0,-0.5) -- (0,-1.5);
\draw[-stealth] (0.5,-0.5) -- (1.5,-1.5);
\draw[-stealth] (2,-0.5) -- (2,-1.5);
\draw[-stealth] (1.5,-0.5) -- (0.5,-1.5);
\draw[-stealth] (1.5,-2) -- (0.5,-2);
\end{tikzpicture}
\end{split}
\end{equation}
Since causal ordering is transitive, some arrows in the graph (\ref{graph:example}) are redundant and we will drop them. E.g. the graph
\begin{equation}\label{graph:simpleexaple}
\begin{split}
1\ \rightarrow\ 2\ \rightarrow\ 3\ \rightarrow\ 4
\end{split}
\end{equation}
represents the same causal ordering as (\ref{graph:example}). For simplicity, we will use the graphic notation with the least number of arrows like (\ref{graph:simpleexaple}).

\subsection{Classifying convergent OPE channels}\label{section:classifyconfig}
We decompose the set $\mathcal{D}_L$ according to the causal orderings of the configurations:
\begin{equation}\label{Dl:decomp}
\begin{split}
    \mathcal{D}_L=\bigsqcup_{\alpha}\mathcal{D}_L^{\alpha},
\end{split}
\end{equation}
where each $\mathcal{D}_L^{\alpha}$ is the set of configurations with the same causal ordering, labelled by the index $\alpha$. 

\subsubsection{Case \texorpdfstring{$d\geq3$}{d>=3}}
In $d\geq3$, each $\mathcal{D}_L^{\alpha}$ in (\ref{Dl:decomp}) is a connected component of $\mathcal{D}_L$. it is not hard to see that different $\mathcal{D}_L^\alpha$ are disconnected to each other. The proof that each $\mathcal{D}_L^\alpha$ is connected is given in appendix \ref{appendix:connectedness}.

	Since $\mathcal{D}_L^\alpha$ is connected, with the identification $(z,\bar{z})\sim(\bar{z},z)$, the set of corresponding $(z,\bar{z})$ pairs is a connected subset of $\mathbb{C}^2/\bbZ_2$. Recalling our classification in section \ref{section:classifyz}, we conclude that
\begin{lemma}\label{lemma:causaltoclass}
For $d\geq3$, all configurations with the same causal ordering belong to the same class S, T, U, E (see section \ref{section:classifyz}).
\end{lemma}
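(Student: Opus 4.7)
The plan is to reduce the statement to a topological argument: show that configurations in $\mathcal{D}_L^\alpha$ form a connected set, then use continuity of the cross-ratio map and the already-observed fact that $P(\mathrm{S}), P(\mathrm{T}), P(\mathrm{U}), P(\mathrm{E})$ are pairwise disconnected subsets of $\mathbb{C}^2/\mathbb{Z}_2$.

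First, I would observe that $(z,\bar{z})$ is only well-defined modulo the $\mathbb{Z}_2$ action that swaps them, so the natural target space is $\mathbb{C}^2/\mathbb{Z}_2$. Via $(\tau_k,\mathbf{x}_k)\mapsto (u,v)\mapsto\{z,\bar{z}\}$, the unordered pair $\{z,\bar{z}\}$ is a continuous function of the configuration on $\mathcal{D}_L$ (indeed on all of $\overline{\mathcal{D}}\setminus\{\text{light-cone singularities}\}$), because $u,v$ are rational functions of the $x_{ij}^2$ with non-vanishing denominators on $\mathcal{D}_L$, and the map from symmetric functions $(u,v)$ to the unordered pair $\{z,\bar{z}\}$ is continuous. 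Therefore $\{z,\bar{z}\}$ sends any connected subset of $\mathcal{D}_L$ to a connected subset of $\mathbb{C}^2/\mathbb{Z}_2$.

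Second, I would invoke the connectedness of $\mathcal{D}_L^\alpha$, which is the substantive geometric input and is postponed to appendix \ref{appendix:connectedness}; this is the main obstacle. The assertion is that any two Lorentzian configurations with the same causal ordering graph can be connected by a continuous path inside $\mathcal{D}_L$ along which no new light-cone crossings occur (so the causal relations between each pair $(x_i,x_j)$ are preserved, and no $x_{ij}^2$ vanishes at any intermediate time). Proving this requires exhibiting an explicit deformation — for instance, first normalizing by translation/dilatation/Lorentz boost so three points are fixed and then moving the fourth within its allowed causal region (an open cone-like domain), and verifying that the space of allowed positions of the fourth point, given the required causal relations with the other three, is itself connected. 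This is what the appendix will have to carry out.

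Given connectedness of $\mathcal{D}_L^\alpha$, the proof concludes quickly. The image of $\mathcal{D}_L^\alpha$ under the continuous map $\{z,\bar{z}\}$ is a connected subset of the disjoint union $P(\mathrm{S})\sqcup P(\mathrm{T})\sqcup P(\mathrm{U})\sqcup P(\mathrm{E})$ (recall from section \ref{section:classifyz} and figure \ref{fig:zzbarclass} that these four sets are connected and pairwise separated in $\mathbb{C}^2/\mathbb{Z}_2$, as one can see by noting that any continuous path between them would have to pass through $\{z,\bar{z}\}\in\{0,1,\infty\}$, which is excluded on $\mathcal{D}_L$). A connected set in a disjoint union of open sets must lie in exactly one summand, so the image of $\mathcal{D}_L^\alpha$ is contained in a single one of $P(\mathrm{S}),P(\mathrm{T}),P(\mathrm{U}),P(\mathrm{E})$. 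Equivalently, every configuration in $\mathcal{D}_L^\alpha$ belongs to the same class among $\mathrm{S},\mathrm{T},\mathrm{U},\mathrm{E}$, which is the claim of the lemma. Note that we cannot sharpen this to a single subclass of $\mathrm{E}$, precisely because $P(\mathrm{E}_{\mathrm{su}})$, $P(\mathrm{E}_{\mathrm{st}})$, $P(\mathrm{E}_{\mathrm{tu}})$ are all connected to $P(\mathrm{E}_{\mathrm{stu}})$ inside $P(\mathrm{E})$.
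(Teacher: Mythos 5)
Your argument is correct and follows essentially the same route as the paper's own proof: connectedness of $\mathcal{D}_L^\alpha$ (the substantive geometric input, deferred to appendix \ref{appendix:connectedness}), continuity of the map from configurations to the unordered pair $\{z,\bar{z}\}\in\mathbb{C}^2/\mathbb{Z}_2$, and the pairwise disconnectedness of $P(\mathrm{S}),P(\mathrm{T}),P(\mathrm{U}),P(\mathrm{E})$. Your closing remark that the conclusion cannot be sharpened to a single subclass of $\mathrm{E}$, because $P(\mathrm{E_{su}})$, $P(\mathrm{E_{st}})$, $P(\mathrm{E_{tu}})$ all touch $P(\mathrm{E_{stu}})$, matches the paper's treatment as well.
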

By the lemma, we can assign class S, T, U and E to each causal ordering of the configurations. In addition, if $\mathcal{D}_L^\alpha$ is in class E, we subdivide $\mathcal{D}_L^\alpha$ according to the subclasses of class E. We summarize these relations in figure \ref{fig:class}.
\begin{figure}[H]
	\centering
	\begin{tikzpicture}
		\draw[thick,red] (0,0) ellipse (1 and 1.5);
		\draw[thick,red] (-0.866025,0.75) -- (0.866025,0.75); 
		\draw[red] (0,0.75) node[anchor=south]{$\mathcal{D}_L^{\alpha_1}$};
		\draw[thick,red] (-1,0) -- (1,0);
		\draw[red] (0,0) node[anchor=south]{$\mathcal{D}_L^{\alpha_2}$};
		\draw[red] (0,-0.4) node[circle,fill,inner sep=0.8pt]{};
		\draw[red] (0,-0.6) node[circle,fill,inner sep=0.8pt]{};
		\draw[red] (0,-0.8) node[circle,fill,inner sep=0.8pt]{};
		\draw[red] (0,-2) node{S};
		\draw[thick,red] (3,0) ellipse (1 and 1.5);
		\draw[thick,red] (2.13397,0.75) -- (3.866025,0.75); 
		\draw[red] (3,0.75) node[anchor=south]{$\mathcal{D}_L^{\alpha_3}$};
		\draw[thick,red] (2,0) -- (4,0);
		\draw[red] (3,0) node[anchor=south]{$\mathcal{D}_L^{\alpha_4}$};
		\draw[red] (3,-0.4) node[circle,fill,inner sep=0.8pt]{};
		\draw[red] (3,-0.6) node[circle,fill,inner sep=0.8pt]{};
		\draw[red] (3,-0.8) node[circle,fill,inner sep=0.8pt]{};
		\draw[red] (3,-2) node{T};
		\draw[thick,red] (6,0) ellipse (1 and 1.5);
		\draw[thick,red] (5.13397,0.75) -- (6.866025,0.75); 
		\draw[red] (6,0.75) node[anchor=south]{$\mathcal{D}_L^{\alpha_5}$};
		\draw[thick,red] (5,0) -- (7,0);
		\draw[red] (6,0) node[anchor=south]{$\mathcal{D}_L^{\alpha_6}$};
		\draw[red] (6,-0.4) node[circle,fill,inner sep=0.8pt]{};
		\draw[red] (6,-0.6) node[circle,fill,inner sep=0.8pt]{};
		\draw[red] (6,-0.8) node[circle,fill,inner sep=0.8pt]{};
		\draw[red] (6,-2) node{U};
		\draw[thick,red] (8,-1.5) rectangle (11,1.5);
		\draw[thick,red] (8,0) -- (11,0);
		\draw[thick,red] (9.5,-1.5) -- (9.5,1.5);
		\draw[thick,red] (9.5,0) circle (1) node[anchor=south east]{$\mathcal{D}_L^{\alpha_7}$};
		\draw[red] (8.5,1) node{$\mathrm{E_{st}}$};
		\draw[red] (10.5,1) node{$\mathrm{E_{su}}$};
		\draw[red] (8.5,-1) node{$\mathrm{E_{tu}}$};
		\draw[red] (10.5,-1) node{$\mathrm{E_{stu}}$};
		\draw[red] (9.5,-2) node{E};
	\end{tikzpicture}
	\caption{\label{fig:class}The class S, T, U and E are subdivided according causal orderings. For each $\mathcal{D}_L^\alpha$ in class E, $\mathcal{D}_L^\alpha$ is subdivided according to subclasses.}
\end{figure}
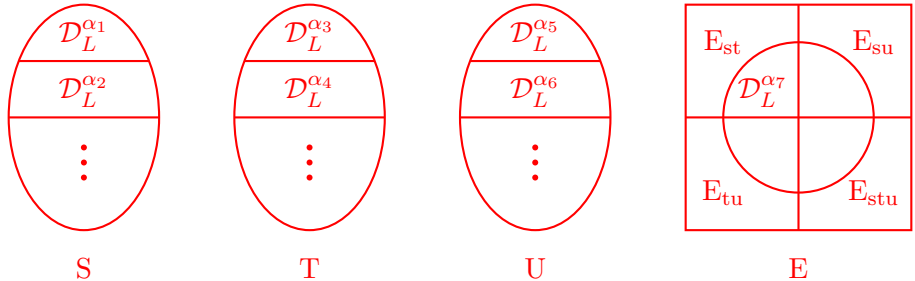

Now we are ready to state the classification of convergent OPE channels for Lorentzian CFT four-point functions. 
\begin{theorem}\label{theorem:classification}
Let $G_4^L$ be the Lorentzian four-point function which is defined by the Wick rotation (\ref{def:Lorentz4ptfct}) from a Euclidean unitary CFT in $d\geq3$. Let $\alpha$ be a causal ordering and let $\mathcal{D}_L^\alpha$ be the set of all configurations with this causal ordering. 
\begin{itemize}
	\item If $\mathcal{D}_L^\alpha$ is in class S, then all configurations in $\mathcal{D}_L^\alpha$ only have convergent s-channel expansion for $G_4^L$.
	\item If $\mathcal{D}_L^\alpha$ is in class T, then all configurations in $\mathcal{D}_L^\alpha$ have the same $N_t$.
	\item If $\mathcal{D}_L^\alpha$ is in class U, then all configurations in $\mathcal{D}_L^\alpha$ have the same $N_u$.
	\item If $\mathcal{D}_L^\alpha$ is in class E, then
	\begin{itemize}
		\item All configurations in $\mathcal{D}_L^\alpha\cap\mathrm{E_{st}}$ have the convergent s-channel expansion and the same $N_t$.
		\item All configurations in $\mathcal{D}_L^\alpha\cap\mathrm{E_{su}}$ have the convergent s-channel expansion and the same $N_u$.
		\item All configurations in $\mathcal{D}_L^\alpha\cap\mathrm{E_{tu}}$ have the same $N_t,N_u$.
		\item All configurations in $\mathcal{D}_L^\alpha\cap\mathrm{E_{stu}}$ have the convergent s-channel expansion and the same $N_t,N_u$.
	\end{itemize}
\end{itemize}
\end{theorem}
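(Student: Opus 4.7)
The proof rests on two pillars: Lemma \ref{lemma:causaltoclass}, which says the class assignment S/T/U/E is a function of the causal ordering $\alpha$, and the connectedness of $\mathcal{D}_L^\alpha$ established in appendix \ref{appendix:connectedness}. Together these let me join any pair $C_1, C_2 \in \mathcal{D}_L^\alpha$ by a continuous path $\beta \subset \mathcal{D}_L^\alpha$, and then concatenate it with a reference path $\gamma_1 \subset \overline{\mathcal{D}}$ from a Euclidean starting point to $C_1$, yielding a path $\gamma = \gamma_1 \cdot \beta$ from Euclidean to $C_2$ of the form (\ref{defpath:DLrelax}). All the OPE-convergence assertions will follow by showing that $\gamma$ computes the same values of $N_t, N_u$ at $C_2$ as $\gamma_1$ does at $C_1$.

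The s-channel statements, including the ``only s-channel'' part of the class-S claim and the s-channel convergence in subclasses $\mathrm{E_{st}}, \mathrm{E_{su}}, \mathrm{E_{stu}}$, are immediate from Theorem \ref{theorem:schannel} combined with the range definitions of section \ref{section:classifyz}: in each of these cases the pair $(z,\bar{z})$ automatically satisfies $z,\bar{z}\notin(1,+\infty)$, while in class S one of $z,\bar{z}$ lies in $(-\infty,0)$ and the other in $(0,1)$, violating the necessary range conditions of Theorems \ref{theorem:tchannel} and \ref{theorem:uchannel}. The constancy of $N_t$ (and by symmetry $N_u$) is then reduced, via the path-independence proved in section \ref{section:tuconvergence}, to showing that the added segment $\beta$ contributes zero to the crossing count of $(-\infty,0)$ (respectively $(0,1)$). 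In classes T and U this is essentially trivial: along the Lorentzian path $\beta$ the curves $z(s), \bar{z}(s)$ remain on real intervals disjoint from $(-\infty,0)$ and $(0,1)$ for the relevant count, so no transverse crossing occurs.

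The main technical obstacle is class E, where $\beta \subset \mathcal{D}_L^\alpha$ may transit between the subclasses $\mathrm{E_{st}}, \mathrm{E_{su}}, \mathrm{E_{tu}}, \mathrm{E_{stu}}$ by letting $z$ and $\bar{z}$ merge on the branch locus $\Gamma$. At each such intersection the individual curves $z(s),\bar{z}(s)$ are not intrinsically defined, and I would handle this by the manual-selection procedure at the end of section \ref{section:tuconvergence}, picking one continuation past each $\Gamma$-intersection and then deforming $\beta$ slightly inside $\mathcal{D}$ to a nearby $\beta'$ with the same $n_t,\bar{n}_t,n_u,\bar{n}_u$. The key geometric input is that in class E the possible merging locus $\{z=\bar{z}\}$ lies in $(-\infty,0)\cup(0,1)\cup(1,+\infty)$, with the specific interval dictated by which subclasses meet: transitions $\mathrm{E_{su}} \leftrightarrow \mathrm{E_{stu}}$ occur at $z=\bar{z}\in(-\infty,0)$, where the curves \emph{touch} the offending ray rather than cross it transversally (so contribute $0$ to $n_t$); transitions $\mathrm{E_{st}}\leftrightarrow \mathrm{E_{stu}}$ similarly touch $(0,1)$ and contribute $0$ to $n_u$; transitions $\mathrm{E_{tu}}\leftrightarrow \mathrm{E_{stu}}$ happen at $z=\bar{z}\in(1,+\infty)$ and are irrelevant to both counts. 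Within each subclass segment, $z$ and $\bar{z}$ stay away from the rays relevant to the opposite count by the subclass definition. Combining these facts subclass by subclass gives the invariance of $N_t$ and $N_u$ over $\mathcal{D}_L^\alpha\cap \mathrm{E_{st}}, \mathcal{D}_L^\alpha\cap \mathrm{E_{su}}, \mathcal{D}_L^\alpha\cap \mathrm{E_{tu}}, \mathcal{D}_L^\alpha\cap \mathrm{E_{stu}}$ and concludes the proof.
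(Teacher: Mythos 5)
Your overall architecture --- connectedness of $\mathcal{D}_L^\alpha$ plus Lemma \ref{lemma:causaltoclass}, joining $C_1$ to $C_2$ by a path $\beta$ inside $\mathcal{D}_L^\alpha$, concatenating with a reference path from $\mathcal{D}_E\backslash\Gamma$, and showing that $\beta$ contributes zero to $N_t$ and $N_u$ --- is exactly the paper's proof, and your treatment of the s-channel statements and of classes S, T and U is correct.

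The gap is in class E, in the claim that at a transition such as $\mathrm{E_{stu}}\to\mathrm{E_{su}}\to\mathrm{E_{stu}}$ the curves ``touch the offending ray rather than cross it transversally (so contribute $0$ to $n_t$).'' That is not automatic. An excursion into $\mathrm{E_{su}}$ begins and ends on $\Gamma$, where $z$ and $\bar z$ merge on $(-\infty,0)$; after the second merge the two continuations split back into a conjugate pair, and nothing forces the branch labelled $z$ to return to the half-plane it came from. If $z$ enters from the upper half-plane and exits into the lower half-plane, that is a genuine crossing of $(-\infty,0)$ from above and $n_t$ changes by one. What \emph{is} forced --- and this is the paper's actual argument (see figure \ref{fig:case4}) --- is that in $\mathrm{E_{stu}}$ the two curves are complex conjugates of each other, so whenever $z$ crosses $(-\infty,0)$ from above, $\bar z$ crosses it from below; hence $n_t(\beta)=-\bar n_t(\beta)$ and $N_t(\beta)=0$, even though the individual counts need not vanish. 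Your conclusion is therefore correct, but the mechanism you offer for it is not: the key cancellation is justified by an intermediate claim that is false in general. (One could instead try to argue that a suitable manual selection at each $\Gamma$-intersection renders every excursion non-crossing, but you would then need to exhibit a globally consistent such selection; the conjugate-symmetry cancellation makes this unnecessary.)
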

\begin{proof}
Let us check the conclusions case by case.

Case 1: $\mathcal{D}_L^\alpha$ is in class S.

The s-channel convergence follows from theorem \ref{theorem:schannel}. For other cases, the s-channel arguments are the same, and we will only focus on $N_t$ and $N_u$. 

Case 2: $\mathcal{D}_L^\alpha$ is in class T.

It remains to show that $N_t$ is a constant in $\mathcal{D}_L^\alpha$. For any $C_L,C_L^\prime\in\mathcal{D}_L^\alpha$, since $\mathcal{D}_L^\alpha$ is connected, there exists a path $\gamma_1$ which connects $C_L$ and $C_L^\prime$:
\begin{equation}\label{path:DLalpha}
\begin{split}
    \gamma_1&:\ [0,1]\ \longrightarrow\ \mathcal{D}_L^\alpha,\\
    \gamma_1&(0)=C_L,\quad\gamma_1(1)=C_L^\prime.\\
\end{split}
\end{equation}
Since $\gamma_1(s)$ are always configurations in class T, the corresponding $z,\bar{z}$ never touch the interval $(-\infty,0)$. So $n_t(\gamma_1)=\bar{n}_t(\gamma_1)=0$, which implies $N_t(\gamma_1)=0$. On the other hand, given a path $\gamma_2$ from $\mathcal{D}_E\backslash\Gamma$ to $C_L$, we get a path from $\mathcal{D}_E\backslash\Gamma$ to $C_L^\prime$ by connecting $\gamma_1$ and $\gamma_2$. So we have
\begin{equation}\label{proof:t}
\begin{split}
   N_t(C_L^\prime)=N_t(\gamma_1)+N_t(\gamma_2)=N_t(C_L).
\end{split}
\end{equation}
In other words, $N_t$ is a constant in $\mathcal{D}_L^\alpha$.

Case 3: $\mathcal{D}_L^\alpha$ is in class U.

It remains to show that $N_u$ is a constant in $\mathcal{D}_L^\alpha$. The argument is similar to case 2.

Case 4: $\mathcal{D}_L^\alpha$ is in class E.

Suppose $C_L,C_L^\prime$ are two configurations in $\mathcal{D}_L^\alpha\cap\mathrm{E_{st}}$. It remains to show that $N_t(C_L)=N_t(C_L^\prime)$. Analogously to case 2, there exists a path $\gamma_1$ satisfying the condition (\ref{path:DLalpha}), and it suffices to show that $N_t(\gamma_1)=0$. Here it is different from case 2 because $\gamma_1(s)$ may go through the other subclasses of the class E, and the curves of $z,\bar{z}$ may touch the interval $(-\infty,0)$. In class E, the curves $z(s),\bar{z}(s)$ touch the interval $(-\infty,0)$ only when $\gamma(s)$ enters the subclass $\mathrm{E_{su}}$. However, $\gamma(s)$, which starts from $\mathrm{E_{st}}$, must go through $\mathrm{E_{stu}}$ before entering $\mathrm{E_{su}}$. When $\gamma(s)$ leaves $\mathrm{E_{su}}$, it must go through $\mathrm{E_{stu}}$ again. Since in $\mathrm{E_{stu}}$, the variables $z,\bar{z}$ are complex conjugate to each other, the curves of $z,\bar{z}$ must cross $(-\infty,0)$ from opposite directions, see figure \ref{fig:case4}.
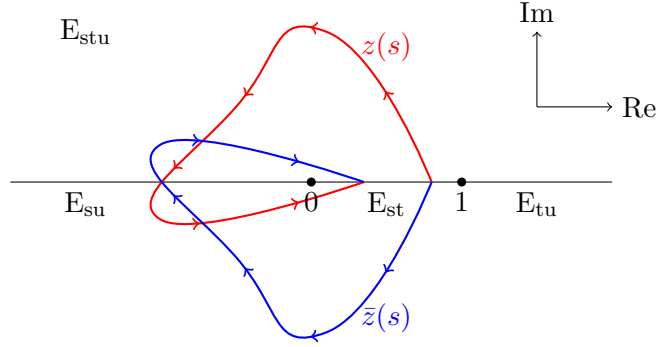
\begin{figure}[H]
	\centering
	\begin{tikzpicture}
	\draw (-4,0)--(0,0) ;
	\draw (0,0)--(2,0) ;
	\draw (2,0)--(4,0) ;
	\draw[->] (3,1)--(4,1) node[right]{Re};
	\draw[->] (3,1)--(3,2) node[above]{Im};
	\draw[red,,thick, postaction={decorate,
		decoration={markings,
			mark=between positions -2 and 2 step 0.15 with {\arrow[red]{>};}}}] plot [smooth,tension=1] coordinates {(1.6,0) (0.2,2) (-1,1) (-2,-0.5) (0.7,0)};
	\draw[red] (1,1.8) node{$z(s)$};
	\draw[blue,,thick, postaction={decorate,
		decoration={markings,
			mark=between positions -2 and 2 step 0.15 with {\arrow[blue]{>};}}}] plot [smooth,tension=1] coordinates {(1.6,0) (0.2,-2) (-1,-1) (-2,0.5) (0.7,0)};	
	\draw[blue] (1,-1.8) node{$\bar{z}(s)$};
	\filldraw[black] (0,0) circle (1.5pt) node[anchor=north] {$0$};
	\filldraw[black] (2,0) circle (1.5pt) node[anchor=north] {$1$};
	\draw (-3,0) node[below]{$\mathrm{E_{su}}$};
	\draw (1,0) node[below]{$\mathrm{E_{st}}$};
	\draw (3,0) node[below]{$\mathrm{E_{tu}}$};
	\draw (-3,2) node{$\mathrm{E_{stu}}$};	
	\end{tikzpicture}
	\caption{\label{fig:case4} An example of $z(s),\bar{z}(s)$ along $\gamma_1$ in case 4.}
\end{figure}
So we get
\begin{equation}
\begin{split}
    n_t(\gamma_1)=-\bar{n}_t(\gamma_1),
\end{split}
\end{equation}
which implies $N_t(\gamma_1)=0$, hence $N_t(C_L)=N_t(C_L^\prime)$.

The arguments for $\mathcal{D}_L^\alpha\cap\mathrm{E_{su}}$, $\mathcal{D}_L^\alpha\cap\mathrm{E_{tu}}$ and $\mathcal{D}_L^\alpha\cap\mathrm{E_{stu}}$ are similar.
\end{proof}
An immediate consequence of theorem \ref{theorem:classification} is that for a fixed causal ordering (say $\mathcal{D}_L^\alpha$), each blank space in table \ref{table:class} satisfy the all-or-none law: either check mark for all configurations in $\mathcal{D}_L^\alpha$ or cross mark for all configurations in $\mathcal{D}_L^\alpha$. Therefore, if $\mathcal{D}_L^\alpha$ is in class S or T or U, then all its configurations have the same OPE convergence properties; if $\mathcal{D}_L^\alpha$ is in class E, then all its configurations in the same subclass have the same OPE convergence properties.\footnote{By configurations having the same OPE convergence properties, we mean that in each OPE channel, all or none of these configurations have the convergent expansion for the four-point function.}

\subsubsection{Comments on the \texorpdfstring{2d}{} case}\label{section:comments2d}
In 2d unitary local CFTs, we have Al. Zamolodchikov's uniformizing variables $q,\bar{q}$ \cite{zamolodchikov1987conformal}. The function $g$ in (\ref{ope:schannel}) has a convergent expansion in terms of Virasoro blocks, and Virasoro blocks have convergent series expansions in $q,\bar{q}$ if $0<\abs{q},\abs{\bar{q}}<1$, which includes the configurations with $0\leq\abs{\rho},\abs{\bar{\rho}}\leq1$ except for $\rho$ or $\bar{\rho}=\pm1$. However, $\rho$ or $\bar{\rho}=\pm1$ only happens at light-cone singularities.\footnote{If $\rho$ or $\bar{\rho}=1$, then $v=0$. If $\rho$ or $\bar{\rho}=-1$, then $u$ or $v=\infty$. Thus, for any configuration with $\rho\ or\ \bar{\rho}=\pm1$, there exists at least one $x_i,x_j$ pair such that $(x_i-x_j)^2=0$.} So we conclude that in the Lorentzian signature, the s-channel OPE is always convergent aside from light-cone singularities \cite{maldacena2017looking}.

The above CFT argument is valid only for 2d unitary local CFTs, where by local we mean there exists a stress tensor $T_{\mu\nu}(x)$, which has the mode expansion in Virasoro generators \cite{francesco1997conformal}. There are also non-local CFTs, e.g. the generalized free field theories. These non-local CFTs have only global conformal symmetry, for which we can only use $\rho,\bar{\rho}$ instead of $q,\bar{q}$. 

We claim that the conclusions in theorem \ref{theorem:classification} are still true for 2d unitary CFT (here we only assume global conformal symmetry). Unlike the case $d\geq3$, the sets $\mathcal{D}_L^\alpha$ are usually disconnected in 2d. This is because in 2d, there are two disconnected space-like separations. So we cannot copy the proof of theorem \ref{theorem:classification}. However, any 2d configuration can be embedded into $d\geq3$. Since our criteria of OPE convergence properties are based on counting how the analytic continuation curves of $z,\bar{z}$ cross the intervals $(-\infty,0)$, $(0,1)$ and $(1,+\infty)$, which is dimension independent, the 2d path gives the same counting of $N_t,N_u$ as in $d\geq3$. Therefore, theorem \ref{theorem:classification} also covers the 2d case.

The only little difference is that in the 2d case, the Lorentzian four-point configurations only have real $z,\bar{z}$. This follows from (\ref{def:2dcomplex}) and (\ref{crossratio:2d}). So the subclass $\mathrm{E_{stu}}$, where $z,\bar{z}$ are not real, does not exist in 2d.

\subsection{Time reversals}\label{section:timereversal}
In theorem \ref{theorem:classification}, we have classified the Lorentzian configurations in $\mathcal{D}_L$ into a finite number of cases. For each case, we will have to choose a representative configuration and a path from $\mathcal{D}_E\backslash\Gamma$, then check if conditions of theorem \ref{theorem:schannel}, \ref{theorem:tchannel} and \ref{theorem:uchannel} hold. Actually, there are some further simplifications which will reduce the number of checks to perform. We are going to show that different $\mathcal{D}_L^\alpha$ which are related by time reversals have the same convergent OPE channels. 

We define two time reversals:
\begin{equation}\label{def:timereversalpt}
\begin{split}
\theta_E:&\ (\epsilon+it,\mathbf{x})\mapsto(-\epsilon+it,\mathbf{x}) \\
\theta_L:&\ (\epsilon+it,\mathbf{x})\mapsto(\epsilon-it,\mathbf{x}) \\
\end{split}
\end{equation}
They correspond to the time reversals in Euclidean and Minkowski space. Under time reversals, $x_{ij}^2$ takes its complex conjugate
\begin{equation}\label{prop:timereversal1}
\begin{split}
(\theta_E x_i-\theta_E x_j)^2=(\theta_L x_i-\theta_L x_j)^2=\left[(x_i-x_j)^2\right]^*
\end{split}
\end{equation}
Given a configuration $C=(x_1,x_2,x_3,x_4)$, we define the time reversals of the configuration by (notice the change of order of points in $\theta_E C$)
\begin{equation}\label{def:timereversalconfig}
\begin{split}
\theta_EC=&(\theta_Ex_4,\theta_Ex_3,\theta_Ex_2,\theta_Ex_1), \\
\theta_LC=&(\theta_Lx_1,\theta_Lx_2,\theta_Lx_3,\theta_Lx_4). \\
\end{split}
\end{equation}
Then the following properties are easily checked:
\begin{itemize}
	\item The sets $\mathcal{D}$, $\mathcal{D}_E$ and $\mathcal{D}_L$ are preserved by $\theta_E$ and $\theta_L$.
	\item Under the transformation $C\mapsto\theta_E C$ or $C\mapsto\theta_L C$, the conformal invariants $u,v,z,\bar{z},\rho,\bar{\rho}$ become their complex conjugates.
\end{itemize}
Suppose we have a path $\gamma$ from $\mathcal{D}_E\backslash\Gamma$ to $\mathcal{D}_L$. Then $\theta_E\gamma$ and $\theta_L\gamma$ are still paths from $\mathcal{D}_E\backslash\Gamma$ to $\mathcal{D}_L$. The curves of $z,\bar{z}$ are reflected with respect to the real axis, which implies
\begin{equation}
\begin{split}
    N_t(\theta_E\gamma),N_t(\theta_L\gamma)=-N_t(\gamma), \\
    N_u(\theta_E\gamma),N_u(\theta_L\gamma)=-N_u(\gamma). \\
\end{split}
\end{equation}
By theorem \ref{theorem:schannel}, \ref{theorem:tchannel} and \ref{theorem:uchannel}, we conclude that 
\begin{itemize}
	\item Different Lorentzian configurations which are related by $\theta_E,\theta_L$ have the same convergent OPE channels.
\end{itemize}
By lemma \ref{lemma:causaltoclass} and theorem \ref{theorem:classification}, we translate the above results to the level of causal orderings:
\begin{itemize}
	\item If two different sets $\mathcal{D}_L^{\alpha},\mathcal{D}_L^{\beta}$ are related by $\theta_E,\theta_L$, then they belong to the same class (S, T, U, E).
	\item If two different sets $\mathcal{D}_L^{\alpha},\mathcal{D}_L^{\beta}$ are in class S or T or U and are related by $\theta_E,\theta_L$, then they have the same convergent OPE channels. 
	\item If two different sets $\mathcal{D}_L^{\alpha},\mathcal{D}_L^{\beta}$ are in class E and are related by $\theta_E,\theta_L$, then their intersections with each subclass have the same convergent OPE channels. 
	
\end{itemize}
Given a Lorentzian configuration $C=(x_1,x_2,x_3,x_4)$, $\theta_E$ interchanges $x_1\leftrightarrow x_4$ and $x_2\leftrightarrow x_3$. At the level of causal orderings, $\theta_E$ is the permutation of indices
\begin{equation}\label{action:thetaE}
\begin{split}
    1\leftrightarrow4,\quad2\leftrightarrow3,    
\end{split}
\end{equation}
with all the arrows kept fixed. For example, under $\theta_E$ we have
\begin{equation}
\begin{split}
    \begin{tikzpicture}[baseline={(a.base)},circuit logic US]		
    \node (a) at (0,0) {$1$};
    \node (b) at (1,0) {$2$};
    \node (c) at (2,0.5) {$3$};
    \node (d) at (2,-0.5) {$4$};
    \draw[-stealth] (a) to (b);
    \draw[-stealth] (b) to (c);
    \draw[-stealth] (b) to (d);		
    \end{tikzpicture}\Rightarrow
    \begin{tikzpicture}[baseline={(a.base)},circuit logic US]		
    \node (a) at (0,0) {$4$};
    \node (b) at (1,0) {$3$};
    \node (c) at (2,0.5) {$2$};
    \node (d) at (2,-0.5) {$1$};
    \draw[-stealth] (a) to (b);
    \draw[-stealth] (b) to (c);
    \draw[-stealth] (b) to (d);		
    \end{tikzpicture}.
\end{split}
\end{equation}
Under $\theta_L$, the Lorentzian configuration $x_k=(it_k,\mathbf{x}_k)$ is mapped to $C^\prime=(x_1^\prime,x_2^\prime,x_3^\prime,x_4^\prime)$ with
\begin{equation}
\begin{split}
    x_k^\prime=\theta_Lx_k=(-it_k,\mathbf{x}_k),\quad k=1,2,3,4.
\end{split}
\end{equation}
So the operator ordering does not change but the causal ordering is reversed. For example, under $\theta_L$ we have
\begin{equation}
\begin{split}
\begin{tikzpicture}[baseline={(a.base)},circuit logic US]		
\node (a) at (0,0) {$1$};
\node (b) at (1,0) {$2$};
\node (c) at (2,0.5) {$3$};
\node (d) at (2,-0.5) {$4$};
\draw[-stealth] (a) to (b);
\draw[-stealth] (b) to (c);
\draw[-stealth] (b) to (d);		
\end{tikzpicture}\Rightarrow
\begin{tikzpicture}[baseline={(a.base)},circuit logic US]		
\node (a) at (0,0) {$1$};
\node (b) at (1,0) {$2$};
\node (c) at (2,0.5) {$3$};
\node (d) at (2,-0.5) {$4$};
\draw[-stealth] (b) to (a);
\draw[-stealth] (c) to (b);
\draw[-stealth] (d) to (b);		
\end{tikzpicture}.
\end{split}
\end{equation}
By definitions (\ref{def:timereversalpt}) and (\ref{def:timereversalconfig}), we have the following properties for $\theta_E,\theta_L$:
\begin{equation}
\begin{split}
\theta_E^2=id,\quad\theta_L^2=id,\quad\theta_E\theta_L=\theta_L\theta_E. \\
\end{split}
\end{equation}
So the group generated by $\theta_E,\theta_L$ is $\mathbb{Z}_2\times\mathbb{Z}_2$. Under the $\bbZ_2\times\bbZ_2$-actions, the orbit of a given causal ordering contains 1 or 2 or 4 causal orderings. In each orbit, it suffices to check the OPE convergence properties of only one causal ordering and make the same conclusions for other causal orderings. This simplifies our work.

\subsection{The table of four-point causal orderings}\label{section:tablecausal}
Given two Lorentzian configurations $(x_1,\ldots,x_n)$ and $(y_1,\ldots,y_n)$, we say that they are in the same causal type if there is a permutation $\sigma\in S_n$ such that $(x_{\sigma(1)},\ldots,x_{\sigma(n)})$ has the same causal ordering as $(y_1,\ldots,y_n)$ or $(\theta_Ly_1,\ldots,\theta_Ly_n)$. 

In table \ref{table:causalclassification}, we give a classification of four-point causal orderings according to the causal types. The vertices labelled by $a,b,c,d$ can be any permutation of $1,2,3,4$. In the end we will give a table about OPE convergence properties for each causal type in table \ref{table:causalclassification}.
\begin{table}[H]
	\caption{Classification of four-point causal orderings}
	\label{table:causalclassification}
	\setlength{\tabcolsep}{5mm} 
	\def\arraystretch{1.25} 
	\centering
	\begin{tabular}{|c|c|c|}
		\hline
		Type No.    &   causal ordering    &   $\theta_L$ time reversal
		\\ \hline
		1   &    \begin{tikzpicture}[baseline={(a.base)},circuit logic US]		
		\node (a) at (0,0) {$a$};
		\node (b) at (1,0) {$b$};
		\node (c) at (2,0) {$c$};
		\node (d) at (3,0) {$d$};
		\draw[-stealth] (a) to (b);
		\draw[-stealth] (b) to (c);
		\draw[-stealth] (c) to (d);	
		\end{tikzpicture}   & same
		\\ \hline
		2   &    \begin{tikzpicture}[baseline={(a.base)},circuit logic US]		
		\node (a) at (0,0) {$a$};
		\node (b) at (1,0) {$b$};
		\node (c) at (2,0.5) {$c$};
		\node (d) at (2,-0.5) {$d$};
		\draw[-stealth] (a) to (b);
		\draw[-stealth] (b) to (c);
		\draw[-stealth] (b) to (d);		
		\end{tikzpicture}   & \begin{tikzpicture}[baseline={(c.base)},circuit logic US]		
		\node (a) at (0,0.5) {$c$};
		\node (b) at (0,-0.5) {$d$};
		\node (c) at (1,0) {$b$};
		\node (d) at (2,0) {$a$};
		\draw[-stealth] (a) to (c);
		\draw[-stealth] (b) to (c);
		\draw[-stealth] (c) to (d);		
		\end{tikzpicture}
		\\ \hline
		3   &    \begin{tikzpicture}[baseline={(0,-0.35)},circuit logic US]		
		\node (a) at (0,0) {$a$};
		\node (b) at (1,0) {$b$};
		\node (c) at (2,0) {$c$};
		\node (d) at (1,-0.5) {$d$};
		\draw[-stealth] (a) to (b);
		\draw[-stealth] (b) to (c);
		\draw[-stealth] (a) to (d);		
		\end{tikzpicture}   & \begin{tikzpicture}[baseline={(0,-0.35)},circuit logic US]		
		\node (a) at (0,0) {$c$};
		\node (b) at (1,0) {$b$};
		\node (c) at (2,0) {$a$};
		\node (d) at (1,-0.5) {$d$};
		\draw[-stealth] (a) to (b);
		\draw[-stealth] (b) to (c);
		\draw[-stealth] (d) to (c);		
		\end{tikzpicture}
		\\ \hline
		4   &    \begin{tikzpicture}[baseline={(a.base)},circuit logic US]		
		\node (a) at (0,0) {$a$};
		\node (b) at (1,0.5) {$b$};
		\node (c) at (1,-0.5) {$c$};
		\node (d) at (2,0) {$d$};
		\draw[-stealth] (a) to (b);
		\draw[-stealth] (a) to (c);
		\draw[-stealth] (b) to (d);
		\draw[-stealth] (c) to (d);		
		\end{tikzpicture}   & same
		\\ \hline
		5   &    \begin{tikzpicture}[baseline={(a.base)},circuit logic US]		
		\node (a) at (0,0) {$a$};
		\node (b) at (1,0.5) {$b$};
		\node (c) at (1,0) {$c$};
		\node (d) at (1,-0.5) {$d$};
		\draw[-stealth] (a) to (b);
		\draw[-stealth] (a) to (c);
		\draw[-stealth] (a) to (d);		
		\end{tikzpicture}   & \begin{tikzpicture}[baseline={(b.base)},circuit logic US]		
		\node (a) at (0,0.5) {$b$};
		\node (b) at (0,0) {$c$};
		\node (c) at (0,-0.5) {$d$};
		\node (d) at (1,0) {$a$};
		\draw[-stealth] (a) to (d);
		\draw[-stealth] (b) to (d);
		\draw[-stealth] (c) to (d);		
		\end{tikzpicture}
		\\ \hline
		6   &    \begin{tikzpicture}[baseline={(0,-0.35)},circuit logic US]		
		\node (a) at (0,0) {$a$};
		\node (b) at (1,0) {$b$};
		\node (c) at (2,0) {$c$};
		\node (d) at (1,-0.5) {$d$};
		\draw[-stealth] (a) to (b);
		\draw[-stealth] (b) to (c);		
		\end{tikzpicture}   & same
		\\ \hline
		7   &    \begin{tikzpicture}[baseline={(0,-0.4)},circuit logic US]		
		\node (a) at (0,0) {$a$};
		\node (b) at (1,0.5) {$b$};
		\node (c) at (1,-0.5) {$c$};
		\node (d) at (0.5,-1) {$d$};
		\draw[-stealth] (a) to (b);
		\draw[-stealth] (a) to (c);		
		\end{tikzpicture}   & \begin{tikzpicture}[baseline={(0,-0.4)},circuit logic US]		
		\node (a) at (0,0.5) {$b$};
		\node (b) at (0,-0.5) {$c$};
		\node (c) at (1,0) {$a$};
		\node (d) at (0.5,-1) {$d$};
		\draw[-stealth] (a) to (c);
		\draw[-stealth] (b) to (c);		
		\end{tikzpicture}
		\\ \hline
		8   &    \begin{tikzpicture}[baseline={(0,-0.4)},circuit logic US]		
		\node (a) at (0,0) {$a$};
		\node (b) at (1,0.5) {$b$};
		\node (c) at (1,-0.5) {$d$};
		\node (d) at (0,-1) {$c$};
		\draw[-stealth] (a) to (b);
		\draw[-stealth] (a) to (c);
		\draw[-stealth] (d) to (c);		
		\end{tikzpicture}   & same
		\\ \hline
		9   &    \begin{tikzpicture}[baseline={(c.base)},circuit logic US]		
		\node (a) at (0,0) {$a$};
		\node (b) at (1,0) {$b$};
		\node (c) at (0.5,-0.5) {$c$};
		\node (d) at (0.5,-1) {$d$};
		\draw[-stealth] (a) to (b);		
		\end{tikzpicture}   & same
		\\ \hline
		10   &    \begin{tikzpicture}[baseline={(c.base)},circuit logic US]		
		\node (a) at (0,0) {$a$};
		\node (b) at (0,-1) {$b$};
		\node (c) at (1,-0.5) {$c$};
		\node (d) at (2,-0.5) {$d$};
		\draw[-stealth] (a) to (c);
		\draw[-stealth] (b) to (c);
		\draw[-stealth] (a) to (d);
		\draw[-stealth] (b) to (d);		
		\end{tikzpicture}   & same
		\\ \hline
		11   &    \begin{tikzpicture}[baseline={(0,-0.4)},circuit logic US]		
		\node (a) at (0,0) {$a$};
		\node (b) at (1,0) {$b$};
		\node (c) at (0,-0.5) {$c$};
		\node (d) at (1,-0.5) {$d$};
		\draw[-stealth] (a) to (b);
		\draw[-stealth] (c) to (d);		
		\end{tikzpicture}   & same
		\\ \hline
		12   &    \begin{tikzpicture}[baseline={(0,0)},circuit logic US]		
		\node (a) at (0,0) {$a$};
		\node (b) at (1,0) {$b$};
		\node (c) at (2,0) {$c$};
		\node (d) at (3,0) {$d$};	
		\end{tikzpicture}   & same
		\\ \hline
	\end{tabular}
\end{table}
Each causal type thus represents at most $4!\times2$ causal orderings (4! for possible assignments of $1,2,3,4\rightarrow a,b,c,d$ and $\times2$ for two columns). This maximal number is realized for type 3, while for other types it is smaller because often second column is equivalent to the first and because of little group (see appendix \ref{section:graphsymmetry}).

It makes sense to do this grouping of causal orderings into causal types for two reasons:
\begin{itemize}
	\item causal orderings related by $\theta_E$ and $\theta_L$ action (and which thus have same OPE convergence properties) belong to the same causal type.
	\item if we know class/subclass of $\mathcal{D}_L^\alpha$ for one $\alpha$ in a given causal type, it is easy to determine the class/subclass of any other $\mathcal{D}_L^\alpha$ in the same causal type (see appendix \ref{section:S4action}).
\end{itemize}

\subsection{Examples}\label{section:examples}
The tables which classify the OPE convergence properties will be particularly large, we leave them in appendix \ref{appendix:tableopeconvergence}. Readers can pick the cases they are interested in. To make it easy for readers to check, we also share the Mathematica code which contain the OPE convergence results of all causal orderings, see the file ``/anc/OPE_check.nb". In this section we only give some examples.

The Lorentzian four-point correlation functions defined in (\ref{def:Lorentz4ptfct}) are either (maximally) time-ordered ($t_1>t_2>t_4>t_4$) or out of time order (not $t_1>t_2>t_4>t_4$). The time-ordered correlators have applications in scattering theories \cite{lehmann1955formulierung,peskin1995introduction}, and the out-of-time-order correlation functions have applications in the study of many-body systems  \cite{keldysh1965diagram,kitaev2014hidden,maldacena2016chaos,aleiner2016microscopic,bagrets2017power,garttner2017measuring,fan2017out,haehl2019classification}. An example in \cite{kravchuk2020distributions} shows the existence of out-of-time-order correlators which do not have a convergent OPE channel (see appendix A in \cite{kravchuk2020distributions}).\footnote{By ``a configuration do not have a convergent OPE channel" we mean the configuration do not satisfy the conditions of theorem \ref{theorem:schannel}, \ref{theorem:tchannel} and \ref{theorem:uchannel}.} Our first example is the case of time-ordered correlator. We will see that time-ordered correlators at different causal orderings may have different OPE convergence properties. 

The second example is to clarify that two conformally equivalent configurations/causal orderings may have different OPE convergence properties.

Then we will discuss two other examples from AdS/CFT. One is the Regge kinematics \cite{cornalba2007eikonal,cornalba2008eikonal}, the other is related to the bulk-point singularities \cite{maldacena2017looking}.

\subsubsection{General time-ordered correlation functions}
In this example, we would like to consider the general Lorentzian four-point functions with the maximal time ordering:
\begin{equation}
	\begin{split}
		G^L_4&(x_1,x_2,x_3,x_4)=\bra{0}\phi(x_1)\phi(x_2)\phi(x_3)\phi(x_4)\ket{0}, \\
		x_k&=(t_k,\mathbf{x}_k), \quad
		t_1>t_2>t_3>t_4, \\
	\end{split}
\end{equation}
where $G^L_4(x_1,x_2,x_3,x_4)$ is the same as in eq.\,(\ref{def:Lorentz4ptfct}). Since Lorentzian correlator is the boundary value of the analytic function on domain $\mathcal{D}$, it does not depend on our choice of the path (but we should keep the ordering of the Euclidean times until we reach the Lorentzian configuration). A simple way to obtain the Lorentzian correlator is by the so-called ``$\epsilon$-prescription":
\begin{equation}
	\begin{split}
		x_1=(t_1,\mathbf{x}_1),\quad x_2=(t_2-i\epsilon,\mathbf{x}_2),\quad x_3=(t_3-2i\epsilon,\mathbf{x}_3),\quad x_4=(t_4-3i\epsilon,\mathbf{x}_4),\quad \epsilon\rightarrow0^+.
	\end{split}
\end{equation}
Here $\epsilon$ plays the role of the Euclidean time difference. Then one can check whether $G^L_4$ exists in the sense of functions in such a limit.

According to theorem \ref{theorem:classification}, the OPE convergence properties are determined by the causal ordering and (in a special case) the range of cross-ratios. For time-ordered four-point functions, we have 40 possible causal orderings compatible with the maximal time ordering. Here we list all of them according to table \ref{table:causalclassification}:
\begin{equation}\label{causal:time-ordered}
	\begin{split}
		\begin{tikzpicture}
			\node (a) at (0,0) {$1$};
			\node (b) at (1,0) {$2$};
			\node (c) at (2,0) {$3$};
			\node (d) at (3,0) {$4$};
			\draw[-stealth] (a) to (b);
			\draw[-stealth] (b) to (c);
			\draw[-stealth] (c) to (d);	
		\end{tikzpicture},\quad
	\begin{tikzpicture}[baseline={(0,-0.35)},circuit logic US]		
		\node (a) at (0,0) {$1$};
		\node (b) at (1,0) {$2$};
		\node (c) at (2,0.5) {$3$};
		\node (d) at (2,-0.5) {$4$};
		\draw[-stealth] (a) to (b);
		\draw[-stealth] (b) to (c);
		\draw[-stealth] (b) to (d);		
	\end{tikzpicture},\quad
\begin{tikzpicture}[baseline={(0,-0.35)},circuit logic US]		
	\node (a) at (0,0.5) {$1$};
	\node (b) at (0,-0.5) {$2$};
	\node (c) at (1,0) {$3$};
	\node (d) at (2,0) {$4$};
	\draw[-stealth] (a) to (c);
	\draw[-stealth] (b) to (c);
	\draw[-stealth] (c) to (d);		
\end{tikzpicture},\quad
    \begin{tikzpicture}[baseline={(0,-0.35)},circuit logic US]		
    	\node (a) at (0,0) {$1$};
    	\node (b) at (1,0) {$2$};
    	\node (c) at (2,0) {$3$};
    	\node (d) at (1,-0.5) {$4$};
    	\draw[-stealth] (a) to (b);
    	\draw[-stealth] (b) to (c);
    	\draw[-stealth] (a) to (d);		
    \end{tikzpicture} ,\quad
    \begin{tikzpicture}[baseline={(0,-0.35)},circuit logic US]		
    	\node (a) at (0,0) {$1$};
    	\node (b) at (1,0) {$3$};
    	\node (c) at (2,0) {$4$};
    	\node (d) at (1,-0.5) {$2$};
    	\draw[-stealth] (a) to (b);
    	\draw[-stealth] (b) to (c);
    	\draw[-stealth] (a) to (d);		
    \end{tikzpicture},  \\
    \begin{tikzpicture}[baseline={(0,-0.35)},circuit logic US]		
    	\node (a) at (0,0) {$1$};
    	\node (b) at (1,0) {$2$};
    	\node (c) at (2,0) {$4$};
    	\node (d) at (1,-0.5) {$3$};
    	\draw[-stealth] (a) to (b);
    	\draw[-stealth] (b) to (c);
    	\draw[-stealth] (a) to (d);		
    \end{tikzpicture}, \quad
\begin{tikzpicture}[baseline={(0,-0.35)},circuit logic US]		
	\node (a) at (0,0) {$1$};
	\node (b) at (1,0) {$2$};
	\node (c) at (2,0) {$4$};
	\node (d) at (1,-0.5) {$3$};
	\draw[-stealth] (a) to (b);
	\draw[-stealth] (b) to (c);
	\draw[-stealth] (d) to (c);		
\end{tikzpicture}, \quad
\begin{tikzpicture}[baseline={(0,-0.35)},circuit logic US]		
\node (a) at (0,0) {$1$};
\node (b) at (1,0) {$3$};
\node (c) at (2,0) {$4$};
\node (d) at (1,-0.5) {$2$};
\draw[-stealth] (a) to (b);
\draw[-stealth] (b) to (c);
\draw[-stealth] (d) to (c);		
\end{tikzpicture}, \quad
\begin{tikzpicture}[baseline={(0,-0.35)},circuit logic US]		
\node (a) at (0,0) {$2$};
\node (b) at (1,0) {$3$};
\node (c) at (2,0) {$4$};
\node (d) at (1,-0.5) {$1$};
\draw[-stealth] (a) to (b);
\draw[-stealth] (b) to (c);
\draw[-stealth] (d) to (c);		
\end{tikzpicture},\quad
    \begin{tikzpicture}[baseline={(a.base)},circuit logic US]		
    	\node (a) at (0,0) {$1$};
    	\node (b) at (1,0.5) {$2$};
    	\node (c) at (1,-0.5) {$3$};
    	\node (d) at (2,0) {$4$};
    	\draw[-stealth] (a) to (b);
    	\draw[-stealth] (a) to (c);
    	\draw[-stealth] (b) to (d);
    	\draw[-stealth] (c) to (d);		
    \end{tikzpicture},\\
\begin{tikzpicture}[baseline={(a.base)},circuit logic US]		
	\node (a) at (0,0) {$1$};
	\node (b) at (1,0.5) {$2$};
	\node (c) at (1,0) {$3$};
	\node (d) at (1,-0.5) {$4$};
	\draw[-stealth] (a) to (b);
	\draw[-stealth] (a) to (c);
	\draw[-stealth] (a) to (d);		
\end{tikzpicture},\quad
 \begin{tikzpicture}[baseline={(b.base)},circuit logic US]		
	\node (a) at (0,0.5) {$1$};
	\node (b) at (0,0) {$2$};
	\node (c) at (0,-0.5) {$3$};
	\node (d) at (1,0) {$4$};
	\draw[-stealth] (a) to (d);
	\draw[-stealth] (b) to (d);
	\draw[-stealth] (c) to (d);		
\end{tikzpicture},\quad
\begin{tikzpicture}[baseline={(0,-0.35)},circuit logic US]		
	\node (a) at (0,0) {$2$};
	\node (b) at (1,0) {$3$};
	\node (c) at (2,0) {$4$};
	\node (d) at (1,-0.5) {$1$};
	\draw[-stealth] (a) to (b);
	\draw[-stealth] (b) to (c);		
\end{tikzpicture},\quad
\begin{tikzpicture}[baseline={(0,-0.35)},circuit logic US]		
\node (a) at (0,0) {$1$};
\node (b) at (1,0) {$3$};
\node (c) at (2,0) {$4$};
\node (d) at (1,-0.5) {$2$};
\draw[-stealth] (a) to (b);
\draw[-stealth] (b) to (c);		
\end{tikzpicture},\quad
\begin{tikzpicture}[baseline={(0,-0.35)},circuit logic US]		
\node (a) at (0,0) {$1$};
\node (b) at (1,0) {$2$};
\node (c) at (2,0) {$4$};
\node (d) at (1,-0.5) {$3$};
\draw[-stealth] (a) to (b);
\draw[-stealth] (b) to (c);		
\end{tikzpicture},\quad
\begin{tikzpicture}[baseline={(0,-0.35)},circuit logic US]		
\node (a) at (0,0) {$1$};
\node (b) at (1,0) {$2$};
\node (c) at (2,0) {$3$};
\node (d) at (1,-0.5) {$4$};
\draw[-stealth] (a) to (b);
\draw[-stealth] (b) to (c);		
\end{tikzpicture},\\
 \begin{tikzpicture}[baseline={(0,-0.4)},circuit logic US]		
	\node (a) at (0,0) {$1$};
	\node (b) at (1,0.5) {$2$};
	\node (c) at (1,-0.5) {$3$};
	\node (d) at (0.5,-1) {$4$};
	\draw[-stealth] (a) to (b);
	\draw[-stealth] (a) to (c);		
\end{tikzpicture},\quad
\begin{tikzpicture}[baseline={(0,-0.4)},circuit logic US]		
	\node (a) at (0,0) {$1$};
	\node (b) at (1,0.5) {$2$};
	\node (c) at (1,-0.5) {$4$};
	\node (d) at (0.5,-1) {$3$};
	\draw[-stealth] (a) to (b);
	\draw[-stealth] (a) to (c);		
\end{tikzpicture},\quad
\begin{tikzpicture}[baseline={(0,-0.4)},circuit logic US]		
	\node (a) at (0,0) {$1$};
	\node (b) at (1,0.5) {$3$};
	\node (c) at (1,-0.5) {$4$};
	\node (d) at (0.5,-1) {$2$};
	\draw[-stealth] (a) to (b);
	\draw[-stealth] (a) to (c);		
\end{tikzpicture},\quad
\begin{tikzpicture}[baseline={(0,-0.4)},circuit logic US]		
	\node (a) at (0,0) {$2$};
	\node (b) at (1,0.5) {$3$};
	\node (c) at (1,-0.5) {$4$};
	\node (d) at (0.5,-1) {$1$};
	\draw[-stealth] (a) to (b);
	\draw[-stealth] (a) to (c);		
\end{tikzpicture},\quad
\begin{tikzpicture}[baseline={(0,-0.4)},circuit logic US]		
	\node (a) at (0,0.5) {$1$};
	\node (b) at (0,-0.5) {$2$};
	\node (c) at (1,0) {$3$};
	\node (d) at (0.5,-1) {$4$};
	\draw[-stealth] (a) to (c);
	\draw[-stealth] (b) to (c);		
\end{tikzpicture},\quad
\begin{tikzpicture}[baseline={(0,-0.4)},circuit logic US]		
	\node (a) at (0,0.5) {$1$};
	\node (b) at (0,-0.5) {$2$};
	\node (c) at (1,0) {$4$};
	\node (d) at (0.5,-1) {$3$};
	\draw[-stealth] (a) to (c);
	\draw[-stealth] (b) to (c);		
\end{tikzpicture},\quad
\begin{tikzpicture}[baseline={(0,-0.4)},circuit logic US]		
	\node (a) at (0,0.5) {$1$};
	\node (b) at (0,-0.5) {$3$};
	\node (c) at (1,0) {$4$};
	\node (d) at (0.5,-1) {$2$};
	\draw[-stealth] (a) to (c);
	\draw[-stealth] (b) to (c);		
\end{tikzpicture},\quad
\begin{tikzpicture}[baseline={(0,-0.4)},circuit logic US]		
	\node (a) at (0,0.5) {$2$};
	\node (b) at (0,-0.5) {$3$};
	\node (c) at (1,0) {$4$};
	\node (d) at (0.5,-1) {$1$};
	\draw[-stealth] (a) to (c);
	\draw[-stealth] (b) to (c);		
\end{tikzpicture}, \\
\begin{tikzpicture}[baseline={(0,-0.4)},circuit logic US]		
	\node (a) at (0,0) {$1$};
	\node (b) at (1,0.5) {$4$};
	\node (c) at (1,-0.5) {$3$};
	\node (d) at (0,-1) {$2$};
	\draw[-stealth] (a) to (b);
	\draw[-stealth] (a) to (c);
	\draw[-stealth] (d) to (c);		
\end{tikzpicture},\quad
\begin{tikzpicture}[baseline={(0,-0.4)},circuit logic US]		
	\node (a) at (0,0) {$1$};
	\node (b) at (1,0.5) {$3$};
	\node (c) at (1,-0.5) {$4$};
	\node (d) at (0,-1) {$2$};
	\draw[-stealth] (a) to (b);
	\draw[-stealth] (a) to (c);
	\draw[-stealth] (d) to (c);		
\end{tikzpicture},\quad
\begin{tikzpicture}[baseline={(0,-0.4)},circuit logic US]		
	\node (a) at (0,0) {$1$};
	\node (b) at (1,0.5) {$2$};
	\node (c) at (1,-0.5) {$4$};
	\node (d) at (0,-1) {$3$};
	\draw[-stealth] (a) to (b);
	\draw[-stealth] (a) to (c);
	\draw[-stealth] (d) to (c);		
\end{tikzpicture},\quad
\begin{tikzpicture}[baseline={(0,-0.4)},circuit logic US]		
	\node (a) at (0,0) {$2$};
	\node (b) at (1,0.5) {$3$};
	\node (c) at (1,-0.5) {$4$};
	\node (d) at (0,-1) {$1$};
	\draw[-stealth] (a) to (b);
	\draw[-stealth] (a) to (c);
	\draw[-stealth] (d) to (c);		
\end{tikzpicture},\quad
\begin{tikzpicture}[baseline={(0,-0.4)},circuit logic US]		
	\node (a) at (0,0) {$2$};
	\node (b) at (1,0.5) {$4$};
	\node (c) at (1,-0.5) {$3$};
	\node (d) at (0,-1) {$1$};
	\draw[-stealth] (a) to (b);
	\draw[-stealth] (a) to (c);
	\draw[-stealth] (d) to (c);		
\end{tikzpicture},\quad
\begin{tikzpicture}[baseline={(c.base)},circuit logic US]		
	\node (a) at (0,0) {$1$};
	\node (b) at (1,0) {$2$};
	\node (c) at (0.5,-0.5) {$3$};
	\node (d) at (0.5,-1) {$4$};
	\draw[-stealth] (a) to (b);		
\end{tikzpicture} ,\quad
\begin{tikzpicture}[baseline={(c.base)},circuit logic US]		
	\node (a) at (0,0) {$1$};
	\node (b) at (1,0) {$3$};
	\node (c) at (0.5,-0.5) {$2$};
	\node (d) at (0.5,-1) {$4$};
	\draw[-stealth] (a) to (b);		
\end{tikzpicture} ,\quad
\begin{tikzpicture}[baseline={(c.base)},circuit logic US]		
	\node (a) at (0,0) {$1$};
	\node (b) at (1,0) {$4$};
	\node (c) at (0.5,-0.5) {$2$};
	\node (d) at (0.5,-1) {$3$};
	\draw[-stealth] (a) to (b);		
\end{tikzpicture} , \\
\begin{tikzpicture}[baseline={(c.base)},circuit logic US]		
	\node (a) at (0,0) {$2$};
	\node (b) at (1,0) {$3$};
	\node (c) at (0.5,-0.5) {$1$};
	\node (d) at (0.5,-1) {$4$};
	\draw[-stealth] (a) to (b);		
\end{tikzpicture} ,\quad
\begin{tikzpicture}[baseline={(c.base)},circuit logic US]		
	\node (a) at (0,0) {$2$};
	\node (b) at (1,0) {$4$};
	\node (c) at (0.5,-0.5) {$1$};
	\node (d) at (0.5,-1) {$3$};
	\draw[-stealth] (a) to (b);		
\end{tikzpicture} ,\quad
\begin{tikzpicture}[baseline={(c.base)},circuit logic US]		
	\node (a) at (0,0) {$3$};
	\node (b) at (1,0) {$4$};
	\node (c) at (0.5,-0.5) {$1$};
	\node (d) at (0.5,-1) {$2$};
	\draw[-stealth] (a) to (b);		
\end{tikzpicture} ,\quad
\begin{tikzpicture}[baseline={(c.base)},circuit logic US]		
	\node (a) at (0,0) {$1$};
	\node (b) at (0,-1) {$2$};
	\node (c) at (1,-0.5) {$3$};
	\node (d) at (2,-0.5) {$4$};
	\draw[-stealth] (a) to (c);
	\draw[-stealth] (b) to (c);
	\draw[-stealth] (a) to (d);
	\draw[-stealth] (b) to (d);		
\end{tikzpicture},\quad
\begin{tikzpicture}[baseline={(0,-0.4)},circuit logic US]		
	\node (a) at (0,0) {$1$};
	\node (b) at (1,0) {$2$};
	\node (c) at (0,-0.5) {$3$};
	\node (d) at (1,-0.5) {$4$};
	\draw[-stealth] (a) to (b);
	\draw[-stealth] (c) to (d);		
\end{tikzpicture} ,\quad
\begin{tikzpicture}[baseline={(0,-0.4)},circuit logic US]		
	\node (a) at (0,0) {$1$};
	\node (b) at (1,0) {$3$};
	\node (c) at (0,-0.5) {$2$};
	\node (d) at (1,-0.5) {$4$};
	\draw[-stealth] (a) to (b);
	\draw[-stealth] (c) to (d);		
\end{tikzpicture} ,\quad
\begin{tikzpicture}[baseline={(0,-0.4)},circuit logic US]		
	\node (a) at (0,0) {$1$};
	\node (b) at (1,0) {$4$};
	\node (c) at (0,-0.5) {$2$};
	\node (d) at (1,-0.5) {$3$};
	\draw[-stealth] (a) to (b);
	\draw[-stealth] (c) to (d);		
\end{tikzpicture} ,\quad
 \begin{tikzpicture}[baseline={(0,-1)},circuit logic US]		
	\node (a) at (0,0) {$1$};
	\node (b) at (0,-0.5) {$2$};
	\node (c) at (0,-1) {$3$};
	\node (d) at (0,-1.5) {$4$};	
\end{tikzpicture} .
	\end{split}
\end{equation}
Reader can look up the OPE convergence properties of these causal orderings in the tables in appendix \ref{appendix:tableopeconvergence}. The way how to look up these tables is shown at the beginning of appendix \ref{appendix:tableopeconvergence}.  

Here we pick two of the above causal orderings.

\textbf{Example (a).}
The simplest example of the time-ordered configuration is the first case in (\ref{causal:time-ordered}):
\begin{equation}
	\begin{split}
		\begin{tikzpicture}
			\node (a) at (0,0) {$1$};
			\node (b) at (1,0) {$2$};
			\node (c) at (2,0) {$3$};
			\node (d) at (3,0) {$4$};
			\draw[-stealth] (a) to (b);
			\draw[-stealth] (b) to (c);
			\draw[-stealth] (c) to (d);	
		\end{tikzpicture}.
	\end{split}
\end{equation}
This causal ordering belongs to causal type 1 in table \ref{table:causalclassification}. By comparing it with the template causal ordering (\ref{template:type1}), we see that it corresponds to the sequence ``(1234)". Then we look up the convergence properties of ``(1234)" in table \ref{table:type1convergence}. We see that for this causal ordering, the s-channel and t-channel OPEs are convergent, while the u-channel OPE is not convergent. 

\textbf{Example (b).}
The second example is the following causal ordering  (the second one in the fourth row of (\ref{causal:time-ordered})):
\begin{equation}\label{causal:ex1}
	\begin{split}
		\begin{tikzpicture}[baseline={(0,-0.4)},circuit logic US]		
			\node (a) at (0,0) {$1$};
			\node (b) at (1,0.5) {$3$};
			\node (c) at (1,-0.5) {$4$};
			\node (d) at (0,-1) {$2$};
			\draw[-stealth] (a) to (b);
			\draw[-stealth] (a) to (c);
			\draw[-stealth] (d) to (c);		
		\end{tikzpicture}.
	\end{split}
\end{equation}
This causal ordering belongs to causal type 8 in table \ref{table:causalclassification}. By comparing it with the template causal ordering (\ref{template:type8}), we see that it corresponds to the sequence ``(1324)" in table \ref{table:type8convergence}. Then by looking up table \ref{table:type8convergence}, we conclude that there is no convergent OPE channel for this causal ordering. As a consistency check, one can also pick a representative configuration of this causal ordering:
\begin{equation}\label{config:ex1}
	\begin{split}
		x_1=(0,0),\quad x_2=(-0.1,1),\quad x_3=(-2,-1.5),\quad x_4=(-2.1,1.5).
	\end{split}
\end{equation}
We choose a start point in $\mathcal{D}_E\backslash\Gamma$ and a path to compute the $z,\bar{z}$-curves. Figure \ref{fig:timeorderedcurve} shows the $z,\bar{z}$-curves along the path.\footnote{We choose the start point $x_1^E=(0,-0.8)$, $x_2^E=(-1,-0.2)$, $x_3^E=(-2,-0.6)$ and $x_4^E=(-3,-0.3)$. The path is given by the straight line.}
\begin{figure}[H]
	\centering
	\includegraphics[scale=0.5]{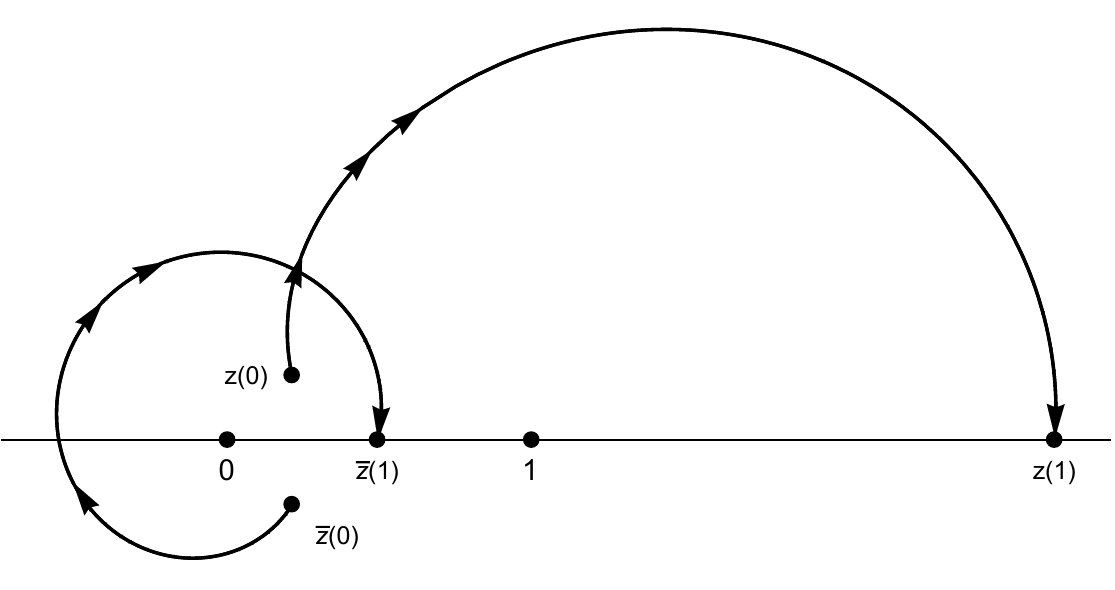}
	\caption{\label{fig:timeorderedcurve}$z,\bar{z}$-curves of the configuration (\ref{config:ex1}).}
\end{figure} 
We see from figure \ref{fig:timeorderedcurve} that $z>1$, $0<\bar{z}<1$ at the final point, which implies that the configuration (\ref{config:ex1}) is in class T (i.e.\,only the t-channel OPE has a chance to converge). The curve of $z$ variable crosses $(-\infty,0)$ from below, which gives $N_t=-1$. So the t-channel OPE (the only undetermined case by table \ref{table:class}) is not convergent. Thus, as already mentioned, there is no convergent OPE channel for the four-point function at this causal ordering. This example shows that not all time-ordered correlation functions have a convergent OPE channel.

From the above two examples, we see that the OPE convergence properties depend not only on time ordering, but also on the causal ordering of the configuration. Actually, the time ordering is not crucial here.

\begin{remark}
	Readers may find that some of the causal orderings above have the same OPE convergence properties in s-channel, t-channel and u-channel. This is just a coincidence because we only have $2^3=8$ possibilities for OPE convergence (3 channels, 2 possibilities for each channel) but 40 causal orderings!
\end{remark}
Before finishing this subsection, we would like to comment on another extremal example, which is about the four-point correlators with the maximal out-of-time ordering. 
\begin{equation}
	\begin{split}
		G^L_4&(x_1,x_2,x_3,x_4)=\bra{0}\phi(x_1)\phi(x_2)\phi(x_3)\phi(x_4)\ket{0}, \\
		x_k&=(t_k,\mathbf{x}_k), \quad
		t_1<t_2<t_3<t_4. \\
	\end{split}
\end{equation}
As discussed in section \ref{section:timereversal}, if two causal orderings are related by some time-reversal operation, their OPE convergence properties are the same. We know that the time-reversal operation $\theta_L$ (defined by $(t,\mathbf{x})\mapsto(-t,\mathbf{x})$) gives a one-to-one correspondence between time-ordered configurations and maximally-out-of-time-order configurations. At the level of causal orderings, this operation reverses all the arrows in eq.\,(\ref{causal:time-ordered}). Once we know the OPE convergence properties of all the time-ordered correlators, we immediately make the same conclusion for all the maximally-out-of-time-order correlators.

\subsubsection{Conformally equivalent causal orderings}\label{section:confequivcausalordering}
As mentioned in section \ref{section:causal}, causal orderings may be violated by special conformal transformations. As a consequence, there are different causal orderings which have the same range of cross-ratios $z$ and $\bar{z}$. However, being in the same conformal equivalence class does not guarantee that they have the same OPE convergence properties.

To see a counter example, we consider the following two causal orderings:
\begin{equation}
	\begin{split}
		1\rightarrow2\rightarrow3\rightarrow4\quad\&\quad2\rightarrow1\rightarrow4\rightarrow3.
	\end{split}
\end{equation}
One can check that any configuration with the second causal ordering can be mapped to a configuration with the first causal ordering via some Lorentzian conformal transformation, and vice versa. A simple consistency check is that $0<z,\bar{z}<1$ in both cases.

By looking up table \ref{table:type1convergence}, we see that the four-point function has convergent s-channel and t-channel OPEs at configurations with the first causal ordering, while it has only convergent s-channel OPE at configurations with the second causal ordering.

This example shows that two causal orderings may have different OPE convergence properties even if they are conformally equivalent to each other.

\subsubsection{Regge kinematics} 
The second example is the Lorentzian four-point function in the Regge regime \cite{cornalba2007eikonal,cornalba2008eikonal,costa2012conformal}. Let $x_1,x_4$ and $x_2,x_3$ pairs be time-like separated, while other pairs be space-like separated (see figure \ref{fig:regge}).
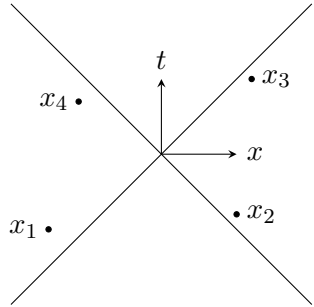
\begin{figure}[H]
	\centering
	\begin{tikzpicture}
	\draw (-2,-2) -- (2,2);
	\draw (-2,2) -- (2,-2);
	\filldraw[black] (-1.5,-1) circle (1pt) node[anchor=east] {$x_1$};
	\filldraw[black] (1,-0.8) circle (1pt) node[anchor=west] {$x_2$};
	\filldraw[black] (1.2,1) circle (1pt) node[anchor=west] {$x_3$};
	\filldraw[black] (-1.1,0.7) circle (1pt) node[anchor=east] {$x_4$};
	\draw[-stealth] (0,0) -- (0,1) node[anchor=south]{$t$};
	\draw[-stealth] (0,0) -- (1,0) node[anchor=west]{$x$};
	\end{tikzpicture}
	\caption{\label{fig:regge}Regge kinematics.}
\end{figure}
It is well known that the four-point function at Regge kinematics only has convergent t-channel expansion \cite{cornalba2007eikonal}. Here we just review this result. The causal ordering of the Regge kinematics is given by
\begin{equation}\label{causal:regge}
\begin{split}
    \begin{tikzpicture}[baseline={(0,-0.4)},circuit logic US]		
    \node (a) at (0,0) {$1$};
    \node (b) at (1,0) {$4$};
    \node (c) at (0,-0.5) {$2$};
    \node (d) at (1,-0.5) {$3$};
    \draw[-stealth] (a) to (b);
    \draw[-stealth] (c) to (d);		
    \end{tikzpicture}
\end{split}
\end{equation}
The Regge kinematics belongs to causal type 11 in table \ref{table:causalclassification}. Let us look up this causal ordering in appendix \ref{appendix:type11}. The causal ordering (\ref{causal:regge}) corresponds to the label ``(1423)" in table \ref{table:type11convergence}. We see that only t-channel OPE is convergent. 

We would like to also choose a representative configuration and a path to compute the curves of $z,\bar{z}$. The plot is given by figure \ref{fig:reggecurve}. \footnote{We choose the Euclidean configuration $x_1=0$, $x_2=(-1,0,0,0)$, $x_3=(-2,0.9,0,0)$, $x_4=(-4,0,0,0)$ and the representative Lorentzian configuration $y_1=0$, $y_2=(0,0,0.6,0)$, $y_3=(2i,0,0,0.7)$, $y_4=(2i,-0.05,0,-3)$. We choose the path to be the straight line between them.}
\begin{figure}[H]
	\centering
	\includegraphics[scale=0.7]{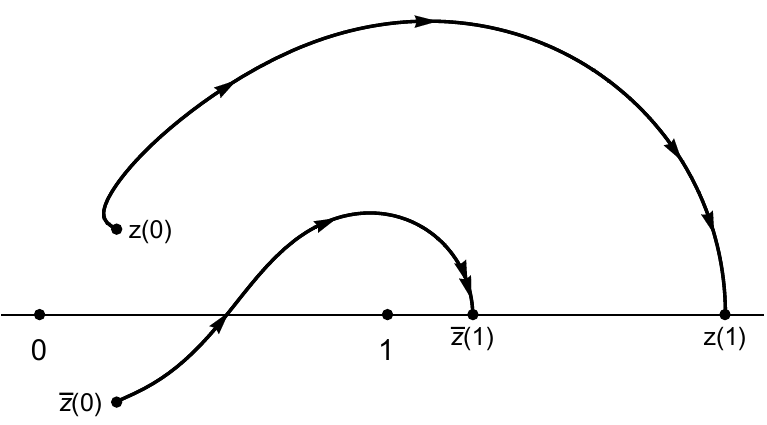}
	\caption{\label{fig:reggecurve}The plot of $z,\bar{z}$-curves of the Regge kinematics.}
\end{figure} 
We see from figure \ref{fig:reggecurve} that $z,\bar{z}>1$ at the final point,\footnote{The definition of $z,\bar{z}$ in \cite{cornalba2007eikonal} is different from this paper. In their work, $0<z,\bar{z}<1$ at Regge kinematics, while in this paper, $z,\bar{z}>1$. One can compare the definitions and get the relation of $z,\bar{z}$ between \cite{cornalba2007eikonal} and our work: $z\rightarrow1/z$, $\bar{z}\rightarrow1/\bar{z}$.} which implies that the Regge kinematics is in class E. In fact the Regge kinematics can only be in the subclass $\mathrm{E_{tu}}$, where $z,\bar{z}>1$ \cite{cornalba2007eikonal}, so only t- and u-channel expansions have a chance to converge. We see from figure \ref{fig:reggecurve} that the $\bar{z}$-curve crosses $(0,1)$ from below, and $z,\bar{z}$-curves do not cross $(-\infty,0)$. So we get
\begin{equation}
\begin{split}
N_t=0,\quad N_u=-1,
\end{split}
\end{equation}
which implies that only the t-channel expansion is convergent.

\subsubsection{Causal ordering of bulk-point singularities}
The third example is as follows. Let $x_1,x_2$ and $x_3,x_4$ pairs be space-like separated. We put the $x_1,x_2$ pair in the open backward light-cone of some base point and $x_3,x_4$ pair in the open forward light-cone of the base point (see figure \ref{fig:lightconesing}).
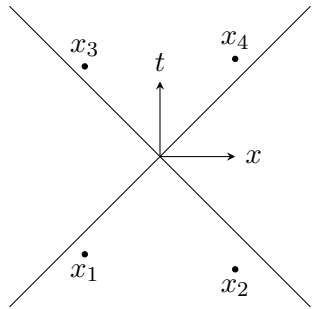
\begin{figure}[H]
	\centering
	\begin{tikzpicture}
	\draw (-2,-2) -- (2,2);
	\draw (-2,2) -- (2,-2);
	\filldraw[black] (-1,-1.3) circle (1pt) node[anchor=north] {$x_1$};
	\filldraw[black] (1,-1.5) circle (1pt) node[anchor=north] {$x_2$};
	\filldraw[black] (-1,1.2) circle (1pt) node[anchor=south] {$x_3$};
	\filldraw[black] (1,1.3) circle (1pt) node[anchor=south] {$x_4$};
	\draw[-stealth] (0,0) -- (0,1) node[anchor=south]{$t$};
	\draw[-stealth] (0,0) -- (1,0) node[anchor=west]{$x$};
	\end{tikzpicture}
	\caption{\label{fig:lightconesing}Configuration of example 3.}
\end{figure}
Such configurations have the causal ordering
\begin{equation}\label{causal:bulkpointsing}
\begin{split}
\begin{tikzpicture}[baseline={(c.base)},circuit logic US]		
\node (a) at (0,0) {$1$};
\node (b) at (0,-1) {$2$};
\node (c) at (1,-0.5) {$3$};
\node (d) at (2,-0.5) {$4$};
\draw[-stealth] (a) to (c);
\draw[-stealth] (b) to (c);
\draw[-stealth] (a) to (d);
\draw[-stealth] (b) to (d);		
\end{tikzpicture}.
\end{split}
\end{equation}
The causal ordering (\ref{causal:bulkpointsing}) is of causal type 10 in table \ref{table:causalclassification}. We look up the OPE convergence properties in appendix \ref{appendix:type10}. The causal ordering (\ref{causal:bulkpointsing}) corresponds to the label ``(1234)" in table \ref{table:type10convergence}. We see that this causal ordering is in class E, which has four subclasses. From table \ref{table:type10convergence} we also see that the configurations with the causal ordering (\ref{causal:bulkpointsing}) exist in each subclass. We wish to consider the subclass $\mathrm{E_{ut}}$, where $z,\bar{z}>1$. In table \ref{table:type10convergence}, we see that the configurations with causal ordering (\ref{causal:bulkpointsing}) and in subclass $\mathrm{E_{ut}}$ have no convergent OPE channels. 

Let us also choose a representative configuration to check this result. We want to remark that such case does not exist in 2d (see appendix \ref{appendix:type10} for the proof). We choose the following three-dimensional configuration
\begin{equation}\label{finalpoint:bulkptsing}
\begin{split}
x_1=(0,0,0),\quad x_2=(0,1,0),\quad x_3=(i,0.2,0.5),\quad x_4=(i,0.5,0.8).
\end{split}
\end{equation}
Figure \ref{fig:bulkpointcurve} shows the plot of $z,\bar{z}$-curves.
\begin{figure}[H]
	\centering
	\includegraphics[scale=0.6]{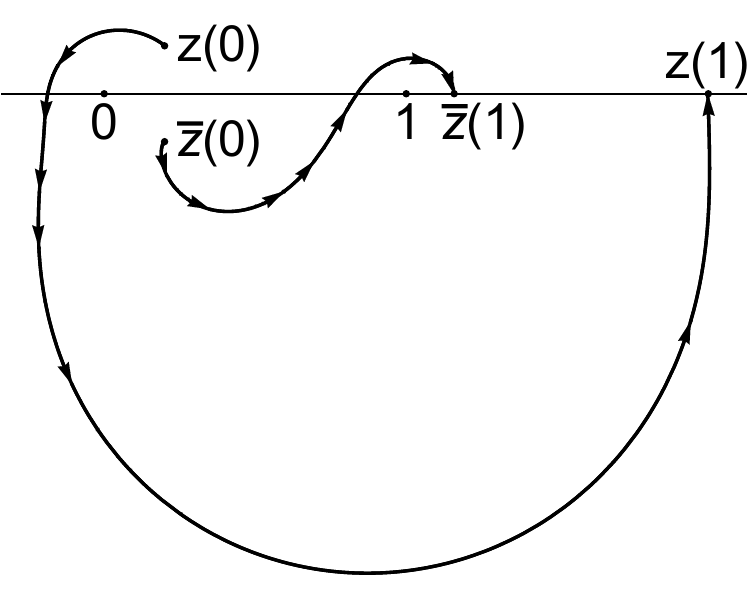}
	\caption{\label{fig:bulkpointcurve}The plot of $z,\bar{z}$-curves of the configuration (\ref{finalpoint:bulkptsing}).}
\end{figure} 
We see that along the path, $z$ crosses the interval $(-\infty,0)$ and $\bar{z}$ crosses the interval (0,1). We get
\begin{equation}\label{cylinder:NtNu}
\begin{split}
N_t=1,\quad N_u=-1.
\end{split}
\end{equation}
which implies that the t- and u-channel expansions do not converge. 

We conclude that there is no convergent OPE channel for the causal ordering (\ref{causal:bulkpointsing}) with $z,\bar{z}>1$.

Here we give a hint why this example is related to the bulk-point singularities in AdS/CFT \cite{maldacena2017looking}. The bulk-point singularities are not exactly the configurations in Minkowski space $\bbR{d-1,1}$, instead they are configurations on the Minkowski cylinder $\bbR{}\times S^{d-1}$ \cite{luscher1975global}. The Minkowski space can be embedded into a patch of the Minkowski cylinder in a Weyl equivalent way, this patch is called the Poincar\'e patch\cite{aharony2000large}. The Minkowski cylinder also admits a causal ordering which is equivalent to the causal ordering of the Minkowski space in the Poincar\'e patch \cite{segal1971causally,todorov1973conformal}. One can show the following facts:
\begin{itemize}
	\item The bulk-point singularities have the causal ordering (\ref{causal:bulkpointsing}) and $z,\bar{z}>1$.
	\item One can find a path from an arbitrary bulk-point singularity to a configuration in the Poincar\'e patch, such that the causal ordering (\ref{causal:bulkpointsing}) is preserved along the path.
	\item The CFT four-point function in the Poincar\'e patch is the same as the CFT four-point function in the Minkowski space up to a scaling factor.\footnote{The definition of the CFT four-point function on the Minkowski cylinder is similar to Minkowski space. We replace the planar time variables $\tau_k$ by the cylindrical time variables. Then do Wick rotations.} 
\end{itemize}
Based on the above facts, the OPE convergence properties of the bulk-point singularities are exactly the same as this example: there is no convergent OPE channel. More details will be given in \cite{paper3}. Our result does not contradict the two-dimensional result in \cite{maldacena2017looking} (see the beginning of section \ref{section:comments2d}) because here we only use the global conformal symmetry instead of the Virasoro symmetry.

\subsubsection{Digression}
We can see from figure \ref{fig:timeorderedcurve}, \ref{fig:reggecurve} and \ref{fig:bulkpointcurve} that the $z,\bar{z}$-curves do not touch the interval $(1,+\infty)$ until the end. One can also see this phenomenon by picking representative configurations of other causal orderings and compute the $z,\bar{z}$-curves. This numerical observation is consistent with theorem \ref{theorem:rho<1}.

\section{Wightman functions: a brief review}\label{section:Wightman}
In this section we will review some classical results about regular points (points where Wightman distributions are genuine functions) in a general QFT \cite{jost1957bemerkung,streater1964pct,bogolubov1989general}. For simplicity let us still consider a scalar theory in the Minkowski space, which is characterized by a collection of Lorentzian correlators:
\begin{equation}
\begin{split}
\mathcal{W}_{n}(x_1,x_2,\ldots,x_n)\coloneqq\bra{0}\phi(x_1)\phi(x_2)...\phi(x_n)\ket{0}
\end{split}
\end{equation}
where $x_i$ are Lorentzian coordinates.\footnote{\label{footnote:coord}In the rest of the paper we the Lorentzian points were denoted by $(it_k,\mathbf{x}_k)$. Only in this section we use the notation $x_k=(t_k,\mathbf{x}_k)$.} We will introduce the Wightman axioms for QFTs, and then review the domain of correlation functions which can be derived from Wightman axioms. In the end, we will compare these classical results with our results for CFT four-point functions.

This section is logically independent from the rest of the paper. Here we assume Wightman axioms while in the rest we did not. The only connection is to justify the definition of Wick rotation (steps 1 and 2 in section \ref{section:problem}).

\subsection{Wightman axioms for Lorentzian correlators}
We assume the Wightman axioms for correlators $\left\{\mathcal{W}_n\right\}$:

$(W1)$ Temperedness.

$\mathcal{W}_n$ is a tempered distribution (called Wightman distribution). It becomes a complex number after being smeared with rapidly decreasing test functions $f_n$:
\begin{equation}
\begin{split}
\mathcal{W}_n(f_n)=\int f(x_1,\ldots,x_n)\mathcal{W}_n(x_1,\ldots,x_n)dx_1\ldots dx_n
\end{split}
\end{equation}
The Fourier transform $\hat{\mathcal{W}}_n$ of $\mathcal{W}_n$ is well defined since the space of rapidly decreasing test functions (Schwartz space) is closed under Fourier transform \cite{vsilov1969generalized}. One has the definition $\hat{\mathcal{W}}_n\fr{f_n}\coloneqq\mathcal{W}_n\fr{\hat{f}_n}$.

$(W2)$ Poincar\'e invariance.

The correlators transform invariantly under action of the Poincar\'e group:
\begin{equation}
\begin{split}
\mathcal{W}_n(g\cdot x_1,\ldots,g\cdot x_n)=\mathcal{W}_n(x_1,\ldots,x_n)
\end{split}
\end{equation}
for all $n\geq0$ and $g$ in the Poincar\'e group.  

$(W3)$ Unitarity.

The vector space generated by the states of the form
\begin{equation}
\begin{split}
\Psi\left(\underline{f}\right)=\sumlim{n\geq0}\int f_n(x_1,\ldots,x_n)\phi(x_1)\ldots\phi(x_n)\ket{0}dx_1\ldots dx_n
\end{split}
\end{equation}
has a non-negative inner product. Here $\underline{f}$ is an arbitrary finite sequence of complex valued Schwartz functions: $\underline{f}=(f_0,f_1,f_2,\ldots)$ and $f_n$ denotes the Schwartz function with $n$ Lorentzian points as variables. If we assume that $\phi(x)$ are Hermitian operators, i.e. $\left[\phi(x)\right]^\dagger=\phi(x)$, then the unitarity condition is written as
\begin{equation}\label{wightman:unitarity}
\begin{split}
\sumlim{n,m}\int \overline{f_n(x_1\ldots,x_n)}f_m(y_1,\ldots,y_m)\mathcal{W}_{n+m}(x_n,\ldots,x_1,y_1,\ldots,y_m)dxdy\geq0
\end{split}
\end{equation}

$(W4)$ Spectral condition.

The open forward light-cone $V_+$ is defined by the collection of vectors $x\in\bbR{d}$ such that
\begin{equation}
\begin{split}
x^0&>\sqrt{\sumlim{\mu\geq1}(x^\mu)^2}.
\end{split}
\end{equation}
In a general QFT we have self-adjoint momentum operators $P^\mu$. The spectral condition says that the spectrum of $P=(P^0,P^1,\ldots,P^{d-1})$ is inside the closed forward light-cone $\overline{V_+}$, and the normalized eigenvector of $P=0$ is unique (up to a phase factor), denoted by $\ket{0}$.

We define the reduced correlators $W_{n-1}$ by
\begin{equation}\label{def:reducedW}
\begin{split}
W_{n}(x_2-x_1,\ldots,x_{n+1}-x_{n})=\mathcal{W}_{n+1}(x_1,\ldots,x_{n+1})
\end{split}
\end{equation}
Since $\mathcal{W}_{n+1}$ is a translational invariant tempered distribution, $W_n$ is well defined and is also a tempered distribution. The spectral condition implies that the Fourier transforms $\hat{W}_{n}$ of $W_n$ is supported in the forward light-cone. That is to say, $\hat{W}_n(p_1,\ldots,p_n)\neq0$ only if all the momentum variables $p_k$ are inside $\overline{V}_+$.

$(W5)$ Microscopic causality.

$\mathcal{W}_n(x_1,\ldots,x_k,x_{k+1},\ldots,x_n)=\mathcal{W}_n(x_1,\ldots,x_{k+1},x_k,\ldots,x_n)$ if $x_k$ and $x_{k+1}$ are space-like separated.

\subsection{Wightman functions and their domains}\label{section:wightmandomain}
\subsubsection{Forward tube}
Let us consider the ``reduced correlator" $W_{n}$ defined in eq. (\ref{def:reducedW}). $W_n$ has Fourier transform
\begin{equation}\label{Wn:fourier}
\begin{split}
W_n(x_1,...,x_n)=\int\dfrac{dp_1}{(2\pi)^d}\ldots\dfrac{dp_n}{(2\pi)^d}\hat{W}_n(p_1,\ldots,p_n)e^{-i(p_1\cdot x_1+\ldots+p_n\cdot x_n)},
\end{split}
\end{equation}
where $\hat{W}_n$ is also a tempered distribution, and the Lorentzian inner product is defined by $p\cdot x=-p^0x^0+p^1x^1+\ldots+p^{d-1}x^{d-1}$. In general, $W_n$ is not a function if $x_k$ are real. However, if we replace $x_k$ with complex coordinates $x_k\rightarrow z_k=x_k+iy_k$, because of the spectral condition $(W4)$, $W_n(z_1,\ldots,z_n)$ is indeed a function if the imaginary parts of $z_k$ belong to $V_+$. The argument is as follows. Suppose $y_k\in V_+$ for all $k=1,2,\ldots,n$, then there exists a Schwartz function $f(p_1,\ldots,p_n)$ in the momentum space such that $f(p_1,\ldots,p_n)=e^{-i(p_1\cdot z_1+\ldots+p_n\cdot z_n)}$ when all the momentum variables $p_k$ are inside the closure of forward light-cone.\footnote{The crucial point is that if all $y_k$ are inside the forward light-cone, then $f(p_1,...,p_n)$ decays exponentially fast when some $p_k$ goes to infinity inside the forward light-cone.} Since $\hat{W}_n$ is supported in $\overline{V}_+^n$, $\hat{W}_n(f)$ is exactly in the form of (\ref{Wn:fourier}) with $x_k$ replaced by $z_k$. So $W_n(z_1,\ldots,z_n)$ is a well-defined complex number when $Im(z_k)\in V_+$ for all $k$.

Furthermore, since $-i\fr{p_k}_\mu f(p_1,\ldots,p_n)$ is also a Schwartz test function, we have
\begin{equation}
\begin{split}
\dfrac{\partial}{\partial z_k^\mu}W_n(z_1,\ldots,z_n)=\hat{W}_n[-i\fr{p_k}_\mu f], \\
k=1,\ldots,n\ \mathrm{and}\ \mu=0,\ldots,d-1.\\
\end{split}
\end{equation}
As a result $W_n(z_1,\ldots,z_n)$ is an analytic function inside the ``forward tube", denoted as $I_n$
\begin{equation}\label{def:forwardtube}
\begin{split}
I_n=\left\{(z_1,\ldots,z_n)\in\mathbb{C}^{nd}\Big{|}Im(z_k)\in V_+,\ k=1,2,\ldots,n\right\},
\end{split}
\end{equation}
The distribution $W_n(x_1,\ldots,x_n)$ is the boundary value of the analytic function $W_n(z_1,\ldots,z_n)$ on the forward tube $I_n$:
\begin{equation}
\begin{split}
    W_n(x_1,\ldots,x_n)=\limlim{\substack{y\rightarrow0 \\ y\in V_+^n}}W_n(x_1+iy_1,\ldots,x_n+iy_n).
\end{split}
\end{equation}

\subsubsection{Bargmann-Hall-Wightman theorem, extended tube}
Now let us use the Lorentz invariance $(W2)$ to analytically continue $W_n$ to a larger domain. By $(W2)$, $W_n$ is invariant under the action of real Lorentz group $SO^+(1,d-1)$:\footnote{By $SO^+(1,d-1)$ we mean the connected component of the identity element in $O(1,d-1)$.}
\begin{equation}\label{Wn:lorentz}
\begin{split}
W_n(g\cdot x_1,\ldots,g\cdot x_n)=W_n(x_1,\ldots,x_n),\quad\forall g\in SO^+(1,d-1).
\end{split}
\end{equation}
The Lorentz transformations preserve the inner product $p_k\cdot x_k$ and the measure $dx_k$, so the Fourier transform $\hat{W}_n$ is also Lorentz invariant
\begin{equation}
\begin{split}
W_n(g\cdot p_1,\ldots,g\cdot p_n)=W_n(p_1,\ldots,p_n),\quad\forall g\in SO^+(1,d-1).
\end{split}
\end{equation}
Since $W_n(z_1,\ldots,z_n)$ is defined by the Fourier transform (\ref{Wn:fourier}) (replace $x_k$ with $z_k$), we have
\begin{equation}\label{Wn:lorentz2}
\begin{split}
W_n(g\cdot z_1,\ldots,g\cdot z_n)=W_n(z_1,\ldots,z_n),\quad\forall g\in SO^+(1,d-1).
\end{split}
\end{equation}
Here we remark that the real Lorentz group actions preserve the forward tube $I_n$. 

An important observation is that (\ref{Wn:lorentz2}) remains true if we replace the real Lorentz group $SO^+(1,d-1)$ by the proper complex Lorentz group $L_+\fr{\mathbb{C}}$.\footnote{Let $d$ be the spacetime dimension. The complex Lorentz group $L\fr{\mathbb{C}}$ is defined by the set of all $d\times d$ complex matrices $M$ such that $M^t\eta M=\eta$. Here $\eta=\mathrm{diag}(-1,1,\ldots,1)$ is the matrix of Lorentzian inner product, and $M^t$ is the transpose of $M$. $L_+\fr{\mathbb{C}}$ is the subgroup of $L\fr{\mathbb{C}}$ with constraint det$M$=1. $L_+\fr{\mathbb{C}}$ is connected, unlike the real case where we need to introduce the constraints ``proper", ``orthochronous" for connectedness.} Given an arbitrary $g\in L_+\fr{\mathbb{C}}$, we define
\begin{equation}\label{def:complexextension}
\begin{split}
W_n^g(z_1,\ldots,z_n)\coloneqq W_n\fr{g^{-1}\cdot z_n,\ldots,g^{-1}\cdot z_n}
\end{split}
\end{equation}
for $(z_1,\ldots,z_n)\in gI_n$. The Bargmann-Hall-Wightman theorem \cite{hall1957theorem} tells us that if we choose different complex Lorentz group elements $g_1,g_2$, the functions $W^{g_1}_n$ and $W^{g_2}_n$ coincide in the domain $g_1I_n\cap g_2I_n$. So $W_n(z_1,\ldots,z_n)$ has analytic continuation to the ``extended forward tube", denoted by $\tilde{I}_n$:
\begin{equation}\label{def:extendedtube}
\begin{split}
\tilde{I}_n\coloneqq\left\{(z_1,\ldots,z_n)\in\mathbb{C}^{nd}\Big{|}(g\cdot z_1,\ldots,g\cdot z_n)\in I_n\ \mathrm{for}\ \mathrm{some}\ g\in L_+\fr{\mathbb{C}}\right\}.
\end{split}
\end{equation}
Here we only give the idea of the proof. It suffices to show that for any $g\in L_+\fr{\mathbb{C}}$, the function $W^g_n$ coincides with $W_n$ in the domain $gI_n\cap I_n$. Since $I_n$ and $gI_n$ are two convex sets, their intersection $gI_n\cap I_n$ is also convex, thus connected. So it suffices to show that $W^g_n$ coincides with $W_n$ in the neighbourhood of one point. This is obvious for $g$ near the identity element, but the proof for an arbitrary $g$ is based on the fact that the set $\left\{g\in L_+\fr{\mathbb{C}}\ |\ gI_n\cap I_n\neq\emptyset\right\}$ is connected, which follows from the group structure of the complex Lorentz group $L_+\fr{\mathbb{C}}$ (for more details, see \cite{jost1965general}). 

\subsubsection{Jost points}
While $I_n$ does not contain Lorentzian points (i.e. points with $\mathrm{Im}(z_k)=0$ for all $k$), $\tilde{I}_n$ contains a region of Lorentzian points. These points are called Jost points \cite{jost1957bemerkung}, and they are defined by the configurations $(x_1,\ldots,x_n)$ such that the following cone
\begin{equation}\label{jostpts}
\begin{split}
\left\{\lambda_1 x_1+\ldots+\lambda_n x_n\Big{|}\lambda_k\geq0\ \mathrm{for\ all}\ k,\quad\sumlim{k=1}^n\lambda_k>0\right\}
\end{split}
\end{equation}
contains only space-like points (see \cite{jost1965general}, the theorem on page 81 and the corollary on page 82). 

\subsubsection{Microscopic causality, envelope of holomorphy}\label{section:envelope}
Now let us go back to $\mathcal{W}_{n}(x_1,\ldots,x_{n})$ via (\ref{def:reducedW}). We define $\mathcal{J}_{n}$ as the set of $(x_1,\ldots,x_{n})$ such that $(x_2-x_1,\ldots,x_n-x_{n-1})$ are Jost points. The configurations in $\mathcal{J}_n$ have totally space-like separations. To see this we rewrite $x_i-x_j$ ($i>j$) as
\begin{equation}
\begin{split}
x_i-x_j=(x_i-x_{i-1})+(x_{i-1}-x_{i-2})+\ldots+(x_{j+1}-x_j),
\end{split}
\end{equation}
which is in the form of (\ref{jostpts}). By the definition of Jost points, we have $(x_i-x_j)^2>0$ for $i\neq j$. It is obvious that $\mathcal{J}_2$ contains all totally space-like configurations. For $n\geq3$, $\mathcal{J}_n$ does not contain all totally space-like configurations.

Since Jost points are the configurations with totally space-like separations, by the microscopic causality condition (W5), $\mathcal{W}_{n}(x_1,\ldots,x_{n})$ is also regular at $(x_1,\ldots,x_n)$ if there exists a permutation $\sigma\in S_{n}$ such that
\begin{equation}
\begin{split}
\fr{x_{\sigma\fr{1}},x_{\sigma\fr{2}},\ldots,x_{\sigma\fr{n}}}\in\mathcal{J}_n.
\end{split}
\end{equation}
Then the equation $\mathcal{W}_n\fr{x_1,\ldots,x_n}=\mathcal{W}_n\fr{x_{\sigma(1)},\ldots,x_{\sigma(n)}}$ in $\mathcal{J}_n$ can be analytically continued to
\begin{equation}
\begin{split}
    \mathcal{W}_n\fr{z_1,\ldots,z_n}=\mathcal{W}_n\fr{z_{\sigma(1)},\ldots,z_{\sigma(n)}},\quad (z_2-z_1,\ldots,z_{n}-z_{n-1})\in\tilde{I}_{n-1}.
\end{split}
\end{equation}
Therefore, $\mathcal{W}_n$ has analytic continuation to the following domain of complex coordinates $(z_1,\ldots,z_n)$:
\begin{equation}\label{def:domainpermutation}
\begin{split}
\mathcal{U}_n=\left\{(z_1,\ldots,z_n)\in\mathbb{C}^{nd}\Big{|}\exists\sigma\in S_n\ \mathrm{s.t.}\ (z_{\sigma\fr{2}}-z_{\sigma\fr{1}},\ldots,z_{\sigma\fr{n}}-z_{\sigma\fr{n-1}})\in\tilde{I}_{n-1}\right\}.
\end{split}
\end{equation}
$\mathcal{U}_n$ have the following properties:\footnote{We were unable to track properties 1,2 in prior literatures. Readers are welcome to provide us with references.}
\begin{enumerate}
	\item In 2d, $\mathcal{U}_n$ contains all totally space-like configurations.
	\item $\mathcal{U}_3$ contains all totally space-like configurations.
	\item In $d\geq3$ and $n\geq4$, $\mathcal{U}_n$ does not contain all totally space-like  configurations \cite{jost1965general}.
\end{enumerate}
To show the first property, we use an analogous version of the complex coordinates (\ref{complexcoordinates:2d}):
\begin{equation}
\begin{split}
w_k=\mathbf{x}_k+t_k,\quad\bar{w}_k=\mathbf{x}_k-t_k.
\end{split}
\end{equation}
Given an arbitrary totally space-like configuration, we have
\begin{equation}
\begin{split}
(w_j-w_k)(\bar{w}_j-\bar{w}_k)>0,\quad(j\neq k)
\end{split}
\end{equation}
which implies $w_j>w_k,\bar{w}_j>\bar{w}_k$ or $w_j<w_k,\bar{w}_j<\bar{w}_k$. So we can find a permutation $\sigma$ such that
\begin{equation}\label{jostpt:2d}
\begin{split}
w_{\sigma(k)}<w_{\sigma(k+1)},\quad\bar{w}_{\sigma(k)}<\bar{w}_{\sigma(k+1)},\quad k=1,2,\ldots,n-1.
\end{split}
\end{equation}
We see from (\ref{jostpt:2d}) that $w_{\sigma(k+1)}-w_{\sigma(k)},\bar{w}_{\sigma(k+1)}-\bar{w}_{\sigma(k)}$ are positive, so the cone (\ref{jostpts}) generated from these vectors only contain points with positive components. Thus the configuration $(x_{\sigma(1)},\ldots,x_{\sigma(n)})$ is in $\mathcal{J}_n$, or equivalently, $(x_1,\ldots,x_n)$ is in $\mathcal{U}_n$.

To show the second property, we consider the totally space-like three-point configurations in the following form:
\begin{equation}\label{jostpt:3d3n}
\begin{split}
    x_1=0,\quad x_2=(0,1,0),\quad x_3=(a,b,c).
\end{split}
\end{equation}
To check that all totally space-like configurations are in $\mathcal{U}_3$, it suffices to check the totally space-like configurations in the form of (\ref{jostpt:3d3n}) because $\mathcal{U}_n$ is Poincar\'e invariant and scale invariant, and any totally space-like configuration can be mapped to a configuration in the form of (\ref{jostpt:3d3n}) by Poincar\'e transformations and dilatation. If $(x_1,x_2,x_3)\in\mathcal{J}_3$ then we are done. Suppose $(x_1,x_2,x_3)\notin\mathcal{J}_3$, which means that there exists a positive $\lambda$ such that $\lambda(x_2-x_1)+(x_3-x_2)$ is a null vector:
\begin{equation}
\begin{split}
    a^2=(b-1+\lambda)^2+c^2.
\end{split}
\end{equation}
The above equation implies $a^2\geq c^2$, so there exists a Lorentz transformation which maps the configuration (\ref{jostpt:3d3n}) to
\begin{equation}\label{lorentz:abc}
\begin{split}
    x_1^\prime=0,\quad x_2^\prime=(0,1,0),\quad x_3^\prime=(a^\prime,b,0).
\end{split}
\end{equation}
We see that the problem is reduced to the 2d case. According second property, the configuration (\ref{lorentz:abc}) is in $\mathcal{U}_3$. So we conclude that all totally space-like configurations are in $\mathcal{U}_3$.

To show the third property, it suffices to give a counterexample for $d=3$ and $n=4$:
\begin{equation}\label{envelophol:example}
\begin{split}
x_1=&(1-\epsilon,1,1),\quad x_2=(1-\epsilon,-1,-1), \\
x_3=&(\epsilon-1,1,-1),\quad x_4=(\epsilon-1,-1,1). \\
\end{split}
\end{equation}
where $\epsilon>0$ is small. (\ref{envelophol:example}) is a totally space-like configuration but it does not belong to $\mathcal{U}_4$ (see \cite{jost1965general}, p. 89). 

$\mathcal{W}_n$ has analytic continuation from $\mathcal{U}_n$ to its envelope of holomorphy $H(\mathcal{U}_n)$ \cite{vladimirov1966methods}, which is defined by the following property:
\begin{itemize}
	\item Any holomorphic function on $\mathcal{U}_n$ has analytic continuation to $H\fr{\mathcal{U}_n}$.
\end{itemize}
A theorem proved by Ruelle \cite{ruelle1959} shows that $H\fr{\mathcal{U}_n}$ contains all totally space-like configurations.

We conclude that Wightman distributions $\mathcal{W}_n(x_1,\ldots,x_n)$ are analytic functions at all totally space-like configurations.

\subsection{Comparison with CFT}
\subsubsection{Justifying the definition of Wick rotation}
In section \ref{section:analyticcontinuation} of this paper, we used CFT arguments (not using Wightman axioms) to show that the CFT four-point function is analytic in the domain $\mathcal{D}$, where only the temporal variables are complex numbers (see section \ref{section:problem}). Consider the points $(x_2-x_1,x_3-x_2,x_4-x_3)$ where $(x_1,\ldots,x_4)\in\mathcal{D}$. The notation we use in the rest of the paper is $x_k=(\epsilon_k+it_k,\mathbf{x}_k)$, so we have
\begin{equation}\label{3pt:difference1}
\begin{split}
    x_{k+1}-x_k=(\epsilon_{k+1}-\epsilon_k+it_{k+1}-it_k,\mathbf{x}_{k+1}-\mathbf{x}_k).
\end{split}
\end{equation}
By translating (\ref{3pt:difference1}) to the notation in this section (see footnote \ref{footnote:coord}), we have
\begin{equation}\label{3pt:difference2}
\begin{split}
    x_{k+1}-x_k=(t_{k+1}-t_k,\mathbf{x}_{k+1}-\mathbf{x}_k)+i(\epsilon_k-\epsilon_{k+1},0).
\end{split}
\end{equation}
We see from (\ref{3pt:difference2}) that the points $(x_2-x_1,x_3-x_2,x_4-x_3)$ are in the forward tube $I_3$ if $(x_1,x_2,x_3,x_4)\in\mathcal{D}$ (because $\epsilon_1>\epsilon_2>\epsilon_3>\epsilon_4$). Recalling that the  Wightman distritbutions $\mathcal{W}_n(x_1,\ldots,x_n)$ can be obtained by taking the limit of the analytic functions $\mathcal{W}_n(z_1,\ldots,z_n)$ from the domain $\left\{\fr{z_2-z_1,\ldots,z_n-z_{n-1}}\in I_{n-1}\right\}$, we see that our definition of Wick rotation (\ref{def:Lorentz4ptfct}) is consistent with Wightman QFT. 

\subsubsection{Osterwalder-Schrader theorem}
In fact we use the same analytic continuation path as in the Osterwalder-Schrader (OS) theorem \cite{osterwalder1973}. The OS theorem shows that under certain conditions, a Euclidean QFT can be Wick rotated to a Wightman QFT. 

In this paper we focus on the domain of analyticity of the four-point function, and we do not explore the distributional properties of it. To show that the limit (\ref{def:Lorentz4ptfct}) of the CFT four-point functions defines a Wightman four-point distribution, one needs to deal with the four-point function not only at regular points (where $|\rho|,|\bar{\rho}|<1$ in s- or t- or u-channel), but also at all the other Lorentzian configurations where $\rho$ and/or $\bar{\rho}$ is 1 in absolute value (this needs a lot of extra work). Readers can go to \cite{Kravchuk:2021kwe}, where we show that Wick rotating a Euclidean CFT four-point function indeed results in a Wightman four-point distribution.

Let us contrast our construction and the OS construction. Our construction extends $G_4$ in a CFT to domain $\mathcal{D}$ using only information about the four-point function itself. The OS construction can extend $G_4$ (in fact any $G_n$) to domain $\mathcal{D}$ in a general QFT. But the price to pay is that analytic continuation involves infinitely many steps and needs information about higher-point functions \cite{glaser1974equivalence,osterwalder1975}.

We would also like to point out that the OS theorem requires an extra assumption, called the \emph{linear growth condition}. The linear growth condition is introduced as a technical condition for showing that the analytically continued Euclidean correlator is a tempered distribution in the Lorentzian regime. Without this condition, the Euclidean correlator still has analytic continuation to the domain $\mathcal{D}$ (i.e.\,step 1 in section \ref{section:main} can still be done). However, since $\mathcal{D}$ does not contain any Lorentzian configurations, one has to take a limit towards the Lorentzian regime from $\mathcal{D}$. To show such a limit exists in the sense of tempered distributions, one needs to derive some power-law upper bound on the correlator in $\mathcal{D}$ (or $\mathcal{T}_4$, the forward tube), and this is where the linear growth condition is used. So far, it is not understood whether the linear growth condition can be derived from basic CFT assumptions (conformal invariance, OPE convergence, unitarity) or not.

\subsubsection{Domain of analyticity: Wightman axioms + conformal invariance}\label{section:wightmanconformal}
Let us summarize how the domains of Wightman functions are derived in Wightman QFT. We first use the temperedness property (W1), translational invariance (W2) and the spectral condition (W4) to show that the reduced  Wightman distribution $W_n$ is an analytic function in the forward tube $I_n$. Then we use the Lorentz invariance (W2) to show that $W_n$ has analytic continuation to the ``extended tube" $\tilde{I}_n$, which includes the set of Jost points. Finally we use the microscopic causality condition (W5) to show that $\mathcal{W}_{n+1}$ has analytic continuation to all configurations with totally space-like separations.

The unitarity condition (W3) is not involved in the above argument, only the conditions (W1), (W2), (W4) and (W5) are used.  

In the rest of the paper we explored the domain of CFT four-point functions by assuming unitarity, Euclidean conformal invariance and OPE (not assuming Wightman axioms). Here we would like to discuss a related but different situation:
\begin{itemize}
	\item What is the domain of the four-point function if we only assume Wightman axioms and conformal invariance (not assuming OPE)?
\end{itemize}
We want to emphasize that global conformal invariance does not hold for general CFT in $\bbR{d-1,1}$ because Lorentzian special conformal transformations may violate causal orderings. The precise meaning of conformal invariance here is the Euclidean global conformal invariance: we Wick rotate Wightman functions to the Euclidean signature, then the corresponding Euclidean correlation functions are invariant under Euclidean global conformal transformations. This assumption is called \emph{weak conformal invariance}\cite{luscher1975global}.

It is obvious that the Wightman function $\mathcal{W}_4$ is analytic in $\mathcal{U}_4$ (as discussed in section \ref{section:wightmandomain}). In section \ref{section:wightmandomain}, a crucial step is to extend the real Poincar\'e invariance to complex Poincar\'e invariance. Then the reduced Wightman function $W_3$ has analytic continuation from the forward tube $I_3$ to the extended forward tube $\tilde{I}_3$. Now that we assume weak conformal invariance, given any Euclidean conformal transformation $g$ and Euclidean configuration $C=(x_1,x_2,x_3,x_4)$, we have
\begin{equation}\label{euclconformal}
\begin{split}
    \mathcal{W}_4(C)=&\Omega(x_1)^\Delta\Omega(x_2)^\Delta\Omega(x_3)^\Delta\Omega(x_4)^\Delta \mathcal{W}_4(g\cdot C), \\
    g\cdot C=&(g\cdot x_1,g\cdot x_2,g\cdot x_3,g\cdot x_4), \\
\end{split}
\end{equation}
where $\Omega(x)$ is the scaling factor of the conformal transformation. It is not hard to show that given any configuration $C$ (which can be non-Euclidean) in the domain of $\mathcal{W}_4$, eq. (\ref{euclconformal}) holds for $g$ in a neighbourhood of the identity element in the Euclidean conformal group (this neighbourhood depends on configuration). Then we can show that for a fixed $C$, eq. (\ref{euclconformal}) holds not only in a neighbourhood of identity element in the Euclidean conformal group, but also in a neighourhood in the complex conformal group.\footnote{The complex conformal group is generated by translations $x\rightarrow x+a$, rotations $x\rightarrow \exp\left[\alpha^{\mu\nu} M_{\mu\nu}\right]x$, dilatations $x\rightarrow e^{\tau}x$, special conformal translations $x\rightarrow {x^\prime}^\mu=\frac{x^\mu-x^2b^\mu}{1-2b\cdot x+b^2x^2}$ with complex parameters.} Therefore, by using (\ref{euclconformal}) with $g$ in the complex conformal group, we can extend $\mathcal{W}_4$ to a bigger domain than $\mathcal{U}_4$ (recall definition (\ref{def:domainpermutation})).\footnote{This can be called conformal extension of Bargmann-Hall-Wightman theory. We were unable to find this idea in prior literature.} We say that configurations $C=(x_1,x_2,x_3,x_4),C^\prime=(x_1^\prime,x_2^\prime,x_3^\prime,x_4^\prime)$ are conformally equivalent if there exists a path $g(s)$ in the complex conformal group such that
\begin{itemize}
	\item $g(0)=$id, and $g(1)\cdot C_1=C_2$.
	\item The scaling factors $\Omega(x_1),\Omega(x_2),\Omega(x_3),\Omega(x_4)$ along $g(s)$ do not go to $0$ or $\infty$.
\end{itemize} 
We define
\begin{equation}
\begin{split}
    \mathcal{U}_4^c=\left\{C=(x_1,x_2,x_3,x_4)\in\mathbb{C}^{4d}\ \big{|}\ C\ \mathrm{is\ conformally\ equivalent\ to\ some\ C^\prime\in\mathcal{U}_4}\right\},
\end{split}
\end{equation}
where the superscript ``c" in $\mathcal{U}_4^c$ means ``conformal". Naively, one may expect that $\mathcal{W}_4$ has analytic continuation to $\mathcal{U}_4^c$ by (\ref{euclconformal}).\footnote{As a next step, one could consider the Lorentzian configurations in envelope of holomorphy $H(\mathcal{U}_4^c)$.} However, for each conformally equivalent $C,C^\prime$ pair in $\mathcal{U}_4^c$, the path $g(s)$ in the complex conformal group is not unique, which means choosing different $g(s)$ may give different analytic continuation. In other words, $\mathcal{W}_4$ may not be a single-valued function on $\mathcal{U}_4^c$.

There is one very simple partial case which is guaranteed not to lead to multi-valuedness. It is the case when the whole curve $C(s)=g(s)\cdot C=\fr{x_1(s),x_2(s),x_3(s),x_4(s)}$ has point differences $(y_1(s),y_2(s),y_3(s))$ (where $y_k=x_{k+1}-x_{k}$) in the forward tube $I_3$, except for the end point $g(1)\cdot C$.

In this paper we do not fully explore the Lorentzian domain of $\mathcal{W}_4$ by using the above method. We left it for future work. Here we only give a simple example which shows that by assuming weak conformal invariance in Wightman QFT, the domain of the Wightman function contains more Lorentzian configurations than totally space-like configurations. 

We would like to show that the following Lorentzian configurations are in the domain of (conformally invariant) $\mathcal{W}_4$:
\begin{equation}\label{ex:totallytimelike}
\begin{split}
x_k=(t_k,\mathbf{x}_k),\quad1\rightarrow2\rightarrow3\rightarrow4. \\
\end{split}
\end{equation}
We act with complex dilatation on (\ref{ex:totallytimelike}):
\begin{equation}\label{totallytimelike:complexdil}
\begin{split}
    x_k^\prime=e^{i\alpha}x_k.
\end{split}
\end{equation}
Then the point differences are given by $x_{jk}^\prime=x_{jk}\cos\alpha+ix_{jk}\sin\alpha$. By the causal ordering in (\ref{ex:totallytimelike}), we have $(x_2^\prime-x_1^\prime,x_3^\prime-x_2^\prime,x_4^\prime-x_3^\prime)\in I_3$ when $0<\alpha\leq1$. On the other hand, the scaling factors of complex dilatations (\ref{totallytimelike:complexdil}) are constants: $\Omega(x)=e^{i\alpha}$, which are finite and non-zero. Thus, we can use the above-mentioned simple partial case. We conclude that the Lorentzian configurations in the form of (\ref{ex:totallytimelike}) are in the domain of (conformally invariant) $\mathcal{W}_4$.

\subsubsection{Domain of analyticity: unitarity + conformal invariance + OPE}
Now back to our CFT construction in this paper. We assumed unitarity, conformal invariance and OPE, but did not assume Wightman axioms. 

With the help of OPE, we are able to control the upper bound of the CFT four-point function more efficiently \cite{pappadopulo2012operator}. It seems to be rather difficult to apply the unitarity condition (\ref{wightman:unitarity}) in a general Wightman QFT because it is a non-linear constraint. While, in the expansion (\ref{g:rhoseries}), we are able to use the unitarity condition for CFT four-point functions.

The domain of the Lorentzian CFT four-point function $G_4$ contains all configurations where there exists at least one convergent OPE channel. The results in appendix \ref{appendix:tableopeconvergence} show that the domain of $G_4$ contains much richer set of causal orderings than just the totally space-like causal ordering obtainable from Wightman axioms alone.

One interesting point is that if we act with conformal transformations on the configurations which have point differences in the forward tube $I_3$ (let us call this set $\mathcal{I}_4$), we can get at most Lorentzian configurations whose cross-ratios can be realized by configurations in $\mathcal{I}_4$ because conformal transformations do not change cross-ratios.\footnote{The similar idea has been used to look for the domain of analyticity of the Wightman functions in a general QFT. E.g. G. K\"{a}ll{\'e}n explored the domain of the Wightman four-point function by studying configurations $(x_1,x_2,x_3,x_4)$ such that the Poincar\'e invariants $x_{ij}\cdot x_{kl}$ can be realized by configurations in $\mathcal{I}_4$ \cite{kallen1961analyticitydomain}.} However, if we additionally assume OPE, then it is not necessary that the cross-ratios of the Lorentzian configurations can be realized by configurations in $\mathcal{I}_4$ (the only requirement is that $\abs{\rho},\abs{\bar{\rho}}<1$ in the corresponding OPE channel). It would be interesting to figure out:
\begin{itemize}
	\item can we get more Lorentzian configurations by assuming unitarity + conformal invariance + OPE than by assuming Wightman axioms + conformal invariance?
\end{itemize}
We left it for future work.

\section{Four-point functions of non-identical scalar or spinning operators}\label{section:nonidscalar}
\subsection{Generalization to the case of non-identical scalar operators}
In this section we generalize our results of section \ref{section:analyticcontinuation}, \ref{section:lorentz4pt} and \ref{section:classifylorentzconfig} to the CFT four-point functions of non-identical scalar operators. We start from a Euclidean CFT four-point function
\begin{equation}\label{def:4ptnonid}
\begin{split}
G_{1234}(x_1,x_2,x_3,x_4)\coloneqq\braket{\mathcal{O}_1(x_1)\mathcal{O}_2(x_2)\mathcal{O}_3(x_3)\mathcal{O}_4(x_4)}.
\end{split}
\end{equation}
The scalar operators $\mathcal{O}_i$ in (\ref{def:4ptnonid}) have scaling dimensions $\Delta_i$. By conformal symmetry, (\ref{def:4ptnonid}) can be factorized as
\begin{equation}\label{nonidfactorize}
\begin{split}
G_{1234}(x_1,x_2,x_3,x_4)=&\dfrac{1}{\fr{x_{12}^2}^{\frac{\Delta_1+\Delta_2}{2}}\fr{x_{34}^2}^{\frac{\Delta_3+\Delta_4}{2}}}\fr{\dfrac{x_{24}^2}{x_{14}^2}}^{\frac{\Delta_{1}-\Delta_2}{2}}\fr{\dfrac{x_{14}^2}{x_{13}^2}}^{\frac{\Delta_{3}-\Delta_4}{2}}g_{1234}(\rho,\bar{\rho}) \\
\end{split}
\end{equation}
As we discussed in section \ref{section:euclcft}, the prefactor multiplying $g_{1234}(\rho,\bar{\rho})$ has analytic continuation to the domain $\mathcal{D}$. It remains to show that $g_{1234}(\tau_k,\mathbf{x}_k)$ has analytic continuation to $\mathcal{D}$. 

In the region $\abs{\rho},\abs{\bar{\rho}}<1$ in the Euclidean signature, $g_{1234}$ has a convergent expansion in conformal blocks:
\begin{equation}\label{CBexpansion}
\begin{split}
g_{1234}\fr{\rho,\bar{\rho}}=\sumlim{\mathcal{O}}C_{12\mathcal{O}}C_{34\mathcal{O}}g^{12,34}_{\mathcal{O}}\fr{\rho,\bar{\rho}}
\end{split}
\end{equation}
where $C_{ij\mathcal{O}}$ are OPE coefficients and $\mathcal{O}$ are the primary operators which appear in the $\mathcal{O}_1\times\mathcal{O}_2$ OPE. In the unitary CFTs, we can assume $C_{ij\mathcal{O}}$ to be real by choosing proper basis of operators. 

Each conformal block $g^{12,34}_{\mathcal{O}}\fr{\rho,\bar{\rho}}$ has a series expansion
\begin{equation}\label{CB:rhoexpansion}
\begin{split}
g^{12,34}_{\mathcal{O}}\fr{\rho,\bar{\rho}}=\sumlim{\psi\in[\mathcal{O}]}a_{12\psi}b_{34\psi}\rho^{h(\psi)}\bar{\rho}^{\bar{h}(\psi)}
\end{split}
\end{equation}
where $\psi\in[\mathcal{O}]$ form an orthonormal basis in the highest weight representation $[\mathcal{O}]$ of the conformal group $SO(1,d+1)$. Here we choose $\psi$ to be eigenstates of Virasoro generators $L_0,\bar{L}_0\in so(1,d+1)$ and $h(\psi)$,$\bar{h}(\psi)$ are the corresponding eigenvalues. The real coefficients $a_{12\psi}$, $b_{34\psi}$ in (\ref{CB:rhoexpansion}) are totally fixed by conformal symmetry and $a_{12\mathcal{O}}=b_{34\mathcal{O}}=1$, where $\mathcal{O}$ is the primary state in $[\mathcal{O}]$. 

We stick (\ref{CB:rhoexpansion}) into (\ref{CBexpansion}) and apply the Cauchy-Schwartz inequality: 
\begin{equation}
\begin{split}
\abs{g^{12,34}\fr{\rho,\bar{\rho}}}^2=&\abs{\sumlim{\mathcal{O}}C_{12\mathcal{O}}C_{34\mathcal{O}}\sumlim{\psi\in[\mathcal{O}]}a_{12\psi}b_{34\psi}\rho^{h(\psi)}\bar{\rho}^{\bar{h}(\psi)}}^2 \\
\leq&\abs{\sumlim{\mathcal{O}}\sumlim{\psi\in[\mathcal{O}]}\abs{C_{12\mathcal{O}}C_{34\mathcal{O}}a_{12\psi}b_{34\psi}\rho^{h(\psi)}\bar{\rho}^{\bar{h}(\psi)}}}^2 \\
\leq&\fr{\sumlim{\mathcal{O}}\sumlim{\psi\in[\mathcal{O}]}C_{12\mathcal{O}}^2a_{12\psi}^2\abs{\rho}^{h(\psi)}\abs{\bar{\rho}}^{\bar{h}(\psi)}} \\
&\times\fr{\sumlim{\mathcal{O}}\sumlim{\psi\in[\mathcal{O}]}C_{34\mathcal{O}}^2b_{34\psi}^2\abs{\rho}^{h(\psi)}\abs{\bar{\rho}}^{\bar{h}(\psi)}}. \\
\end{split}
\end{equation}
Since the primaries $\mathcal{O}$ are in the intersection of $\mathcal{O}_1\times\mathcal{O}_2$ and $\mathcal{O}_3\times\mathcal{O}_4$ OPE, we have
\begin{equation}\label{bound:1234}
\begin{split}
\sumlim{\mathcal{O}}\sumlim{\psi\in[\mathcal{O}]}C_{12\mathcal{O}}^2a_{12\psi}^2\abs{\rho}^{h(\psi)}\abs{\bar{\rho}}^{\bar{h}(\psi)}\leq&g^{12,21}(r,r) \\
\sumlim{\mathcal{O}}\sumlim{\psi\in[\mathcal{O}]}C_{34\mathcal{O}}^2b_{34\psi}^2\abs{\rho}^{h(\psi)}\abs{\bar{\rho}}^{\bar{h}(\psi)}\leq&g^{43,34}(r,r) \\
\end{split}
\end{equation}
where $r=\max\left\{\abs{\rho},\abs{\bar{\rho}}\right\}$. The above inequalities show that the $\rho$-expansion of $g^{12,34}$ is absolutely convergent in the domain $\abs{\rho},\abs{\bar{\rho}}<1$. As a result we have an analytic function $g_{1234}(\chi,\bar{\chi})$ in the domain (\ref{def:chidomain}), where $\chi,\bar{\chi}$ are defined in (\ref{chitorho}).

On the other hand, for the case of non-identical scalar operators, we can still rearrange the series (\ref{CB:rhoexpansion}) in the same way as (\ref{g:rhorearrange}). This follows from the fact that only the operators in traceless totally-symmetric representations of $SO(d)$ can appear in the OPE of two scalar operators, which leads to $\rho$-expansion in the form of Gegenbauer polynomials (\ref{gegenbauer}) \cite{hogervorst2013radial}. Therefore, by the results in section \ref{section:pathindep}, the function $g_{1234}(\tau_k,\mathbf{x}_k)$ is analytic in $\mathcal{D}$.

The remaining steps are the same as the case of identical scalar operators. We conclude that for the case of four-point functions with non-identical scalar operators, the OPE convergence properties are the same as the case of identical scalar operators.

\subsection{Comments on the case of spinning operators}
Before finishing this section we want to make some comments on the case of four-point functions with spinning operators:
\begin{equation}\label{def:4ptnonidspinning}
\begin{split}
G_{1234}^{a_1a_2a_3a_4}(x_1,x_2,x_3,x_4)\coloneqq\braket{\mathcal{O}_1^{a_1}(x_1)\mathcal{O}_2^{a_2}(x_2)\mathcal{O}_3^{a_3}(x_3)\mathcal{O}_4^{a_4}(x_4)}.
\end{split}
\end{equation}
where $\mathcal{O}_i^{a_i}$ are primary operators with scaling dimensions $\Delta_i$ and $SO(d)$-representation $\rho_i$. $a_i$ are the indices for the spin representations $\rho_i$. In the Euclidean signature, the Jacobian of any conformal transformation $f$ in $SO(1,d+1)$ can be factorized as
\begin{equation}
\begin{split}
J^\mu_\nu(x)\coloneqq\dfrac{\partial f^\mu(x)}{\partial x^\nu}=\Omega(x)\mathcal{R}^\mu_\nu(x),
\end{split}
\end{equation}
where $\Omega(x)>0$ is a scaling factor and $\mathcal{R}$ is a rotation matrix. The four-point function $G_{1234}^{a_1a_2a_3a_4}$ is invariant if we replace all $\mathcal{O}_i^{a_i}(x)$ in (\ref{def:4ptnonidspinning}) with
\begin{equation}\label{conformaltransf:operator}
\begin{split}
\mathcal{O}_i^{a_i}(x)\rightarrow\Omega(x)^{\Delta_i}\left[\rho_i\fr{\mathcal{R}(x)}^{-1}\right]^{a_i}_{b_i}\mathcal{O}_i^{b_i}\fr{f(x)}.
\end{split}
\end{equation}
If we choose the conformal transformation $f$ to be the one which maps $(x_1,x_2,x_3,x_4)$ to its $\rho,\bar{\rho}$-configuration (\ref{conftransf:rho}), then we get
\begin{equation}\label{factorize:spinning}
\begin{split}
G_{1234}^{a_1a_2a_3a_4}(x_1,x_2,x_3,x_4)=&\dfrac{1}{\fr{x_{12}^2}^{\frac{\Delta_1+\Delta_2}{2}}\fr{x_{34}^2}^{\frac{\Delta_3+\Delta_4}{2}}}\fr{\dfrac{x_{24}^2}{x_{14}^2}}^{\frac{\Delta_{1}-\Delta_2}{2}}\fr{\dfrac{x_{14}^2}{x_{13}^2}}^{\frac{\Delta_{3}-\Delta_4}{2}} \\
&\times T^{a_1a_2a_3a_4}_{b_1b_2b_3b_4}(\mathcal{R}_1,\mathcal{R}_2,\mathcal{R}_3,\mathcal{R}_4)g_{1234}^{b_1b_2b_3b_4}(\rho,\bar{\rho})
\end{split}
\end{equation}
where $\mathcal{R}_k$ is the rotation matrix of $f$ at $x_k$, and $T$ is a function of rotation matrices, which is determined by the representations $\rho_i$ of the spinning operators $\mathcal{O}_i$.

Analogously to the scalar case, the function $g_{1234}^{b_1b_2b_3b_4}(\rho,\bar{\rho})$ can be bounded by Cauchy-Schwarz inequality. We can still write $g_{1234}^{b_1b_2b_3b_4}(\rho,\bar{\rho})$ in the form of (\ref{g:rhorearrange}), but in general the functions $P^l(\rho,\bar{\rho})$ are not Gegenbauer polynomials of $\cos\theta$ (recall that $\frac{\rho}{\bar{\rho}}=e^{2i\theta}$). The difficulty is that there is the function $T\fr{\mathcal{R}_k}$ in (\ref{factorize:spinning}) because of the non-trivial representations of the spinning operators. If we think of the above conformal transformation $f$ as a conformal-group-valued function of $(x_1,x_2,x_3,x_4)$,\footnote{Since such a conformal transformation $f$ is not unique, the definition of this group-valued function depends on how we construct it.} then $\mathcal{R}_k$ are also functions of $(x_1,x_2,x_3,x_4)$. In general, the entries of $\mathcal{R}_k$ have singularities in the domain $\mathcal{D}$, e.g. at $\Gamma$ \cite{kravchuk2018counting,karateev2019fermion}. In a word, it is not obvious that $T(\mathcal{R}_k)$ in (\ref{factorize:spinning}) is under control. Some extra work is required for a good estimate on the object
\begin{equation*}
\begin{split}
T^{a_1a_2a_3a_4}_{b_1b_2b_3b_4}(\mathcal{R}_1,\mathcal{R}_2,\mathcal{R}_3,\mathcal{R}_4)g_{1234}^{b_1b_2b_3b_4}(\rho,\bar{\rho})
\end{split}
\end{equation*}
In this paper we do not study the correlators of spinning operators. We leave these for future work.

\section{Conclusions and outlooks}\label{section:conclusion}
In this work we studied the convergence properties of various OPE channels for Lorentzian CFT four-point functions of scalar operators in $d\geq2$, assuming global conformal symmetry. Our analysis is based on the convergence properties of OPE in the Euclidean unitary CFTs. We classified the Lorentzian four-point configurations according to their causal orderings and the range of the variables $z,\bar{z}$. The Lorentzian correlators are analytic functions in a neighbourhood of some Lorentzian configuration as long as there exists at least one convergent OPE channel in the sense of functions. We showed that the convergence properties of various OPE channels are determined by the causal orderings and the range of $z,\bar{z}$ of the four-point configurations. The CFT four-point functions are analytic in a very big domain, including configurations with totally space-like separations and configurations with some other causal orderings. All the results of OPE convergence properties are given in Appendix \ref{appendix:tableopeconvergence}.

Before ending, we would like to point out some related open questions. We list these questions in the order of difficulty (based on personal perspective):
\begin{enumerate}
	\item(Easy) CFTs can also live on the Minkowski cylinder \cite{luscher1975global}. We start from CFT four-point function defined on the Euclidean cylinder and then Wick rotate the cylindrical time variables. The corresponding counterpart questions are:
	\begin{itemize}
		\item Which configurations have convergent s-, t- and u-channel OPE in the sense of functions on the Minkowski cylinder?
		\item Can we still classify the OPE convergence properties according to the range of $z,\bar{z}$ and the causal orderings on the Minkowski cylinder?
	\end{itemize}
    \item We mainly used the radial coordinates $\rho,\bar{\rho}$ in our analysis. We have seen that by using the $q,\bar{q}$-variables in 2d, the domain of CFT four-point functions are larger than the domain derived by using the $\rho,\bar{\rho}$-variables. A natural question is
    \begin{itemize}
    	\item For CFTs with only global conformal symmetry, are there any other coordinates which allow us to extend $G_4$ to some other domains which are not covered by using radial variables?
    \end{itemize}
    Our conjecture is that there are no such coordinates.
    \item(Probably hard) Our results provide some safe Lorentzian regions where conformal bootstrap approach can be applied. One can use bootstrap equations to analyze the four-point functions at Lorentzian configurations with at least two convergent OPE channels. It is also interesting to play with crossing symmetry at Lorentzian configurations with
    \begin{itemize}
    	\item One convergent OPE channel in the sense of functions, another one in the sense of distributions.
    	\item Two convergent OPE channels in the sense of distributions.\footnote{The similar idea was proposed recently by Gillioz et al, see \cite{Gillioz:2019iye}, section 5.}
    \end{itemize}
    The above situations are closely related to the topics on analytic functional bootstrap when the functionals touch the boundaries of the regions with convergent OPE \cite{mazavc2017analytic,mazavc2019analytic1,mazavc2019analytic2,paulos2019analytic,kaviraj2018functional,mazac2019bcft,mazac2019basis}.
    \item(Hard) There are also Lorentzian configurations with no convergent OPE channels. For these cases we do not know whether the general four-point correlators are genuine functions or not. We may need other techniques to handle these situations. For example, there are questions similar to section \ref{section:envelope}:
    \begin{itemize}
    	\item One can derive a complex domain $\mathcal{D}^{stu}$ which is the union of the domains of three OPE channels. Then what is $\mathcal{D}^{stu}$ and what is its envelope of holomorphy $H\fr{\mathcal{D}^{stu}}$? Does $H\fr{\mathcal{D}^{stu}}$ contain more Lorentzian configurations than those provided by the results in this paper?
    \end{itemize}
    Once we are able to construct $H\fr{\mathcal{D}^{stu}}$, one can ask
    \begin{itemize}
	    \item Given a Lorentzian configuration $C_L\in\mathcal{D}\backslash H\fr{\mathcal{D}^{stu}}$, can we find a CFT example such that the four-point function is divergent at $C_L$?
    \end{itemize}
    \item (Hard) One can also consider higher-point correlation functions in CFTs. A natural question is:
    \begin{itemize}
    	\item For $n\geq5$, what is the Lorentzian domain of $G_n$ in the sense of functions?
    \end{itemize}
\end{enumerate}

We leave these questions for future work.

\section*{Acknowledgments}
JQ thanks Slava Rychkov and Petr Kravchuk for a lot of helpful discussions. This work is a by-product of our collaboration work of a series of papers on distributional properties of four-point functions in CFTs. JQ also thanks Slava Rychkov for the great help in improving the writing of this paper.

JQ thanks Zechuan Zheng, Emilio Trevisani, Dan Mao and Yikun Jiang for helpful discussions and comments.

The work of JQ is supported by the Simons Foundation grant 488655 (Simons Collaboration on the Nonperturbative Bootstrap).

\appendix
\section{Why \texorpdfstring{$z(\tau_k,\mathbf{x}_k)$, $\bar{z}(\tau_k,\mathbf{x}_k)$}{z(tau,x),zbar(tau,x)} are not well defined on \texorpdfstring{$\mathcal{D}$}{D} in \texorpdfstring{$d\geq3$}{d>=3} ?}\label{appendix:obstruction}
\subsection{Main idea}
Recall that we want to analytically continue the function $g(\tau_k,\mathbf{x}_k)$ to $\mathcal{D}$ by using the compositions (\ref{strategy}). In fact, by theorem \ref{theorem:rho<1} and the fact that $\mathcal{D}$ is simply connected, this would have been possible if analytic functions $z(\tau_k,\mathbf{x}_k),\bar{z}(\tau_k,\mathbf{x}_k)$ existed (see the 2d case in section \ref{section:case2d}).

We are going to show that in $d\geq3$, there is no analytic function $\fr{z(\tau_k,\mathbf{x}_k),\bar{z}(\tau_k,\mathbf{x}_k)}$ on $\mathcal{D}$ which is compatible with (\ref{xtouv}) and (\ref{ztouv}). Consequently, some modification of the strategy (\ref{strategy}) is needed to construct the analytic continuation of $g(\tau_k,\mathbf{x}_k)$ in $d\geq3$ (and that's what we did in section \ref{section:cased>=3}). In this section we do not identify $(z,\bar{z})\sim(\bar{z},z)$.
 
Let us first give the main idea why $z(\tau_k,\mathbf{x}_k),\bar{z}(\tau_k,\mathbf{x}_k)$ are not globally well-defined on $\mathcal{D}$. This is similar to the reason why the square-root function $f(w)=\sqrt{w}$ is not globally well-defined on $\mathbb{C}$: there exits a loop of $w$ such that $f(w)$ starts with $f(w_0)$ ($w_0$ is the start and final point of the loop) but ends with $-f(w_0)$. In $d\geq3$, there exists a loop $\gamma$ in $\mathcal{D}$ such that the value $4u-(1+u-v)^2$ goes around 0 once. Then by (\ref{uvtoz}), we see that $(z(\gamma(s)),\bar{z}(\gamma(s)))$ starts with $(z_0,\bar{z}_0)$ but ends with $(\bar{z}_0,z_0)$. The construction of such a path $\gamma$ is given by (\ref{choose:4configs}) and (\ref{choose:generatorpath}).\footnote{In the proof we will use a formal way to reformulate the idea described here. Readers who got the idea can just go to the construction of the path. We thank Petr Kravchuk for the discussion on this problem.}

\subsection{Proof}
Let $\Gamma$ be the preimage of $\Gamma_{uv}$ in $\mathcal{D}$ (i.e. the set of configurations where $4u-(1+u-v)^2=0$, see the definition of $\Gamma_{uv}$ in (\ref{defGamma})). A necessary condition for the existence of such a $(z,\bar{z})$ function is that
\begin{itemize}
	\item $\fr{z(\tau_k,\mathbf{x}_k),\bar{z}(\tau_k,\mathbf{x}_k)}$ is a continuous function from $\mathcal{D}\backslash\Gamma$ to $\mathbb{C}^2\backslash\left\{z=\bar{z}\right\}$.
\end{itemize}
It suffices to show that the above condition does not hold in $d\geq3$.

If the continuous function $\fr{z(\tau_k,\mathbf{x}_k),\bar{z}(\tau_k,\mathbf{x}_k)}$ exists, then since (\ref{wytoz}) is an invertible linear map, the continuous function $\tilde{\alpha}(\tau_k,\mathbf{x}_k)=\fr{w(\tau_k,\mathbf{x}_k),y(\tau_k,\mathbf{x}_k)}$ also exists. Then there is the following commutative diagram (the map $\alpha$ is defined by (\ref{xtouv}), while $p$ is a composition of (\ref{wytoz}) and (\ref{ztouv})):
\begin{equation}\label{xtouvtoz}
	\centering
	\begin{tikzpicture}[baseline={(0,-1)},circuit logic US]
	\draw (0,-2) node{$\mathcal{D}\backslash\Gamma$};
	\draw[-stealth] (0.5,-2) -- (1.7,-2) node[above,pos=.5]{$\alpha$};
	\draw (2.5,-2) node{$\mathbb{C}^2\backslash\Gamma_{uv}$};
	\draw[-stealth] (2.5,-0.5) -- (2.5,-1.5) node[right,pos=.5]{$p$};
	\draw (2.5,0) node{$\mathbb{C}\times\fr{\mathbb{C}\backslash\left\{0\right\}}$};
	\draw (-1.7,-2) node{$(\epsilon_k+it_k,\mathbf{x}_k)\in$};
	\draw (3.8,-2) node{$\ni(u,v)$};
	\draw (0.5,0) node{$(w,y)\in$};
	\draw[-stealth] (4,0) -- (5.3,0) node[above,pos=.5]{(\ref{wytoz})};
	\draw (6.5,0) node{$\mathbb{C}^2\backslash\left\{z=\bar{z}\right\}$};
	\draw[-stealth] (6.5,-0.5) -- (3.5,-1.5) node[above,pos=.5]{(\ref{ztouv})};
	\draw[-stealth,dashed,red] (0.3,-1.5) -- (2,-0.5) node[above,pos=.5]{$\tilde{\alpha}?$};
	\end{tikzpicture}
\end{equation}
The map $\tilde{\alpha}$ in (\ref{xtouvtoz}) satisfies $p\circ\tilde{\alpha}=\alpha$. Since $p$ is a double covering map from $\mathbb{C}\times\fr{\mathbb{C}\backslash\left\{0\right\}}$ to $\mathbb{C}^2\backslash\Gamma_{uv}$, recalling the lifting properties of the covering map \cite{munkres1974topology}, a sufficient and necessary condition for the existence of $\tilde{\alpha}$ is the following relation of the fundamental groups
\begin{equation}\label{condition:lifting}
\begin{split}
\alpha_*\Big{(}\pi_1\big{(}\mathcal{D}\backslash\Gamma,C_0\big{)}\Big{)}\subset p_*\Big{(}\pi_1\big{(}\mathbb{C}\times\fr{\mathbb{C}\backslash\left\{0\right\}},(w_0,y_0)\big{)}\Big{)}, \\
\end{split}
\end{equation}
where $C_0$, $(w_0,y_0)$ are base points of the corresponding spaces such that $\alpha\fr{C_0}=p\fr{w_0,y_0}$, and $\alpha_*$, $p_*$ are the fundamental group homomorphisms induced by $\alpha$, $p$. 

It remains to show that the condition (\ref{condition:lifting}) does not hold in $d\geq3$. By (\ref{uvtowy}), $\mathbb{C}^2\backslash\Gamma_{uv}$ is homeomorphic to $\mathbb{C}\times\fr{\mathbb{C}\backslash\left\{0\right\}}$. So $\mathbb{C}^2\backslash\Gamma_{uv}$ and $\mathbb{C}\times\fr{\mathbb{C}\backslash\left\{0\right\}}$ have the same non-trivial fundamental group
\begin{equation}\label{C2gamma:fundamentalgroup}
\begin{split}
\pi_1\big{(}\mathbb{C}\times\fr{\mathbb{C}\backslash\left\{0\right\}},(w_0,y_0)\big{)}\simeq\pi_1\fr{\mathbb{C}^2\backslash\Gamma_{uv},(u_0,v_0)}\simeq\bbZ.
\end{split}
\end{equation}
where $(u_0,v_0)$ is a base point in $\mathbb{C}^2\backslash\Gamma_{uv}$ such that $p(w_0,y_0)=(u_0,v_0)$. Since $p$ is a double covering map, the group homomorphism
\begin{equation}\label{grpmorphism1}
\begin{split}
    p_*:\ \pi_1\big{(}\mathbb{C}\times\fr{\mathbb{C}\backslash\left\{0\right\}},(w_0,y_0)\big{)}\ \longrightarrow\ \pi_1\fr{\mathbb{C}^2\backslash\Gamma_{uv},(u_0,v_0)} 
\end{split}
\end{equation}
is not surjective. In fact, by choosing proper basis, the group homomorphism (\ref{grpmorphism1}) can be written as
\begin{equation}
\begin{split}
    p_*:\ \bbZ&\ \longrightarrow\ \bbZ \\
            n&\ \mapsto\quad 2n \\
\end{split}
\end{equation}
Now to show that the condition (\ref{condition:lifting}) does not hold, it suffices to show that the group homomorphism 
\begin{equation}\label{grpmorphism2}
\begin{split}
    \alpha_*:\ \pi_1\big{(}\mathcal{D}\backslash\Gamma,C_0\big{)}\ \longrightarrow\ \pi_1\fr{\mathbb{C}^2\backslash\Gamma_{uv},(u_0,v_0)} 
\end{split}
\end{equation}
is surjective. By (\ref{uvtowy}), the generator of $\pi_1\fr{\mathbb{C}^2\backslash\Gamma_{uv},(u_0,v_0)}$ corresponds to a loop in $\mathbb{C}^2\backslash\Gamma_{uv}$ such that along this loop, $4u-(1+u-v)^2$ goes around $0$ only once. Let us construct a loop $\gamma$ in $\mathcal{D}\backslash\Gamma$:
\begin{equation}
\begin{split}
    \gamma:\ [0,1]\ \longrightarrow\ \mathcal{D}\backslash\Gamma,\quad\gamma(0)=\gamma(1)=C_0,
\end{split}
\end{equation}
which induces a loop in $\mathbb{C}^2\backslash\Gamma_{uv}$ such that $4u-(1+u-v)^2$ goes around $0$ only once.

We pick 4 configurations $C_0,C_1,C_2,C_3$ in $\mathcal{D}\backslash\Gamma$:
\begin{equation}\label{choose:4configs}
\begin{split}
C_0:&\quad x_1=0,\ x_2=(-1,1,0,\ldots,0),\ x_3=(-2,0,\ldots,0),\ x_4=(-5,0,\ldots,0). \\
C_1:&\quad x_1=(0,-1,0,\ldots,0),\ x_2=(-1+3i,-1,0,\ldots,0),\ x_3=(-2,0,\ldots,0), \\
&\quad x_4=(-5,0,\ldots,0). \\
C_2:&\quad x_1=0,\ x_2=(-1,-1,0,\ldots,0),\ x_3=(-2,0,\ldots,0),\\
&\quad x_4=(-5,0,\ldots,0). \\
C_3:&\quad x_1=(0,1,0,\ldots,0),\ x_2=(-1,0,1,0,\ldots,0),\ x_3=(-2,0,\ldots,0),\\
&\quad x_4=(-5,0,\ldots,0). \\
\end{split}
\end{equation}
We let $\gamma$ be consisting of four straight lines:
\begin{equation}\label{choose:generatorpath}
\begin{split}
\gamma(s)=\left\{
\begin{array}{lcr}
(1-4s)C_0+4sC_1, & & 0\leq s\leq1/4, \\
(2-4s)C_1+(4s-1)C_2, & & 1/4\leq s\leq1/2, \\
(3-4s)C_2+(4s-2)C_3, & & 1/2\leq s\leq3/4, \\
(4-4s)C_3+(4s-3)C_0, & & 3/4\leq s\leq1. \\
\end{array}
\right.
\end{split}
\end{equation}
Figure \ref{curve:3dpath} shows the curve of $4u-(1+u-v)^2$ along the path $\gamma$. We see that $4u-(1+u-v)^2$ is non-zero everywhere, implies that the configurations along $\gamma$ are always in $\mathcal{D}\backslash\Gamma$ (because the image $(u,v)$ of $\gamma(s)$ is not in $\Gamma_{uv}$). Furthermore, $4u-(1+u-v)^2$ goes around 0 only once.
\begin{figure}[t]
	\centering
	\includegraphics[scale=0.3]{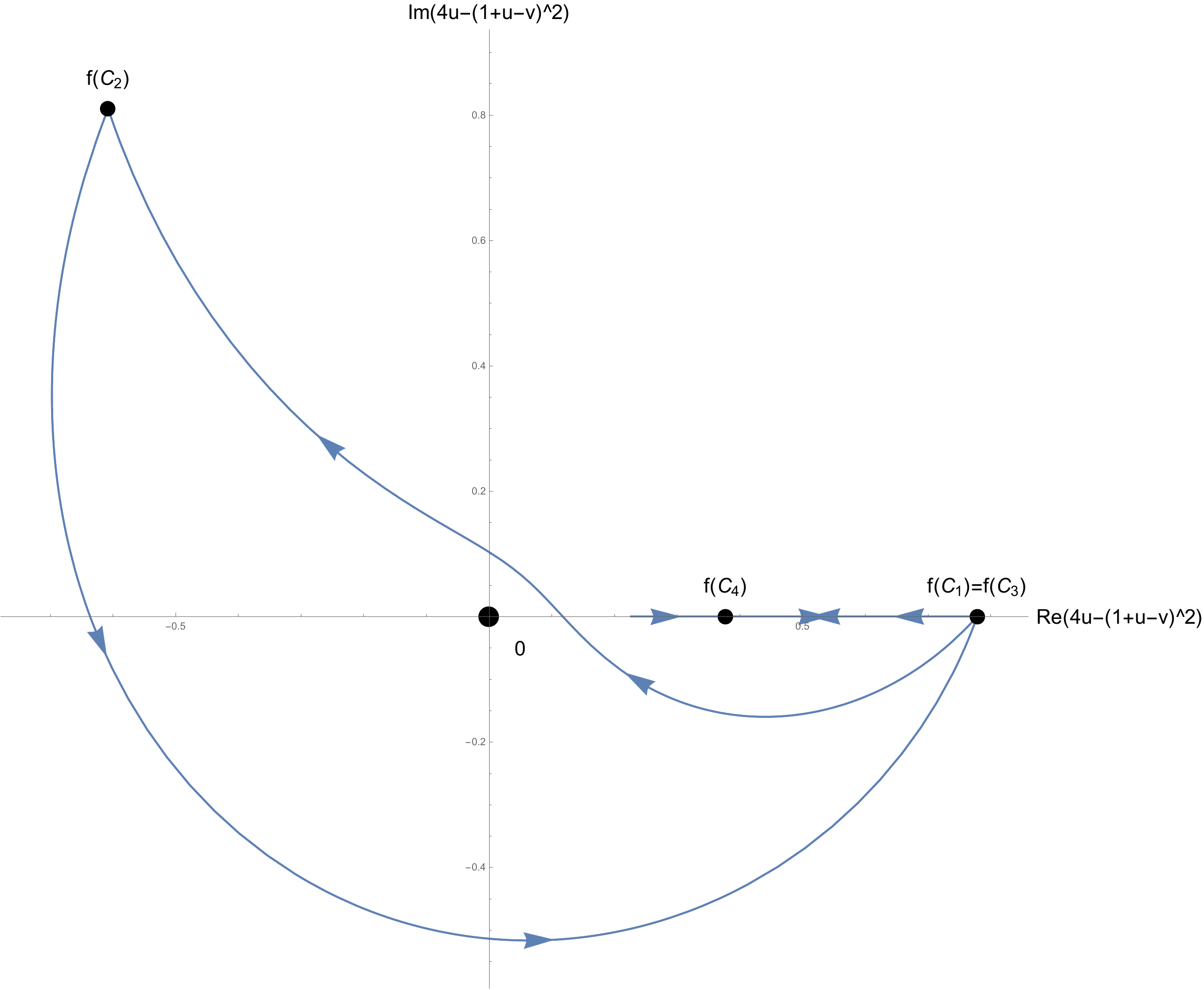}
	\caption{\label{curve:3dpath}The curve of $4u-(1+u-v)^2$ along $\gamma(s)$. We use $f(C_i)$ to denote the value $4u-(1+u-v)^2$ at the configuration $C_i$.}
\end{figure} 
Therefore, fundamental group element $[\gamma]\in\pi_1\fr{\mathcal{D}\backslash\Gamma,C_0}$ is mapped to the generator of $\pi_1\fr{\mathbb{C}^2\backslash\Gamma_{uv},(u_0,v_0)}$, so (\ref{grpmorphism2}) is surjective. As a result, the lifted map $\tilde{\alpha}$ does not exist.

We conclude that in $d\geq3$, there is no analytic function $\fr{z(\tau_k,\mathbf{x}_k),\bar{z}(\tau_k,\mathbf{x}_k)}$ on $\mathcal{D}$. 

We also want to remark that in 2d, the analytic function $(z,\bar{z})$ does exist (see section \ref{section:case2d}). The part of our proof in this section that fails in 2d is the construction of the loop $\gamma$ in (\ref{choose:generatorpath}). In our construction of $C_3$ in (\ref{choose:4configs}), the third component of $x_2$ is non-zero. This is where we use the condition $d\geq3$. In fact, one can show that in 2d, the group homomorphism (\ref{grpmorphism2}) is not surjective, and the condition (\ref{condition:lifting}) holds.

\section{Connectedness of \texorpdfstring{$\mathcal{D}_L^\alpha$}{DLalpha} in \texorpdfstring{$d\geq3$}{d>=3}}\label{appendix:connectedness}
In this section we are going to show that in $d\geq3$, each $\mathcal{D}_L^\alpha$ in (\ref{Dl:decomp}) is connected.

Observe first of all that all $\mathcal{D}_L^\alpha$ of the same causal type in table \ref{table:causalclassification} have the same connectedness property (this is obvious because they are related by renumbering points), so it suffices to prove the connectedness property for one $\mathcal{D}_L^\alpha$ in each causal type.

Given a $\mathcal{D}_L^\alpha$, we define $(\mathcal{D}_L^\alpha)_3$ to be the set of all three-point Lorentzian configurations which have the causal ordering of the first three points of the configurations in $\mathcal{D}_L^\alpha$. Then there is a natural projection from $\mathcal{D}_L^\alpha$ to $(\mathcal{D}_L^\alpha)_3$:
\begin{equation}\label{def:piprojection}
\begin{split}
    \pi: \mathcal{D}_L^\alpha&\ \longrightarrow\ \fr{\mathcal{D}_L^\alpha}_3, \\
        (x_1,x_2,x_3,x_4)&\ \mapsto\ (x_1,x_2,x_3), \\
\end{split}
\end{equation}
Then $\mathcal{D}_L^\alpha$ has the following decomposition
\begin{equation}
\begin{split}
    \mathcal{D}_L^\alpha=\bigcup_{(x_1,x_2,x_3)\in\fr{\mathcal{D}_L^\alpha}_3}\left\{x_4\ \big{|}\ (x_1,x_2,x_3,x_4)\in\mathcal{D}_L^\alpha\right\}.
\end{split}
\end{equation}
For each causal type in table \ref{table:causalclassification}, we want to show that there exists a $\mathcal{D}_L^\alpha$ in this causal type such that
\begin{enumerate}
	\item For fixed $(x_1,x_2,x_3)\in\fr{\mathcal{D}_L^\alpha}_3$, the set $F^\alpha_{x_1,x_2,x_3}=\left\{x_4\ \big{|}\ (x_1,x_2,x_3,x_4)\in\mathcal{D}_L^\alpha\right\}$ is non-empty and connected.
	\item $\fr{\mathcal{D}_L^\alpha}_3$ is connected.
\end{enumerate} 

\subsection{Step 1}\label{section:step1}
For a fixed three-point configuration $(x_1,x_2,x_3)$, the set $F^\alpha_{x_1,x_2,x_3}=\left\{x_4\ \big{|}\ (x_1,x_2,x_3,x_4)\in\mathcal{D}_L^\alpha\right\}$ is non-empty and connected if one of the following conditions holds as a consequence of causal ordering imposed by $\mathcal{D}_L^\alpha$:
\begin{enumerate}
	\item $x_i\rightarrow x_4$ for $i=1,2,3$.
	\item $x_4\rightarrow x_i$ for $i=1,2,3$.
	\item $x_4$ is space-like separated from all of $x_1,x_2,x_3$.
\end{enumerate}
For the first case, $F^\alpha_{x_1,x_2,x_3}$ is given by the intersection of the open forward light-cones of $x_1,x_2,x_3$, which is non-empty. Since cones are convex, $F^\alpha_{x_1,x_2,x_3}$ is also convex, thus connected. The connectedness for the second case follows from a similar argument. For the third case, we use the fact that the connectedness property does not change under Poincar\'e transformations, which allows us to move $x_1$ to 0 by translation
\begin{equation}
\begin{split}
    x_k\mapsto x_k-x_1,\quad k=1,2,3,4,
\end{split}
\end{equation}
and then move $x_2,x_3$ onto a 2d subspace by a Lorentz transformation. We enumerate all possible three-point causal orderings
\begin{equation}\label{3ptordering}
\begin{split}
\begin{tikzpicture}[baseline={(a.base)},circuit logic US]		
\node (a) at (0,0) {$a$};
\node (b) at (1,0) {$b$};
\node (c) at (2,0) {$c$};
\draw[-stealth] (a) to (b);
\draw[-stealth] (b) to (c);	
\end{tikzpicture},\quad
\begin{tikzpicture}[baseline={(a.base)},circuit logic US]		
\node (a) at (0,0) {$a$};
\node (b) at (1,0.5) {$b$};
\node (c) at (1,-0.5) {$c$};
\draw[-stealth] (a) to (b);
\draw[-stealth] (a) to (c);	
\end{tikzpicture},\quad
\begin{tikzpicture}[baseline={(a.base)},circuit logic US]		
\node (b) at (1,0.5) {$b$};
\node (c) at (1,-0.5) {$c$};
\node (a) at (2,0) {$a$};
\draw[-stealth] (b) to (a);
\draw[-stealth] (c) to (a);		
\end{tikzpicture},\quad
\begin{tikzpicture}[baseline={(a.base)},circuit logic US]		
\node (a) at (0,0) {$b$};
\node (b) at (1,0) {$c$};
\node (c) at (0.5,-0.5) {$a$};
\draw[-stealth] (a) to (b);		
\end{tikzpicture},\quad
\begin{tikzpicture}[baseline={(b.base)},circuit logic US]		
\node (a) at (0,0.5) {$a$};
\node (b) at (0,0) {$b$};
\node (c) at (0,-0.5) {$c$};	
\end{tikzpicture},
\end{split}
\end{equation}
and check case by case that in $d\geq3$ we can always move the extra point $x_4$ from any position to $\infty$, preserving the constraint that $x_4$ is space-like to $a,b,c$. This observation implies that $F^\alpha_{x_1,x_2,x_3}$ is connected for the third case. 

In table \ref{table:causalclassification}, we find a $\mathcal{D}_L^\alpha$ satisfying one of the above conditions for some but not all causal types. That's why the connectedness of $\mathcal{D}_L^\alpha$ is not so obvious. The exceptional cases are causal type 8, 10 and 11, for which we need to discuss case by case. Without loss of generality we set $a=1,\ b=2,\ c=3,\ d=4$ (comparing with table \ref{table:causalclassification}) in the following discussion.
\newline
\newline$\textbf{Type 8.}$
By translations, Lorentz transformations and dilatations we fix the configurations to
\begin{equation}
\begin{split}
x_1=0,\quad x_2=(x^0,x^1,0,\ldots,0),\quad x_3=(0,1,0\ldots,0).
\end{split}
\end{equation}
\begin{figure}[H]
	\centering
	\begin{tikzpicture}
	\draw[ultra thin, white, fill=gray!50] (0,0) -- (0.5,0.5) -- (-1,2) -- (-2,2) -- (0,0);
	\draw[ultra thin, white, fill=red!50] (0.5,0.5) -- (-1,2) -- (2,2) -- (0.5,0.5);
	\draw[ultra thin, white, pattern=north west lines,pattern color=red] (0,1) -- (0.5,0.5) -- (2,2) -- (1,2) -- (0,1);
	\draw[dashed] (-1,-1)--(2,2) ;
	\draw[dashed] (1,-1)--(-2,2) ;
	\draw[dashed] (0,-1)--(3,2) ;
	\draw[dashed] (2,-1)--(-1,2) ;
	\draw[dashed] (-0.4,0.6)--(1,2) ;
	\draw[dashed] (-0.4,0.6)--(-1.8,2) ;
	\draw[->] (-1,0)--(3,0) node[right]{$x^1$};
	\draw[->] (0,-1)--(0,3) node[above]{$x^0$};
	\filldraw[black] (0,0) circle (1pt) node[anchor=north west] {$x_1$};
	\filldraw[black] (1,0) circle (1pt) node[anchor=north west] {$x_3$};
	\filldraw[black] (-0.4,0.6) circle (1pt) node[anchor=north east] {$x_2$};	
	\end{tikzpicture}
	\caption{\label{connectedness:type8}Type 8}  
\end{figure}
Then $x_2$ is in the open forward light-cone of $x_1$, but out of the light-cones of $x_3$ (see the grey region in figure \ref{connectedness:type8}), and $x_4$ is in the intersection of open forward light-cones of $x_1$ and $x_3$ (see the red region in figure \ref{connectedness:type8}). Once $x_2$ is fixed somewhere in the grey region, the space of allowed positions for $x_4$ is given by the red region minus the forward light-cone of $x_2$, so the remaining region for $x_4$, which is $F^\alpha_{x_1,x_2,x_3}$, is the red dashed region in figure \ref{connectedness:type8}. Figure \ref{connectedness:type8} shows the 2d situation but a similar 3d figure shows that $F^\alpha_{x_1,x_2,x_3}$ is non-empty and connected in 3d, thus also non-empty and connected in higher d (because we can always find a spatial rotation which preserves $x_1,x_2,x_3$ and maps $x_4$ to $(x,y,z,0,\ldots,0)$).
\newline
\newline\textbf{Type 10.} By translations, Lorentz transformations and dilatations we fix the configurations to
\begin{equation}
\begin{split}
x_1=0,\quad x_2=(0,1,0\ldots,0),\quad x_3=(x^0,x^1,0,\ldots,0).
\end{split}
\end{equation}
\begin{figure}[H]
	\centering
	\caption{\label{connectedness:type10}Type 10}
	\begin{tikzpicture}
	\draw[ultra thin, white, fill=gray!50] (0.5,0.5) -- (-1,2) -- (2,2) -- (0.5,0.5);
	\draw[ultra thin, white, pattern=north west lines,pattern color=gray] (0.6,1.3) -- (0.95,0.95) -- (2,2) -- (1.3,2) -- (0.6,1.3);
	\draw[ultra thin, white, pattern=north east lines,pattern color=gray] (0.6,1.3) -- (0.15,0.85) -- (-1,2) -- (-0.1,2) -- (0.6,1.3);
	\draw[dashed] (-1,-1)--(2,2) ;
	\draw[dashed] (1,-1)--(-2,2) ;
	\draw[dashed] (0,-1)--(3,2) ;
	\draw[dashed] (2,-1)--(-1,2) ;
	\draw[dashed] (-0.1,2)--(3,-1.05) ;
	\draw[dashed] (1.3,2)--(-1.7,-1) ;
	\draw[->] (-1,0)--(3,0) node[right]{$x^1$};
	\draw[->] (0,-1)--(0,3) node[above]{$x^0$};
	\filldraw[black] (0,0) circle (1pt) node[anchor=north west] {$x_1$};
	\filldraw[black] (1,0) circle (1pt) node[anchor=north] {$x_2$};
	\filldraw[black] (0.6,1.3) circle (1pt) node[anchor=north] {$x_3$};	
	\end{tikzpicture}  
\end{figure}
The remaining $x_3,x_4$ pair are in the intersection of the open forward light-cones of $x_1,x_2$, i.e. the grey region in figure \ref{connectedness:type10}. Once $x_3$ is fixed, by the constraint that $x_3,x_4$ are space-like separated, $F^\alpha_{x_1,x_2,x_3}$ is given by the grey dashed region in figure \ref{connectedness:type10}, which is obviously non-empty. This region is topologically the same as $\bbR{d}$ minus the light-cones of $x_3$, thus connected when $d\geq3$.
\newline
\newline\textbf{Type 11.} By translations, Lorentz transformations and dilatations we fix the configurations to
\begin{equation}
\begin{split}
x_1=0,\quad x_2=(i,0,\ldots,0),\quad x_3=(x^0,x^1,0,\ldots,0)
\end{split}
\end{equation}
\begin{figure}[H]
	\centering
	\begin{tikzpicture}
	\draw[ultra thin, white, fill=gray!50] (0.5,0.5) -- (2,2) -- (2,-1) -- (0.5,0.5);
	\draw[ultra thin, white, fill=gray!50] (-0.5,0.5) -- (-2,2) -- (-2,-1) -- (-0.5,0.5);
	\draw[ultra thin, white, pattern=horizontal lines,pattern color=gray] (1.5,0.5) -- (2,1) -- (2,2) -- (1,1) -- (1.5,0.5);
	\draw[dashed] (-2,-2)--(2,2) ;
	\draw[dashed] (2,-2)--(-2,2) ;
	\draw[dashed] (-2,-1)--(2,3) ;
	\draw[dashed] (2,-1)--(-2,3) ;
	\draw[dashed] (1.5,0.5)--(2,1) ;
	\draw[dashed] (1.5,0.5)--(1,1) ;
	\draw[->] (-2,0)--(2,0) node[right]{$x^1$};
	\draw[->] (0,-2)--(0,3) node[above]{$x^0$};
	\filldraw[black] (0,0) circle (1pt) node[anchor=north west] {$x_1$};
	\filldraw[black] (0,1) circle (1pt) node[anchor=west] {$x_2$};
	\filldraw[black] (1.5,0.5) circle (1pt) node[anchor=north west] {$x_3$};	
	\end{tikzpicture}  
	\caption{\label{connectedness:type11}Type 11}
\end{figure}
The remaining $x_3,x_4$ pair are in the grey region in figure \ref{connectedness:type11}. One can see that in $d\geq3$ the grey region is topologically the same as the triangle slice (see one of the grey triangle in figure \ref{connectedness:type11}) times $S^{d-2}$, which is connected. Once $x_3$ is fixed, $F^\alpha_{x_1,x_2,x_3}$ is given by the forward light-cone of $x_3$ in the grey region (see the grey dashed region in figure \ref{connectedness:type11}), which is connected.

\subsection{Step 2}
By slightly improving the argument in step 1, we claim that for the representative set $\mathcal{D}_L^\alpha$ (which we chose in step 1) of each causal type in table \ref{table:causalclassification}, the map (\ref{def:piprojection}) is surjective. In other words, for each three-point configuration in $(\mathcal{D}_L^\alpha)_3$, its preimage in $\mathcal{D}_L^\alpha$ is non-empty.

This claim is true for the cases which satisfy one of the conditions at the beginning of section \ref{section:step1} because for these cases we can always find an $x_4$ which is very far away from $x_1$, $x_2$ and $x_3$. This claim is also true for the exceptional cases because from the figure \ref{connectedness:type8}, \ref{connectedness:type10} and \ref{connectedness:type11} we see that the remaining region for $x_4$ is always non-empty.

Now it remains to show that $(\mathcal{D}_L^\alpha)_3$, which is the set of all three-point configurations with a fixed causal ordering, is connected. For each causal type in (\ref{3ptordering}), we choose 
\begin{equation}\label{choice:3pt}
\begin{split}
    x_1=b,\quad x_2=c,\quad x_3=a.
\end{split}
\end{equation}
Analogously to the four-point case we define a projection
\begin{equation}
\begin{split}
    \pi:\ \fr{\mathcal{D}_L^\alpha}_3\ \longrightarrow\ \fr{\mathcal{D}_L^\alpha}_2, \\
    (x_1,x_2,x_3)\mapsto(x_1,x_2). \\
\end{split}
\end{equation}
Then we decompose $\fr{\mathcal{D}_L^\alpha}_3$ into
\begin{equation}
\begin{split}
    \fr{\mathcal{D}_L^\alpha}_3=\bigcup_{(x_1,x_2)\in\fr{\mathcal{D}_L^\alpha}_2}\left\{x_3\ \big{|}\ (x_1,x_2,x_3)\in\fr{\mathcal{D}_L^\alpha}_3\right\}.
\end{split}
\end{equation}
By comparing (\ref{3ptordering}) and (\ref{choice:3pt}), we find each $\fr{\mathcal{D}_L^\alpha}_3$ satisfies one of the following conditions:
\begin{enumerate}
	\item $x_i\rightarrow x_3$ for $i=1,2$.
	\item $x_3\rightarrow x_i$ for $i=1,2$.
	\item $x_3$ is space-like separated from both of $x_1,x_2$.
\end{enumerate}
This observation implies that for each $\fr{\mathcal{D}_L^\alpha}_3$:
\begin{itemize}
	\item For any fixed $(x_1,x_2)\in\fr{\mathcal{D}_L^\alpha}_2$, the set $\left\{x_3\ \big{|}\ (x_1,x_2,x_3)\in\fr{\mathcal{D}_L^\alpha}_3\right\}$ is connected.
	\item $\fr{\mathcal{D}_L^\alpha}_2=\pi\fr{\fr{\mathcal{D}_L^\alpha}_3}$ contains all two-point configurations with the corresponding causal ordering.
\end{itemize}
It remains to show that in $d\geq3$, the set of two-point configurations with a given causal ordering is connected. This is trivial.

\section{Tables of OPE convergence}\label{appendix:tableopeconvergence}
In this appendix we will give 12 tables of the results about convergence properties of three OPE channels: one table for one causal type. For each causal type we will give a template graph with points $a,b,c,d$. Given a Lorentzian configuration $C_L=(x_1,x_2,x_3,x_4)\in\mathcal{D}_L$, the way to look up the tables is as follows.
\begin{enumerate}
	\item Compute the causal ordering of $C_L$, draw the graph of this causal ordering. Find the corresponding type number (say type X) in table \ref{table:causalclassification}.
	\item Go to the section of causal type X (which is appendix C.X). Compare the causal ordering of $C_L$ with the template causal ordering of causal type X at the beginning of appendix C.X. Match the points $i_1,i_2,i_3,i_4$ with $(abcd)$. We will get a sequence $(i_1i_2i_3i_4)$.
	\item Look up the convergence properties of $(i_1i_2i_3i_4)$ in the table of causal type X. 
\end{enumerate}
For example, consider the following template causal ordering 
\begin{equation*}
\begin{split}
\begin{tikzpicture}[baseline={(0,-0.4)},circuit logic US]		
\node (a) at (0,0) {$a$};
\node (b) at (1,0.5) {$b$};
\node (c) at (1,-0.5) {$c$};
\node (d) at (0.5,-1) {$d$};
\draw[-stealth] (a) to (b);
\draw[-stealth] (a) to (c);		
\end{tikzpicture},\quad\mathrm{or}\quad
\begin{tikzpicture}[baseline={(0,-0.4)},circuit logic US]		
\node (a) at (0,0.5) {$b$};
\node (b) at (0,-0.5) {$c$};
\node (c) at (1,0) {$a$};
\node (d) at (0.5,-1) {$d$};
\draw[-stealth] (a) to (c);
\draw[-stealth] (b) to (c);		
\end{tikzpicture}.
\end{split}
\end{equation*}
Then $(i_1i_2i_3i_4)$ means
\begin{equation*}
\begin{split}
\begin{tikzpicture}[baseline={(0,-0.4)},circuit logic US]		
\node (a) at (0,0) {${i_1}$};
\node (b) at (1,0.5) {${i_2}$};
\node (c) at (1,-0.5) {${i_3}$};
\node (d) at (0.5,-1) {${i_4}$};
\draw[-stealth] (a) to (b);
\draw[-stealth] (a) to (c);		
\end{tikzpicture},\quad\mathrm{or}\quad
\begin{tikzpicture}[baseline={(0,-0.4)},circuit logic US]		
\node (a) at (0,0.5) {${i_2}$};
\node (b) at (0,-0.5) {${i_3}$};
\node (c) at (1,0) {${i_1}$};
\node (d) at (0.5,-1) {${i_4}$};
\draw[-stealth] (a) to (c);
\draw[-stealth] (b) to (c);		
\end{tikzpicture}.
\end{split}
\end{equation*}
In appendix \ref{appendix:type1} we will explain in detail how to make the table of OPE convergence for type 1 causal ordering. The procedure is similar for the other causal types, so we will only give the results for them. Before we start, we would like to introduce some tricks in appendix \ref{section:S4action}, \ref{section:lorentzconfframe} and \ref{section:graphsymmetry}. They will be helpful in making the tables. 

\subsubsection{\texorpdfstring{$S_4$}{S4}-action}\label{section:S4action}
There is a natural $S_4$-action on the space of four-point configurations. Let $\sigma\in S_4$ be a symmetry group element:
\begin{equation}
\begin{split}
\sigma=\left(\begin{array}{cccc}
1 & 2 & 3 & 4 \\
\sigma(1) & \sigma(2) & \sigma(3) & \sigma(4) \\
\end{array}\right).
\end{split}
\end{equation}
Let $C=(x_1,x_2,x_3,x_4)$ be a four-point configuration such that $x_{ij}^2\neq0$ for all $x_i,x_j$ pairs. We define the action
\begin{equation}\label{def:permutation}
\begin{split}
\sigma\cdot C=(x_1^\prime,x_2^\prime,x_3^\prime,x_4^\prime),\quad x_k^\prime=x_{\sigma^{-1}(k)}.
\end{split}
\end{equation}
By computing $z,\bar{z}$ of $\sigma\cdot C$ and comparing with $z,\bar{z}$ of $C$, we get a natural $S_4$-action on $z,\bar{z}$:
\begin{equation}
\begin{split}
w_\sigma:\ \mathbb{C}\backslash\left\{0,1\right\}&\ \longrightarrow\ \mathbb{C}\backslash\left\{0,1\right\}, \\
z&\quad\mapsto\quad w_\sigma(z), \\
\bar{z}&\quad\mapsto\quad w_\sigma(\bar{z}), \\
\end{split}
\end{equation}
where $w_\sigma(z),w_\sigma(\bar{z})$ are the variables $z,\bar{z}$ computed from $\sigma\cdot C$. We have the following properties:
\begin{itemize}
	\item $\left\{w_\sigma\right\}_{\sigma\in S_4}$ belong to a set of 6 fractional linear transformation forming a group which is isomorphic to $S_3$. The map $\sigma\mapsto w_\sigma$ is a group homomorphism from $S_4$ to $S_3$ (i.e. $w_{\sigma_1}\circ w_{\sigma_2}=w_{\sigma_1\sigma_2}$).
	\item The $S_4$-action on $\mathcal{D}_L$ permutes classes S,T,U among themselves.
	\item The $S_4$-action on $\mathcal{D}_L$ permutes subclasses $\mathrm{E_{su}}$,$\mathrm{E_{st}}$,$\mathrm{E_{tu}}$ among themselves.
	\item The $S_4$-action on $\mathcal{D}_L$ preserves the subclass $\mathrm{E_{stu}}$.
\end{itemize}
Let us denote $\sigma$ by $\left[\sigma(1)\sigma(2)\sigma(3)\sigma(4)\right]$. We summarize the above properties in table \ref{table:wsigma}. 
\begin{table}[H]
	\caption{The list of $w_\sigma$ and the $S_4$ transformation between classes and subclasses.}
	\label{table:wsigma}
	\setlength{\tabcolsep}{3mm} 
	\def\arraystretch{1.5} 
	\centering
	\begin{tabular}{|c|c|c|c|c|c|c|c|c|}
		\hline
		$\sigma$    &   $w_\sigma(z)$  &  S  &  T  &  U  &  $\mathrm{E_{su}}$  &  $\mathrm{E_{st}}$  &  $\mathrm{E_{tu}}$  &  $\mathrm{E_{stu}}$
		\\ \hline
		[1234], [2143], [3412], [4321]  &  $z$  &  S  &  T  &  U  &  $\mathrm{E_{su}}$  &  $\mathrm{E_{st}}$  &  $\mathrm{E_{tu}}$  &  $\mathrm{E_{stu}}$ 
		\\ \hline
		[2134], [1243], [4312], [3421]  &  $\frac{z}{z-1}$  &  S  &  U  &  T  &  $\mathrm{E_{st}}$  &  $\mathrm{E_{su}}$  &  $\mathrm{E_{tu}}$  &  $\mathrm{E_{stu}}$
		\\ \hline
		[3214], [4123], [1432], [2341]  &  $1-z$  &  T  &  S  &  U  &  $\mathrm{E_{tu}}$  &  $\mathrm{E_{st}}$  &  $\mathrm{E_{su}}$  &  $\mathrm{E_{stu}}$
		\\ \hline
		[1324], [2413], [3142], [4231]  &  $\frac{1}{z}$  &  U  &  T  &  S  &  $\mathrm{E_{su}}$  &  $\mathrm{E_{tu}}$  &  $\mathrm{E_{st}}$  &  $\mathrm{E_{stu}}$
		\\ \hline
		[2314], [1423], [4132], [3241]  &  $\frac{1}{1-z}$  &  T  &  U  &  S  &  $\mathrm{E_{st}}$  &  $\mathrm{E_{tu}}$  &  $\mathrm{E_{su}}$  &  $\mathrm{E_{stu}}$
		\\ \hline
		[3124], [4213], [1342], [2431]  &  $1-\frac{1}{z}$  &  U  &  S  &  T  &  $\mathrm{E_{tu}}$  &  $\mathrm{E_{su}}$  &  $\mathrm{E_{st}}$  &  $\mathrm{E_{stu}}$
		\\ \hline
	\end{tabular}
\end{table}
Suppose a configuration $C_L$ gives the template causal ordering of a causal type, which means that $C_L$ corresponds to the sequence (1234). For $\sigma=[i_1i_2i_3i_4]$, we get a configuration $C_L^\prime=\sigma\cdot C$ by eq. (\ref{def:permutation}). The causal ordering of $C_L^\prime$ is in the same causal type as $C_L$. By comparing the causal orderings of $C_L$ and $C_L^\prime$, we see that the sequence of $C_L^\prime$ is exactly $(i_1i_2i_3i_4)$. Therefore, given a causal type, if we know the class/subclass of the template causal ordering, by looking up table \ref{table:wsigma} we decide the classes/subclasses of the other causal ordering in the same causal type. Then by looking up table \ref{table:class}, we immediately get a part of the OPE convergence properties for each causal ordering.

By using the above trick, the problem of determining the classes/subclasses of causal orderings belong is reduced to determining the class/subclass of the template causal ordering in each causal type. In appendix \ref{section:lorentzconfframe}, we will introduce a trick to determine the classes/subclasses of the template causal orderings.

\subsubsection{Lorentzian conformal frame}\label{section:lorentzconfframe}
Our goal in this subsection is to give a systematic way to determine the class/subclass of $\mathcal{D}_L^\alpha$, where $\alpha$ is a fixed causal ordering.

Recalling lemma \ref{lemma:causaltoclass}, all configurations in $\mathcal{D}_L^\alpha$ belong to the same class. We can choose a particular configuration $C_L\in\mathcal{D}_L^\alpha$ and compute $z,\bar{z}$ of $C_L$, then we immediately know the class (not subclass) of $\mathcal{D}_L^\alpha$. 

If $\mathcal{D}_L^\alpha$ belongs to class S/T/U, then we are done. The rest of this subsection is for the case that $\mathcal{D}_L^\alpha$ belongs to class E. If $\mathcal{D}_L^\alpha$ belongs to class E, then according to theorem \ref{theorem:classification}, we need to check the OPE convergence properties for the intersection of $\mathcal{D}_L^\alpha$ and each subclass of class E as long as the intersection is non-empty. We will find that only the type 1, 5, 6, 10, 11, 12 causal orderings in table \ref{table:causalclassification} belong to class E.\footnote{This can be easily done by choosing one particular configuration and compute $z,\bar{z}$ for each template causal ordering, and by the fact that the $S_4$-action preserves class E (as discussed in appendix \ref{section:S4action}).} In the tables of OPE convergence properties of causal type 5, 10 and 12 , we give the results of all subclasses for each causal ordering (see table \ref{table:type5convergence}, \ref{table:type10convergence} and \ref{table:type12convergence}); while in the tables of causal type 1 ,6 and 11, we only give the results of one subclass for each causal ordering (see table \ref{table:type1convergence}, \ref{table:type6convergence} and \ref{table:type11convergence}). We claim that our tables are complete, based on the following lemma.
\begin{lemma}\label{lemma:subclass}
	Given a fixed causal ordering $\alpha$, if $\alpha$ is in causal type 1/6/11, then $\mathcal{D}_L^\alpha$ only belongs to one of the three subclasses $\mathrm{E_{st}},\mathrm{E_{su}},\mathrm{E_{tu}}$. 
\end{lemma}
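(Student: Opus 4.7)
My plan is to reduce the claim to a reverse-Ptolemy-type inequality in Minkowski space and then verify this inequality separately for each of the three causal types.

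\textbf{Reformulation.} Since $u=z\bar z$ and $v=(1-z)(1-\bar z)$, a direct computation yields $(z-\bar z)^2 = (1+u-v)^2 - 4u$. Setting $A = x_{12}^2 x_{34}^2$, $B = x_{14}^2 x_{23}^2$, $C = x_{13}^2 x_{24}^2$, one checks algebraically that
\[
(z-\bar z)^2 \;=\; -\,\Phi(A,B,C)/C^2, \qquad \Phi(A,B,C) \;:=\; 2(AB+BC+CA)-A^2-B^2-C^2,
\]
the K\"all\'en function. The subclass $\mathrm{E_{stu}}$ corresponds precisely to $\Phi(A,B,C) > 0$, so the lemma is equivalent to proving $\Phi(A,B,C)\le 0$ throughout $\mathcal{D}_L^\alpha$ whenever $\alpha$ is of type 1, 6, or 11.

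\textbf{Sign analysis and reduction to reverse Ptolemy.} I will then observe that the sign pattern of $x_{ij}^2$ (negative for time-like, positive for space-like) in each of the three types forces $A$, $B$, $C$ to share a common sign (types 1 and 11: all positive; type 6: all negative). Under this constraint, $\Phi(A,B,C)\le 0$ is equivalent to one of $\sqrt{|A|}$, $\sqrt{|B|}$, $\sqrt{|C|}$ being at least the sum of the other two. The natural candidate in all three types is the Minkowskian reverse Ptolemy inequality
\[
\sqrt{|x_{13}^2|\,|x_{24}^2|} \;\ge\; \sqrt{|x_{12}^2|\,|x_{34}^2|} \;+\; \sqrt{|x_{14}^2|\,|x_{23}^2|}.
\]

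\textbf{Establishing the inequality.} The key geometric input is the reverse triangle inequality for Minkowski vectors lying in a common closed future light-cone: $\sqrt{-(v+w)^2}\ge \sqrt{-v^2}+\sqrt{-w^2}$, with equality iff $v$ and $w$ are proportional. For type 1 (totally time-ordered chain), I would apply this iteratively to the telescoping decomposition $x_{14}=x_{12}+x_{23}+x_{34}$ and its partial sums, producing the reverse Ptolemy inequality in its classical form, with all six distances time-like. For types 6 and 11, where the six $x_{ij}^2$ mix signs, my plan is to exploit the connectedness of $\mathcal{D}_L^\alpha$ (established in appendix B) together with the freedom to apply translations, Lorentz transformations, spatial rotations, and dilatations, so as to reduce to a canonical representative in a $2{+}1$-dimensional Minkowski subspace; the inequality can then be verified by direct algebraic computation in the reduced form, and extended to arbitrary $d$ because $(z-\bar z)^2$ depends only on the six invariants $x_{ij}^2$.

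The main obstacle is types 6 and 11. Because the relevant inequality mixes time-like factors $\sqrt{-x_{ij}^2}$ (from the causal pairs) and space-like factors $\sqrt{x_{ij}^2}$ (from the non-causal pairs), a direct iteration of the reverse triangle inequality does not close. The correct proof must exploit a structural coherence among the mixed-type distances: in type 6, the chain $a\to b\to c$ couples with the fact that $d$ is space-like from \emph{all three} of $a,b,c$, constraining the distances $\sqrt{x_{ad}^2},\sqrt{x_{bd}^2},\sqrt{x_{cd}^2}$; in type 11, an analogous coherence is imposed by having two \emph{independent} time-like pairs with all four crossing separations space-like. Converting these heuristic structural observations into a clean algebraic derivation of the mixed reverse-Ptolemy inequality is where the real work lies.
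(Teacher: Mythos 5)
Your reformulation is correct and clean: with $A=x_{12}^2x_{34}^2$, $B=x_{14}^2x_{23}^2$, $C=x_{13}^2x_{24}^2$ one indeed has $(z-\bar z)^2=-\Phi(A,B,C)/C^2$, the sign pattern of the $x_{ij}^2$ does force $A,B,C$ to share a sign in types 1, 6, 11, and $\Phi\le0$ is then a failure of the triangle inequality for $\sqrt{|A|},\sqrt{|B|},\sqrt{|C|}$. But the core of the proof --- actually establishing the reverse Ptolemy inequality --- is missing in all three cases, not just in types 6 and 11 as you acknowledge. For type 1, the telescoping argument does not close: writing $p,q,r$ for $\sqrt{-x_{12}^2},\sqrt{-x_{23}^2},\sqrt{-x_{34}^2}$ and $P,Q,R$ for $\sqrt{-x_{13}^2},\sqrt{-x_{24}^2},\sqrt{-x_{14}^2}$, the reverse triangle inequality gives the \emph{lower} bounds $P\ge p+q$, $Q\ge q+r$, $R\ge p+q+r$, whereas the target $PQ\ge pr+Rq$ needs an \emph{upper} bound on $R$; the abstract sextuple $p=q=r=1$, $P=Q=2$, $R=100$ satisfies all three reverse triangle inequalities but violates reverse Ptolemy, so no iteration of these bounds alone can suffice. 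Separately, your fallback for types 6 and 11 --- reduce to one canonical representative using connectedness and then compute --- cannot establish a pointwise inequality: the boundary $\Phi=0$ (i.e.\ $z=\bar z$) is attained at legitimate interior configurations of $\mathcal{D}_L^\alpha$ (e.g.\ four collinear points on the time axis in type 1), so the sign of $\Phi$ is not locally constant on $\mathcal{D}_L^\alpha$ and checking one point proves nothing about the rest. Finally, even granting $\Phi\le0$ everywhere, the conclusion that $\mathcal{D}_L^\alpha$ meets only \emph{one} of $\mathrm{E_{st}},\mathrm{E_{su}},\mathrm{E_{tu}}$ still requires the connectedness of $\mathcal{D}_L^\alpha$ plus the disconnectedness of the three real $(z,\bar z)$ ranges; your claimed ``equivalence'' of the lemma with $\Phi\le0$ elides this step.

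The paper's proof avoids the inequality entirely: it maps $C_L$ to a Lorentzian conformal frame by a special conformal transformation sending $x_4\to\infty$, notes that $\mathrm{Sign}\bigl((x_{13}')^2\bigr)=\mathrm{Sign}\bigl(x_{13}^2x_{14}^2x_{34}^2\bigr)$ is fixed by the causal ordering and is negative for the templates of types 1, 6, 11, observes that a timelike $(x_{13}')^2$ forces the frame into the form where $z=a+b$, $\bar z=a-b$ are automatically real, and then invokes connectedness. The natural way to rescue your approach is essentially to rediscover this: perform a conformal inversion centered at $x_4$, under which $\sqrt{|x_{ij}^2|}$ rescales multiplicatively and reverse Ptolemy for $(x_1,x_2,x_3,x_4)$ becomes a reverse triangle inequality for the three inverted points $y_1,y_2,y_3$; the sign bookkeeping you would then need is exactly the paper's $\mathrm{Sign}\bigl(x_{13}^2x_{14}^2x_{34}^2\bigr)$ computation. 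As written, however, the proposal identifies the right target inequality but does not prove it.
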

The basic tool we use to prove the above lemma is the Lorentzian conformal frame. The Lorentzian conformal frame is similar to the Euclidean conformal frame (\ref{config:conformalframe}). Given a Lorentzian configuration $C_L$, its conformal frame configuration $C_L^\prime$ is a Lorentzian configuration which has one of the following forms
\begin{equation}\label{conformalframe:lorentz}
\begin{split}
	&1.\ x_1^\prime=0,\  x_2^\prime=(ia,b,0,\ldots,0),\ x_3^\prime=(i,0,\ldots,0),\ x_4^\prime=\infty. \\
	&2.\ x_1^\prime=0,\ x_2^\prime=(ib,a,0,\ldots,0),\ x_3^\prime=(0,1,0,\ldots,0),\ x_4^\prime=\infty. \\
	&3.\ x_1^\prime=0,\  x_2^\prime=(0,a,b,0,\ldots,0),\ x_3^\prime=(0,1,0,\ldots,0),\ x_4^\prime=\infty. \\
\end{split}
\end{equation}
$C_L^\prime$ and $C_L$ are related by a Lorentzian conformal transformation. Computing the cross-ratios from (\ref{conformalframe:lorentz}), we see that: for the first and second cases $z=a+b$, $\bar{z}=a-b$; for the third case $z=a+ib$, $\bar{z}=a-ib$. Analogously to the Euclidean conformal frame, the Lorentzian conformal frame configuration is unique up to a reflection $b\mapsto-b$, which corresponds to interchanging $z$ and $\bar{z}$.

Let us describe how we map a four-point configuration to the conformal frame by conformal transformations. Let $C_L=(x_1,x_2,x_3,x_4)$ be a Lorentzian configuration. We will go from $C_L$ to $C_L^\prime$ in a few steps, and each step is a conformal transformation. The configuration after the k-th step is denoted by $C_L^{(k)}$.
\newline\textbf{Step 1.} We move $x_1$ to 0 by translation. The configuration $C_L^{(1)}$ after the first step is given by
\begin{equation}
\begin{split}
C_L^{(1)}=\fr{x_1^{(1)},x_2^{(1)},x_3^{(1)},x_4^{(1)}}=(0,x_2-x_1,x_3-x_1,x_4-x_1).
\end{split}
\end{equation}
This step preserves the causal ordering. 
\newline\textbf{Step 2.} We move $x_4$ to $\infty$ by special conformal transformation
\begin{equation}\label{conformalframe:sct}
\begin{split}
{x^\prime}^\mu=\dfrac{x^\mu-x^2b^\mu}{1-2x\cdot b+x^2b^2},\quad b^\mu=\dfrac{(x_4-x_1)^\mu}{(x_4-x_1)^2}.
\end{split}
\end{equation}
$x_1=0$ is preserved by special conformal transformation. This step may change the causal ordering. Under general conformal transformations, $x_{ij}^2$ transforms as
\begin{equation}\label{sct:squaretransform}
\begin{split}
(x_i^\prime-x_j^\prime)^2=&\Omega(x_i)\Omega(x_j)(x_i-x_j)^2, \\
(ds^{\prime2}=&\Omega(x)^2ds^2), \\
\end{split}
\end{equation}
and for special conformal transformation (\ref{conformalframe:sct}), the scaling factor at $x_k^{(1)}=x_k-x_1$ is given by
\begin{equation}\label{scaling:sct}
\begin{split}
\Omega\fr{x_k^{(1)}}=\dfrac{(x_4-x_1)^2}{(x_4-x_k)^2},\quad k=1,2,3,4.
\end{split}
\end{equation}
Let $C_L^{(2)}=\fr{x_1^{(2)},x_2^{(2)},x_3^{(2)},x_4^{(2)}}$ be the configuration after step 2. For any configuration $C_L$ in $\mathcal{D}_L$ (where the light-cone singularities are excluded), by (\ref{sct:squaretransform}) and (\ref{scaling:sct}) we have
\begin{equation}\label{lorentzconformal:step2}
\begin{split}
\fr{x_i^{(2)}-x_j^{(2)}}^2\neq0,\quad i,j=1,2,3,
\end{split}
\end{equation}
Furthermore, the sign of $\fr{x_i^{(2)}-x_j^{(2)}}^2$ is determined by the causal ordering of $C_L$. So we know if each $x_i^{(2)},x_j^{(2)}$ pair of $C_L^{(2)}$ is space-like or time-like. The information we do not know a priori from (\ref{sct:squaretransform}) and (\ref{scaling:sct}) is the causal orderings of time-like $x_i^{(2)},x_j^{(2)}$ pairs (who is in the future of whom).\footnote{Of course for any particular configuration we can just compute $C_L^{(2)}$ and then determine its causal ordering.}  
\newline\textbf{Step 3.} We move $x_3$ to its final position by some composition of Lorentz transformations, dilatations and time reversal $\theta_L$ (these conformal transformations preserve $x_1=0$ and $x_4=\infty$). Lorentz transformations and dilatations preserve causal orderings, and time reversal only reverse causal orderings (i.e. $x_i,x_j$ pairs change from time-like to time-like, or from space-like to space-like). There are two possibilities after step 2: $x_3^{(2)}$ could be space-like or time-like to $x_1^{(2)}$. If $x_1^{(2)},x_3^{(2)}$ are time-like, then $x_3^{(3)}$ is put at $(i,0,\ldots,0)$. If $x_1^{(2)},x_3^{(2)}$ are space-like, then $x_3^{(3)}$ is put at $(0,1,0,\ldots,0)$. Therefore, $C_L^{(3)}$ is in one of the following forms:
\begin{enumerate}
	\item $x_1^{(3)}=0,\ x_3^{(3)}=(i,0,\ldots,0),\ x_4^{(3)}=\infty$.
	\item $x_1^{(3)}=0,\ x_3^{(3)}=(0,1,0,\ldots,0),\ x_4^{(3)}=\infty$.
\end{enumerate} 
\textbf{Step 4.} We move $x_2$ to somewhere in the $(01)$-plane or $(12)$-plane by Lorentz transformations in the little group of $x_3^{(3)}$. If $x_3^{(3)}=(i,0,\ldots,0)$, then we move $x_2$ to the $(01)$-plane by rotation, i.e. $x_2^{(4)}=(ia,b,0,\ldots,0)$. If $x_3^{(3)}=(0,1,0,\ldots,0)$, then we move $x_2$ onto the $(01)$-plane or the $(12)$-plane, determined as follows:
\begin{itemize}
	\item If $x_2^{(3)}=(i\beta_1,a,\beta_2,\ldots,\beta_{d-1})$ with $(\beta_1)^2\geq(\beta_2)^2+\ldots(\beta_{d-1})^2$, then $x_2$ is put in the (01)-plane, i.e. $x_2^{(4)}=(ib,a,0,\ldots,0)$ and $b^2=(\beta_1)^2-(\beta_2)^2-\ldots-(\beta_{d-1})^2$. 
	\item If $x_2^{(3)}=(i\beta_1,a,\beta_2,\ldots,\beta_{d-1})$ with $(\beta_1)^2\leq(\beta_2)^2+\ldots(\beta_{d-1})^2$, then $x_2$ is put in the (01)-plane, i.e. $x_2^{(4)}=(0,a,b,0,\ldots,0)$ and $b^2=(\beta_2)^2+\ldots+(\beta_{d-1})^2-(\beta_1)^2$. 
\end{itemize}
In the end, $C_L^\prime=C_L^{(4)}$ has one of the forms in eq. (\ref{conformalframe:lorentz}). Moreover, the above discussion provides us with the following fact:
\begin{itemize}
	\item the sign of $\fr{x_{ij}^\prime}^2$ of $C_L^\prime$ is going to be the same for all configurations $C_L\in\mathcal{D}_L^\alpha$ in each causal ordering $\alpha$. 
\end{itemize}
Now back to our question. Suppose $\mathcal{D}_L^\alpha$ is in class E. Let $C_L^\prime=(x_1^\prime,x_2^\prime,x_3^\prime,x_4^\prime)$ be the conformal frame configuration of $C_L\in\mathcal{D}_L^\alpha$. Comparing the range of the $(a,b)$ pair in (\ref{conformalframe:lorentz}) with the range of the $(z,\bar{z})$ pair in class E (see section \ref{section:classifyz}) and using the above fact, we see that there are only two possibilities for $C_L^\prime$:
\begin{enumerate}
	\item $\fr{x_{13}^\prime}^2,\fr{x_{12}^\prime}^2,\fr{x_{23}^\prime}^2<0$ for all $C_L\in\mathcal{D}_L^\alpha$. All possible $C_L^\prime$ are given by the grey region in the first picture of figure \ref{fig:conformalframetoclass}. In this case $z,\bar{z}$ are real, so we have
	\begin{equation}
	\begin{split}
	    \mathcal{D}_L^\alpha=\fr{\mathcal{D}_L^\alpha\cap\mathrm{E_{st}}}\sqcup\fr{\mathcal{D}_L^\alpha\cap\mathrm{E_{su}}}\sqcup\fr{\mathcal{D}_L^\alpha\cap\mathrm{E_{tu}}}.
	\end{split}
	\end{equation}
	Because the $z,\bar{z}$ ranges corresponding to $\mathrm{E_{st}},\mathrm{E_{su}},\mathrm{E_{tu}}$ are disconnected from each other in figure \ref{fig:conformalframetoclass} and because $\mathcal{D}_L^\alpha$ is connected in $d\geq3$, only one of the above intersections is non-empty. We conclude that such $\mathcal{D}_L^\alpha$ only belongs to one of the three subclasses $\mathrm{E_{st}},\mathrm{E_{su}},\mathrm{E_{tu}}$. This conclusion remains valid also in 2d, because 2d configurations can be embedded into higher d.
	\item $\fr{x_{13}^\prime}^2,\fr{x_{12}^\prime}^2,\fr{x_{23}^\prime}^2>0$ for all $C_L\in\mathcal{D}_L^\alpha$. All possible conformal frame configurations are given by the grey region in the second and third pictures of figure \ref{fig:conformalframetoclass}. In this case we have
	\begin{equation}
	\begin{split}
	\mathcal{D}_L^\alpha=\fr{\mathcal{D}_L^\alpha\cap\mathrm{E_{st}}}\sqcup\fr{\mathcal{D}_L^\alpha\cap\mathrm{E_{su}}}\sqcup\fr{\mathcal{D}_L^\alpha\cap\mathrm{E_{tu}}}\sqcup\fr{\mathcal{D}_L^\alpha\cap\mathrm{E_{stu}}},
	\end{split}
	\end{equation}
	 which means that the configurations in $\mathcal{D}_L^\alpha$ may appear in all subclasses.
\end{enumerate}
\begin{figure}[t]
	\centering
	\begin{tikzpicture}
	\draw[ultra thin, white, fill=gray!50] (0,0) -- (0.5,0.5) -- (0,1) -- (-0.5,0.5) -- (0,0);
	\draw[ultra thin, white, fill=gray!50] (0,1) -- (1,2) -- (-1,2) -- (0,1);
	\draw[ultra thin, white, fill=gray!50] (0,0) -- (1,-1) -- (-1,-1) -- (0,0);
	\draw[dashed] (-1,-1)--(2,2) ;
	\draw[dashed] (1,-1)--(-2,2) ;
	\draw[dashed] (-2,-1)--(1,2) ;
	\draw[dashed] (2,-1)--(-1,2) ;
	\draw[->] (-2,0)--(2,0) node[right]{$x^1$};
	\draw[->] (0,-1)--(0,2) node[above]{$x^0$};
	\filldraw[black] (0,0) circle (1.5pt) node[anchor=north west] {$x_1^\prime$};
	\filldraw[black] (0,1) circle (1.5pt) node[anchor=west] {$x_3^\prime$};
	\draw (0,0.5) node[red]{$\mathrm{E_{st}}$};	
	\draw (0,1.5) node[red]{$\mathrm{E_{tu}}$};
	\draw (0,-0.5) node[red]{$\mathrm{E_{su}}$};
	\end{tikzpicture}
	\begin{tikzpicture}
	\draw[ultra thin, white, fill=gray!50] (0,0) -- (-1,1) -- (-1,-1) -- (0,0);
	\draw[ultra thin, white, fill=gray!50] (0,0) -- (0.5,0.5) -- (1,0) -- (0.5,-0.5) -- (0,0);
	\draw[ultra thin, white, fill=gray!50] (1,0) -- (2,1) -- (2,-1) -- (1,0);
	\draw[dashed] (-1,-1)--(2,2);
	\draw[dashed] (2,-2)--(-1,1);
	\draw[dashed] (-1,-2)--(2,1);
	\draw[dashed] (2,-1)--(-1,2);
	\draw[->] (-1,0)--(2,0) node[right]{$x^1$};
	\draw[->] (0,-2)--(0,2) node[above]{$x^0$};
	\filldraw[black] (0,0) circle (1.5pt) node[anchor=north west] {$x_1^\prime$};
	\filldraw[black] (1,0) circle (1.5pt) node[anchor=north] {$x_3^\prime$};
	\draw (0.5,0) node[red]{$\mathrm{E_{st}}$};	
	\draw (1.5,0) node[red]{$\mathrm{E_{tu}}$};
	\draw (-0.5,0) node[red]{$\mathrm{E_{su}}$};
	\end{tikzpicture}
	\begin{tikzpicture}
	\draw[ultra thin, white, fill=gray!50] (-2,-2) -- (-2,2) -- (3,2) -- (3,-2) -- (-2,-2);
	\draw[->] (-2,0)--(3,0) node[right]{$x^1$};
	\draw[->] (0,-2)--(0,2) node[above]{$x^2$};
	\filldraw[black] (0,0) circle (1.5pt) node[anchor=north west] {$x_1^\prime$};
	\filldraw[black] (1,0) circle (1.5pt) node[anchor=north] {$x_3^\prime$};
	\draw (1,1) node[red]{$\mathrm{E_{stu}}$};	
	\end{tikzpicture}
	\caption{\label{fig:conformalframetoclass}The conformal frame configurations realized by configurations in class E.}  
\end{figure}
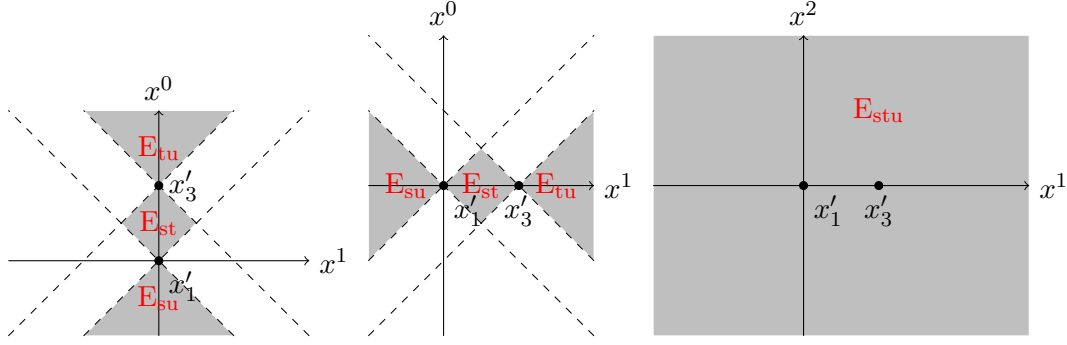
To see which of these possibilities is realized, it is enough to know the sign of $\fr{x_{13}^\prime}^2$.  To finish the proof of lemma \ref{lemma:subclass}, it remains to check that $\fr{x_{13}^\prime}^2<0$ for all type 1, 6, 11 causal orderings (so that possibility 1 is realized for $C_L^\prime$). Since all causal orderings of a fixed causal type can be realized by permuting the indices of the template causal ordering and such $S_4$-action permutes subclasses $\mathrm{E_{st}},\mathrm{E_{su}},\mathrm{E_{tu}}$ among themselves (see section \ref{section:S4action}), it suffices to check that $\fr{x_{13}^\prime}^2<0$ for the template causal orderings of causal type 1, 6, 11. Moreover, since we are only interested in the sign of $\fr{x_{13}^\prime}^2$, we can use the following formula:
\begin{equation}\label{formula:x13prime}
\begin{split}
    \mathrm{Sign}\fr{\fr{x_{13}^\prime}^2}=\mathrm{Sign}\fr{x_{13}^2x_{14}^2x_{34}^2}.
\end{split}
\end{equation}
(\ref{formula:x13prime}) follows from (\ref{sct:squaretransform}), (\ref{scaling:sct}) and the fact that step 3,4 (of constructing the conformal frame) preserve the sign of $\fr{x_{ij}^{(k)}}^2$. Now let us do the check.
\newline\textbf{Type 1.} The template causal ordering is given by 
\begin{equation}
\begin{split}
1\rightarrow 2\rightarrow 3\rightarrow 4,
\end{split}
\end{equation}
which gives $x_{13}^2,x_{14}^2,x_{34}^2<0$, hence $\fr{x_{13}^\prime}^2<0$.
\newline\textbf{Type 6.} The template causal ordering is given by 
\begin{equation}
\begin{split}
\begin{tikzpicture}[baseline={(0,-0.35)},circuit logic US]		
\node (a) at (0,0) {$1$};
\node (b) at (1,0) {$2$};
\node (c) at (2,0) {$3$};
\node (d) at (1,-0.5) {$4$};
\draw[-stealth] (a) to (b);
\draw[-stealth] (b) to (c);		
\end{tikzpicture},
\end{split}
\end{equation}
which gives $x_{13}^2<0$ and $x_{14}^2,x_{34}^2>0$, hence $\fr{x_{13}^\prime}^2<0$.
\newline\textbf{Type 11.} The template causal ordering is given by 
\begin{equation}
\begin{split}
\begin{tikzpicture}[baseline={(0,-0.4)},circuit logic US]		
\node (a) at (0,0) {$1$};
\node (b) at (1,0) {$2$};
\node (c) at (0,-0.5) {$3$};
\node (d) at (1,-0.5) {$4$};
\draw[-stealth] (a) to (b);
\draw[-stealth] (c) to (d);		
\end{tikzpicture}
\end{split}
\end{equation}
which gives $x_{13}^2,x_{14}^2>0$ and $x_{34}^2<0$, hence $\fr{x_{13}^\prime}^2<0$.

So we finished the proof of lemma \ref{lemma:subclass}. As a consistency check one can also compute the sign of $\fr{x_{13}^\prime}^2$ for type 5,10,12 causal orderings and find that $\fr{x_{13}^\prime}^2>0$ for these cases.

For type 1, 6, 11 causal orderings, to determine the subclasses, it remains to determine the subclass of one particular configuration of the template causal ordering in each type, then determine the subclasses of other causal orderings by looking up table \ref{table:wsigma}. 

We would like to remark that Lorentzian conformal frame is just a way to figure out the range of $z,\bar{z}$. There is no claim that correlation functions at $C_L,C_L^\prime$ agree. As mentioned in section \ref{section:wightmanconformal}, the global conformal invariance does not hold in a general Lorentzian CFT.

\subsubsection{Symmetry of the graph}\label{section:graphsymmetry} 
Usually, we go from one causal ordering to another by permuting the indices. For example, given the causal ordering $1\rightarrow2\rightarrow3\rightarrow4$, by permuting $1,4$ we get another causal ordering $4\rightarrow2\rightarrow3\rightarrow1$.

However, a causal ordering may have a non-trivial little group.\footnote{By little group of a causal ordering, we mean the permutations of the indices which do not change this causal ordering.} For example, consider the following causal ordering
\begin{equation}\label{ex:graphsymmetry}
\begin{split}
    \begin{tikzpicture}[baseline={(a.base)},circuit logic US]		
    \node (a) at (0,0) {$a$};
    \node (b) at (1,0) {$b$};
    \node (c) at (2,0.5) {$c$};
    \node (d) at (2,-0.5) {$d$};
    \draw[-stealth] (a) to (b);
    \draw[-stealth] (b) to (c);
    \draw[-stealth] (b) to (d);		
    \end{tikzpicture}.
\end{split}
\end{equation}
The causal ordering (\ref{ex:graphsymmetry}) does not change if we interchange $c$ and $d$, so it has a non-trivial little group $\bbZ_2$.

The little group is unique up to an isomorphism for all causal orderings in the same causal type. Let $G$ be the little group of one causal type and let $|G|$ be the order of $G$. Then the total number of causal orderings in this type is given by $24/|G|$. So in the table of each causal type, we will only list $24/|G|$ sequences $(i_1i_2i_3i_4)$. Below the tables, we will point out the sequences which give the same causal ordering.

\subsection{Type 1}\label{appendix:type1}
The type 1 causal ordering is given by
\begin{equation}\label{template:type1}
\begin{split}
a\rightarrow b\rightarrow c\rightarrow d.
\end{split}
\end{equation}
We let (\ref{template:type1}) be the template causal ordering, then the causal ordering $i_1\rightarrow i_2\rightarrow i_3\rightarrow i_4$ is labelled by the sequence $(i_1i_2i_3i_4)$. Any permutation of the indices in (\ref{template:type1}) will change the causal ordering, so the little group of the type 1 causal orderings is trivial. We have to list 24 causal orderings in the table.

Under time reversal $\theta_L$, $a\rightarrow b\rightarrow c\rightarrow d$ is mapped to $d\rightarrow c\rightarrow b\rightarrow a$, which is equivalent to the permutation $\theta_L:\ a\leftrightarrow d$, $b\leftrightarrow c$. This action is causal-type specific. In addition, we have $\theta_E$ action which is always given by $\theta_E:\ 1\leftrightarrow4,2\leftrightarrow3$ (see eq. (\ref{action:thetaE})). Using $\bbZ_2\times\bbZ_2$ generated by these permutations, we divide 24 type 1 causal orderings into 8 orbits:
\begin{enumerate}
	\item (1234), (4321).
	\item (1243), (4312), (3421), (2134).
	\item (1324), (4231).
	\item (1423), (3241), (4132), (2314).
	\item (1342), (2431), (4213), (3124).
	\item (1432), (2341), (4123), (3214).
	\item (2143), (3412).
	\item (2413), (3142).
\end{enumerate}
As discussed in section \ref{section:timereversal}, all the causal orderings in each orbit have the same convergent OPE channels.

Let us consider (1234), or equivalently the causal ordering $1\rightarrow2\rightarrow3\rightarrow4$. We first pick a particular configuration and compute $z,\bar{z}$. Here we choose 
\begin{equation}\label{type1:example}
\begin{split}
    x_1=0,\ x_2=(i,0,\ldots,0),\ x_3=(2i,0,\ldots,0),\ x_4=(3i,0,\ldots,0)
\end{split}
\end{equation}
and get $z=\bar{z}=\frac{1}{4}$, which is in the range corresponding to subclass $\mathrm{E_{st}}$. By lemma \ref{lemma:subclass}, the whole (1234) is in subclass $\mathrm{E_{st}}$.

All other $(i_1i_2i_3i_4)$ causal orderings can be obtained by applying permutation $\sigma=[i_1i_2i_3i_4]$ to the template ordering (1234). Using table \ref{table:wsigma}, we can easily determine the subclasses of all other $(i_1i_2i_3i_4)$ sequences (look at the column having $\mathrm{E_{st}}$ on top). Then by looking up table \ref{table:class}, we get some OPE convergence properties of type 1 causal orderings, which are summarized in table \ref{table:type1step1}.
\begin{table}[H]
	\setlength{\tabcolsep}{3mm} 
	\def\arraystretch{1.25} 
	\centering
	\begin{tabular}{|c|c|c|c|c|}
		\hline
		causal ordering    &   class/subclass  &  s-channel    &   t-channel   &   u-channel
		\\ \hline
		(1234),\ (4321)   &  $\mathrm{E_{st}}$  & \cmark  &    &  \xmark
		\\ \hline
		(1243),\ (3421),\ (4312),\ (2134)   &  $\mathrm{E_{su}}$  & \cmark  &  \xmark  &  
		\\ \hline
		(1324),\ (4231)   & $\mathrm{E_{tu}}$  & \xmark  &    &  
		\\ \hline
		(1423),\ (3241),\ (4132),\ (2314)   & $\mathrm{E_{tu}}$  & \xmark  &    &  
		\\ \hline
		(1342),\ (2431),\ (4213),\ (3124)   & $\mathrm{E_{su}}$  & \cmark  &  \xmark  &  
		\\ \hline
		(1432),\ (2341),\ (4123),\ (3214)   & $\mathrm{E_{st}}$  & \cmark  &    &  \xmark
		\\ \hline
		(2143),\ (3412)   & $\mathrm{E_{st}}$  & \cmark  &    &  \xmark
		\\ \hline
		(2413),\ (3142)   & $\mathrm{E_{tu}}$  & \xmark  &    &  
		\\ \hline
	\end{tabular}
    \caption{\label{table:type1step1} The classes/subclasses of type 1 causal orderings}
\end{table}
It remains to complete the rest of table \ref{table:type1step1}. For this we choose a representative configuration for each orbit of causal orderings and choose a path to compute the curves of $z,\bar{z}$, then decide the convergence properties of t- and u-channel expansions. In practice this is done numerically, by plotting the curves of $z,\bar{z}$ and staring at the plots to determine $N_t,N_u$ (as in examples in section \ref{section:examples}). The final results of OPE convergence properties in the three channels are shown in table \ref{table:type1convergence}, where we use red to indicate the new marks. 
\begin{table}[H]
	\setlength{\tabcolsep}{3mm} 
	\def\arraystretch{1.25} 
	\centering
	\begin{tabular}{|c|c|c|c|c|}
		\hline
		causal ordering    &   class/subclass      &   s-channel    &   t-channel   &   u-channel
		\\ \hline
		(1234),\ (4321)   &  $\mathrm{E_{st}}$     & \cmark  &  \textcolor{red}{\cmark}  &  \xmark
		\\ \hline
		(2143),\ (3412)   & $\mathrm{E_{st}}$     & \cmark  &  \textcolor{red}{\xmark}  &  \xmark
		\\ \hline
		(1432),\ (2341),\ (4123),\ (3214)   & $\mathrm{E_{st}}$     & \cmark  &  \textcolor{red}{\cmark}  &  \xmark
		\\ \hline
		(1324),\ (4231)   & $\mathrm{E_{tu}}$     & \xmark  &  \textcolor{red}{\cmark}  &  \textcolor{red}{\xmark}
		\\ \hline
		(2413),\ (3142)   & $\mathrm{E_{tu}}$     & \xmark  &  \textcolor{red}{\xmark}  &  \textcolor{red}{\xmark}
		\\ \hline
		(1423),\ (3241),\ (4132),\ (2314)   & $\mathrm{E_{tu}}$     & \xmark  &  \textcolor{red}{\cmark}  &  \textcolor{red}{\xmark}
		\\ \hline
		(1342),\ (2431),\ (4213),\ (3124)   & $\mathrm{E_{su}}$     & \cmark  &  \xmark  &  \textcolor{red}{\xmark}
		\\ \hline
		(1243),\ (3421),\ (4312),\ (2134)   &  $\mathrm{E_{su}}$     & \cmark  &  \xmark  &  \textcolor{red}{\xmark}
		\\ \hline
	\end{tabular}
    \caption{\label{table:type1convergence}OPE convergence properties of type 1 causal orderings}
\end{table}

\subsection{Type 2}\label{appendix:type2}
The type 2 causal ordering is given by
\begin{equation}
\begin{split}
\begin{tikzpicture}[baseline={(a.base)},circuit logic US]		
\node (a) at (0,0) {$a$};
\node (b) at (1,0) {$b$};
\node (c) at (2,0.5) {$c$};
\node (d) at (2,-0.5) {$d$};
\draw[-stealth] (a) to (b);
\draw[-stealth] (b) to (c);
\draw[-stealth] (b) to (d);		
\end{tikzpicture},\quad \begin{tikzpicture}[baseline={(c.base)},circuit logic US]		
\node (a) at (0,0.5) {$c$};
\node (b) at (0,-0.5) {$d$};
\node (c) at (1,0) {$b$};
\node (d) at (2,0) {$a$};
\draw[-stealth] (a) to (c);
\draw[-stealth] (b) to (c);
\draw[-stealth] (c) to (d);		
\end{tikzpicture}.
\end{split}
\end{equation}
We choose a particular configuration for (1234):
\begin{equation}
\begin{split}
    x_1=0,\ x_2=(i,0,\ldots,0),\ x_3=(2i,0.5,0,\ldots,0),\ x_4=(2i,-0.5,0,\ldots,0),
\end{split}
\end{equation}
and get $z=-\frac{8}{5},\ \bar{z}=\frac{4}{9}$, which is in the range corresponding to class S. So we conclude that (1234) is in class S. The remaining steps are the same as appendix \ref{appendix:type1}. The results of OPE convergence properties in the three channels are shown in table \ref{table:type2convergence}.
\begin{table}[H]
	\caption{OPE convergence properties of type 2 causal orderings}
	\label{table:type2convergence}
	\setlength{\tabcolsep}{3mm} 
	\def\arraystretch{1.25} 
	\centering
	\begin{tabular}{|c|c|c|c|c|}
		\hline
		causal ordering    &  class/subclass  &   s-channel    &   t-channel   &   u-channel
		\\ \hline
		(1234),\ (4321)   &  S  & \cmark  &  \xmark  &  \xmark
		\\ \hline
		(1324),\ (4231)   &  U  & \xmark  &  \xmark  &  \xmark
		\\ \hline
		(1423),\ (4132)   &  T  & \xmark  &  \cmark  &  \xmark
		\\ \hline
		(2134),\ (3421)   &  S  & \cmark  &  \xmark  &  \xmark
		\\ \hline
		(3124),\ (2431)   &  U  & \xmark  &  \xmark  &  \xmark
		\\ \hline
		(2314),\ (3241)   &  T  & \xmark  &  \cmark  &  \xmark
		\\ \hline
	\end{tabular}
\end{table}
There are only 12 causal orderings because $(i_1i_2i_3i_4)$ and $(i_1i_2i_4i_3)$ are the same causal ordering (the little group is $\bbZ_2$).

\subsection{Type 3}\label{appendix:type3}
The type 3 causal ordering is given by
\begin{equation}
\begin{split}
\begin{tikzpicture}[baseline={(0,-0.35)},circuit logic US]		
\node (a) at (0,0) {$a$};
\node (b) at (1,0) {$b$};
\node (c) at (2,0) {$c$};
\node (d) at (1,-0.5) {$d$};
\draw[-stealth] (a) to (b);
\draw[-stealth] (b) to (c);
\draw[-stealth] (a) to (d);		
\end{tikzpicture},\quad \begin{tikzpicture}[baseline={(0,-0.35)},circuit logic US]		
\node (a) at (0,0) {$c$};
\node (b) at (1,0) {$b$};
\node (c) at (2,0) {$a$};
\node (d) at (1,-0.5) {$d$};
\draw[-stealth] (a) to (b);
\draw[-stealth] (b) to (c);
\draw[-stealth] (d) to (c);		
\end{tikzpicture}.
\end{split}
\end{equation}
We choose a particular configuration for (1234):
\begin{equation}
\begin{split}
x_1=0,\ x_2=(i,0,\ldots,0),\ x_3=(2i,0,\ldots,0),\ x_4=(1.5i,1,0,\ldots,0),
\end{split}
\end{equation}
and get $z=\frac{1}{6},\ \bar{z}=\frac{3}{2}$, which is in the range corresponding to class T. So we conclude that (1234) is in class T. The results of OPE convergence properties in the three channels are shown in table \ref{table:type3convergence}:
\begin{table}[H]
	\caption{OPE convergence properties of type 3 causal orderings}
	\label{table:type3convergence}
	\setlength{\tabcolsep}{3mm} 
	\def\arraystretch{1.25} 
	\centering
	\begin{tabular}{|c|c|c|c|c|}
		\hline
		causal ordering    &   class/subclass  &  s-channel    &   t-channel   &   u-channel
		\\ \hline
		(1234),\ (4321)   & T & \xmark  &  \cmark  &  \xmark
		\\ \hline
		(1243),\ (4312)   & U & \xmark  &  \xmark  &  \cmark
		\\ \hline
		(1324),\ (4231)   & T & \xmark  &  \cmark  &  \xmark
		\\ \hline
		(1342),\ (4213)   & S & \cmark  &  \xmark  &  \xmark
		\\ \hline
		(1423),\ (4132)   & U & \xmark  &  \xmark  &  \cmark
		\\ \hline
		(1432),\ (4123)   & S & \cmark  &  \xmark  &  \xmark
		\\ \hline
		(2134),\ (3421)   & U & \xmark  &  \xmark  &  \xmark
		\\ \hline
		(2143),\ (3412)   & T & \xmark  &  \xmark  &  \xmark
		\\ \hline
		(2314),\ (3241)  & U & \xmark  &  \xmark  &  \xmark
		\\ \hline
		(2341),\ (3214)   & S & \cmark  &  \xmark  &  \xmark
		\\ \hline
		(2413),\ (3142)   & T & \xmark  &  \xmark  &  \xmark
		\\ \hline
		(2431),\ (3124)   & S & \cmark  &  \xmark  &  \xmark
		\\ \hline
	\end{tabular}
\end{table}

\subsection{Type 4}\label{appendix:type4}
The type 4 causal ordering is given by
\begin{equation}
\begin{split}
\begin{tikzpicture}[baseline={(a.base)},circuit logic US]		
\node (a) at (0,0) {$a$};
\node (b) at (1,0.5) {$b$};
\node (c) at (1,-0.5) {$c$};
\node (d) at (2,0) {$d$};
\draw[-stealth] (a) to (b);
\draw[-stealth] (a) to (c);
\draw[-stealth] (b) to (d);
\draw[-stealth] (c) to (d);		
\end{tikzpicture}.
\end{split}
\end{equation}
We choose a particular configuration for (1234):
\begin{equation}
\begin{split}
x_1=0,\ x_2=(i,0.5,0,\ldots,0),\ x_3=(i,-0.5,0,\ldots,0),\ x_4=(2i,0,\ldots,0),
\end{split}
\end{equation}
and get $z=\frac{1}{9},\ \bar{z}=9$, which is in the range corresponding to class T. So we conclude that (1234) is in class T. The results of OPE convergence properties in the three channels are shown in table \ref{table:type4convergence}:
\begin{table}[H]
	\caption{OPE convergence properties of type 4 causal orderings}
	\label{table:type4convergence}
	\setlength{\tabcolsep}{3mm} 
	\def\arraystretch{1.25} 
	\centering
	\begin{tabular}{|c|c|c|c|c|}
		\hline
		causal ordering    &   class/subclass  &  s-channel    &   t-channel   &   u-channel
		\\ \hline
		(1234),\ (4321)   & T  &  \xmark  &  \cmark  &  \xmark
		\\ \hline
		(1243),\ (4312),\ (2134),\ (3421)   & U  &  \xmark  &  \xmark  &  \xmark
		\\ \hline
		(1342),\ (4213),\ (2341),\ (3214)   & S  &  \cmark  &  \xmark  &  \xmark
		\\ \hline
		(2143),\ (3412)   & T  &  \xmark  &  \xmark  &  \xmark
		\\ \hline
	\end{tabular}
\end{table}
Here we use the fact that $(i_1i_2i_3i_4)$ and $(i_1i_3i_2i_4)$ are the same causal ordering (the little group is $\bbZ_2$).

\subsection{Type 5}\label{appendix:type5}
The type 5 causal ordering is given by
\begin{equation}\label{type5:template}
\begin{split}
\begin{tikzpicture}[baseline={(a.base)},circuit logic US]		
\node (a) at (0,0) {$a$};
\node (b) at (1,0.5) {$b$};
\node (c) at (1,0) {$c$};
\node (d) at (1,-0.5) {$d$};
\draw[-stealth] (a) to (b);
\draw[-stealth] (a) to (c);
\draw[-stealth] (a) to (d);		
\end{tikzpicture},\quad \begin{tikzpicture}[baseline={(b.base)},circuit logic US]		
\node (a) at (0,0.5) {$b$};
\node (b) at (0,0) {$c$};
\node (c) at (0,-0.5) {$d$};
\node (d) at (1,0) {$a$};
\draw[-stealth] (a) to (d);
\draw[-stealth] (b) to (d);
\draw[-stealth] (c) to (d);		
\end{tikzpicture}.
\end{split}
\end{equation}
We choose a particular configuration for (1234):
\begin{equation}\label{type5:particularconfig1}
\begin{split}
x_1=0,\ x_2=(i,0.5,0,\ldots,0),\ x_3=(i,0,\ldots,0),\ x_4=(i,-0.5,\ldots,0),
\end{split}
\end{equation}
and get $z=\frac{1}{4},\ \bar{z}=\frac{3}{4}$, which is in the range corresponding to subclass $\mathrm{E_{st}}$. So (1234) is in class E. We would like to find a particular configuration in each subclass of class E. The little group of this causal type is $S_3$, which corresponds to permutations among $b,c,d$ in (\ref{type5:template}). By looking up table \ref{table:wsigma}, we see that permuting $x_2,x_3$ in (\ref{type5:particularconfig1}) gives $\mathrm{E_{tu}}$ and permuting $x_3,x_4$ gives $\mathrm{E_{su}}$. To realize $\mathrm{E_{stu}}$ we choose the following configuration in (1234):
\begin{equation}\label{type5:particularconfig2}
\begin{split}
    x_1=0,\ x_2=(i,0.5,0,\ldots,0),\ x_3=(i,-0.5,0,\ldots,0),\ x_4=(i,0,0.5,0,\ldots,0),
\end{split}
\end{equation}
and get $z=i,\ \bar{z}=-i$, which is indeed in the range corresponding to subclass $\mathrm{E_{stu}}$. So we conclude that the configurations of (1234) do appear in all subclasses of class E in $d\geq3$, while they only appear in subclasses $\mathrm{E_{st}},\mathrm{E_{su}},\mathrm{E_{tu}}$ in 2d.\footnote{We used two dimensions in (\ref{type5:particularconfig1}) and three dimensions in (\ref{type5:particularconfig2}). On the other hand, as mentioned at the end of section \ref{section:comments2d}, subclass $\mathrm{E_{stu}}$ does not exist in 2d because $z,\bar{z}$ can only be real.}

The results of OPE convergence properties in the three channels are shown in table \ref{table:type5convergence}: 
\begin{table}[H]
	\caption{OPE convergence properties of type 5 causal orderings}
	\label{table:type5convergence}
	\setlength{\tabcolsep}{3mm} 
	\def\arraystretch{1.25} 
	\centering
	\begin{tabular}{|c|c|c|c|c|}
		\hline
		causal ordering   & class/subclass &   s-channel    &   t-channel   &   u-channel
		\\ \hline
		(1234),\ (4321)   &
		\begin{tikzpicture}[baseline={(0,-1)},circuit logic US]
		\node at (0,0){$\mathrm{E_{st}}$};
		\node at (0,-0.5){$\mathrm{E_{su}}$};
		\node at (0,-1){$\mathrm{E_{tu}}$};
		\node at (0,-1.5){$\mathrm{E_{stu}}$};
		\end{tikzpicture}&
		\begin{tikzpicture}[baseline={(0,-1)},circuit logic US]
		\node at (0,0){\cmark};
		\node at (0,-0.5){\cmark};
		\node at (0,-1){\xmark};
		\node at (0,-1.5){\cmark};
		\end{tikzpicture}&
		\begin{tikzpicture}[baseline={(0,-1)},circuit logic US]
		\node at (0,0){\cmark};
		\node at (0,-0.5){\xmark};
		\node at (0,-1){\cmark};
		\node at (0,-1.5){\cmark};
		\end{tikzpicture}&
		\begin{tikzpicture}[baseline={(0,-1)},circuit logic US]
		\node at (0,0){\xmark};
		\node at (0,-0.5){\cmark};
		\node at (0,-1){\cmark};
		\node at (0,-1.5){\cmark};
		\end{tikzpicture}
		\\ \hline
		(2134),\ (3124)   &
		\begin{tikzpicture}[baseline={(0,-1)},circuit logic US]
		\node at (0,0){$\mathrm{E_{st}}$};
		\node at (0,-0.5){$\mathrm{E_{su}}$};
		\node at (0,-1){$\mathrm{E_{tu}}$};
		\node at (0,-1.5){$\mathrm{E_{stu}}$};
		\end{tikzpicture}&
		\begin{tikzpicture}[baseline={(0,-1)},circuit logic US]
		\node at (0,0){\cmark};
		\node at (0,-0.5){\cmark};
		\node at (0,-1){\xmark};
		\node at (0,-1.5){\cmark};
		\end{tikzpicture}&
		\begin{tikzpicture}[baseline={(0,-1)},circuit logic US]
		\node at (0,0){\xmark};
		\node at (0,-0.5){\xmark};
		\node at (0,-1){\xmark};
		\node at (0,-1.5){\xmark};
		\end{tikzpicture}&
		\begin{tikzpicture}[baseline={(0,-1)},circuit logic US]
		\node at (0,0){\xmark};
		\node at (0,-0.5){\xmark};
		\node at (0,-1){\xmark};
		\node at (0,-1.5){\xmark};
		\end{tikzpicture}
		\\ \hline
	\end{tabular}
\end{table}
Here we use the fact that for $(i_1i_2i_3i_4)$, any permutation of $2,3,4$ does not change the causal ordering (the little group is $S_3$).

\subsection{Type 6}\label{appendix:type6}
The type 6 causal ordering is given by
\begin{equation}
\begin{split}
\begin{tikzpicture}[baseline={(0,-0.35)},circuit logic US]		
\node (a) at (0,0) {$a$};
\node (b) at (1,0) {$b$};
\node (c) at (2,0) {$c$};
\node (d) at (1,-0.5) {$d$};
\draw[-stealth] (a) to (b);
\draw[-stealth] (b) to (c);		
\end{tikzpicture}.
\end{split}
\end{equation}
We choose a particular configuration for (1234):
\begin{equation}
\begin{split}
x_1=0,\ x_2=(i,0,\ldots,0),\ x_3=(2i,0,\ldots,0),\ x_4=(i,2,\ldots,0),
\end{split}
\end{equation}
and get $z=\frac{1}{4},\ \bar{z}=\frac{3}{4}$, which is in the range corresponding to subclass $\mathrm{E_{st}}$. By lemma \ref{lemma:subclass}, the whole (1234) is in subclass $\mathrm{E_{st}}$. The results of OPE convergence properties in the three channels are shown in table \ref{table:type6convergence}.
\begin{table}[H]
	\caption{OPE convergence properties of type 6 causal orderings}
	\label{table:type6convergence}
	\setlength{\tabcolsep}{3mm} 
	\def\arraystretch{1.25} 
	\centering
	\begin{tabular}{|c|c|c|c|c|}
		\hline
		causal ordering    &   class/subclass  &  s-channel    &   t-channel   &   u-channel
		\\ \hline
		(1234),\ (4321),\ (3214),\ (2341)   & $\mathrm{E_{st}}$  &  \cmark  &  \cmark  &  \xmark
		\\ \hline
		(1243),\ (4312),\ (4213),\ (1342)   & $\mathrm{E_{su}}$  &  \cmark  &  \xmark  &  \cmark
		\\ \hline
		(1324),\ (4231),\ (2314),\ (3241)   & $\mathrm{E_{tu}}$  &  \xmark  &  \cmark  &  \xmark
		\\ \hline
		(1423),\ (4132),\ (2413),\ (3142)   & $\mathrm{E_{tu}}$  &  \xmark  &  \xmark  &  \cmark
		\\ \hline
		(1432),\ (4123),\ (3412),\ (2143)   & $\mathrm{E_{st}}$  &  \cmark  &  \xmark  &  \xmark
		\\ \hline
		(2134),\ (3421),\ (3124),\ (2431)   & $\mathrm{E_{su}}$  &  \cmark  &  \xmark  &  \xmark
		\\ \hline
	\end{tabular}
\end{table}

\subsection{Type 7}\label{appendix:type7}
The type 7 causal ordering is given by
\begin{equation}
\begin{split}
\begin{tikzpicture}[baseline={(0,-0.4)},circuit logic US]		
\node (a) at (0,0) {$a$};
\node (b) at (1,0.5) {$b$};
\node (c) at (1,-0.5) {$c$};
\node (d) at (0.5,-1) {$d$};
\draw[-stealth] (a) to (b);
\draw[-stealth] (a) to (c);		
\end{tikzpicture},\quad \begin{tikzpicture}[baseline={(0,-0.4)},circuit logic US]		
\node (a) at (0,0.5) {$b$};
\node (b) at (0,-0.5) {$c$};
\node (c) at (1,0) {$a$};
\node (d) at (0.5,-1) {$d$};
\draw[-stealth] (a) to (c);
\draw[-stealth] (b) to (c);		
\end{tikzpicture}.
\end{split}
\end{equation}
We choose a particular configuration for (1234):
\begin{equation}
\begin{split}
x_1=0,\ x_2=(i,0.5,0,\ldots,0),\ x_3=(i,-0.5,0,\ldots,0),\ x_4=(0,2,0,\ldots,0),
\end{split}
\end{equation}
and get $z=\frac{7}{15},\ \bar{z}=9$, which is in the range corresponding to class T. So we conclude that (1234) is in class T. The results of OPE convergence properties in the three channels are shown in table \ref{table:type7convergence}.
\begin{table}[H]
	\caption{OPE convergence properties of type 7 causal orderings}
	\label{table:type7convergence}
	\setlength{\tabcolsep}{3mm} 
	\def\arraystretch{1.25} 
	\centering
	\begin{tabular}{|c|c|c|c|c|}
		\hline
		causal ordering    &   class/subclass  &  s-channel    &   t-channel   &   u-channel
		\\ \hline
		(1234),\ (4321)   & T  &  \xmark  &  \cmark  &  \xmark
		\\ \hline
		(1243),\ (4312)   & U  &  \xmark  &  \xmark  &  \cmark
		\\ \hline
		(1342),\ (4213)   & S  &  \cmark  &  \xmark  &  \xmark
		\\ \hline
		(2134),\ (3421)   & U  &  \xmark  &  \xmark  &  \xmark
		\\ \hline
		(2143),\ (3412)   & T  &  \xmark  &  \xmark  &  \xmark
		\\ \hline
		(2341),\ (3124)   & S  &  \cmark  &  \xmark  &  \xmark
		\\ \hline
	\end{tabular}
\end{table}	
Here we use the fact that $(i_1i_2i_3i_4)$ and $(i_1i_3i_2i_4)$ are the same causal ordering (the little group is $\bbZ_2$).

\subsection{Type 8}\label{appendix:type8}
The type 8 causal ordering is given by
\begin{equation}\label{template:type8}
\begin{split}
\begin{tikzpicture}[baseline={(0,-0.4)},circuit logic US]		
\node (a) at (0,0) {$a$};
\node (b) at (1,0.5) {$b$};
\node (c) at (1,-0.5) {$d$};
\node (d) at (0,-1) {$c$};
\draw[-stealth] (a) to (b);
\draw[-stealth] (a) to (c);
\draw[-stealth] (d) to (c);		
\end{tikzpicture}.
\end{split}
\end{equation}
We choose a particular configuration for (1234):
\begin{equation}
\begin{split}
x_1=0,\ x_2=(i,-0.5,0,\ldots,0),\ x_3=(0,1,0,\ldots,0),\ x_4=(i,0.5,0,\ldots,0),
\end{split}
\end{equation}
and get $z=\frac{1}{4},\ \bar{z}=\frac{9}{4}$, which is in the range corresponding to class T. So we conclude that (1234) is in class T. The results of OPE convergence properties in the three channels are shown in table \ref{table:type8convergence}:
\begin{table}[H]
	\caption{OPE convergence properties of type 8 causal orderings}
	\label{table:type8convergence}
	\setlength{\tabcolsep}{3mm} 
	\def\arraystretch{1.25} 
	\centering
	\begin{tabular}{|c|c|c|c|c|}
		\hline
		causal ordering    &   class/subclass  &  s-channel    &   t-channel   &   u-channel
		\\ \hline
		(1234),\ (4321)   & T  &  \xmark  &  \xmark  &  \xmark
		\\ \hline
		(1243),\ (4312),\ (3421),\ (2134)   & U  &  \xmark  &  \xmark  &  \cmark
		\\ \hline
		(1342),\ (4213),\ (2431),\ (3124)   & S  &  \cmark  &  \xmark  &  \xmark
		\\ \hline
		(2143),\ (3412)   & T  &  \xmark  &  \xmark  &  \xmark
		\\ \hline
		(1324),\ (4231)   & T  &  \xmark  &  \xmark  &  \xmark
		\\ \hline
		(1432),\ (4123),\ (2341),\ (3214)   & S  &  \cmark  &  \xmark  &  \xmark
		\\ \hline
		(1423),\ (4132),\ (3241),\ (2314)   & U  &  \xmark  &  \xmark  &  \xmark
		\\ \hline
		(2413),\ (3142)   & T  &  \xmark  &  \xmark  &  \xmark
		\\ \hline
	\end{tabular}
\end{table}

\subsection{Type 9}\label{appendix:type9}
The type 9 causal ordering is given by
\begin{equation}
\begin{split}
\begin{tikzpicture}[baseline={(c.base)},circuit logic US]		
\node (a) at (0,0) {$a$};
\node (b) at (1,0) {$b$};
\node (c) at (0.5,-0.5) {$c$};
\node (d) at (0.5,-1) {$d$};
\draw[-stealth] (a) to (b);		
\end{tikzpicture}.
\end{split}
\end{equation}
We choose a particular configuration for (1234):
\begin{equation}
\begin{split}
x_1=0,\ x_2=(i,0,\ldots,0),\ x_3=(0,2,0,\ldots,0),\ x_4=(0,3,0,\ldots,0),
\end{split}
\end{equation}
and get $z=-\frac{1}{8},\ \bar{z}=\frac{1}{4}$, which is in the range corresponding to class S. So we conclude that (1234) is in class S. The results of OPE convergence properties in the three channels are shown in table \ref{table:type9convergence}:
\begin{table}[H]
	\caption{OPE convergence properties of type 9 causal orderings}
	\label{table:type9convergence}
	\setlength{\tabcolsep}{3mm} 
	\def\arraystretch{1.25} 
	\centering
	\begin{tabular}{|c|c|c|c|c|}
		\hline
		causal ordering    &   class/subclass  &  s-channel    &   t-channel   &   u-channel
		\\ \hline
		(1234),\ (4312),\ (2134),\ (3412)   & S  &  \cmark  &  \xmark  &  \xmark
		\\ \hline
		(1324),\ (4213),\ (3124),\ (2413)   & U  &  \xmark  &  \xmark  &  \cmark
		\\ \hline
		(1423),\ (4123)   & T  &  \xmark  &  \cmark  &  \xmark
		\\ \hline
		(2314),\ (3214)   & T  &  \xmark  &  \cmark  &  \xmark
		\\ \hline
	\end{tabular}
\end{table}	
Here we use the fact that $(i_1i_2i_3i_4)$ and $(i_1i_2i_4i_3)$ are the same causal ordering (the little group is $\bbZ_2$).

\subsection{Type 10}\label{appendix:type10}
The type 10 causal ordering is given by
\begin{equation}\label{template:type10}
\begin{split}
\begin{tikzpicture}[baseline={(c.base)},circuit logic US]		
\node (a) at (0,0) {$a$};
\node (b) at (0,-1) {$b$};
\node (c) at (1,-0.5) {$c$};
\node (d) at (2,-0.5) {$d$};
\draw[-stealth] (a) to (c);
\draw[-stealth] (b) to (c);
\draw[-stealth] (a) to (d);
\draw[-stealth] (b) to (d);		
\end{tikzpicture}.
\end{split}
\end{equation}
We choose a particular configuration for (1234):
\begin{equation}\label{type10:particularconfig1}
\begin{split}
x_1=0,\ x_2=(0,1,0,\ldots,0),\ x_3=(2i,0,\ldots,0),\ x_4=(2i,1,\ldots,0),
\end{split}
\end{equation}
and get $z=\bar{z}=\frac{1}{4}$, which is in the range corresponding to subclass $\mathrm{E_{st}}$. So (1234) is in class E. We would like to find a particular configuration in each subclass of class E. The little group of this causal type is $\bbZ_2\times\bbZ_2$, which is generated by $a\leftrightarrow b$ and $c\leftrightarrow d$ in (\ref{template:type10}). By looking up table \ref{table:wsigma}, we see that by acting with the little group on configuration (\ref{type10:particularconfig1}), we can get $\mathrm{E_{su}}$, but we cannot get $\mathrm{E_{tu}}$. The underlying fact is that the 2d configurations of (1234) do not appear in $\mathrm{E_{tu}}$ (it is obvious that 2d configurations do not appear in $\mathrm{E_{stu}}$.). Let us show this fact. In 2d Minkowski space we can use the light-cone coordinates:
\begin{equation}
\begin{split}
    z_k=t_k+\mathbf{x}_k,\quad\bar{z}_k=t_k-\mathbf{x}_k,\quad \fr{x_k=(it_k,\mathbf{x}_k)}. 
\end{split}
\end{equation}
The causal ordering (\ref{template:type10}) implies
\begin{equation}
\begin{split}
    &z_3,z_4>z_1,z_2,\quad\bar{z}_3,\bar{z}_4>\bar{z}_1,\bar{z}_2, \\
    &(z_1-z_2)(\bar{z}_1-\bar{z}_2)<0,\quad(z_3-z_4)(\bar{z}_3-\bar{z}_4)<0. \\
\end{split}
\end{equation}
Since the little group $\bbZ_2\times\bbZ_2$ of (1234) preserves $\mathrm{E_{tu}}$ (see table \ref{table:wsigma}), by the $\bbZ_2\times\bbZ_2$-action, it suffices to show that $\mathrm{E_{tu}}$ configurations do not exist when
\begin{equation}
\begin{split}
    &z_3,z_4>z_1,z_2,\quad\bar{z}_3,\bar{z}_4>\bar{z}_1,\bar{z}_2, \\
    &z_1-z_2<0,\quad\bar{z}_1-\bar{z}_2>0, \\
    &z_3-z_4<0,\quad\bar{z}_3-\bar{z}_4>0. \\
\end{split}
\end{equation}
In this case the computation is straightforward:
\begin{equation}
\begin{split}
    z=&\dfrac{(z_2-z_1)(z_4-z_3)}{(z_3-z_1)(z_4-z_2)}<\dfrac{(z_3-z_1)(z_4-z_2)}{(z_3-z_1)(z_4-z_2)}=1, \\
    \bar{z}=&\dfrac{(\bar{z}_1-\bar{z}_2)(\bar{z}_3-\bar{z}_4)}{(\bar{z}_3-\bar{z}_1)(\bar{z}_4-\bar{z}_2)}<\dfrac{(\bar{z}_4-\bar{z}_2)(\bar{z}_3-\bar{z}_1)}{(\bar{z}_3-\bar{z}_1)(\bar{z}_4-\bar{z}_2)}=1. \\
\end{split}
\end{equation}
So we conclude that the 2d configurations in (1234) have $z,\bar{z}<1$, i.e. (1234) does not intersect with subclass $\mathrm{E_{tu}}$ in 2d. To find a $\mathrm{E_{tu}}$ configuration in (1234) we need to construct it in $d\geq3$. We choose the 3d configuration (\ref{finalpoint:bulkptsing}) and get $z\approx1.1,\ \bar{z}\approx6.3$, which is in the range corresponding to subclass $\mathrm{E_{tu}}$.
 
To realize $\mathrm{E_{stu}}$ we choose the following configuration in (1234):
\begin{equation}\label{type10:particularconfig3}
\begin{split}
x_1=0,\ x_2=(0,0.5,0,\ldots,0),\ x_3=(2i,0,0.5,0,\ldots,0),\ x_4=(i,0.5,0,\ldots,0),
\end{split}
\end{equation}
and get $z\approx0.33+0.24i,\ \bar{z}=0.33-0.24i$, which is in the range corresponding to subclass $\mathrm{E_{stu}}$. So we conclude that the configurations of (1234) do appear in all subclasses of class E in $d\geq3$, while they only appear in subclasses $\mathrm{E_{st}},\mathrm{E_{su}}$ in 2d.

The results of OPE convergence properties in the three channels are shown in table \ref{table:type10convergence}: 
\begin{table}[H]
	\caption{OPE convergence properties of type 10 causal orderings}
	\label{table:type10convergence}
	\setlength{\tabcolsep}{3mm} 
	\def\arraystretch{1.25} 
	\centering
	\begin{tabular}{|c|c|c|c|c|}
		\hline
		causal ordering   & class/subclass &   s-channel    &   t-channel   &   u-channel
		\\ \hline
		(1234),\ (3412)   &
		\begin{tikzpicture}[baseline={(0,-0.75)},circuit logic US]
		\node at (0,0){$\mathrm{E_{st}}$};
		\node at (0,-0.5){$\mathrm{E_{su}}$};
		\node at (0,-1){$\mathrm{E_{tu}}$};
		\node at (0,-1.5){$\mathrm{E_{stu}}$};
		\end{tikzpicture}&
		\begin{tikzpicture}[baseline={(0,-0.75)},circuit logic US]
		\node at (0,0){\cmark};
		\node at (0,-0.5){\cmark};
		\node at (0,-1){\xmark};
		\node at (0,-1.5){\cmark};
		\end{tikzpicture}&
		\begin{tikzpicture}[baseline={(0,-0.75)},circuit logic US]
		\node at (0,0){\xmark};
		\node at (0,-0.5){\xmark};
		\node at (0,-1){\xmark};
		\node at (0,-1.5){\xmark};
		\end{tikzpicture}&
		\begin{tikzpicture}[baseline={(0,-0.75)},circuit logic US]
		\node at (0,0){\xmark};
		\node at (0,-0.5){\xmark};
		\node at (0,-1){\xmark};
		\node at (0,-1.5){\xmark};
		\end{tikzpicture}
		\\ \hline
		(1324),\ (2413)   &
		\begin{tikzpicture}[baseline={(0,-0.75)},circuit logic US]
		\node at (0,0){$\mathrm{E_{st}}$};
		\node at (0,-0.5){$\mathrm{E_{su}}$};
		\node at (0,-1){$\mathrm{E_{tu}}$};
		\node at (0,-1.5){$\mathrm{E_{stu}}$};
		\end{tikzpicture}&
		\begin{tikzpicture}[baseline={(0,-0.75)},circuit logic US]
		\node at (0,0){\cmark};
		\node at (0,-0.5){\cmark};
		\node at (0,-1){\xmark};
		\node at (0,-1.5){\cmark};
		\end{tikzpicture}&
		\begin{tikzpicture}[baseline={(0,-0.75)},circuit logic US]
		\node at (0,0){\xmark};
		\node at (0,-0.5){\xmark};
		\node at (0,-1){\xmark};
		\node at (0,-1.5){\xmark};
		\end{tikzpicture}&
		\begin{tikzpicture}[baseline={(0,-0.75)},circuit logic US]
		\node at (0,0){\xmark};
		\node at (0,-0.5){\xmark};
		\node at (0,-1){\xmark};
		\node at (0,-1.5){\xmark};
		\end{tikzpicture}
		\\ \hline
		(1423),\ (2314)   &
		\begin{tikzpicture}[baseline={(0,-0.75)},circuit logic US]
		\node at (0,0){$\mathrm{E_{st}}$};
		\node at (0,-0.5){$\mathrm{E_{su}}$};
		\node at (0,-1){$\mathrm{E_{tu}}$};
		\node at (0,-1.5){$\mathrm{E_{stu}}$};
		\end{tikzpicture}&
		\begin{tikzpicture}[baseline={(0,-0.75)},circuit logic US]
		\node at (0,0){\cmark};
		\node at (0,-0.5){\cmark};
		\node at (0,-1){\xmark};
		\node at (0,-1.5){\cmark};
		\end{tikzpicture}&
		\begin{tikzpicture}[baseline={(0,-0.75)},circuit logic US]
		\node at (0,0){\cmark};
		\node at (0,-0.5){\xmark};
		\node at (0,-1){\cmark};
		\node at (0,-1.5){\cmark};
		\end{tikzpicture}&
		\begin{tikzpicture}[baseline={(0,-0.75)},circuit logic US]
		\node at (0,0){\xmark};
		\node at (0,-0.5){\cmark};
		\node at (0,-1){\cmark};
		\node at (0,-1.5){\cmark};
		\end{tikzpicture}
		\\ \hline
	\end{tabular}
\end{table}
Here we use the fact $(i_1i_2i_3i_4)$, $(i_2i_1i_3i_4)$, $(i_1i_2i_4i_3)$ and $(i_2i_1i_4i_3)$ are the same causal ordering (the little group is $\bbZ_2\times\bbZ_2$).

\subsection{Type 11}\label{appendix:type11}
The type 11 causal ordering is given by
\begin{equation}
\begin{split}
\begin{tikzpicture}[baseline={(0,-0.4)},circuit logic US]		
\node (a) at (0,0) {$a$};
\node (b) at (1,0) {$b$};
\node (c) at (0,-0.5) {$c$};
\node (d) at (1,-0.5) {$d$};
\draw[-stealth] (a) to (b);
\draw[-stealth] (c) to (d);		
\end{tikzpicture}.
\end{split}
\end{equation}
We choose a particular configuration for (1234):
\begin{equation}
\begin{split}
x_1=0,\ x_2=(i,0,\ldots,0),\ x_3=(0,2,\ldots,0),\ x_4=(i,2,\ldots,0),
\end{split}
\end{equation}
and get $z=\bar{z}=\frac{1}{4}$, which is in the range corresponding to subclass $\mathrm{E_{st}}$. By lemma \ref{lemma:subclass}, the whole (1234) is in subclass $\mathrm{E_{st}}$. The results of OPE convergence properties in the three channels are shown in table \ref{table:type11convergence}.
\begin{table}[H]
	\caption{OPE convergence properties of type 11 causal orderings}
	\label{table:type11convergence}
	\setlength{\tabcolsep}{3mm} 
	\def\arraystretch{1.25} 
	\centering
	\begin{tabular}{|c|c|c|c|c|}
		\hline
		causal ordering    &   class/subclass  &  s-channel    &   t-channel   &   u-channel
		\\ \hline
		(1234),\ (2143)   & $\mathrm{E_{st}}$  &  \cmark  &  \xmark  &  \xmark
		\\ \hline
		(1243),\ (2134)   & $\mathrm{E_{su}}$  &  \cmark  &  \xmark  &  \cmark
		\\ \hline
		(1324),\ (3142)   & $\mathrm{E_{tu}}$  &  \xmark  &  \xmark  &  \cmark
		\\ \hline
		(1342),\ (3124)   & $\mathrm{E_{su}}$  &  \cmark  &  \xmark  &  \cmark
		\\ \hline
		(1423),\ (3241)   & $\mathrm{E_{tu}}$  &  \xmark  &  \cmark  &  \xmark
		\\ \hline
		(1432),\ (2341)   & $\mathrm{E_{st}}$  &  \cmark  &  \cmark  &  \xmark
		\\ \hline
	\end{tabular}
\end{table}
Here we use the fact that $(i_1i_2i_3i_4)$ and $(i_3i_4i_1i_2)$ are the same causal ordering (the little group is $\bbZ_2$).

\subsection{Type 12}\label{app:type12}
The type 12 causal ordering is given by
\begin{equation}
\begin{split}
a\quad b\quad c\quad d.
\end{split}
\end{equation}
One can show that this causal type belongs to class E. In fact this is the ``Euclidean" case (that's why this class is called class E), and there is no need to check numerically for this type. Suppose we have a totally space-like separated configuration $(x_1,x_2,x_3,x_4)$, where $x_k=(it_k,\mathbf{x}_k)$. We can reach this configuration via the path
\begin{equation}
\begin{split}
x_k(s)=\left\{
\begin{array}{rcl}
\fr{(1-2s)\epsilon_k,2s \mathbf{x}_k}& &s\in[0,1/2] \\
\fr{(2s-1)it_k,\mathbf{x}_k}& &s\in[1/2,1] \\
\end{array}
\right.
\end{split}
\end{equation}
Along the path all the $x_i,x_j$ pairs are space-like separated. As a result, the totally space-like separated configurations always have $N_t=N_u=0$ (as long as $N_t,N_u$ are well-defined). On the other hand, there is no doubt that all subclasses of class E can be realized in $d\geq3$.\footnote{One can realize one of the three subclasses $\mathrm{E_{st}},\mathrm{E_{su}},\mathrm{E_{tu}}$ in 2d, and then realize the other two subclasses by $S_4$-action. For subclass $\mathrm{E_{stu}}$, one can put four points in the hyperplane with $t=0$.} We summarize the OPE convergence properties of this type in table \ref{table:type12convergence}.
\begin{table}[H]
	\caption{OPE convergence properties of type 12 causal orderings}
	\label{table:type12convergence}
	\setlength{\tabcolsep}{3mm} 
	\def\arraystretch{1.25} 
	\centering
	\begin{tabular}{|c|c|c|c|c|}
		\hline
		causal ordering   & class/subclass &   s-channel    &   t-channel   &   u-channel
		\\ \hline
		(1234)   &
		\begin{tikzpicture}[baseline={(0,-0.75)},circuit logic US]
		\node at (0,0){$\mathrm{E_{st}}$};
		\node at (0,-0.5){$\mathrm{E_{su}}$};
		\node at (0,-1){$\mathrm{E_{tu}}$};
		\node at (0,-1.5){$\mathrm{E_{stu}}$};
		\end{tikzpicture}&
		\begin{tikzpicture}[baseline={(0,-0.75)},circuit logic US]
		\node at (0,0){\cmark};
		\node at (0,-0.5){\cmark};
		\node at (0,-1){\xmark};
		\node at (0,-1.5){\cmark};
		\end{tikzpicture}&
		\begin{tikzpicture}[baseline={(0,-0.75)},circuit logic US]
		\node at (0,0){\cmark};
		\node at (0,-0.5){\xmark};
		\node at (0,-1){\cmark};
		\node at (0,-1.5){\cmark};
		\end{tikzpicture}&
		\begin{tikzpicture}[baseline={(0,-0.75)},circuit logic US]
		\node at (0,0){\xmark};
		\node at (0,-0.5){\cmark};
		\node at (0,-1){\cmark};
		\node at (0,-1.5){\cmark};
		\end{tikzpicture}
		\\ \hline
	\end{tabular}
\end{table}
Here we use the fact that there is only 1 causal ordering in this type (the little group is $S_4$).

\bibliographystyle{utphys}
\bibliography{opeclassification.bib}
	
\end{document}